\newcommand{\x}{\underline{x}}
\newcommand{\xBPdens}{\Ldens{x}^{\BPsmall}}
\newcommand{\y}{\underline{y}}
\newcommand{\z}{\underline{z}}
\newcommand{\vc}{\underline{v}}
\newcommand{\xunstable}{x_{\text{u}}(\epsilon)}
\newcommand{\xunstab}{x_{\text{u}}}
\newcommand{\xstable}{x_{\text{s}}(\epsilon)}
\newcommand{\xstab}{x_{\text{s}}}
\newcommand{\bavg}{\overline{\mathfrak{B}}}
\newcommand{\xavg}{\bar{x}}
\newcommand{\cwdldrLdelta}{b(\dl, \dr, \delta, w, \Lc)} 
\newcommand{\cwdldrLdeltanew}{b(\dl, \dr,\frac{2(\dl-1)(\dr-1)}{w}+\delta, w, \Lc)} 
\newcommand{\ent}{\ensuremath{{\tt{h}}}}
\newcommand{\entLE}{{\tilde{\ent}}}
\newcommand{\xLE}{\ensuremath{{\tilde{x}}}}
\newcommand{\Msat}{K}
\definecolor{darkgreen}{rgb}{0.14,0.5,0.14}
\newtheorem{theorem}{Theorem}
\newcommand{\btheo}{\begin{theorem}}
\newcommand{\etheo}{\end{theorem}}
\newcommand{\bproof}{\begin{proof}}
\newcommand{\eproof}{\end{proof}}
\newtheorem{definition}[theorem]{Definition}
\newcommand{\bdefi}{\begin{definition}}
\newcommand{\edefi}{\end{definition}}
\newtheorem{fact}[theorem]{Fact}
\newcommand{\bprop}{\begin{fact}}
\newcommand{\eprop}{\end{fact}}
\newtheorem{corollary}[theorem]{Corollary}
\newcommand{\bcor}{\begin{corollary}}
\newcommand{\ecor}{\end{corollary}}
\newtheorem{example}[theorem]{Example}
\newcommand{\bex}{\begin{example}}
\newcommand{\eex}{\end{example}}
\newtheorem{lemma}[theorem]{Lemma}
\newcommand{\blemma}{\begin{lemma}}
\newcommand{\elemma}{\end{lemma}}
\newtheorem{remark}[theorem]{Remark}
\newcommand{\bremark}{\begin{remark}}
\newcommand{\eremark}{\end{remark}}
\newtheorem{conj}[theorem]{Conjecture}
\newcommand{\bconj}{\begin{conj}}
\newcommand{\econj}{\end{conj}}
\newcommand{\naturals}{\ensuremath{\mathbb{N}}}
\newcommand{\integers}{\ensuremath{\mathbb{Z}}}
\newcommand{\expectation}{\ensuremath{\mathbb{E}}}
\newcommand{\defas}{\ensuremath{\stackrel{{\vartriangle}}{=}}} 
\def\0{{\tt 0}} % Ex.: BPSK modulation => 0 is encoded into +1
\def\1{{\tt 1}} % Ex.: BPSK modulation => 1 is encoded into -1
\def\?{{\tt *}} % erasure symbol
\newcommand{\ddp}{dd~pair~} %degree distribution pair %d.d.
\newcommand{\ledge}{\ensuremath{\lambda}} % ledge (edge perspective) 
\newcommand{\redge}{\ensuremath{\rho}} % redge (edge perspective
\newcommand{\ih}{\ensuremath{{\tt{h}}}} % intrinsic (or input or channel) entropy
\newcommand{\Tc}{T}
\newcommand{\dee}{{\text d}}
\newcommand{\BP}{\ensuremath{\text{BP}}} % Belief Propagation 
\newcommand{\MAPsmall}{\ensuremath{\text{\tiny MAP}}} % Maximum A Posteriori
\newcommand{\BPsmall}{\ensuremath{\text{\tiny BP}}} % Belief Propagation
\newcommand{\xl}{\ensuremath{{\tt{e}}}}%\xi}} %decoding parameter, FIND A LETTER FOR IT!!!
\newcommand{\qed}{{\hfill \footnotesize $\blacksquare$}}
\renewcommand{\mid}{\,|\,}
\newcommand{\dens}[1]{\mathsf{#1}}
\newcommand{\Ldens}[1]{\dens{#1}}
\newcommand{\Ddens}[1]{\mathfrak{{#1}}}
\newcommand{\BEC}{\ensuremath{\text{BEC}}}
\newcommand{\BSC}{\ensuremath{\text{BSC}}}
\newcommand{\BAWGNC}{\ensuremath{\text{BAWGNC}}}
\newcommand{\BMS}{\ensuremath{\text{BMS}}}
\newcommand{\BSCsmall}{\ensuremath{\text{\tiny BSC}}}
\newcommand{\BECsmall}{\ensuremath{\text{\tiny BEC}}}
\newcommand{\Ddist}[1]{\mathfrak{\MakeUppercase{#1}}}
\newcommand{\absDdist}[1]{\absd{\mathfrak{\MakeUppercase{#1}}}}
\newcommand{\absDdens}[1]{\absd{\Ddens{#1}}}
\newcommand{\absd}[1]{|#1|}
\newcommand{\dr}{d_r}
\newcommand{\dl}{d_l}
\newcommand{\Td}{R}
\DeclareMathOperator{\perr}{\mathfrak{E}}
\newcommand{\ind}{\mathbbm{1}}
\DeclareMathOperator{\batta}{\mathfrak{B}}      % Bhattacharya operator
\newcommand{\entropy}{\text{H}}
\newcommand{\Lip}{\text{Lip}}
\newcommand{\Lc}{L} % length of constellation
\newcommand{\Lfp}{N} % length of FP 
\newcommand{\vconv}{\circledast}
\newcommand{\cconv}{\boxast}
\DeclareMathOperator{\field}{\ensuremath{\mathbb{F}}}
\newcommand{\moment}{m}
\newdimen\arrayruleHwidth
\def\Hline{\noalign{\ifnum0=`}\fi\hrule \@height \arrayruleHwidth
   \futurelet \@tempa\@xhline}
\begin{document}
\title{Spatially Coupled Ensembles Universally Achieve Capacity under Belief Propagation} 
\author{\IEEEauthorblockN{Shrinivas
Kudekar\IEEEauthorrefmark{1}, Tom Richardson\IEEEauthorrefmark{1} and R{\"u}diger
Urbanke\IEEEauthorrefmark{2} \\ }
\IEEEauthorblockA{\IEEEauthorrefmark{1}Qualcomm, USA\\ Email: \{skudekar,tjr\}@qualcomm.com} \\
\IEEEauthorblockA{\IEEEauthorrefmark{2}School of Computer and Communication Sciences\\ EPFL, Lausanne, Switzerland\\ Email: ruediger.urbanke@epfl.ch}\\
 }

%\setcounter{tocdepth}{3}
%\tableofcontents
%\newpage
%\listoffigures
%\listoftables

\maketitle
\begin{abstract}
We investigate spatially coupled code ensembles.  For transmission
over the binary erasure channel, it was recently shown that spatial
coupling increases the {\em belief propagation} threshold of the
ensemble to essentially the {\em maximum a-priori} threshold of the
underlying component ensemble.  This explains why convolutional
LDPC ensembles, originally introduced by Felstr{\"{o}}m and Zigangirov,
perform so well over this channel.

We show that the equivalent result holds true for transmission over
general binary-input memoryless output-symmetric channels.
More precisely, given a desired error probability and a gap to capacity,
we can construct a spatially coupled ensemble which fulfills these constraints
{\em universally} on this class of channels under belief propagation decoding.
In fact, most {\em codes} in that ensemble
have that property.  The quantifier {\em universal} refers to the
{\em single} ensemble/code which is good for all channels but we
assume that the channel is known at the receiver.

The key technical result is a proof that under belief propagation decoding spatially
coupled ensembles achieve essentially the {\em area threshold} of
the underlying uncoupled ensemble.

We conclude by discussing some interesting open problems.
\end{abstract}

\section{Introduction}
\subsection{Historical Perspective} Ever since the publication of
Shannon's seminal paper \cite{Sha48} and the introduction of the
first coding schemes by Hamming \cite{Ham50} and Golay \cite{Gol49},
coding theory has been concerned with finding low-delay and
low-complexity capacity-achieving schemes.  The interested reader
can find an excellent historical review in \cite{FoC07}. Let us
just briefly mention some of the highlights before focusing on those
parts that are the most relevant for our purpose.

In the first 50 years, coding theory focused on the construction
of {\em algebraic} coding schemes and algorithms that were capable
of exploiting the algebraic structure.  Two early highlights of
this line of research were the introduction of  Bose-Chaudhuri-Hocquenghem (BCH)
codes \cite{Hoc59,BoC60} as well as Reed-Solomon (RS) codes
\cite{ReS60}.  Berlekamp devised an efficient decoding algorithm
\cite{Ber84} and this algorithm was then interpreted by Massey as
an algorithm for finding the shortest feedback-shift register that
generates a given sequence \cite{Mas69}.  More recently, Sudan
introduced a list decoding algorithm for RS codes that decodes
beyond the guaranteed error-correcting radius \cite{Sud97allerton}.
Guruswami and Sudan improved upon this algorithm \cite{GuS99} and
Koetter and Vardy showed how to handle soft information \cite{KoV03}.

Another important branch started with the introduction of {\em
convolutional codes} \cite{Eli55} by Elias and the introduction of
the {\em sequential decoding} algorithm by Wozencraft \cite{Woz57}.
Viterbi introduced the {\em Viterbi algorithm} \cite{Vit67}. It was
shown to be optimal by Forney \cite{For67} and Omura \cite{Omu69}
and to be eminently {\em practical} by Heller \cite{Hel68,Hel69}.

An important development in transmission over the continuous input,
band-limited, additive white Gaussian noise channel was the invention
of the {\em lattice codes}. It was shown in \cite{Bu75, Bu89, Lo97,
UrR98, EZ04} that lattice codes achieve the Shannon capacity.  A
breakthrough in bandwidth-limited communications came about when
Ungerboeck \cite{Ung82,Ung87a,Ung87b} invented a technique to combine
coding and modulation. Ungerboeck's technique ushered in a new era
of fast modems.  The technique, called {\em trellis-coded modulation}
(TCM), offered significant coding gains without compromising bandwidth
efficiency by mapping binary code symbols, generated by a convolutional
encoder, to a larger (non-binary) signal constellation. In \cite{For88a,
For88b} Forney showed that lattice codes as well as TCM schemes may
be generated by the same basic elements and the generalized technique
was termed {\em coset-coding}.

Coming back to binary linear codes, in 1993, Berrou, Glavieux and
Thitimajshima \cite{BGT93} proposed {\em turbo} codes. These codes
attain near-Shannon limit performance under low-complexity iterative
decoding.  Their remarkable performance lead to a flurry of
research on the ``turbo'' principle.  Around the same time, Spielman
in his thesis \cite{Spi96}, \cite{SiS96} and MacKay and Neal in
\cite{mncEL1, mncEL2, mncN, MacKay_Neal_Codes:95}, independently
rediscovered low-density parity-check (LDPC) codes and iterative
decoding, both introduced in Gallager's remarkable thesis \cite{Gal63}.
Wiberg showed \cite{Wib96} that both turbo codes and LDPC codes
fall under the umbrella of {\em codes based on sparse graphs} and
that their iterative decoding algorithms are special cases of the
{\em sum-product} algorithm. This line of research was formalized
by Kschischang, Frey, and Loeliger who introduced the notion of
{\em factor graphs} \cite{KFL01}.

The next breakthrough in the design of codes (based on sparse graphs)
came with the idea of using {\em irregular} LDPC codes by Luby,
Mitzenmacher, Shokrollahi and Spielman \cite{LMSS01b}, \cite{LMSS98}.
With this added ingredient it became possible to construct irregular
LDPC codes that achieved performance within $0.0045$dB of the Shannon
limit when transmitting over the binary-input additive white Gaussian
noise channel, see Chung, Forney, Richardson and Urbanke \cite{CRU01}.
The development of these codes went hand in hand with the development
of a systematic framework for their analysis by Luby, Mitzenmacher,
Shokrollahi and Spielman \cite{LMSS01, LMSSS97} and Richardson and
Urbanke \cite{RiU01}.

A central research topic for codes on graphs is the interaction of
the graphical structure of a code and its performance.  Turbo codes
themselves are a prime example how the ``right'' structure is
important to achieve good performance \cite{BGT93}.
Further important parameters and structures are, the degree
distribution (dd) and in particular the fraction of degree-two variable
nodes, multi-edge ensembles \cite{RiU04b}, degree-two nodes in a
chain \cite{DHM98}, and protographs \cite{TAD04isit,DJDT05}.

Currently sparse graph codes and their associated iterative
decoding algorithms are the best ``practical'' codes in terms of
their trade-off between performance and complexity and they are
part of essentially all new communication standards.

Polar codes represent the most recent development in coding theory
\cite{Ari09}.  They are provably capacity achieving on binary-input
memoryless output-symmetric (BMS) channels (and many others) and
they have low decoding complexity.  They also have no error floor
due to a minimum distance which increases like the square root of
the blocklength.  The simplicity, elegance, and wide applicability
of polar codes have made them a popular choice in the recent
literature. There are perhaps only two areas in which polar codes
could be further improved.  First, for polar codes the convergence
of their performance to the asymptotic limit is slow.  Currently
no rigorous statements regarding this convergence for the general
case are known. But ``calculations'' suggest that, for a fixed
desired error probability, the required blocklength scales like
$1/\delta^{\mu}$, where $\delta$ is the additive gap to capacity
and where $\mu$ depends on the channel and has a value around
$4$, \cite{HAU10,KMTU10}. Note that random block codes under MAP
decoding have a similar scaling behavior but with $\mu=2$.  This
implies a considerably faster convergence to the asymptotic behavior.
The value $2$ is a lower bound for $\mu$ for any system since the
variations of the channel itself imply that $\mu \geq 2$.  The
second aspect is {\em universality}: the code design of polar codes
depends on the specific channel being used and one and the same
design cannot simultaneously achieve capacity over a non-trivial
class of channels (under successive cancellation decoding).

Let us now connect the content of this paper to the previous
discussion.  Our main aim is to explain the role of a further
structural element in the realm of sparse graph codes (besides the
previously discussed such examples), namely that of ``spatial
coupling.''  We will show that this coupling of graphs leads to a
remarkable change in their performance.  Ensembles designed in this
way combine some of the nice elements of polar codes (namely the
fact that they are provably capacity achieving under low complexity
decoding) with the practical advantages of sparse graph codes (the
codes are competitive already for moderate lengths). Perhaps most
importantly, it is possible to construct {\em universal} such codes
for the whole class of BMS channels.  Here, universality refers to
the fact that one and the same ensemble is good for a whole class
of channels, assuming that at the receiver we have knowledge of the
channel.

\subsection{Prior Work on Spatially Coupled Codes}\label{sec:priorwork}
The potential of spatially coupled codes has long been recognized.
Our contribution lies therefore not in the introduction of a new
coding scheme, but in clarifying the mechanism that make these
codes perform so well.

The term {\em spatially coupled codes} was coined in \cite{KRU10}.
Convolutional LDPC codes (more precisely, terminated convolutional
LDPC codes), which were introduced by Felstr{\"{o}}m and Zigangirov in \cite{FeZ99},
and their many variants belong to this class.  Why do
we introduce a new term?  The three perhaps most important reasons
are: (i) the term ``convolutional'' conjures up a fairly specific node
interconnection structure whereas experiments have shown that the
particular nature of the connection is not important and that
the threshold saturation effect occurs as soon as the connection is sufficiently strong;
(ii) a well known result for convolutional codes says that the
boundary conditions are ``forgotten'' exponentially fast; but for
spatially coupled codes it is exactly the boundary condition which
causes the effect and there is no decay of this effect in the spatial
dimension of the code; (iii) the same effect has (empirically) been
shown to hold in many other graphical models, most of them outside
the realm of coding; the term ``spatial coupling'' is perhaps then
somewhat more generally applicable.

There is a considerable literature on convolutional-like LDPC
ensembles.  Variations on the constructions as well as some analysis
can be found in Engdahl and Zigangirov \cite{EnZ99}, Engdahl,
Lentmaier, and Zigangirov \cite{ELZ99}, Lentmaier, Truhachev, and
Zigangirov \cite{LTZ01}, as well as Tanner, D. Sridhara, A. Sridharan,
Fuja, and Costello \cite{TSSFC04}.  

In \cite{SLCZ04, LSZC10}, Sridharan, Lentmaier, Costello and
Zigangirov consider density evolution (DE) analysis for convolutional LDPC
ensembles and determine thresholds for the BEC. The equivalent
results for general channels were reported by Lentmaier, Sridharan,
Zigangirov and Costello in \cite{LSZC05, LSZC10}.  This DE analysis
is in many ways the starting point for our investigation.  By
comparing the thresholds to the thresholds of the underlying
ensembles under MAP decoding (see e.g.  \cite{RiU08}), it quickly
becomes apparent that an interesting effect must be at work.  Indeed,
in a recent paper \cite{LeF10}, Lentmaier and Fettweis followed
this route and independently formulated the equality of the belief
propagation (BP) threshold of convolutional LDPC ensembles and the
MAP threshold of the underlying ensemble as a conjecture.

A representation of convolutional LDPC ensembles in terms of a
protograph was introduced by Mitchell, Pusane, Zigangirov and
Costello \cite{MPZC08}. The corresponding representation for
terminated convolutional LDPC ensembles was introduced by Lentmaier,
Fettweis, Zigangirov and Costello \cite{LFZC09}. A variety of
constructions of LDPC convolutional codes from the graph-cover
perspective is shown by Pusane, Smarandache, Vontobel, and Costello
\cite{PSVC11}.

A pseudo-codeword analysis of convolutional LDPC codes was performed
by Smarandache, Pusane, Vontobel, and Costello in
\cite{SPVC06,SPVC09,PSVC11}. Such an analysis is important if we
want to understand the error-floor behavior of spatially coupled
ensembles.  

In \cite{PISWC10}, Papaleo, Iyengar, Siegel, Wolf, and Corazza study
the performance of windowed decoding of convolutional LDPC codes
on the BEC.  Such a decoder has a decoding complexity which is
independent of the chain length, an important practical advantage.
Luckily, it turns out that the performance under windowed decoding,
when measured in terms of the threshold, approaches the ``regular""
(without windowed decoding) threshold exponentially fast in the
window size, see \cite{IPSWVC10,ISUW11}. The threshold saturation
phenomenon therefore does not require an infinite window size.

The scaling behavior of spatially coupled ensembles, i.e., the relationship
between the chain length, the number of variables per section, and
the error probability is discussed by Olmos and Urbanke in \cite{OlU11}.

\subsection{Prior Results for the Binary Erasure Channel} It was
recently shown in \cite{KRU10} that for transmission over the BEC
spatially coupled ensembles have a BP threshold which is essentially
equal to the MAP threshold of the underlying uncoupled ensemble.
Further, this threshold is also essentially equal to the MAP threshold
of the coupled ensemble.  This phenomena was called {\em threshold
saturation} in \cite{KRU10} since the BP threshold takes on its
largest possible value (the MAP threshold).  This significant
improvement in the performance is due to the spatial coupling of
the underlying code. Those ``sections'' of the code that have already
succeeded in decoding can help their neighboring less fortunate
sections in the decoding process. In this manner, the information
propagates from the ``boundaries'', where the bits are known perfectly
towards the ``middle''.  In a recent paper \cite{LeF10}, Lentmaier
and Fettweis independently formulated the same statement as a
conjecture and provided numerical evidence for its validity. They
attribute the observation of the equality of the two thresholds to
G. Liva.

It was shown in \cite{MPZC08,DDJ06,SPVC09,SPVC06} that if we couple
component codes whose Hamming distance grows linearly in the
blocklength then also the resulting coupled ensembles have this
property (assuming that the number of ``sections'' or copies of the
underlying code is kept fixed). The equivalent result is true for
stopping sets.  This implies that for the transmission over the BEC
the block BP threshold is equal to the bit BP threshold and that
such ensembles do not exhibit error floors under BP decoding.

\subsection{Prior Results for General Binary-Input Memoryless Output-Symmetric
Channels} As pointed out in a preceding section, BP thresholds for
transmission over general BMS channels were computed by means of a
numerical procedure by Lentmaier, Sridharan, Zigangirov and Costello
in \cite{LSZC05}. Further, in \cite{MMRU09} (conjectured) MAP
thresholds for some LDPC ensembles were computed according to the
Maxwell construction. Comparing these two values, one can check
empirically that also for transmission over general BMS channels
the BP threshold of the coupled ensembles is essentially equal to
the (conjectured) MAP threshold of the underlying ensemble.  Indeed,
recently both \cite{KMRU10} as well as \cite{LMFC10} provided further
numerical evidence that the threshold saturation phenomenon also
applies to general BMS channels.  

For typical sparse graph ensembles the MAP threshold is not equal
to the Shannon threshold but the Shannon threshold can only be
reached by taking a sequence of such ensembles (e.g., a sequence
of increasing degrees). There are some notable exceptions, like MN
ensembles or HA ensembles.  Kasai and Sakaniwa take this as a
starting point to investigate in \cite{KaS11} whether by spatially
coupling such ensembles it is possible to create ensembles which
are universally capacity achieving under BP decoding.

\subsection{Spatial Coupling for General Communication Scenarios,
Signal Processing, Computer Science, and Statistical Physics} The
principle which underlies the good performance of spatially coupled
ensembles is broad. It has been shown to apply to a variety of
problems in communications, computer science, signal processing,
and physics.  To mention some concrete examples, the threshold
saturation effect (dynamical/algorithmic threshold of the system
being equal to the static or condensation threshold) of coupled
graphical models has been observed for rate-less codes by Aref and
Urbanke \cite{ArU11}, for channels with memory and multiple access
channels with erasure by Kudekar and Kasai \cite{KuKa11a,KuKa11b},
for CDMA channels by Takeuchi, Tanaka, and Kawabata \cite{TTK11},
for relay channels with erasure by Uchikawa, Kasai, and Sakaniwa
\cite{UKS11}, for the noisy Slepian-Wolf problem by Yedla, Pfister,
and Narayanan \cite{YPN11}, and for the BEC wiretap channel by
Rathi, Urbanke, Andersson, and Skoglund \cite{RUAS10}.  Uchikawa,
Kurkoski, Kasai, and Sakaniwa recently showed an improvement of the
BP threshold has also for transmission over the unconstrained AWGN
channel using low-density lattice codes \cite{UKKS11}. Further,
Yedla, Nguyen, Pfister and Narayanan, demonstrated the universality
of spatially-coupled codes in the 2-user binary input Gaussian
multiple-access channel and finite state ISI channels like the
dicode-erasure channel and the dicode channel with AWGN \cite{YNPN11,
NYPN11}. In \cite{YNPN11} they show in addition that for a fixed
rate pair, spatially-coupled ensembles universally saturate the
achievable region (i.e., the set of channel gain parameters that
are achievable for the fixed rate pair) under BP decoding.  Similarly,
in \cite{NYPN11} they provide numerical evidence that spatially
coupled ensembles achieve the symmetric information rate for the
dicode erasure channel and the dicode channel with AWGN.

In signal processing and computer science spatial coupling has found
success in the field of compressed sensing \cite{KP10,SKS11,KMSSZ11,DJM11}.
In \cite{KP10}, Kudekar and Pfister use sparse measurement matrices
with sub-optimal verification decoding and show that spatial coupling
boosts thresholds of sparse recovery. In \cite{KMSSZ11, DJM11},
Krzakala, M{\'e}zard, Sausset, Sun, and Zdeborova as well as Donoho,
Javanmard, and Montanari show that by carefully designing dense
measurement matrices using spatial coupling one can achieve the
best possible recovery threshold, i.e., the one achieved by the
optimal $\ell_0$ decoder.  Thus, the phenomena of threshold saturation
is also demonstrated in this case. This development is quite remarkable.

Statistical physics is another very natural area in which the
threshold saturation phenomenon is of interest.  For the so-called
random $K$-SAT problem, random graph coloring, and the Curie-Weiss
model, spatially coupled ensembles were investigated by Hassani,
Macris, and Urbanke, \cite{HMU10,HMU11a,HMU11b}.  In all these
cases, the threshold saturation phenomenon was observed.  This
suggests that it might be possible to study difficult theoretical
problems in this area, like the existence of the static threshold,
by studying the dynamical threshold of a chain of coupled models,
perhaps an easier problem.  Further spatially-coupled models were
considered by Takeuchi and Tanaka \cite{TTK11b}.

\subsection{Main Results and Consequences} In this paper we show that for
transmission over general BMS channels coupled ensembles exhibit the threshold
saturation phenomenon.  By choosing e.g. regular component ensembles of fixed
rate and increasing degree, this implies that coupled ensembles can achieve
capacity over this class of channels.  More precisely, for each $\delta>0$
there exists a coupled ensemble which achieves at least a fraction $1-\delta$
of capacity {\em universally}, under belief propagation decoding, over the whole class
of BMS channels.  The qualifier "universal" is important here.

Coupled ensembles inherit to a large degree the error floor behavior
of the underlying ensemble.  Further, such an ensemble can be chosen
so that it has a non-zero error correcting radius, and hence does
not exhibit error floors. To achieve this, it suffices to take the
variable-node degree to be at least five. This guarantees that a
randomly chosen graph from such an ensemble is an expander with
expansion exceeding three-quarters with high probability. This
expansion guarantees an error correcting radius under the
so-called flipping decoder \cite{Spi95} as well as under the BP
decoder, assuming that we suitably clip both the received as well
as the internal messages \cite{BuM00}.

Although one can empirically observe the threshold saturation
phenomenon for a wide array of component codes, we state and prove
the main result only for regular LDPC ensembles. This keeps the
exposition manageable.

\subsection{Outline} In Section~\ref{sec:review} we briefly review
regular LDPC ensembles and their asymptotic (in the blocklength)
analysis.  Much of this material is standard and we only include
it here to set the notation and to make the paper largely self-contained.
The two most important exceptions are our in-depth discussion of
the Wasserstein distance and the the so-called area threshold, in
particular the (Negativity) Lemma~\ref{lem:asymptoticnegativity}.

In Section~\ref{sec:coupled} we review some basic properties of
coupled ensembles.  Using simple extremes of information combining
techniques, we will see in Section~\ref{sec:firstresult} that
coupling indeed increases the BP threshold significantly, even
though these simple arguments are not sufficient to characterize
the BP threshold under coupling exactly.

We state our main result, namely that the BP threshold of coupled
ensembles is essentially equal to the area threshold of the underlying
component ensemble, in Section~\ref{sec:main}. We also discuss how
one can easily strengthen this result to apply to individual codes
rather than ensembles and how this gives rise to codes which are
universally close to capacity under BP decoding for the whole class
of BMS channels.

We end in Section~\ref{sec:conclusion} with a discussion of what
challenges still lie ahead.  In particular, spatial coupling has
been shown empirically to lead to the threshold saturation phenomenon
in a wide class of graphical models. Rather than proving each such
scenario in isolation, we want a common framework to
analyze all such systems.

Many of the proofs are relegated to the appendices.  This makes it
possible to read the material on two levels -- a casual level,
skipping all the proofs and following only the flow of the argument,
and a more detailed level, consulting the material in the appendices.

\section{Uncoupled Systems}\label{sec:review}
\subsection{Regular Ensembles} 
\begin{definition}[$(\dl, \dr)$-Regular Ensemble]
Fix $3 \leq \dl \leq \dr$, $\dl, \dr \in \naturals$, and $n$ so that
$n \dl/\dr \in \naturals$.  The $(\dl, \dr)$-regular LDPC ensemble
of blocklength $n$ is defined as follows. There are $n$ {\em variable}
nodes and $n \frac{\dl}{\dr}$ {\em check} nodes. Each variable node
has degree $\dl$ and each check node has degree $\dr$. Accordingly,
each variable node has $\dl$ {\em sockets}, i.e., $\dl$ places to
connect an edge to, and each check node has $\dr$ sockets.  Therefore,
there are in total $\dl n$ variable-node sockets and the same number of
check-node sockets. Number both kinds from $1$ to $n \dl$.  Consider the
set of permutations $\Pi$ on $\{1, \dots, n \dl\}$. Endow this set with a
uniform probability distribution.  To sample from the $(\dl,
\dr)$-regular ensemble, sample from $\Pi$ and connect the variable
to the check node sockets according to the chosen permutation. This
is the {\em configuration model} of LDPC ensembles. It is inspired by
the configuration model of random graphs \cite[Section 2.4]{Bol01}.
\qed
\end{definition}

\subsection{Binary-Input Memoryless Output-Symmetric
Channels}\label{sec:bmsc} Throughout we will assume that transmission
is taking place over a BMS channel. Let $X$ denote the input and let $Y$
be the output. Further, let $p(Y=y \mid X=x)$ denote the {\em transition
probability} describing the channel. An alternative characterization
of the channel is by means of its so-called $L$-distribution, denote
it by $\Ldens{c}$.  More precisely, $\Ldens{c}$ is the distribution of
\begin{align*} \ln \frac{p(Y \mid X=1)}{p(Y \mid X=-1)} \end{align*}
conditioned that $X=1$.

Given $\Ldens{c}$, we write $\Ddens{c}$, $\absDdens{c}$, and
$\absDdist{c}$ to denote the corresponding $D$ distribution, the
$|D|$ distribution and the cdf in the $|D|$-domain, respectively, 
see \cite[Section~4.1.4]{RiU08}.

Typically we do not consider a single channel in isolation but a
whole {\em family} of channels. We write $\{ \BMS(\sigma)\}$ to
denote the family parameterized by the scalar $\sigma$. Often it
will be more convenient to denote this family by $\{\Ldens{c}_\sigma\}$,
i.e., to use the family of $L$-densities which characterize the
channel family.  If it is important to make the range of the parameter
$\sigma$ explicit, we will write
$\{\Ldens{c}_\sigma\}_{\underline{\sigma}}^{\overline{\sigma}}$.

Sometimes it is convenient to use the {\em natural} parameter of
the family.  For example, for the three fundamental channels, the
BEC, the binary symmetric channel (BSC) and the binary additive
white-Gaussian noise channel (BAWGNC), the corresponding channel
families are given by $\{\text{BEC}(\epsilon)\}_{0}^{1}$,
$\{\text{BSC}(p)\}_{0}^{\frac12}$, and
$\{\text{BAWGNC}(\sigma)\}_{0}^{\infty}$.  Other times, it is more
convenient to use a common parameterization.  E.g., we will write
$\{\BMS(\ent)\}$ to denote a channel family where $\BMS(\ent)$
denotes the element in the family of {\em entropy} $\ent$.

Assume that we are given a channel family
$\{\BMS(\sigma)\}_{\underline{\sigma}}^{\overline{\sigma}}$.  We
say that the family is {\em complete} if
$\entropy(\BMS(\underline{\sigma}))=0$,
$\entropy(\BMS(\overline{\sigma}))=1$, and for each $\ent \in [0,
1]$ there exists a parameter $\sigma$ so that
$\entropy(\BMS(\sigma))=\ent$.  Here $\entropy(\cdot)$ is the entropy
functional defined in Section~\ref{sec:BPandDE}.

Let $p_{Z \mid X}(z\mid x)$ denote the transition probability
associated to a BMS channel $\Ldens{c}'$ and let $p_{Y \mid X}(y
\mid x)$ denote the transition probability of another BMS channel
$\Ldens{c}$.  We then say that $\Ldens{c}'$ is {\em degraded} with
respect to $\Ldens{c}$ if there exists a channel $p_{Z \mid Y}(z\mid
y)$ so that \begin{align*} p_{Z \mid X}(z \mid x) = \sum_{y} p_{Y \mid X}(y\mid
x) p_{Z \mid Y}(z\mid y).  \end{align*} We will use the notation
$\Ldens{c} \prec \Ldens{c}'$ to denote that $\Ldens{c}'$ is degraded
wrt $\Ldens{c}$ (as a mnemonic think of $\Ldens{c}$ as the erasure
probability of a BEC and replace $\prec$ with $<$).

A useful characterization of degradation, see \cite[Theorem
4.74]{RiU08}, is that $\Ldens{c} \prec \Ldens{c}'$ is equivalent
to 
\begin{align}\label{equ:degradation} \int_0^1 f(x) \absDdens{c}(x)
\,\dee x \leq  \int_0^1 f(x) \absDdens{c'}(x) \,\dee x 
\end{align}
for all $f(x)$ that are non-increasing and concave on $[0,1]$.
Here, $\absDdens{c}(x)$ is the so called $|D|$-density associated
to the $L$-density $\Ldens{c}$, see \cite[p. 179]{RiU08}.  In
particular, this characterization implies that $F(\Ldens{a})\leq
F(\Ldens{b})$ for $\Ldens{a} \prec \Ldens{b}$ if $F(\cdot)$ is
either the Battacharyya or the entropy functional.  This is true
since both are linear functionals of the distributions and their
respective kernels in the $|D|$-domain are decreasing and concave.
An alternative characterization in terms of the cumulative distribution
functions $\absDdist{c}(x)$ and $\absDdist{c'}(x)$ is that for all
$z \in [0, 1]$, 
\begin{align}\label{equ:degradationcdfs} \int_z^1
\absDdist{c}(x) \dee x \leq  \int_z^1 \absDdist{c'}(x) \,\dee x.
\end{align}

A BMS channel family $\{\BMS(\sigma)\}_{\underline{\sigma}}^{\overline{\sigma}}$ is said to be {\em ordered} by
degradation if $\sigma_1 \leq \sigma_2$ implies $\Ldens{c}_{\sigma_1}
\prec \Ldens{c}_{\sigma_2}$.
(The reverse order, $\sigma_1 \geq \sigma_2,$ is also allowed but we
generally stick to the stated convention.)

We say that an $L$-density $\Ldens{c}$ is {\em symmetric} if
$\Ldens{a}(-y) = \Ldens{a}(y) e^{-y}$. We recall that all densities
which stem from BMS channels are symmetric, see \cite[Sections
4.1.4, 4.1.8 and 4.1.9]{RiU08}. All densities which we consider are symmetric.  We
will therefore not mention symmetry explicitly in the sequel.

A BMS channel family $\{\Ldens{c}_\sigma\}$ is said to be {\em
smooth} if for all continuously differentiable functions $f(y)$ so
that $e^{y/2} f(y)$ is bounded, the integral $\int f(y)
\Ldens{c}_{\sigma}(y) \,\dee y $ exists and is a continuously
differentiable function with respect to $\sigma$, see \cite[Definition~4.32]{RiU08}.

The three fundamental channel families 
$\{\text{BEC}(\epsilon)\}_{0}^{1}$,
$\{\text{BSC}(p)\}_{0}^{\frac12}$, and
$\{\text{BAWGNC}(\sigma)\}_{0}^{\infty}$
are all complete, ordered, smooth, and symmetric.

\subsection{MAP Decoder and MAP Threshold}
The bit {\em maximum a posteriori} (bit-MAP) decoder for bit $i$ finds
the value of $x_i$ which maximizes $p(x_i \mid y_1^n)$. It minimizes
the bit error probability and is optimal in this sense.  The block {\em
maximum a posteriori} (block-MAP) decoder finds the codeword $x_1^n$
which maximizes $p(x_1^n \mid y_1^n)$. It minimizes the block error
probability and is optimal in this sense.

\begin{definition}[MAP Threshold]\label{def:mapthreshold}
Consider an ordered and complete channel family $\{\Ldens{c}_\ent\}$.
The {\em MAP threshold} of the $(\dl, \dr)$-regular ensemble for this channel
family is denoted by $\ent^{\MAPsmall}(\dl,\!\dr\!)$ and defined by
\begin{align*}
&  \inf\{\ent\in [0,1]: \liminf_{n\to \infty}
\mathbb{E}[\entropy(X_1^n\mid Y_1^n(\ent))/n] \!>\! 0\},
\end{align*}
where $\entropy(X_1^n\mid Y_1^n(\ent))$ is the conditional entropy of the
transmitted codeword $X_1^n$, chosen uniformly at random from the code, given the received message $Y_1^n(\ent)$
and where the expectation $\mathbb{E}[\cdot]$ is wrt the $(\dl, \dr)$-regular 
ensemble.  \qed
\end{definition}
{\em Discussion}:
Define $\text{P}_{e, i}=\text{Pr}\{X_i \neq \hat{X}_i(Y_1^n)\}$, where $\hat{X}_i(Y_1^n)$ is the MAP estimate of bit $i$ 
based on the observation $Y_1^n$. Note that by
the Fano inequality we have $\entropy(X_i \mid Y_1^n) \leq h_2(P_{e, i})$.
Assume that we are transmitting above $\ent^{\MAPsmall}(\dl,\!\dr\!)$ so that
$\mathbb{E}[\entropy(X_1^n\mid Y_1^n)/n] \geq \delta >0$.\footnote{We have 
$\mathbb{E}[\entropy(X_1^n\mid Y_1^n)/n]  \geq \frac12 \liminf_{n\to \infty}
\frac1n\mathbb{E}[\entropy(X_1^n\mid Y_1^n(\ent))] >0$
for all $n>n_0$, lets say. 
Further, for $1 \leq n \leq n_0$, $\mathbb{E}[\entropy(X_1^n\mid Y_1^n)/n]$ is strictly
positive unless the channel is trivial. The claim follows by taking the minimum of all of the bounds
for $1 \leq n \leq n_0$ as well as the bound for $n > n_0$.}
Then
\begin{align*}
h_2(\mathbb{E}[ \frac{1}{n} \sum_{i=1}^{n} \text{P}_{e, i}]) 
& \!\geq \!\mathbb{E}[ \frac{1}{n} \sum_{i=1}^{n} h_2(\text{P}_{e, i})] 
\geq \mathbb{E}[\sum_{i=1}^n \entropy(X_i \mid Y_1^n)/n] \\
& \geq \mathbb{E}[\entropy(X_1^n\mid Y_1^n)/n] \geq \delta > 0. 
\end{align*}
In words, if we are transmitting {\em above} the MAP threshold,
then the ensemble average bit-error probability is lower bounded
by $h_2^{-1}(\delta)$, a strictly positive constant. This ensemble
is therefore not suitable for reliable transmission above this
threshold.

In general we cannot conclude from $\mathbb{E}[\entropy(X_1^n \mid
Y_1^n)/n] \leq \delta$ that the average error probability is
small.\footnote{This is possible if we have the slightly stronger
condition $\mathbb{E}[\sum_{i=1}^n \entropy(X_i \mid Y_1^n)/n] \leq
\delta$.  In this case $\delta \geq \frac1{n}\mathbb{E}[\sum_{i=1}^n
\entropy(X_i \mid Y_1^n)]  = \frac1{n}\mathbb{E}[\sum_{i=1}^n
\mathbb{E}_{Y_1^n}[h_2(\min_x p(x \mid Y_1^n))]]  \geq
\frac1{n}\mathbb{E}[\sum_{i=1}^n \mathbb{E}_{Y_1^n}[2 \min_x
p(x \mid Y_1^n)]]  = \frac1{n}\mathbb{E}[\sum_{i=1}^n 2 
\text{P}_{e, i}]$, so that $\frac1{n}\mathbb{E}[\sum_{i=1}^n
\text{P}_{e, i}] \leq \frac12 \delta$. The last step in the previous chain of
inequalities follows since under MAP decoding the error probability conditioned
that we observed $y_1^n$ is equal to $\min_x p(x \mid y_1^n)$. An alternative way to
prove this is to realize that $\entropy(X_i \mid Y_1^n)$ represents a BMS channel 
with a particular entropy and to use extremes of information combining to find the
worst error probability such a channel can have. The extremal channel in this 
case is the BEC.}

\subsection{Belief Propagation, Density Evolution, and Some Important Functionals}\label{sec:BPandDE}
In principle one can investigate the behavior of coupled ensembles
under any message-passing algorithm. We limit our investigation
to the analysis of the BP decoder, the most powerful local
message-passing algorithm.  We are interested in the asymptotic
performance of the BP decoder,  i.e., the performance when the
blocklength $n$ tends to infinity.  This asymptotic performance is
characterized by the so-called density evolution (DE) equation
\cite{RiU01}.

\begin{definition}[Density Evolution]\label{def:de}
For $\ell \geq 1$, the DE equation for a
$(\dl, \dr)$-regular ensemble is given by
$$
\Ldens{x}_{\ell} = \Ldens{c} \vconv (\Ldens{x}^{\cconv \dr-1}_{\ell-1})^{\vconv \dl-1}.
$$
Here, $\Ldens{c}$ is the $L$-density of the BMS channel over which
transmission takes place and $\Ldens{x}_{\ell}$ is the density emitted by
variable nodes in the $\ell$-th round of density evolution. Initially we
have $\Ldens{x}_{0}=\Delta_0$, the delta function at $0$. The operators
$\vconv$ and $\cconv$ correspond to the convolution of densities at
variable and check nodes, respectively, see \cite[Section 4.1.4]{RiU08}.
\qed
\end{definition}
As mentioned, all distributions associated to BMS channels are
symmetric and symmetry is preserved under DE, see \cite[Chapter
4]{RiU08} for details.
There are a number of functionals of densities are of interest to us.
The
most important functionals are the Battacharyya, the entropy, and
the error probability functional.
For a density $\Ldens{a}$ these are denoted by $\batta(\Ldens{a})$,
$\entropy(\Ldens{a})$, and $\perr(\Ldens{a})$, respectively.
Assuming $\Ldens{a}$ is an $L$-density,
they are given by
\begin{align*}
\batta(\Ldens{a}) & = \int \Ldens{a}(y) e^{-y/2} \,\dee y, \;\;
\entropy(\Ldens{a})  = \int \Ldens{a}(y) \log_2(1\!+\!e^{-y}) \,\dee y, \\
\perr(\Ldens{a}) & = \frac12\int \Ldens{a}(y) e^{-(y/2+\vert y/2\vert)} \,\dee y.
\end{align*}

We end this section with the following useful fact. The proof can be found in 
Appendix~\ref{sec:someusefulfacts}.
\begin{lemma}[Entropy versus Battacharyya]\label{lem:entropyvsbatta} For
any $L$-density $\Ldens{a}$, 
$\batta^2(\Ldens{a}) \leq \entropy(\Ldens{a}) \leq \batta(\Ldens{a})$.  
\end{lemma}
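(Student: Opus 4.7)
My plan is to reduce both inequalities to pointwise comparisons of the Battacharyya and entropy of a BSC, exploiting the symmetry of $\Ldens{a}$. Using $\Ldens{a}(-y) = e^{-y}\Ldens{a}(y)$, I fold each integral onto $[0,\infty)$ and note that $\tilde{\Ldens{a}}(y) := (1+e^{-y})\Ldens{a}(y)$ is a probability density on $[0,\infty)$ (since $\int \Ldens{a}=1$ combined with symmetry gives $\int_0^\infty (1+e^{-y})\Ldens{a}(y)\,dy=1$). Under the bijection $p = 1/(1+e^y) \in (0,1/2]$, a short computation shows $\batta(\Ldens{a}) = \mathbb{E}_{\tilde{\Ldens{a}}}\!\bigl[2\sqrt{p(1-p)}\bigr]$ and $\entropy(\Ldens{a}) = \mathbb{E}_{\tilde{\Ldens{a}}}\!\bigl[h_2(p)\bigr]$. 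This is the standard reformulation of a symmetric $L$-density as a mixture of BSCs, with each $\text{BSC}(p)$ contributing its Battacharyya value $2\sqrt{p(1-p)}$ and entropy $h_2(p)$.

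Once in this form, the upper bound $\entropy(\Ldens{a}) \leq \batta(\Ldens{a})$ follows immediately from the classical pointwise inequality $h_2(p) \leq 2\sqrt{p(1-p)}$ on $[0,1/2]$, integrated against the probability measure $\tilde{\Ldens{a}}(y)\,dy$. The lower bound $\batta^2(\Ldens{a}) \leq \entropy(\Ldens{a})$ follows in two short moves: first, Jensen's inequality applied to the convex map $x \mapsto x^2$ with respect to the same probability measure gives $\batta^2(\Ldens{a}) = \bigl(\mathbb{E}[2\sqrt{p(1-p)}]\bigr)^2 \leq \mathbb{E}[4p(1-p)]$; second, the pointwise inequality $4p(1-p) \leq h_2(p)$ on $[0,1/2]$ upgrades the right-hand side to $\mathbb{E}[h_2(p)] = \entropy(\Ldens{a})$.

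The only real content lies in verifying the two pointwise calculus inequalities on $[0,1/2]$. Both target functions vanish at $p=0$ and $p=1/2$, so one studies their derivatives. For $f(p) = h_2(p) - 4p(1-p)$ the second derivative $-\frac{1}{p(1-p)\ln 2} + 8$ changes sign exactly once on $[0,1/2]$, which combined with $f'(1/2) = 0$ and $f'(0^{+}) = +\infty$ forces $f \geq 0$ throughout; an analogous monotonicity argument handles $g(p) = 2\sqrt{p(1-p)} - h_2(p)$. I expect these elementary calculus checks to be the only place where any care is needed, and they are routine.
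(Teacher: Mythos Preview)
Your proposal is correct and is essentially the paper's own proof transported to different coordinates: your mixture-of-BSCs representation with parameter $p\in(0,1/2]$ is precisely the $|D|$-domain representation used in the paper (via $z=1-2p$, so that $\sqrt{1-z^2}=2\sqrt{p(1-p)}$ and $h_2((1-z)/2)=h_2(p)$), and your two pointwise inequalities $4p(1-p)\le h_2(p)\le 2\sqrt{p(1-p)}$ together with the Jensen step are exactly the paper's Lemma on bounds for $h_2$ plus the same Jensen application.
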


\subsection{Extremes of Information Combining and
the Duality Rule}\label{sec:extremesofinfocombining} When we are operating
on BMS channels, the quantities appearing in the DE equations are
distributions. These are hard to track analytically in general,
unless we are transmitting over the BEC. Often we only need bounds.
In these cases {\em extremes of information combining} ideas are
handy, see \cite{HuH02,HuH03,LHHH03,SSZ03,SSZ05}, \cite[p. 242]{RiU08}.

\blemma[Extremes of Information Combining]\label{lem:extremes}
Let $F(\cdot)$ denote either $\entropy(\cdot)$ or $\batta(\cdot)$ and let $\alpha \in [0, 1]$.
Let $\Ldens{a}_\BECsmall$ and $\Ldens{a}_\BSC$ denote $L$-densities from
the families $\{\BEC(\epsilon)\}$ and $\{\BSC(p)\}$, respectively,
so that $F(\Ldens{a}_\BECsmall)= F(\Ldens{a}_\BSC)=\alpha$.
Then for any $\Ldens{b}$,
\begin{enumerate}[(i)]
\item $\min_{\Ldens{a}: F(\Ldens{a})=\alpha} F(\Ldens{a} \vconv \Ldens{b})= F(\Ldens{a}_{\BECsmall} \vconv \Ldens{b})$ \label{lem:extremesminvconv}
\item $\max_{\Ldens{a}: F(\Ldens{a})=\alpha} F(\Ldens{a} \vconv \Ldens{b})= F(\Ldens{a}_{\BSCsmall} \vconv \Ldens{b})$ \label{lem:extremesmaxvconv}
\item $\min_{\Ldens{a}: F(\Ldens{a})=\alpha} F(\Ldens{a} \cconv \Ldens{b})= F(\Ldens{a}_{\BSCsmall} \cconv \Ldens{b})$ \label{lem:extremesmincconv}
\item $\max_{\Ldens{a}: F(\Ldens{a})=\alpha} F(\Ldens{a} \cconv \Ldens{b})= F(\Ldens{a}_{\BECsmall} \cconv \Ldens{b})$ \label{lem:extremesmaxcconv}
\end{enumerate}
\elemma
{\em Discussion:} Although the extremes of information combining
bounds are only stated for pairs of distributions, they naturally
extend to more than two distributions.  E.g., we claim that
$\min_{\Ldens{a}: F(\Ldens{a})=\alpha} F(\Ldens{a}^{\vconv
d})= F(\Ldens{a}_{\BECsmall})^d=\alpha^d$. To see this, let
$\{\Ldens{a}_i\}_{i=1}^{d}$ be any set of distributions so that
$F(\Ldens{a}_i)=\alpha$. Then we can use
Lemma~\ref{lem:extremes} repeatedly to conclude that
\begin{align*}
F(\Ldens{a}_1 \vconv (\vconv_{i=2}^{d} \Ldens{a}_i) ) 
& \geq F(\Ldens{a}_{\BECsmall} \vconv (\vconv_{i =2}^{d} \Ldens{a}_i) ) \\ 
& = F(\Ldens{a}_2 \vconv (\Ldens{a}_{\BECsmall} \vconv (\vconv_{i =3}^{d} \Ldens{a}_i) ) \\ 
& \geq F(\Ldens{a}_\BECsmall \vconv (\Ldens{a}_{\BECsmall} \vconv (\vconv_{i =3}^{d} \Ldens{a}_i) ) \\ 
& = \cdots \\
& \geq F(\Ldens{a}_d \vconv (\Ldens{a}_{\BECsmall}^{\vconv d-1})) \\ 
& \geq F(\Ldens{a}_\BECsmall \vconv (\Ldens{a}_{\BECsmall}^{\vconv d-1} )) = \alpha ^d.
\end{align*}
The same remark and the same proof technique applies to the other cases.

\blemma[Duality Rule -- \protect{\cite[p. 196]{RiU08}}]\label{lem:dualityrule}
For any $\Ldens{a}$ and $\Ldens{b}$ 
$\entropy(\Ldens{a} \vconv \Ldens{b})+
\entropy(\Ldens{a} \cconv \Ldens{b})  =
\entropy(\Ldens{a})+
\entropy(\Ldens{b})$.
\elemma
{\em Note:} We give a simple proof of this identity at the end of the proof of 
Lemma~\ref{lem:entropyofcheck}.

\subsection{Fixed Points, Convergence, and BP Threshold}\label{sec:fps}
We say that the density $\Ldens{x}$ is a {\em fixed point} (FP) of DE for 
the $(\dl, \dr)$-regular ensemble and the channel
$\Ldens{c}$ if
\begin{align}\label{eq:regDE}
\Ldens{x} = \Ldens{c} \vconv (\Ldens{x}^{\cconv \dr-1})^{\vconv \dl-1}.
\end{align}
More succinctly, when the underlying ensemble is understood from the
context, we say that $(\Ldens{c}, \Ldens{x})$ is a FP.

One way to generate a FP is to initialize $\Ldens{x}_{0}$ with $\Delta_0$
and to run DE, as stated in Definition~\ref{def:de}.  We call such a
FP a FP of {\em forward} DE. The resulting FPs are the ``natural'' FPs
since they have a natural operational meaning -- if we pick sufficiently
long ensembles, these are the FPs which we can observe in simulations
when we run the BP decoder. 

\begin{definition}[Weak Convergence]
We say that a sequence of distributions $\{\Ldens{a}_i\}$ converges weakly to a limit
distribution $\Ldens{a}$ if for the corresponding cumulative distributions in
the $|D|$-domain, call them $\{\Ddist{A}_i\}$, 
for all bounded and continuous functions $f(x)$ on $[0, 1]$ we have
\begin{align*} 
\lim_{i \rightarrow \infty} \int_{0}^{1} f(x) \dee \absDdist{A}_i(x) = 
\int_{0}^{1} f(x) \dee \absDdist{A}(x). 
\end{align*}
An equivalent definition is that $\absDdist{A}_i(x)$ converges to 
$\absDdist{A}(x)$ at points of continuity of $x.$
\qed
\end{definition} 

A simple proof of the following lemma can be found at the end of
Section~\ref{sec:wassersteinanddegradation}.
\blemma[Convergence of Forward DE --
\protect{\cite[Lemma~4.75]{RiU08}}]\label{lem:convergence} The sequence
$\{\Ldens{x}_{\ell}\}$ of distributions of forward DE converges weakly to
a symmetric distribution.  
\elemma

\blemma[BP Threshold]
Consider an ordered and complete channel family $\{\Ldens{c}_\sigma\}$.
Let $\Ldens{x}_{\ell}(\sigma)$ denote the distribution in the $\ell$-th round
of DE when the channel is $\Ldens{c}_\sigma$.
Then the {\em BP threshold} of the $(\dl, \dr)$-regular ensemble 
is defined as
\begin{align*}
\sigma^{\BPsmall}(\dl, \dr) & = \sup\{\sigma: \Ldens{x}_{\ell}(\sigma) \stackrel{\ell \to \infty}{\rightarrow} \Delta_{+\infty}\}.
\end{align*}
\elemma
In other words, the BP threshold is characterized by the largest channel
parameter so that the forward DE FP is trivial.

We have just seen that the FPs of forward DE are important since they
characterize the BP threshold.  But there exist FPs that cannot be
achieved this way. Let us review a general method of constructing FPs.
Assume that, given a channel family $\{ \Ldens{c}_\sigma\}$, we need a FP
$\Ldens{x}$ which has a given {\em error probability} $\perr(\Ldens{x})$,
{\em entropy} $\entropy(\Ldens{x})$, or {\em Battacharyya parameter}
$\batta(\Ldens{x})$.  Such FPs can often be constructed, or at least
their existence can be guaranteed, by a procedure introduced in
\cite{MMRU09}. Let us recall this procedure for the case of fixed entropy.

Consider a smooth, complete, and ordered family $\{\Ldens{c}_\ent\}$
and the $(\dl, \dr)$-regular ensemble.
Let us denote by $\Tc_{\ih}$ the ordinary density evolution
operator at fixed channel $\Ldens{c}_\ih$. Formally,
\begin{eqnarray}
\Tc_{\ih}(\Ldens{a})  = \Ldens{c}_\ih \vconv (\Ldens{a}^{\cconv \dr-1})^{\vconv \dl-1}.
\end{eqnarray}
For any $\xl\in[0,1]$, we define the density evolution operator at fixed
entropy $\xl$, call it $\Td_{\xl}$, as
\begin{eqnarray}
\Td_{\xl}(\Ldens{a}) = \Tc_{\ih(\Ldens{a}, \xl)}(\Ldens{a}),
\end{eqnarray}
where $\ih(\Ldens{a}, \xl)$ is the solution of $\entropy(\Tc_{\ih}(\Ldens{a})) =\xl$.
Whenever no such value of $\ih$ exists, $\Td_{\xl}(\Ldens{a})$
is left undefined.
Since, for a given $\Ldens{a}$, the family
$\Tc_{\ih}(\Ldens{a})$ is ordered by degradation,
$\entropy(\Tc_{\ih}(\Ldens{a}))$ is a non-decreasing function of $\ih$.
As a consequence the equation  $\entropy(\Tc_{\ih}(\Ldens{a})) =\xl$
cannot have more than a single solution.
Furthermore, by the smoothness of the channel family
$\Ldens{c}_\ih$,  $\entropy(\Tc_{\ih}(\Ldens{a}))$ is continuous as 
a function of $\ih$.  Notice  that
$\entropy(\Tc_{0}(\Ldens{a})) = 0$: if the channel is noiseless the
output density at a variable nodes is noiseless as well.
Therefore, a necessary
and sufficient condition for a solution $\ih(\Ldens{a}, \xl)$
to exist (when the family $\{\Ldens{c}_\ent\}$ is complete) is that
$\entropy(\Tc_{1}(\Ldens{a}))= \entropy( (\Ldens{a}^{\cconv \dr-1})^{\vconv \dl-1}) \ge\xl$ (see Theorem 6 in \cite{MMRU09}).

\begin{definition}[DE at Fixed Entropy $\xl$]
Set $\Ldens{a}_0 = \Ldens{c}_\xl$.
For $\ell\ge 0$ compute  $\Ldens{a}_{\ell+1} =
\Td_{\xl}(\Ldens{a}_{\ell})$.  
\qed
\end{definition}
{\em Discussion:}
It can be shown that if the above procedure gives rise to an infinite
sequence, i.e., if $\Td_{\xl}(\cdot)$ is well-defined at each step,
then this sequence has a converging subsequence. In fact, in practice
one observes that the sequence itself converges.
The computation of the convolutions is typically done numerically
either by sampling
or via Fourier transforms as in ordinary density evolution.  Due
to the monotonicity of $\entropy(T_{\ih}(\Ldens{a}_{\ell}))$ in
$\ih$, the value of $\ih(\Ldens{a}_{\ell}, \xl)$ can be efficiently
found by a bisection method.  The procedure is halted when some convergence criterion
is met -- e.g., one can require that (a properly defined) distance
between $\Ldens{a}_\ell$ and $\Ldens{a}_{\ell+1}$ becomes smaller
than a threshold.

Any FP of the above transformation $\Td_{\xl}$, i.e., any $\Ldens{a}$
such that $\Ldens{a} = \Td_{\xl}(\Ldens{a})$, is also a FP of
ordinary density evolution for the channel $\Ldens{c}_\ih$ with $\ih =
\ih(\Ldens{a}, \xl)$.  Furthermore, if a sequence of densities such that
$\Ldens{a}_{\ell+1} = \Td_{\xl}(\Ldens{a}_{\ell})$ converges (weakly)
to  a density $\Ldens{a}$, then $\Ldens{a}$ is a FP of $\Td_{\xl}$,
with entropy $\xl$.

\subsection{BP Threshold for Large Degrees}
What happens to the BP threshold when we fix the design rate
$r=1-\dl/\dr$ and increase the degrees? The proof of the following
lemma, which uses basic extremes of information combining arguments,
can be found in Appendix~\ref{sec:proofofbpboundsuncoupled}.

\begin{lemma}[Upper Bound on BP Threshold]\label{lem:bpboundsuncoupled}
Consider transmission over an ordered and complete family
$\{\Ldens{c}_\ent\}$ of BMS channels using an $(\dl, \dr)$-regular
dd and BP decoding.  Let $r=1-\frac{\dl}{\dr}$ be
the design rate and let $\ent^{\BPsmall}(\dl, \dr)$ denote the BP threshold.
Then,
\begin{align*}
\ent^{\BPsmall}(\dl, \dr) \leq 
\frac{h_2(\frac{1}{2 \sqrt{\dr-1}})}{1 \!-\! ((1-r)\dr)e^{-2\sqrt{\dr\!-\!1}}}.
\end{align*}
In particular, by increasing $\dr$ while keeping the rate $r$ fixed,
the BP threshold converges to $0$.  
\end{lemma}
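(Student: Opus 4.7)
The key observation is that $\ent^{\BPsmall}(\dl,\dr)$ is, by definition, the supremum of channel entropies $\ent$ for which forward DE converges weakly to $\Delta_{+\infty}$ (Lemma~\ref{lem:convergence}). To upper bound this threshold, I would derive a quantitative necessary condition for successful convergence---namely an inequality between the channel entropy and the Bhattacharyya/entropy of any DE fixed point---using the extremes of information combining (Lemma~\ref{lem:extremes}) and the duality rule (Lemma~\ref{lem:dualityrule}). The delicate part is to produce this bound in the precise form $h_2\bigl(1/(2\sqrt{\dr-1})\bigr)\big/\bigl(1 - \dl\, e^{-2\sqrt{\dr-1}}\bigr)$.

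The concrete steps. Because $\batta$ is multiplicative under variable-node convolution, for any fixed point $\Ldens{x}$ of $\Ldens{x} = \Ldens{c} \vconv (\Ldens{x}^{\cconv \dr-1})^{\vconv \dl-1}$ I get $\batta(\Ldens{x}) = \batta(\Ldens{c})\,\batta(\Ldens{x}^{\cconv \dr-1})^{\dl-1}$. Next I apply Lemma~\ref{lem:extremes}(iii) (BSC is extremal at the check node), extended to $\dr-1$ copies as in the Discussion below that lemma, to obtain $\batta(\Ldens{x}^{\cconv \dr-1}) \geq \sqrt{1 - (1-\batta(\Ldens{x})^2)^{\dr-1}}$; the right-hand side is the Bhattacharyya of a check-convolution of $\dr-1$ matched BSCs and is derived from $\batta(\text{BSC}(p))=\sqrt{1-(1-2p)^2}$ together with $(1-2p_{\dr-1})=(1-2p)^{\dr-1}$. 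Combining gives the necessary inequality
\begin{align*}
\batta(\Ldens{x}) \geq \batta(\Ldens{c})\,\bigl(1 - (1-\batta(\Ldens{x})^2)^{\dr-1}\bigr)^{(\dl-1)/2}
\end{align*}
at any fixed point of DE. Lemma~\ref{lem:entropyvsbatta} ($\entropy\leq\batta$ and $\batta^2\leq\entropy$) then lets me translate into an entropy inequality relating $\entropy(\Ldens{c})$ to $\entropy(\Ldens{x})$. Evaluating at the BSC-crossover scale $q = 1/(2\sqrt{\dr-1})$---the natural balancing point since $(1-2q)^{2(\dr-1)} = (1 - 1/\sqrt{\dr-1})^{2(\dr-1)} \lesssim e^{-2\sqrt{\dr-1}}$ by the standard estimate $(1-x/n)^n \leq e^{-x}$---contributes the numerator $h_2(q)$, while expanding $\bigl(1-(1-2q)^{2(\dr-1)}\bigr)^{(\dl-1)/2}$ by Bernoulli's inequality and recalling $(\dl-1)\le\dl$ yields the denominator $1 - \dl e^{-2\sqrt{\dr-1}}$.

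\textbf{Main obstacle.} The difficult bookkeeping lies in two places. First, the extremes of Lemma~\ref{lem:extremes} must be applied in the correct directions at each step: the BSC extreme at the check node (to lower-bound the check-convolution Bhattacharyya) paired with the crude bound $\batta^2\leq\entropy$ on the channel side to close the loop back to entropy. Second, the specific scale $q = 1/(2\sqrt{\dr-1})$ is not apparent from the fixed-point equation alone; it emerges from optimizing the resulting one-dimensional bound, and verifying that this choice is the correct one to match both the $h_2(\cdot)$ argument and the $e^{-2\sqrt{\dr-1}}$ in the denominator is the essential calculation. Once set up, the asymptotic consequence is immediate: fixing $r=1-\dl/\dr$ and letting $\dr\to\infty$, both $h_2(1/(2\sqrt{\dr-1}))\to 0$ and $\dl\,e^{-2\sqrt{\dr-1}} = (1-r)\dr\,e^{-2\sqrt{\dr-1}}\to 0$, so the upper bound---and therefore $\ent^{\BPsmall}(\dl,\dr)$---tends to $0$.
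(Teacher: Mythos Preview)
Your Bhattacharyya-based strategy is sound in spirit and would indeed show that the BP threshold tends to zero, but it does \emph{not} deliver the precise bound stated in the lemma, and the step where you claim it does is a gap.

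From $\batta(\Ldens{x}_{\ell+1})=\batta(\Ldens{c})\,\batta(\Ldens{x}_\ell^{\cconv \dr-1})^{\dl-1}$ and the BSC extreme at the check node you correctly get $\batta(\Ldens{x}_{\ell+1})\ge \batta(\Ldens{c})\bigl(1-(1-\batta(\Ldens{x}_\ell)^2)^{\dr-1}\bigr)^{(\dl-1)/2}$. Arguing that convergence requires this one-dimensional recursion to have no positive fixed point, and then applying $\entropy(\Ldens{c})\le\batta(\Ldens{c})$, yields
\[
\ent^{\BPsmall}\le \min_{b\in(0,1]}\frac{b}{\bigl(1-(1-b^2)^{\dr-1}\bigr)^{(\dl-1)/2}}.
\]
If you now substitute $b=2\sqrt{q(1-q)}$ with $q=1/(2\sqrt{\dr-1})$, the denominator behaves exactly as you say, but the \emph{numerator} is $2\sqrt{q(1-q)}\sim \sqrt{2}\,(\dr-1)^{-1/4}$, \emph{not} $h_2(q)\sim (\log_2\dr)/(2\sqrt{\dr-1})$. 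Since $h_2(q)\le 2\sqrt{q(1-q)}$ (Lemma~\ref{lem:boundsonent}), your bound is strictly weaker than the paper's, by a polynomial factor in~$\dr$. The sentence ``evaluating at $q$ contributes the numerator $h_2(q)$'' is where the argument breaks: nothing in the Bhattacharyya route produces an $h_2$.

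The paper's proof is genuinely different. It works in \emph{entropy} throughout and uses \emph{both} extremes: BSC at the check node (minimum output entropy for given input entropy, which is where $h_2\bigl((1-(1-2p)^{\dr-1})/2\bigr)$ arises) and BEC at the variable node (maximum tolerable input entropy for given output entropy, which gives $(x/c)^{1/(\dl-1)}$). Comparing these two one-dimensional curves in entropy and plugging $p=1/(2\sqrt{\dr-1})$ yields the numerator $h_2(1/(2\sqrt{\dr-1}))$ directly; the denominator is then lower-bounded via the Taylor expansion of $h_2((1-z)/2)$ around $z=0$. Your approach, by contrast, exploits the exact multiplicativity of $\batta$ at the variable node (so no BEC extreme is needed there), which is cleaner but loses tightness when you translate the final answer back to entropy via the coarse bound $\entropy\le\batta$.
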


\subsection{The Wasserstein Metric: Definition and Basic Properties}
\label{sec:wassersteinmetric}
In the sequel we will often need to measure how close various
distributions are. Sometimes it is convenient to compare their entropy
or their Battacharyya constant. But sometimes a more general measure
is required. The Wasserstein metric is our measure of choice.
\begin{definition}[Wasserstein Metric -- \protect{\cite[Chapter 6]{Villani09}}]\label{def:wasserstein}
Let $\absDdens{a}$ and $\absDdens{b}$ denote two $|D|$-distributions.
The Wasserstein  metric,
denoted by $d(\absDdens{a}, \absDdens{b})$, is defined as
\begin{align}\label{eq:blmetric}
d(\absDdens{a}, \absDdens{b})=\!\!\!\!\!\!\sup_{f(x) \in \Lip(1)[0, 1]} \!\Big\vert \int_{0}^{1} \!\!f(x)(\absDdens{a}(x)\!-\! \absDdens{b}(x)) \,\dee x \Big\vert,
\end{align}
where $\Lip(1)[0, 1]$ denotes the class of Lipschitz continuous functions on $[0, 1]$
with Lipschitz constant $1$.
\qed
\end{definition}
{\em Discussion}: 
In the sequel we will say that a function $f(x)$ is $\Lip(c)$ as a shorthand
to mean that it is Lipschitz continuous with constant $c$. If we want
to emphasize the domain, then we write e.g., $\Lip(c)[0, 1]$. Why have
we defined the metric in the $|D|$-domain?  As the next lemma shows,
%since the $|D|$ domain is bounded, 
convergence in this
metric implies weak convergence. Since all the distributions of
interest are symmetric, it suffices to look at the $|D|$-domain
instead of the $D$-domain.  To ease our notation, however, we will formally
write expressions like $d(\Ldens{a}, \Ldens{b})$, i.e., we will allow
the arguments to be e.g. $L$-distributions.
It is then implied that the metric is determined
using the equivalent $|D|$-domain representations
as defined above.

\begin{lemma}[Basic Properties of the Wasserstein Metric]\label{lem:blmetric}
In the following, $\Ldens{a}$, $\Ldens{b}$, $\Ldens{c}$, and $\Ldens{d}$
denote $L$-distributions.  

In the $|D|$ domain we have the following expressions for 
$\batta(\Ldens{a})$ and $\entropy(\Ldens{a})$ (compare this to the
expressions in the $L$ domain given in Section~\ref{sec:BPandDE}),
\begin{align*}
\batta(\absDdens{a}) & = \int_0^1 \sqrt{1-x^2}\absDdens{a}(x) \dee x, \\
\entropy(\absDdens{a}) & = \int_0^1 h_2\Big(\frac{1-x}2\Big)\absDdens{a}(x) \dee x,
\end{align*}
where $h_2(x) = -x\log_2 x - (1-x)\log_2(1-x)$ is the binary entropy function.
See \cite{Zol97,Villani09} for more details on metrics for probability measures. 
\renewcommand\theenumi{\roman{enumi}} 
\renewcommand{\labelenumi}{(\roman{enumi})}
\begin{enumerate} \item {\em Alternative Definitions}: \label{lem:alternative}
\begin{align*}
d(\Ldens{a}, \Ldens{b}) & = 
\inf_{p(x, y): p(x) \sim \absDdens{a}; p(y) \sim \absDdens{b}} E[|X-Y|], \\
d(\Ldens{a}, \Ldens{b}) & = 
\int_{0}^{1} |\absDdist{a}(x)-\absDdist{b}(x)| \dee x.
\end{align*}
 
\item {\em Boundedness}: 
\label{lem:blmetricboundedness} 
$d(\Ldens{a}, \Ldens{b}) \leq 1$. 

\item {\em Metrizable and Weak
Convergence}:\label{lem:blmetricmetrizable} The Wasserstein metric
induces the weak topology on the space of probability measures on $[0,
1]$. In other words, the space of probability measures under the weak
topology is metrizable and convergence in the Wasserstein metric is
equivalent to weak convergence (see \cite[Theorem 6.9]{Villani09}).

\item {\em Polish Space}:
\label{lem:blmetricpolish}
The space of probability distributions on $[0, 1]$ metrized by the
Wasserstein distance is a complete separable metric space, i.e.,
a Polish space, and any measure can be approximated by a sequence of
probability measures with finite support, i.e., distributions of the
form $\sum_{i=1}^{n} c_i \delta(x-x_i)$, where $\sum_{i=1}^{n} c_i=1$,
$c_i \geq 0$, and $x_i \in [0, 1]$.  Further, the space is compact.
(See \cite[Theorem 6.18]{Villani09}.)
% metrizable: clear
% separable: there exists a countable dense subset
% complete: every Cauchy sequence converges to a point in this space
% compact: every collection of open covers has a finite sub-collection of open covers

\item {\em Convexity}: 
\label{lem:blmetricconvexity} 
Let $\alpha\in [0,1]$.   Then
\begin{align*}
d(\alpha \Ldens{a} + \bar{\alpha}\Ldens{b}, \alpha \Ldens{c} + \bar{\alpha}\Ldens{d})\leq \alpha d(\Ldens{a} ,  \Ldens{c}) + \bar{\alpha}d(\Ldens{b} ,  \Ldens{d}).
\end{align*}
In general, if $\sum_i \alpha_i = 1$, then 
\begin{align*}
d(\sum\alpha_i \Ldens{a}_i, \sum \alpha_i\Ldens{b}_i)\leq \sum \alpha_i d(\Ldens{a}_i ,  \Ldens{b}_i).
\end{align*}

\item {\em Regularity wrt $\vconv$}: 
\label{lem:blmetricregularvconv}
The Wasserstein metric satisfies the regularity property
$d(\Ldens{a}\vconv \Ldens{c}, \Ldens{b}\vconv \Ldens{c}) \leq 2 d(\Ldens{a}, \Ldens{b})$,
so that
\begin{align*}
d(\Ldens{a}\vconv \Ldens{c}, \Ldens{b}\vconv \Ldens{d}) & \leq 
d(\Ldens{a}\vconv \Ldens{c}, \Ldens{b}\vconv \Ldens{c}) +
d(\Ldens{b}\vconv \Ldens{c}, \Ldens{b}\vconv \Ldens{d}) \\
& \leq 2 d(\Ldens{a}, \Ldens{b}) + 2 d(\Ldens{c}, \Ldens{d}),
\end{align*}
and for $i \geq 2$ and any distribution $\Ldens{c},$
$d(\Ldens{a}^{\vconv i}\vconv\Ldens{c}, \Ldens{b}^{\vconv i}\vconv\Ldens{c}) \leq 2 i d(\Ldens{a}, \Ldens{b})$.

\item {\em Regularity wrt $\cconv$}: 
\label{lem:blmetricregularcconv}
The Wasserstein metric satisfies the regularity property
$d(\Ldens{a}\cconv \Ldens{c}, \Ldens{b}\cconv \Ldens{c})\leq d(\Ldens{a}, \Ldens{b})\sqrt{1-\batta^2(\Ldens{c})} \leq d(\Ldens{a}, \Ldens{b})$,
so that
\begin{align*}
 d(\Ldens{a}\cconv \Ldens{c}, \Ldens{b}\cconv \Ldens{d}) & \leq 
d(\Ldens{a}\cconv \Ldens{c}, \Ldens{b}\cconv \Ldens{c}) +
d(\Ldens{b}\cconv \Ldens{c}, \Ldens{b}\cconv \Ldens{d}) \\
& \leq d(\Ldens{a}, \Ldens{b}) + d(\Ldens{c}, \Ldens{d}).
\end{align*}
Further,
\begin{align*}
d(\Ldens{a}^{\cconv i}, \Ldens{b}^{\cconv i})  \!\leq\! d(\Ldens{a}, \Ldens{b})\sum_{j=1}^i (1\!-\!\batta^2(\Ldens{a}))^{\frac{i\!-\!j}2}(1\!-\!\batta^2(\Ldens{b}))^{\frac{j\!-\!1}2}.
\end{align*}

\item {\em Regularity wrt DE}:
\label{lem:blmetricregularDE} Let $\Tc_{\Ldens{c}}(\cdot)$ denote the
DE operator for the dd $(\dl, \dr)$ and the channel $\Ldens{c}$.  Then
$d(\Tc_{\Ldens{c}}(\Ldens{a}), \Tc_{\Ldens{c}}(\Ldens{b}))  \leq \alpha d(\Ldens{a}, \Ldens{b})$, with
$$
\alpha=2 (\dl-1) \sum_{j\!=\!1}^{\dr\!-\!1} (1\!-\!\batta^2(\Ldens{a}))^{\frac{\dr\!-\!1\!-\!j}2}(1\!-\!\batta^2(\Ldens{b}))^{\frac{j\!-\!1}2}.
$$

\item  {\em Wasserstein Bounds Battacharyya and Entropy:}
\label{lem:blmetricwasserboundsbatta} 
\begin{align*}
 |\batta(\Ldens{a})-\batta(\Ldens{b})| & \leq 
\sqrt{d(\Ldens{a}, \Ldens{b})}
\sqrt{2-d(\Ldens{a}, \Ldens{b})}\,, \\ 
|\entropy\bigl(\Ldens{a})-\entropy(\Ldens{b}\bigr)| & \leq 
h_2\Bigl( \frac{d(\Ldens{a}, \Ldens{b})}{2} \Bigr) \\
& \leq
\frac{1}{\ln 2}
\sqrt{d(\Ldens{a}, \Ldens{b})}
\sqrt{2-d(\Ldens{a}, \Ldens{b})}\,.
\end{align*}

\item {\em Battacharyya Sometimes Bounds Wasserstein}:
\label{lem:blmetricbattaboundswasser} 
\begin{align*}
& d(\Delta_0, \Ldens{a}) 
\leq \sqrt{1-\batta(\Ldens{a})^2}
\leq \sqrt{2(1-\batta(\Ldens{a}))}, \\
& d(\Delta_{+\infty}, \Ldens{a}) 
\leq \batta(\Ldens{a}).
\end{align*}
\end{enumerate}
\elemma
{\em Discussion:} Perhaps the most useful property of the Wasserstein
metric is that it interacts nicely with the operations of variable-
and check-node convolution.  This is the essence of properties
(\ref{lem:blmetricregularvconv}), (\ref{lem:blmetricregularcconv}),
and (\ref{lem:blmetricregularDE}). For example, it is easy to see why property
(\ref{lem:blmetricregularDE}) might be useful: Given that two
distributions $\Ldens{a}$ and $\Ldens{b}$ are close, it asserts
that after one iteration of DE these two distributions are again
close.  Indeed, as we will see shortly, depending on the Battacharyya
parameter of the starting distributions the distance might in fact
become smaller, i.e., we might have a {\em contraction}.

\subsection{Wasserstein Metric and Degradation}
\label{sec:wassersteinanddegradation}
When densities ordered by degradation, some
the Wasserstein metric inherits some additional properties.
\begin{lemma}[Wasserstein Metric and Degradation]
\label{lem:degradationandwasserstein}
In the following $\Ldens{a}$ and $\Ldens{b}$ denote $L$-distributions.
\renewcommand\theenumi{\roman{enumi}} 
\renewcommand{\labelenumi}{(\roman{enumi})}
\begin{enumerate}
\item {\em Wasserstein versus Degradation}: 
\label{lem:blmetricdegradation} Let $\Ldens{a} \prec \Ldens{b}$.
Let $\absDdist{A}$ and $\absDdist{B}$ denote the corresponding
$|D|$-domain cdfs.  Define $D(\Ldens{a}, \Ldens{b}) =\int_0^1 x
(\absDdist{B}(x)-\absDdist{A}(x)) \dee x$.  Note that $D(\Ldens{a},
\Ldens{b})$ can be seen as a measure of how much $\Ldens{b}$ is
degraded wrt $\Ldens{a}$ since it is the average of the non-negative
integrals $\int_z^1 (\absDdist{B}(x)-\absDdist{A}(x)) \dee x$ (cf. \eqref{equ:degradationcdfs}).  Then
$$D(\Ldens{a}, \Ldens{b}) \geq d^2(\Ldens{a}, \Ldens{b})/4.$$
Furthermore, $D(\Ldens{a}, \Ldens{b}) \leq 1$ and for any symmetric
densities such that $\Ldens{a} \prec \Ldens{b} \prec \Ldens{c}$, $D(\Ldens{a},
\Ldens{c}) = D(\Ldens{a}, \Ldens{b}) + D(\Ldens{b}, \Ldens{c})$.

\item {\em Entropy and Battacharyya Bound Wasserstein Distance}:
\label{lem:blmetricentropy} 
Let $\Ldens{a} \prec \Ldens{b}$. 
Then
\[
d(\Ldens{a}, \Ldens{b})  \leq  
2\sqrt{(\ln 2)(\entropy(\Ldens{b}) - \entropy(\Ldens{a}))}
\leq
2\sqrt{\batta(\Ldens{b}) - \batta(\Ldens{a})}
\]
and
$\batta(\Ldens{b}) - \batta(\Ldens{a})
\leq
\sqrt{2(\entropy(\Ldens{b}) - \entropy(\Ldens{a}))}\,.
$
\item{\em Continuity for Ordered Families}: 
\label{lem:blmetriccontinuity} 
Consider a smooth family of $L$-distributions
$\{\Ldens{c}_\sigma\}_{\underline{\sigma}}^{\overline{\sigma}}$ ordered
by degradation so that $\batta(\cdot)$ is continuous wrt
$\sigma \in [\underline{\sigma}, \overline{\sigma}]$.
Then the Wasserstein metric is also continuous in $\sigma$.
\end{enumerate}
\end{lemma}
{\em Discussion:} Property (\ref{lem:blmetricdegradation}) is
particularly useful. Imagine a sequence of distributions
$\{\Ldens{a}_i\}_{i=0}^{n}$ ordered by degradation, i.e., $\Ldens{a}_0
\prec \Ldens{a}_1 \prec \dots \prec \Ldens{a}_n$. Then $\Ldens{a}_0
\prec \Ldens{a}_n$  and we know from \cite{RiU08} that $D(\Ldens{a}_0,
\Ldens{a}_n) =\int_0^1 z (\absDdist{A}_n-\absDdist{A}_0) \dee z$
is non-negative since it is the ``average'' of the non-negative
integrals $\int_y^1 (\absDdist{A}_n-\absDdist{A}_0) \dee z$. Now
note that $D(\cdot, \cdot)$ is additive and that $D(\Ldens{a}_0,
\Ldens{a}_n) \leq 1$. From these two facts we can conclude
that there must exist an index $i$, $0 \leq i \leq n-1$, so that
$D(\Ldens{a}_i, \Ldens{a}_{i+1}) \leq \frac{1}{n}$. More generally,
we can conclude for any $1 \leq k \leq n$ that there must exist an
index $i$, $0 \leq i \leq n-k$, so that $D(\Ldens{a}_i, \Ldens{a}_{i+k})
\leq \min\{\frac{k}{n-k+1}, 1\} \leq \frac{2 k}{n}$. This follows by upper bounding the average
of all these $n-k+1$ such distances. By property (\ref{lem:blmetricdegradation}) this
implies ``closeness"  also in the Wasserstein sense. In words, in
a sequence of distributions ordered by degradation we are always
able to find a subsequence of distributions which are  ``close'' in
the Wasserstein sense.

As an exercise in using the basic properties of the Wasserstein
distance, let us give a proof of Lemma~\ref{lem:convergence}.
\begin{IEEEproof}
Since we are considering a sequence of distributions obtained by forward DE, we have $\Ldens{x}_{\ell} \succ \Ldens{x}_{\ell+1}$ for $\ell \geq
0$.  Therefore, the quantities $D(\Ldens{x}_{\ell}, \Ldens{x}_{\ell+1})$
are non-negative and they are additive in the sense that $D(\Ldens{x}_0,
\Ldens{x}_n)=\sum_{\ell=0}^{n-1} D(\Ldens{x}_\ell, \Ldens{x}_{\ell+1})$.
Further, $D(\cdot, \cdot)$ is upper bounded by $1$. It follows that
$\{\Ldens{x}_\ell\}$ forms a Cauchy sequence wrt to $D(\cdot, \cdot)$
and hence also wrt $d(\cdot, \cdot)$. This in turn implies that
$\{\Ldens{x}_\ell\}$ converges wrt $d(\cdot, \cdot)$ and this
convergence is equivalent to weak convergence. Finally, symmetry
can be tested in terms of bounded continuous functionals and weak
convergence preserves such functionals.  
\end{IEEEproof}

\subsection{GEXIT Curve}\label{sec:gexitcurves}
As we have discussed in the preceding section, FPs of DE play a crucial
role in the asymptotic analysis. E.g., the BP threshold is characterized
by the existence/non-existence of a non-trivial FP of forward DE for a
particular channel.

An even more powerful picture arises if instead of looking at
a single FP at a time we visualize a whole {\em collection} of FPs.
In order to visualize many FPs at the same time it is
convenient to project them.  E.g., given the FP pair $(\Ldens{c},
\Ldens{x})$ we might decide to plot the point $(\entropy(\Ldens{c}),
\entropy(\Ldens{x}))$ in the two-dimensional unit box $[0, 1] \times
[0, 1]$. 
\begin{example}[BP EXIT Curve for BEC]\label{exa:exit} Note that for
the BEC, erasure probability is equal to Battacharyya parameter, and
also equal to entropy. Even though all these parameters are equal in
this case, our language will reflect that we are plotting entropy.

Rather than plotting $x$ itself it is convenient to plot the {\em EXIT}
value $(1-(1-x)^{\dr-1})^{\dl}$. This is the locally best estimate of
a bit based on the internal messages only, {\em excluding}
the direct observation.  For this choice the resulting
curve is usually called the {\em BP EXIT} curve, see \cite{tBr00,AKtB04}
and \cite[Sections 3.14 and 4.10]{RiU08}. It is the {\em BP} EXIT curve
since the estimate is a BP estimate. And it is the BP {\em EXIT} (where
the E stands for ``extrinsic'') curve since the estimate excludes the
received value associated to this bit.

The FP equation is $x=\epsilon (1-(1-x^{\dr-1}))^{\dl-1}$,
which we can solve for $\epsilon$ to get 
\begin{align}\label{equ:epsofx}
\epsilon(x) & = \frac{x}{(1-(1-x^{\dr-1}))^{\dl-1}}.
\end{align}
Using (\ref{equ:epsofx}) we can write down the parametric
characterization of the BP EXIT curve
\begin{align*}
\Bigl(\frac{x}{(1-(1-x^{\dr-1}))^{\dl-1}}, (1-(1-x)^{\dr-1})^{\dl}\Bigr).
\end{align*}
This curve is shown in the left-hand side in Figure~\ref{fig:ebpexit36bec} for
the $(3, 6)$-regular ensemble and has a typical $C$ shape.  In fact, one can
show that, in this case, for $\epsilon<\epsilon^{\BPsmall}(\dl, \dr)$ (the BP threshold) there is
only one FP at $x=0$ corresponding to perfect
decoding; for $\epsilon=\epsilon^{\BPsmall}(\dl, \dr)$ there are 2
FPs, one is at $x=0$ and the other is the FP corresponding to forward
DE; and for $\epsilon>\epsilon^{\BPsmall}(\dl, \dr)$  there
are exactly 3 FPs of DE, one of the FPs is at $x=0$  and  the remaining two FPs are strictly positive, one of which is {\em
stable}, denoted by $\xstab(\epsilon)$, whereas the other is {\em unstable},
denoted by $\xunstab(\epsilon)$.  The stable FP is the FP which is reached by
forward DE.  For details see Lemma~\ref{lem:propertyofh(x)}.

A quantity which will appear throughout this paper is the value of
the unstable FP when transmitting over BEC$(\epsilon=1)$. We denote this FP
by $\xunstab(1)$. More precisely,
 $\xunstab(1)$ is the smaller non-zero solution of $x=(1 -
 (1-x)^{\dr-1})^{\dl-1}$.  Note that $\xunstab(1)$ depends on the
 degrees, but we drop it from the notation for ease
of exposition.

\end{example} 
\begin{figure}[htp]
\centering
\input{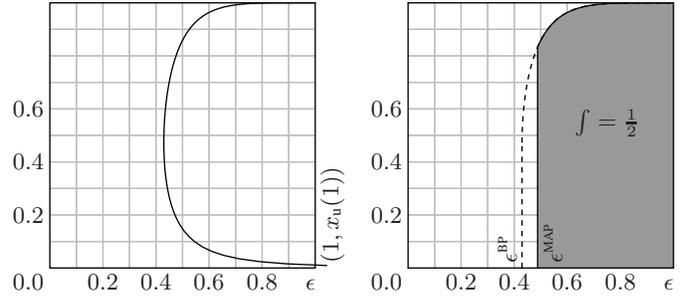}
\caption{\label{fig:ebpexit36bec} Left: The BP EXIT curve of the $(\dl=3,
\dr=6)$-regular ensemble when transmitting over the BEC.  The curve has a characteristic ``C'' shape.
Right: The construction
of the MAP threshold from the BP EXIT curve. The dark gray area is equal
to the design rate of the code.}
\end{figure}
{\em Discussion}: 
The above example raises the following two questions.  (1) We
have a large degree of freedom in selecting the projection operator.
Which one is ``best''? (2) From the above example we see that the
set of FPs forms a smooth curve. Indeed, for the BEC it is not hard to
see that the only FPs are the ones on the curve together with all
the FPs of the form $(\Ldens{c}_\epsilon, \Delta_{+\infty})$, where
$\Ldens{c}_\epsilon$ is any element of the family of BEC channels and
$\Delta_{+\infty}$ corresponds to erasure value of 0.
Is this picture still valid for general channel families?

In the remainder of this section we address the first question, i.e.,
we will discuss a particularly effective choice of the projection operator.
In the next section we will address the question
of the existence and nature of this curve for the general case,
presenting some partial results.

A good choice for the projection operator for general channels is
the {\em GEXIT functional} \cite{MMRU09}.  For the BEC this coincides
with the EXIT functional that we saw in Example~\ref{exa:exit}.
For the general case take a FP
$(\Ldens{c}_{\sigma}, \Ldens{x}_\sigma)$ 
and define $\Ldens{y}=\Ldens{x}^{\cconv \dr-1}_\sigma$.
Then
\begin{align*}
G(\Ldens{c}_{\sigma}, \Ldens{y}^{\vconv \dl}) & = 
\frac{\frac{\dee}{\dee \sigma} \entropy(\Ldens{c}_{\sigma} \vconv \Ldens{y}^{\vconv \dl})}{
\frac{\dee}{\dee \sigma} \entropy(\Ldens{c}_{\sigma} )},
\end{align*} 
where we think of $\Ldens{y}$ as fixed with respect to $\sigma$.
In words, $G(\Ldens{c}_{\sigma}, \cdot )$ measures the ratio of the
change in entropy of $\Ldens{c}_\sigma \vconv \Ldens{y}^{\vconv
\dl}$ (the entropy of the decision of any variable node under BP
decoding) versus the change of entropy of the channel $\Ldens{c}_\sigma$
as a function of $\sigma$.

{\em Discussion:} Note that if the parameterization in $\sigma$ is
Lipschitz, i.e., if for some positive constant $\alpha$,
$|\entropy(\Ldens{c}_{\sigma_2}) -\entropy(\Ldens{c}_{\sigma_1})|
\leq \alpha |\sigma_2-\sigma_1|$, then the derivative $\frac{\dee}{\dee
\sigma} \entropy(\Ldens{c}_{\sigma} )$ exists almost everywhere.  Further,
in this case also $\entropy(\Ldens{c}_{\sigma} \vconv \Ldens{y}^{\vconv
\dl})$ is Lipschitz and hence differentiable almost everywhere. This follows since
by (the Duality Rule in) Lemma~\ref{lem:dualityrule}, for $\sigma_2 \geq \sigma_1$,
\begin{align*}
 [\entropy(\Ldens{c}_{\sigma_2} \vconv \Ldens{y}^{\vconv \dl}) & -
\entropy(\Ldens{c}_{\sigma_1} \vconv \Ldens{y}^{\vconv \dl})] 
\\ 
& + [\entropy(\Ldens{c}_{\sigma_2} \cconv \Ldens{y}^{\vconv \dl})-
\entropy(\Ldens{c}_{\sigma_1} \cconv \Ldens{y}^{\vconv \dl})] \\
&= [\entropy(\Ldens{c}_{\sigma_2})-\entropy(\Ldens{c}_{\sigma_1})] \leq 
\alpha |\sigma_2 - \sigma_1|,
\end{align*}
where the last step on the right-hand side assumes that the
parameterization is such that $\entropy(\Ldens{c}_{\sigma})$ increases
in $\sigma$.  The claim follows since both terms on the left
are non-negative (due to degradation), so that in particular the first term is upper
bounded by $\alpha |\sigma_2 - \sigma_1|$, i.e., it is Lipschitz.
This formulation also shows that the numerator is no larger than the
denominator (so that the ratio exists) and that the GEXIT value is
upper bounded by $1$ (and is non-negative).

We get the GEXIT {\em curve} by plotting $(\entropy(\Ldens{c}_{\sigma}),
G(\Ldens{c}_{\sigma}, \Ldens{y}^{\vconv \dl}))$ for a family of
FPs $\{\Ldens{c}_{\sigma}, \Ldens{x}_\sigma\}$.  This is shown in
Figure~\ref{fig:ebpexit36} for the $(3, 6)$-regular ensemble assuming
that transmission takes place over the BAWGNC.
In the last section we have already explained how we can construct
in the general case FPs by a numerical procedure. To plot
Figure~\ref{fig:ebpexit36} we have used this procedure to get a
complete family of FPs for all entropies from $0$ to $1$. In each
of the two pictures of Figure~\ref{fig:ebpexit36} there is a small
black dot. This dot marks a particular FP and the two small inlets
show the corresponding distribution of the channel $\Ldens{c}_\sigma$
as well as the message distribution emitted at the variable nodes,
call it $\Ldens{x}_\sigma$. 
For a detailed
discussion we refer the reader to \cite{MMRU09,RiU08}.
\begin{figure}[htp] \centering
\input{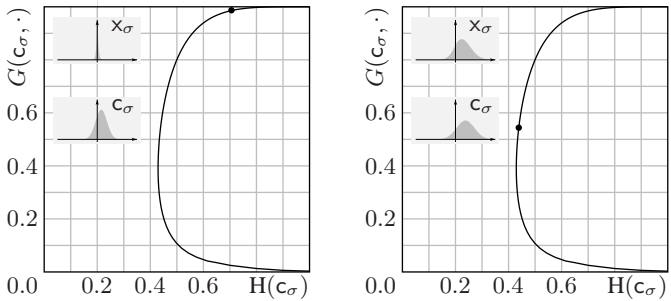} 
\caption{\label{fig:ebpexit36} 
The BP GEXIT curve for the $(\dl=3,
\dr=6)$-regular ensemble and transmission over the BAWGNC.  Each point on
the curve corresponds to a FP $(\Ldens{c}_{\sigma}, \Ldens{x}_\sigma)$ of DE.  
The two figures show the 
FP density $\Ldens{x}$ 
as well as the input density $\Ldens{c}_{\sigma}$
for two points on the curves (see inlets).} 
\end{figure}

Why do we use this particular representation? As we will discuss
in detail in Section~\ref{sec:areatheorem}, assuming this curve
indeed exists and is ``smooth'', the area which is enclosed by it
is equal to $r=1-\dl/\dr$, the design rate of the ensemble.

This is easy to see for the BEC. To simplify notation,
denote the GEXIT value in this case by $G(\epsilon, y^{\dl})$, 
where $\epsilon$ is the erasure probability, $x$ is the FP 
for this channel parameter, and $y=1-(1-x)^{\dr-1}$.
We then have
$G(\epsilon, y^{\dl})=(1-(1-x)^{\dr-1})^{\dl}$.
Let us integrate the area which is enclosed by this curve.
We call the corresponding integral the GEXIT integral. For our particular
case it is given by 
\begin{align*}
& \int (1-(1-x)^{\dr-1})^{\dl} \,\dee \epsilon  = 
\int_{0}^{1} (1\!-\!(1\!-\!x)^{\dr-1})^{\dl} \epsilon'(x) \,\dee x  \\
= & \epsilon(x) (1-(1-x)^{\dr-1})^{\dl} \mid_{0}^{1} + \\ 
& - \dl (\dr-1) \int_{0}^{1} \epsilon(x) (1-x)^{\dr-2} (1-(1-x)^{\dr-1})^{\dl-1} 
\,\dee x  \\
= &  1 - \dl (\dr-1) \int_{0}^{1} x (1-x)^{\dr-2} \,\dee x \\
= &  1 + \dl x (1-x)^{\dr-1} {\Large \mid}_{0}^{1} - \dl \int_{0}^{1} (1-x)^{\dr-1} \,\dee x
=   1-\frac{\dl}{\dr}. 
\end{align*}
Perhaps surprisingly, the result stays valid for general channels as we
will discuss in Section~\ref{sec:areatheorem}. This property is one of
the main ingredients in our proof.

Note that given $\Ldens{c}_{\ent}$ and $\Ldens{z}_{\ent}$, the GEXIT
functional $G(\Ldens{c}_{\ent}, \Ldens{z}_{\ent})$ can be expressed
in the form $\int \Ldens{z}_{\ent}(w) f(\ent, w) \dee w$, where
$f(\ent, w)$ is called as the GEXIT kernel.
In the $|D|$-domain this kernel is given by
\begin{align}\label{eq:gexitkernel|D|domain} \int_0^1
\!\!\frac{\dee \Ldens{c}_{\ent}(z)}{\dee \ent}\underbrace{\!\!\!\Big(\!\!\!\sum_{i,j=\pm
1}\!\!\!\!\! \frac{(1\!+\!iz)(1\!+\!jw)}4
 \log_2\!\Big(\!1\!+\!
 \frac{(1\!-\!iz)(1\!-\!jw)}{(1\!+\!iz)(1\!+\!jw)}\Big)\!\Big)}_{=k(z, w)}\! \dee
 z.
\end{align}
For a proof of the following see Lemma 4.77, \cite{RiU08}.
\begin{lemma}[GEXIT for Smooth and Ordered Channels]\label{lem:gexitsmoothfamily}
For a smooth, ordered, channel family $\{\Ldens{c}_{\ent}\}_{\ent}$, $f(\ent, w)$, as a
function of $w$, exists, is continuous, non-negative, non-increasing and concave
on its entire domain. Further $f(\ent, 0)=1$ and $f(\ent, 1)=0$.
\end{lemma}
We remark that the above lemma also holds when $\{\Ldens{c}_{\ent}\}$ is piece-wise linear.

\subsection{Existence of GEXIT Curve}\label{sec:gexitexistence}
As we briefly discussed above, for the BEC it is trivial to see
that the BP GEXIT curve indeed exists. But for general BMS channels
this is not immediate.  The aim of this section is to show the
existence of the BP GEXIT curve for at least a subset of parameters.

Let us first recall the following lemma which was stated and proved in a
slightly weaker form in \cite{MMU08}. For the convenience of the reader
we reproduce the proof in Appendix~\ref{app:bounds}.

\begin{lemma}[Sufficient Condition for Continuity]\label{lem:condforcontinuity}
Assume that communication takes place over an ordered and complete family
$\{\Ldens{c}_{\ih}\}_{\ih}$, where $\ent=\entropy(\Ldens{c}_\ent)$,
using the \ddp $(\dl,\dr)$.

Then, for any $\ih \in [0, 1]$, there exists at most one 
density $\Ldens{x}_{\ih}$ so that $(\Ldens{c}_{\ih}, \Ldens{x}_{\ih})$ forms
a FP which fulfills 
\begin{eqnarray}
\batta(\Ldens{c}_{\ih}) (\dl-1)(\dr-1)(1-\batta(\Ldens{x}_{\ih})^2)^{\dr-2}<1
\, .\label{eq:UniquenessCond}
\end{eqnarray}
Furthermore, if such a density $\Ldens{x}_{\ih}$ exists, then it coincides with
the  forward DE FP. Finally,
$\batta(\Ldens{x}_{\ih})$ is Lipschitz continuous with respect to $\batta(\Ldens{c}_{\ih})$.
More precisely, if two FPs
$(\Ldens{c}_{\ih_1}, \Ldens{x}_{\ih_1})$ and $(\Ldens{c}_{\ih_2}, \Ldens{x}_{\ih_2})$ satisfy the condition
$\batta(\Ldens{c}_{\ih_i}) (\dl-1)(\dr-1)(1-\batta(\Ldens{x}_{\ih_i})^2)^{\dr-2} \le 1-\delta$
for some $\delta>0$, then  
\begin{eqnarray}
|\batta(\Ldens{x}_{\ih_1})-\batta(\Ldens{x}_{\ih_2})|\le \frac{1}{\delta} \,|\batta(\Ldens{c}_{\ih_1}) - \batta(\Ldens{c}_{\ih_2})|\, .
\label{eq:Lipschitz}
\end{eqnarray}
\end{lemma}

The following lemma states that, at least for sufficiently large entropies,
the BP GEXIT curve indeed exists and is well behaved.
\begin{lemma}[Continuity For Large Entropies]\label{lem:continuityforlargeentropy}
Assume that communication takes place over an ordered and complete family
$\{\Ldens{c}_{\ih}\}_{\ih}$, where $\ent=\entropy(\Ldens{c}_\ent)$,
using the \ddp $(\dl,\dr)$.
Consider the set of FP pairs
$\{(\Ldens{c}_{\ih}, \Ldens{x}_{\ih})\}$ obtained by applying
forward DE to each channel $\Ldens{c}_{\ih}$. Let 
\begin{align*}
a(x) & = (1-(1-x)^{\dr-1})^{\dl-1}, \\
b(x) & = (\dl-1)^2(\dr-1)^2 x (1-x)^{2(\dr-2)}, \\
c(x) & = \sqrt{x/a(x)}.
\end{align*}
Let $\xLE$ be the unique solution in $(0, 1]$ of the equation
\begin{align}\label{equ:existence}
a(x) - b(x)=0.
\end{align}
Then the family $\{(\Ldens{c}_{\ih}, \Ldens{x}_{\ih})\}_{\ih=\entLE(\dl,
\dr, \{\Ldens{c}_{\ent}\})}^{1}$, with $\entLE(\dl, \dr,
\{\Ldens{c}_{\ent}\}) =\ent_{\BMS}(c(\xLE))$, satisfies \eqref{eq:UniquenessCond}, is Lipschitz continuous
wrt to the Battacharyya parameter of the channel, where
$\ent_{\BMS}(\cdot)$ is the function which maps the Battacharyya
constant of an element of the family to the corresponding entropy.
Further, $\batta(\Ldens{x}_{\ih}) \geq \xunstab(1) >0$ for all $\ih
\geq \entLE(\dl, \dr, \{\Ldens{c}_{\ent}\})$\footnote{Note that we
have made the dependence on the channel family, $\{\Ldens{c}_{\ent}\}$,
explicit in the notation of $\entLE(\dl, \dr, \{\Ldens{c}_{\ent}\})$.}.
\end{lemma}

\begin{table}
\centering
\begin{tabular}{ccccccc}
dd & $\xLE$ & $\frac{\dl}{\dr}$ & $\batta=\entLE_{\BEC}$ & $\entLE_{\BAWGNC}$ & $\entLE_{\BSC}$ & $\overline{\ih}$ \\ \Hline 
(3, 4) & 0.5479 & 0.75 & 0.8156 & 0.7544 & 0.7428 & 0.8254 \\ 
(6, 8) & 0.4107 & 0.75 & 0.6822 & 0.5971 & 0.5694 & 0.6958 \\
(9, 12) & 0.3277 & 0.75 & 0.6024 & 0.5097 & 0.4719 & 0.6185 \\ 
(12, 16) & 0.2752 & 0.75 & 0.5483 & 0.4530 & 0.4087 & 0.5658 \\ \hline
(3, 6) & 0.3805 & 0.5 & 0.6787 & 0.5931 & 0.5651 & 0.7010 \\ 
(4, 8) & 0.3512 & 0.5 & 0.6384 & 0.5485 & 0.5152 & 0.6590\\ 
(5, 10) & 0.3192 & 0.5 & 0.6022 & 0.5094 & 0.4717 & 0.6229 \\ 
(6, 12) & 0.2916 & 0.5 & 0.5717 & 0.4773 & 0.4357 & 0.5924 \\ \hline
(3, 12) & 0.2127 & 0.25 & 0.4970 & 0.4012 & 0.3513 & 0.5335 \\ 
(4, 16) & 0.1957 & 0.25 & 0.4690 & 0.3736 & 0.3210 & 0.5005 \\
(5, 20) & 0.1774 & 0.25 & 0.4426 & 0.3481 & 0.2933 & 0.4721 \\ 
(6, 24) & 0.1616 & 0.25 & 0.4200 & 0.3267 & 0.2702 & 0.4483 \\ 
(7, 28) & 0.1483 & 0.25 & 0.4006 & 0.3086 & 0.2509 & 0.4281 
\end{tabular}
\caption{ \label{tab:bounds}
Top branches of GEXIT curves are Lipschitz continuous from indicated
channel entropy until $1$. The numbers $\xLE$, $\batta=\entLE_{\BEC}$,
$\entLE_{\BAWGNC}$, and $\entLE_{\BSC}$ are computed according to
Lemma~\ref{lem:continuityforlargeentropy}.  The final number
$\overline{\ih}$ is a universal upper bound, valid for all BMS
channels and it was computed according to
Lemma~\ref{lem:FPforlargeentropy}.
 }
\end{table} Table~\ref{tab:bounds} shows the resulting bounds for
various regular dds and various channels. These
bounds were computed as follows. For a fixed dd 
pair $(\dl, \dr)$ we first computed $\xLE$ numerically. This is
easy to do since we know that there is a unique solution of the
equation $a(x)-b(x)=0$ in $(0, 1]$. Further, $a(0)-b(0)=0$,
$a'(0)-b'(0)=-(\dl-1)^2(\dr-1)^2 < 0$, and $a(1)-b(1)=1$.
We can therefore find this unique solution efficiently via bisection.
Once $\xLE$ is found, we find the corresponding Battacharyya parameter
of the channel by computing $c(\xLE)$. Finally, we can convert this
into an entropy value via the appropriate function $\ent_{\BMS}(\cdot)$.
E.g. for the family of BSC channels we have
$\ent_{\BSCsmall}(x)=h_2(\frac12 (1-\sqrt{1-x^2}))$.

Although it is easy and stable to compute the above lower bound on
the entropy numerically, it will be convenient to have a universal
and analytic such lower bound. This is accomplished in the following
lemma, whose proof can be found in Appendix~\ref{app:bounds}.

\begin{lemma}[Universal Bound on Continuity
Region]\label{lem:FPforlargeentropy} Assume that communication takes
place over an ordered and complete family $\{\Ldens{c}_{\ih}\}_{\ih}$,
where $\ent=\entropy(\Ldens{c}_\ent)$, using the \ddp $(\dl,\dr)$
with $\dr \geq 4$ and $\dl\geq 3$. Let $a(x)$ be defined as in
Lemma~\ref{lem:continuityforlargeentropy}.  Consider the set of FP
pairs $\{(\Ldens{c}_{\ih}, \Ldens{x}_{\ih})\}$ which is derived by
applying forward DE to each channel $\Ldens{c}_{\ih}$.  Then the
GEXIT curve associated to $\{(\Ldens{c}_{\ih}, \Ldens{x}_{\ih})\}_{
\ih > \overline{\ih}}$, where
\begin{align*}
\overline{x} & = 1 - ((\dl-1)(\dr-1))^{-\frac1{\dr-2}}, \;\; \overline{\ih}  = \sqrt{\overline{x}/a(\overline{x})},
\end{align*}
is Lipschitz continuous wrt the Battacharyya parameter of the
channel. 
Also, $\entLE(\dl, \dr, \{\Ldens{c}_{\ent}\}) \leq \overline{\ih}$,
where $\entLE(\dl, \dr, \{\Ldens{c}_{\ent}\})$ is the quantity introduced in Lemma~\ref{lem:continuityforlargeentropy},
and 
$\overline{\ih} \leq \frac{e^{\frac14} \sqrt{2}}{(\dr-2)^{\frac{1}{4}}}$,
so that $\overline{\ih}$ tends to zero when $\dr$ tends to infinity.
\end{lemma} 
Table~\ref{tab:bounds} lists these universal upper bounds
$\overline{\ih}$ for all the dds.

The following corollary follows immediately from
Lemma~\ref{lem:condforcontinuity}, property (\ref{lem:blmetricentropy})
of Lemma~\ref{lem:degradationandwasserstein}, and property
(\ref{lem:blmetricwasserboundsbatta}) of Lemma~\ref{lem:blmetric}.
\begin{corollary}[Continuity of Entropy]\label{cor:continuitywrtwasserstein}
Let $\{\Ldens{c}_{\ent}\}$ be a smooth BMS channel family and let
$(\Ldens{c}_{\ih}, \Ldens{x}_{\ih})$ denote a forward DE FP pair
with channel entropy $\ih > \entLE(\dl, \dr, \{\Ldens{c}_{\ent}\})$, where $\entLE(\dl,
\dr, \{\Ldens{c}_{\ent}\})$ is the value defined in Lemma~\ref{lem:continuityforlargeentropy}.
Then for $\ih_1, \ih_2 > \entLE(\dl, \dr, \{\Ldens{c}_{\ent}\})$ we have
\begin{align*}
& \frac{(\ln(2))^2}{2} |\entropy(\Ldens{x}_{\ent_1})-\entropy(\Ldens{x}_{\ent_2})|^2 \leq 
d(\Ldens{x}_{\ent_1}, \Ldens{x}_{\ent_2}) \leq \\
& \leq \frac{32^{\frac14}}{\sqrt{\delta}}d(\Ldens{c}_{\ent_1}, \Ldens{c}_{\ent_2})^{\frac14} 
\leq \frac{2 \sqrt{2}}{\sqrt{\delta}} 
(\ln(2) |\entropy(\Ldens{c}_{\ent_1})-\entropy(\Ldens{c}_{\ent_2})|)^{\frac{1}{8}}.
\end{align*}
\end{corollary}

The proof of the following lemma can be found in Appendix~\ref{app:magic}.
\blemma[Entropy Product Inequality]\label{lem:magic}
Given $\Ldens{a}$ and $\Ldens{b}$,
\begin{align*}
\entropy(\Ldens{a}\vconv\Ldens{b})
& = 
\int_{0}^{1} \int_{0}^{1}
\absDdens{a}(x) \absDdens{b}(y) k(x,y) \dee x \dee y \\
& = \int_{0}^{1} \int_{0}^{1}
\tilde{\absDdist{A}}(x) \tilde{\absDdist{B}}(y) k_{xxyy}(x,y) \dee x \dee y,
\end{align*}
where
\begin{align*}
k_{xxyy}(x,y)
& =
\frac{2}{\ln(2)} \frac{1+3x^2y^2}{(1-x^2y^2)^3},
\end{align*}
%where
%\begin{align*}
%k(x,y)
%& =
%-\sum_{i=\pm 1} \sum_{j=\pm 1}
%\frac{1+ix}{2}\frac{1+jy}{2}
%\ln\Bigl( \frac{1}{2}\Bigl(1+\frac{ix+jy}{1+ijxy}\Bigr) \Bigr)
%\end{align*}
and where the cumulative distributions
$\absDdist{A}(x) = \int_0^x \absDdens{a}(z)\text{d}z,$
$\absDdist{B}(x) = \int_0^x \absDdens{b}(z)\text{d}z$
are used to define 
$\tilde{\absDdist{A}}(x) = \int_x^1 \absDdist{A}(z)\text{d}z$
and
$\tilde{\absDdist{B}}(x) = \int_x^1 \absDdist{B}(z)\text{d}z$ and
the kernel $k(x,y)$ is as given in \eqref{eq:gexitkernel|D|domain}.
We claim that
\renewcommand\theenumi{\roman{enumi}} 
\renewcommand{\labelenumi}{(\roman{enumi})}
\begin{enumerate}
\item {\em Bound on Kernel:}\label{equ:boundonk}
\[
k_{xxyy}(x,y) \le  \frac{8}{\ln(2)} (1-x^2)^{-\frac{3}{2}}(1-y^2)^{-\frac{3}{2}}.
\]
\item {\em Bound for Partially Degraded Case:} \label{equ:partiallydegradedcase}
Let $\Ldens{a'}$ be degraded with respect to the channel density $\Ldens{a}$ and let
$\Ldens{b'}$ be such that $d(\Ldens{b}', \Ldens{b}) \leq \delta$.
Then
\[
\entropy((\Ldens{a'}-\Ldens{a})\vconv(\Ldens{b'}-\Ldens{b}))
\le
\frac{8}{\ln(2)} \batta(\Ldens{a'}-\Ldens{a}) \sqrt{2 \delta}.
\]
\item {\em Bound for Fully Degraded Case:} \label{equ:degradedcase}
Let $\Ldens{a'}$ be degraded with respect to the channel density $\Ldens{a}$ and let
$\Ldens{b'}$ be degraded with respect to the channel density $\Ldens{b}.$
Then
\[
\entropy((\Ldens{a'}-\Ldens{a})\vconv(\Ldens{b'}-\Ldens{b}))
\le
\frac{8}{\ln(2)} \batta(\Ldens{a'}-\Ldens{a}) \batta(\Ldens{b'}-\Ldens{b})\,.
\]
\end{enumerate}
\end{lemma}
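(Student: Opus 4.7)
The plan is to establish the two integral identities first and then derive the three bounds as consequences. For the first identity
\[
\entropy(\Ldens{a}\vconv\Ldens{b}) = \iint \absDdens{a}(x)\absDdens{b}(y) k(x,y)\,\dee x\,\dee y,
\]
I use the standard $|D|$-domain mixture picture: any symmetric density is a distribution over $\BSC$ channels indexed by the $|D|$-value, so that a variable-node convolution corresponds to observing two such channels independently. The entropy of the resulting four-outcome channel, averaged over the two $|D|$-values, is exactly the kernel $k(x,y)$ in \eqref{eq:gexitkernel|D|domain} by direct expansion (cf.\ \cite[Section~4.1.4]{RiU08}). For the second identity I integrate by parts twice in each variable. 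All boundary contributions vanish: $k(1,y)=k(x,1)=0$ (a perfect observation at one variable-node input forces zero entropy, and hence $k_{yy}(1,y)=0$ as well); the symmetry of the defining sum under $z\mapsto-z$ and $w\mapsto-w$ makes $k$ even in each argument, so $k_x(0,y)=k_y(x,0)=0$ and the third-order cross derivatives $k_{xyy}(0,y)=k_{xxy}(x,0)=0$; and $\tilde\absDdist{A}(1)=\tilde\absDdist{B}(1)=0$ by construction. Verifying that the surviving kernel $k_{xxyy}$ reduces to the strikingly simple closed form $\tfrac{2}{\ln(2)}\tfrac{1+3x^2y^2}{(1-x^2y^2)^3}$ is the main algebraic obstacle; I expect the cleanest route is to collect the four log-terms in \eqref{eq:gexitkernel|D|domain} as powers of $xy$ before differentiating, after which the fourth mixed partial becomes a rational function of the single variable $xy$.

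Property (i) is then an elementary algebraic inequality: $1+3x^2y^2\le 4$, and $(1-x^2y^2)^2\ge (1-x^2)(1-y^2)$ on $[0,1]^2$. Expanding, the latter is equivalent to $x^2+y^2-3x^2y^2+x^4y^4\ge 0$, which I verify by viewing this as a quadratic $p^2-3p+(x^2+y^2)$ in $p=x^2y^2$ on its admissible range (where the minimum is attained at the largest allowed $p$), and then reducing to the one-variable polynomial $s(s^3/16-3s/4+1)\ge 0$ for $s=x^2+y^2\in[0,2]$.

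For the fully degraded case (iii), I first derive
\[
\batta(\Ldens{a'}-\Ldens{a}) = \int_0^1 (\tilde\absDdist{A'}(x)-\tilde\absDdist{A}(x))(1-x^2)^{-3/2}\,\dee x
\]
by integrating $\int\sqrt{1-x^2}\,(\absDdens{a'}-\absDdens{a})(x)\,\dee x$ twice by parts; the boundary terms vanish because $\absDdist{A'}-\absDdist{A}$ is zero at $0$ and $1$ and $\tilde{(\absDdist{A'}-\absDdist{A})}(1)=0$. Since $\Ldens{a}\prec\Ldens{a'}$ gives $\tilde\absDdist{A'}\ge\tilde\absDdist{A}$ pointwise by \eqref{equ:degradationcdfs} and likewise for $\Ldens{b}\prec\Ldens{b'}$, the integrands in $\iint(\tilde\absDdist{A'}-\tilde\absDdist{A})(\tilde\absDdist{B'}-\tilde\absDdist{B})k_{xxyy}$ are nonnegative; dropping absolute values, applying (i), and factorizing yields (iii). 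For the partially degraded case (ii) the $\Ldens{a}$-side factor is identical; on the $\Ldens{b}$-side I need to bound $\int_0^1|\tilde\absDdist{B'}-\tilde\absDdist{B}|(y)(1-y^2)^{-3/2}\,\dee y$. Splitting at $y=1-\delta$ and using the two pointwise bounds $|\tilde\absDdist{B'}(y)-\tilde\absDdist{B}(y)|\le\delta$ on $[0,1-\delta]$ (via Lemma~\ref{lem:blmetric}(\ref{lem:alternative})) and $|\tilde\absDdist{B'}(y)-\tilde\absDdist{B}(y)|\le 1-y$ on $[1-\delta,1]$, together with the explicit antiderivatives $\int(1-y^2)^{-3/2}\dee y=y/\sqrt{1-y^2}$ and $\int(1-y)(1-y^2)^{-3/2}\dee y=-\sqrt{(1-y)/(1+y)}$, the two pieces sum to $\sqrt{\delta(2-\delta)}\le\sqrt{2\delta}$, giving (ii).
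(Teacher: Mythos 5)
Your proposal is correct and follows essentially the same route as the paper's proof in Appendix~\ref{app:magic}: double integration by parts to reach the $\tilde{\absDdist{A}}\,\tilde{\absDdist{B}}\,k_{xxyy}$ representation, the pointwise kernel bound, the Battacharyya representation \eqref{eq:battasecondderivative} combined with the nonnegativity coming from \eqref{equ:degradationcdfs} for the fully degraded case, and the $\min\{\delta,1-y\}$ estimate for the partially degraded case. The only cosmetic differences are that you verify the pointwise inequality $(1-x^2y^2)^2\ge(1-x^2)(1-y^2)$ by direct expansion and a one-variable reduction rather than the paper's substitution $u=(1-x^2)^{-1}$, $v=(1-y^2)^{-1}$, and you evaluate the $\min\{\delta,1-y\}$ integral exactly as $\sqrt{\delta(2-\delta)}$ where the paper simply bounds it by $\sqrt{2\delta}$.
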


\begin{corollary}[Continuity of the BP GEXIT Curve]\label{cor:continuityofBPGEXIT}
Let $\{\Ldens{c}_{\ent}\}$ be a smooth BMS channel family and let
$(\Ldens{c}_{\ih}, \Ldens{x}_{\ih})$ denote a forward DE FP pair
with channel entropy $\ih > \entLE(\dl, \dr, \{\Ldens{c}_{\ent}\})$, where $\entLE(\dl, \dr, \{\Ldens{c}_{\ent}\})$ is the value defined in Lemma~\ref{lem:continuityforlargeentropy}.
Then, $G(\Ldens{c}_{\ih},(\Ldens{x}_{\ih}^{\cconv \dr-1})^{\vconv \dl}) $ is continuous wrt to $\ih$.
\end{corollary}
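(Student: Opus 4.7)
I would prove this via the $|D|$-domain kernel representation of the GEXIT functional together with the regularity properties of the Wasserstein metric. Setting $\Ldens{z}_\ih := (\Ldens{x}_\ih^{\cconv \dr-1})^{\vconv \dl}$ and applying \eqref{eq:gexitkernel|D|domain}, we can write
\[
G(\Ldens{c}_\ih, \Ldens{z}_\ih) \;=\; \int_0^1 f(\ih, w)\, \absDdens{z}_\ih(w)\,\dee w .
\]
By the triangle inequality, continuity at some $\ih_0 > \entLE(\dl, \dr, \{\Ldens{c}_\ent\})$ splits into a density-variation part $|\int f(\ih_0, w)(\absDdens{z}_\ih - \absDdens{z}_{\ih_0})(w)\,\dee w|$ and a kernel-variation part $|\int (f(\ih, w) - f(\ih_0, w))\, \absDdens{z}_\ih(w)\,\dee w|$. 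I would treat these two pieces separately.

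For the density-variation part, Lemma~\ref{lem:continuityforlargeentropy} supplies Lipschitz continuity of $\batta(\Ldens{x}_\ih)$ in $\batta(\Ldens{c}_\ih)$ on the range of interest, and Corollary~\ref{cor:continuitywrtwasserstein} upgrades this to Wasserstein continuity of $\ih \mapsto \Ldens{x}_\ih$. Regularity properties (\ref{lem:blmetricregularvconv})--(\ref{lem:blmetricregularcconv}) of Lemma~\ref{lem:blmetric} then propagate this continuity through the $\dr-1$ check-node convolutions followed by the $\dl$ variable-node convolutions, so that $\ih \mapsto \Ldens{z}_\ih$ is continuous in the Wasserstein metric. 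Since by Lemma~\ref{lem:gexitsmoothfamily} the function $f(\ih_0, \cdot)$ is bounded and continuous on $[0,1]$, and Wasserstein convergence implies weak convergence of the associated $|D|$-distributions (property (\ref{lem:blmetricmetrizable}) of Lemma~\ref{lem:blmetric}), the density-variation part vanishes as $\ih \to \ih_0$.

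For the kernel-variation part, the smoothness of $\{\Ldens{c}_\ih\}$ together with the integral formula \eqref{eq:gexitkernel|D|domain} yields continuity of $\ih \mapsto f(\ih, w)$ for each fixed $w \in (0,1)$, while Lemma~\ref{lem:gexitsmoothfamily} gives the uniform bound $0 \le f(\ih, w) \le 1$. Dominated convergence against the probability measure $\absDdens{z}_\ih(w)\,\dee w$ (which has total mass $1$ uniformly in $\ih$) then kills this part as well.

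The main technical point is the pointwise continuity of the kernel in $\ih$: this is the place where the abstract smoothness hypothesis has to be converted into concrete pointwise regularity of $f(\ih, w)$ through its integral representation against $\Ldens{c}_\ih$ with the bounded continuous weight $k(z, w)$. Once that is in hand, no further estimates are needed; in particular we do not need to confront the potentially unbounded $w$-derivative of $f(\ih, \cdot)$ at $w = 0$, since we only use continuity and boundedness of the kernel, not its Lipschitz behavior.
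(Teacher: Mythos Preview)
Your argument is essentially correct and takes a genuinely different route from the paper. The paper parameterizes by the Battacharyya constant $b=\batta(\Ldens{c}_\ent)$, sets $D(b',b)=\frac{\partial}{\partial b'}\entropy(\Ldens{c}_{b'}\vconv\Ldens{z}_b)$, and uses the Entropy Product Inequality (Lemma~\ref{lem:magic}, part~(\ref{equ:degradedcase})) to show directly that $|D(b'',b')-D(b'',b)|\le\frac{8}{\ln 2}|\batta(\Ldens{z}_{b'})-\batta(\Ldens{z}_b)|$; continuity of $G_b=D(b,b)$ then follows, and the Lipschitz bound in the second argument drops out for free. Your approach instead works through the explicit $|D|$-domain kernel $f(\ih,\cdot)$ and handles the two variations separately via weak convergence and dominated convergence. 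This is more elementary in that it avoids Lemma~\ref{lem:magic} entirely, relying only on the qualitative kernel properties of Lemma~\ref{lem:gexitsmoothfamily} and the Wasserstein machinery; on the other hand the paper's route is shorter and yields a quantitative Lipschitz estimate rather than bare continuity.

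One soft spot worth tightening: in your kernel-variation term $\int (f(\ih,w)-f(\ih_0,w))\,\absDdens{z}_\ih(w)\,\dee w$ the integrating measure itself moves with $\ih$, so ordinary dominated convergence does not apply as stated. The cleanest patch is to observe that the family $\{f(\ih,\cdot)\}_\ih$ consists of concave non-increasing functions on $[0,1]$ with $f(\ih,0)=1$ and $f(\ih,1)=0$, so pointwise convergence $f(\ih,\cdot)\to f(\ih_0,\cdot)$ to the continuous limit $f(\ih_0,\cdot)$ is automatically uniform on $[0,1]$ (concavity gives equicontinuity on compact subsets of $(0,1)$, and the fixed boundary values together with $f(\ih,w)\ge 1-w$ control the endpoints). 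Once you have uniform convergence, the kernel-variation term is bounded by $\|f(\ih,\cdot)-f(\ih_0,\cdot)\|_\infty$ regardless of which probability measure you integrate against.
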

\begin{proof}
The GEXIT functional is defined as 
\[
G_\ent=\frac{\partial}{\partial \ent'}\entropy(\Ldens{c}_{\ent'} \vconv \Ldens{z}_\ent)\biggr|_{\ent'=\ent}\,.
\]
We will find it more convenient to parameterize the densities using
$b=b(\ent)=\batta(\Ldens{c}_\ent).$ Let us define
\[
D(b',b)=\frac{\partial}{\partial b'} \entropy(\Ldens{c}_{b'} \vconv \Ldens{z}_{b})\,.
\]
We claim that $D(b',b)$ is continuous in both its arguments.  Note
that $G_\ent = D(b(\ent),b(\ent)) \frac{\dee b(\ent)}{\dee \ent}$
and, correspondingly, we define $G_b = D(b, b).$
To show continuity of $D$ in the first component
note that $(D(b'', b)-D(b', b)) \rightarrow 0$ by the smooth channel
family assumption.  To show continuity of $D$ in the second component consider
$\entropy((\Ldens{c}_{b'''}-\Ldens{c}_{b''})\vconv
(\Ldens{z}_{b'}-\Ldens{z}_b)).$ By (the Entropy Product Inequality) Lemma~\ref{lem:magic}, property (\ref{equ:degradedcase}), we have
\begin{align*}
|\entropy((\Ldens{c}_{b'''}\!-\!\Ldens{c}_{b''})\!\vconv\! (\Ldens{z}_{b'}\!-\!\Ldens{z}_b))|
&\!\le\!
\frac{8}{\ln 2}\!\left|\batta(\Ldens{c}_{b'''}\!-\!\Ldens{c}_{b''})\right|\left|\batta(\Ldens{z}_{b'}\!-\!\Ldens{z}_b)\right|
\\ 
&=
\frac{8}{\ln 2}|b'''-b''| \left|\batta(\Ldens{z}_{b'}-\Ldens{z}_b)\right|,
\end{align*}
from which we obtain
\[
\left|(D(b'',b')-D(b'',b))\right| \le \frac{8}{\ln 2} \left|\batta(\Ldens{z}_{b'}-\Ldens{z}_b)\right|,
\]
showing that $D$ is actually Lipschitz in its second argument.  It
follows, in particular, that $G_b$ is continuous in $b$. Since the Battacharyya
parameter is a bounded functional and the channel family is smooth, 
 we have $\frac{\dee b(\ent)}{\dee \ent}$ is continuous in $\ent$. Consequently,  $G_\ent$ is continuous in
$\ent$.  
\end{proof}

\subsection{Area Theorem}\label{sec:areatheorem} In
Section~\ref{sec:gexitcurves} we introduced the GEXIT curve associated
to a regular ensemble, see e.g. Figure~\ref{fig:ebpexit36}.  In
Section~\ref{sec:gexitexistence} we then derived conditions which
guarantee that this curve indeed exists and is continuous in a given region. We will
now discuss the GEXIT integral, the area under the GEXIT curve.
In order to derive some properties of this integral, we will first
introduce GEXIT integrals in a slightly more general form before
we apply them to ensembles.

\begin{definition}[Basic GEXIT Integral]\label{def:gexitintegralbasic}
Given two families $\{\Ldens{c}_{\sigma}\}_{\underline{\sigma}}^{\overline{\sigma}}$
and $\{\Ldens{z}_{\sigma}\}_{\underline{\sigma}}^{\overline{\sigma}}$,
the {\em GEXIT integral}  
$\{\Ldens{c}_{\sigma}, \Ldens{z}_{\sigma}\}_{\underline{\sigma}}^{\overline{\sigma}}$
is defined as 
\begin{align*}
G(\{\Ldens{c}_{\sigma}, \Ldens{z}_{\sigma}\}_{\underline{\sigma}}^{\overline{\sigma}}) & =
\int_{\underline{\sigma}}^{\overline{\sigma}} 
\entropy (\frac{\dee \Ldens{c}_{\sigma}}{\dee \sigma} \vconv \Ldens{z}_{\sigma}) \,\dee \sigma.
\end{align*}
\qed
\end{definition}
{\em Discussion:} In the above definition, and some definitions
below, we need regularity conditions to ensure that the integrals
exist. Rather than stating some general conditions here, we will
discuss and verify them in the specific cases. E.g., one case we
will discuss is if the channel family $\Ldens{c}_\sigma$ is smooth
and $\Ldens{z}_\sigma$ is a polynomial in $\sigma$ with ``coefficients''
which are fixed densities.

\begin{definition}[GEXIT Integral of Code]\label{def:gexitintegral}
Consider a binary linear code of length $n$ whose graphical
representation is a tree.  Assume that we are given an ordered
family of channels
$\{\Ldens{c}_{\sigma}\}_{\underline{\sigma}}^{\overline{\sigma}}$.  Assume
that when all variable nodes ``see'' the channel $\Ldens{c}_{\sigma}$
the distribution of the resulting extrinsic BP message density at
the $i$-th variable node is $\Ldens{z}_{\sigma, i}$.  Then the {\em
GEXIT integral} associated of the $i$-th variable node is
$G(\{\Ldens{c}_{\sigma}, \Ldens{z}_{\sigma,
i}\}_{\underline{\sigma}}^{\overline{\sigma}})$.  \qed
\end{definition}
{\em Discussion}: Note that the distribution $\Ldens{z}_{\sigma,
i}$ is the best guess we can make about bit $i$ given the code
constraints and all observations except the direct observation on
bit $i$. This is why we have called the distribution the {\em
extrinsic} message density. Note further that we have assumed that
the graphical structure of the code is a tree. Therefore, BP equals
MAP, the optimal such estimator.

The GEXIT integral applied to an ensemble is just the integral under the GEXIT curve
of this ensemble.
\begin{definition}[GEXIT Integral of Ensemble]\label{def:gexitforensemble}
Consider the $(\dl, \dr)$-regular ensemble and assume
that $\{\Ldens{c}_{\sigma}, \Ldens{x}_{\sigma}\}_{\underline{\sigma}}^{\overline{\sigma}}$ is
a family of FPs of DE.
Define $\Ldens{y}_{\sigma}=\Ldens{x}_{\sigma}^{\cconv \dr-1}$. Then 
\begin{align*}
G(\dl, \dr, \{\Ldens{c}_{\sigma}, \Ldens{x}_{\sigma} \}_{\underline{\sigma}}^{\overline{\sigma}}) & =   
\int_{\underline{\sigma}}^{\overline{\sigma}} \entropy \bigr(\frac{\dee \Ldens{c}_{\sigma}}{\dee \sigma} \vconv 
\Ldens{y}_{\sigma}^{\vconv \dl}\bigr) \,\dee \sigma.
\end{align*} 
\qed
\end{definition}
In the sequel it will be handy to explicitly evaluate the
integral.  The proof of the following lemma is contained in
Appendix~\ref{app:areaunderBPGEXIT}.  
\begin{lemma}[Evaluation of GEXIT Integral]\label{lem:areaunderBPGEXIT} Assume that communication
takes place over an ordered, complete and piece-wise smooth family
$\{\Ldens{c}_{\ih}\}_{\ih}$, using the degree-distribution pair
$(\dl,\dr)$.  Let $\{\Ldens{c}_{\ih}, \Ldens{x}_{\ih}\}_{\ih}$ be
the FP family of forward DE.  Set $\Ldens{x}=\Ldens{x}_{\ent^*}$,
$\ent^* \geq \entLE(\dl, \dr, \{\Ldens{c}_\ent\})$, where $\entLE(\dl, \dr, \{\Ldens{c}_\ent\})$
is the quantity introduced in Lemma~\ref{lem:continuityforlargeentropy}.
Then,
\begin{align*}
G(\dl, \dr, \{\Ldens{c}_{\ih}, \Ldens{x}_{\ih} \}_{\ih^*}^{1}) & =   
1 - \frac{\dl}{\dr} - A,
\end{align*}
where 
\begin{align*}
A & =  
\entropy(\Ldens{x}) +  (\dl-1-\frac{\dl}{\dr}) 
\entropy(\Ldens{x}^{\cconv \dr}) - (\dl-1) \entropy(\Ldens{x}^{\cconv \dr-1}).
\end{align*}
\end{lemma}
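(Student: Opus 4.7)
The plan is to rewrite the GEXIT integrand as an exact total derivative in $\ih$, so that the integral collapses to a difference of two boundary values. I set $\Ldens{y}_\ih := \Ldens{x}_\ih^{\cconv \dr-1}$ and introduce the variable-node decision entropy $\Phi(\ih) := \entropy(\Ldens{c}_\ih \vconv \Ldens{y}_\ih^{\vconv \dl})$. Because $\ih^* \geq \entLE(\dl,\dr,\{\Ldens{c}_\ent\})$, Lemma~\ref{lem:continuityforlargeentropy} guarantees that $\Ldens{x}_\ih$ is the unique forward-DE FP on $[\ih^*,1]$ and varies Lipschitz continuously with $\batta(\Ldens{c}_\ih)$; together with the piecewise smoothness of $\{\Ldens{c}_\ih\}$ this lets me differentiate $\Ldens{c}_\ih$ and $\Ldens{x}_\ih$ almost everywhere in $\ih$, obtaining signed measures $\Ldens{c}_\ih',\Ldens{x}_\ih'$ of zero total mass, and lets me pass all derivatives through the linear functional $\entropy$ and through the bilinear operations $\vconv$ and $\cconv$.

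Differentiating $\Phi$ by the product rule and using the FP identity $\Ldens{c}_\ih \vconv \Ldens{y}_\ih^{\vconv \dl-1} = \Ldens{x}_\ih$, I get
\[
\Phi'(\ih) = \entropy\bigl(\Ldens{c}_\ih' \vconv \Ldens{y}_\ih^{\vconv \dl}\bigr) + \dl\,\entropy\bigl(\Ldens{x}_\ih \vconv \Ldens{y}_\ih'\bigr),
\]
whose first summand is exactly the GEXIT integrand. To dispose of the second summand I invoke the Duality Rule (Lemma~\ref{lem:dualityrule}): subtracting two instances of it applied to genuine densities extends the identity by bilinearity to any signed measure $\Ldens{b}$ of zero total mass in the form $\entropy(\Ldens{a}\vconv \Ldens{b}) + \entropy(\Ldens{a}\cconv \Ldens{b}) = \entropy(\Ldens{b})$ (the $\entropy(\Ldens{a})$ term drops). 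Applying this with $\Ldens{a}=\Ldens{x}_\ih$ and $\Ldens{b}=\Ldens{y}_\ih' = (\dr-1)\Ldens{x}_\ih^{\cconv \dr-2}\cconv \Ldens{x}_\ih'$, and noting that $\entropy(\Ldens{y}_\ih') = \tfrac{\dee}{\dee\ih}\entropy(\Ldens{x}_\ih^{\cconv \dr-1})$ while $\entropy(\Ldens{x}_\ih \cconv \Ldens{y}_\ih') = \tfrac{\dr-1}{\dr}\tfrac{\dee}{\dee\ih}\entropy(\Ldens{x}_\ih^{\cconv \dr})$, I conclude that the GEXIT integrand equals $\tfrac{\dee}{\dee\ih}\Psi(\ih)$ where
\[
\Psi(\ih) := \Phi(\ih) - \dl\,\entropy(\Ldens{x}_\ih^{\cconv \dr-1}) + \dl\,\tfrac{\dr-1}{\dr}\,\entropy(\Ldens{x}_\ih^{\cconv \dr}).
\]

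It remains to evaluate $\Psi(1)-\Psi(\ih^*)$. At $\ih=1$ the channel $\Ldens{c}_1=\Delta_0$ forces $\Ldens{x}_1=\Delta_0$, and since $\Delta_0$ is fixed by $\cconv$ and has entropy $1$, I get $\Phi(1)=1$ and $\Psi(1)= 1 - \dl + \dl(\dr-1)/\dr = 1-\dl/\dr$. At $\ih=\ih^*$ the FP relation reduces $\Phi(\ih^*)$ to $\entropy(\Ldens{x}\vconv \Ldens{x}^{\cconv \dr-1})$, and a second application of the Duality Rule, now to the two probability densities $\Ldens{x}$ and $\Ldens{x}^{\cconv \dr-1}$, simplifies this to $\entropy(\Ldens{x}) + \entropy(\Ldens{x}^{\cconv \dr-1}) - \entropy(\Ldens{x}^{\cconv \dr})$. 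Substituting and collecting coefficients identifies $\Psi(\ih^*)$ with the quantity $A$ from the statement, so $G(\dl,\dr,\{\Ldens{c}_\ih,\Ldens{x}_\ih\}_{\ih^*}^{1}) = 1-\dl/\dr-A$, as claimed.

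The main technical obstacle is the regularity step: justifying that $\Ldens{x}_\ih$ admits an $\ih$-derivative in a sense strong enough that the product rule commutes with $\vconv$, $\cconv$, and $\entropy$ almost everywhere on $[\ih^*,1]$. This is where the hypothesis $\ih^*\geq \entLE(\dl,\dr,\{\Ldens{c}_\ent\})$ is essential: it rules out branching of the FP family and pins down a single Lipschitz branch along which the Wasserstein bounds of Lemma~\ref{lem:blmetric} control all the relevant limits, in particular the convergence of finite differences of $\Ldens{x}_\ih^{\cconv k}$ under $\vconv$ and under $\entropy$. The bilinearity extension of the Duality Rule to signed measures of zero total mass is routine, and everything remaining is purely algebraic.
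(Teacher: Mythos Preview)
Your formal computation is correct and genuinely different from the paper's. You construct an explicit antiderivative $\Psi$ of the GEXIT integrand by differentiating the root-decision entropy $\Phi(\ih)=\entropy(\Ldens{c}_\ih\vconv\Ldens{y}_\ih^{\vconv\dl})$ and absorbing the $\Ldens{y}_\ih'$ term via the Duality Rule (extended bilinearly to zero-mass signed measures). The paper instead works with the full depth-$2$ tree code of Lemma~\ref{lem:entropyoftree}: the total change in tree entropy equals the sum of the root GEXIT integral and $\dl(\dr-1)$ leaf GEXIT integrals, and the leaf integrals are computed by the single-parity-check symmetry of Lemma~\ref{lem:entropyofcheck}. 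Your route is more compact and purely analytic; the paper's route carries an information-theoretic interpretation and, more importantly, generalizes directly to the \emph{spatial} setting in Theorem~\ref{the:areatheoremforapproximatefpfamilypartial}, where the tree-code decomposition is reused verbatim with position-dependent leaf channels.

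On regularity you are honest about the obstacle but a bit optimistic about what the cited lemmas deliver. Lemma~\ref{lem:condforcontinuity} gives Lipschitz continuity of $\batta(\Ldens{x}_\ih)$, and via Lemma~\ref{lem:degradationandwasserstein}(\ref{lem:blmetricentropy}) this yields only H\"older-$\tfrac12$ control of $d(\Ldens{x}_{\ih_1},\Ldens{x}_{\ih_2})$, not Lipschitz; more to the point, neither statement produces a density-valued derivative $\Ldens{x}_\ih'$ as a signed measure, which is what your product-rule step literally uses. The paper sidesteps this by sampling $\{\Ldens{c}_\ih,\Ldens{x}_\ih\}$ on a mesh of width $\Delta\ent$, linearly interpolating to a piecewise-linear family (where every integrand is an explicit polynomial in $\sigma$ with fixed density coefficients, so all derivatives and product rules are trivial), checking that the interpolant is a $O((\Delta\ent)^{1/4})$-approximate FP, and letting $\Delta\ent\to0$ using dominated convergence and Lemma~\ref{lem:magic}. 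Your antiderivative identity survives this mollification unchanged, so the gap is closable, but the justification you sketch does not quite do it as stated.
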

{\em Discussion:} Note that this GEXIT integral has a simple graphical
interpretation; it is the area under the GEXIT curve as e.g. shown
in the right-hand picture of Figure~\ref{fig:ebpexit36bec}.  The
condition $\ent^* \geq \ent(\dl, \dr,  \{\Ldens{c}_\ent\})$ ensures that this
curve is well defined and integrable.

We have seen in the last section that the value of a GEXIT integral
of an ensemble is determined by the expression $A$. We will soon
see that it is crucial to describe the region where $A$ is negative.
The following lemma, whose proof can be found in
Appendix~\ref{app:asymptoticnegativity}, gives a characterization
of this property.

\begin{lemma}[Negativity]\label{lem:asymptoticnegativity} Let
$(\Ldens{c}, \Ldens{x})$ be an {\em approximate} FP of the $(\dl,
\dr)$-regular ensemble of design rate $r=1-\dl/\dr$. Assume that $\dr \geq 1 + 5 (\frac1{1-r})^{\frac43}$ and
for some fixed $0\leq \delta \leq (\frac{\ln(2) \dl}{16 \sqrt{2}\dr})^2$,
$d(\Ldens{x}, \Ldens{c}\vconv(\Ldens{x}^{\cconv
\dr-1})^{\vconv \dl-1}) \leq \delta$.  Let
\begin{align*}
A & =\entropy(\Ldens{x}) + (\dl-1-\frac{\dl}{\dr}) \entropy(\Ldens{x}^{\cconv \dr})  - 
(\dl-1) \entropy(\Ldens{x}^{\cconv \dr-1}).
\end{align*}
For $0 \leq \kappa \leq \frac1{4 e} \frac{\dl}{\dr}$,
if $\entropy(\Ldens{x}) \in [(\frac34)^{\frac{\dl-1}{2}}+\frac1{(\dr-1)^3},  \frac{\dl}{\dr} - 
\dl e^{-4 (\dr-1) (\frac{2 \dl}{11 e \dr})^{\frac{4}{3}}}-\kappa]$, then
$A \leq -\kappa$.
\end{lemma}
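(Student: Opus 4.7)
The plan is to combine the Duality Rule with extremes of information combining to reduce $A$ to a form that can be bounded by BEC/BSC extremal cases, then check the sign on the prescribed range by explicit analytic estimates.

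First, I would apply the Duality Rule (Lemma~\ref{lem:dualityrule}) to the decomposition $\Ldens{x}^{\cconv \dr} = \Ldens{x}^{\cconv \dr-1}\cconv\Ldens{x}$, obtaining $\entropy(\Ldens{x}^{\cconv \dr}) = \entropy(\Ldens{x}^{\cconv \dr-1})+\entropy(\Ldens{x})-\entropy(\Ldens{x}^{\cconv \dr-1}\vconv\Ldens{x})$. Substituting this into the definition of $A$ collapses it into the cleaner identity
\[
A = \dl\,\entropy(\Ldens{x}) \;-\; \tfrac{\dl}{\dr}\,\entropy(\Ldens{x}^{\cconv \dr}) \;-\; (\dl-1)\,\entropy(\Ldens{x}^{\cconv \dr-1}\vconv\Ldens{x}),
\]
in which only the iterated check-node output and the mixed term $\Ldens{x}^{\cconv \dr-1}\vconv\Ldens{x}$ remain, both of which admit clean extremal bounds.

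Next I would bound these two density-valued terms from below. For the check-node iterate, the BSC extremum (Lemma~\ref{lem:extremes}(\ref{lem:extremesmincconv})) applied iteratively, combined with $\entropy(\cdot)\ge \batta(\cdot)^2$ from Lemma~\ref{lem:entropyvsbatta}, gives $\entropy(\Ldens{x}^{\cconv k})\ge 1-(1-\batta(\Ldens{x})^2)^k$, which is exponentially close to $1$ once $\batta(\Ldens{x})$ is bounded away from $0$. For the mixed term, the BEC extremum for variable-node convolution (Lemma~\ref{lem:extremes}(\ref{lem:extremesminvconv})) yields $\entropy(\Ldens{x}^{\cconv \dr-1}\vconv\Ldens{x})\ge \entropy(\Ldens{x}^{\cconv \dr-1})\cdot\entropy(\Ldens{x})$, since variable-node convolution with a BEC of erasure probability $\alpha$ simply scales the entropy by $\alpha$. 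Writing $h=\entropy(\Ldens{x})$ and $\alpha_k=\entropy(\Ldens{x}^{\cconv k})$, plugging both bounds in produces
\[
A \;\le\; h - \tfrac{\dl}{\dr} + \tfrac{\dl}{\dr}(1-\alpha_{\dr}) + (\dl-1)\,h\,(1-\alpha_{\dr-1}),
\]
so the leading term $h-\dl/\dr$ is negative exactly on $h<\dl/\dr$, matching the asymptotic range, while the two correction terms decay like $(1-\batta(\Ldens{x})^2)^{\dr-1}$.

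The main obstacle is the careful analytic bookkeeping in the last step: the lower bound $h\ge (3/4)^{(\dl-1)/2}+1/(\dr-1)^3$ and the degree hypothesis $\dr\ge 1+5(1/(1-r))^{4/3}$ are tuned precisely to guarantee that the Battacharyya of $\Ldens{x}$ is large enough (via Lemma~\ref{lem:entropyvsbatta}) to make the correction terms bounded by $\dl\,e^{-4(\dr-1)(2\dl/(11e\dr))^{4/3}}$, which is what lets us push $A$ below $-\kappa$ once $h$ also lies below the stated upper endpoint. Tracking these constants while preserving the $\kappa$ margin is delicate, as is the final handling of the approximate-FP slack: since $\Ldens{x}$ is only within Wasserstein distance $\delta$ of $\Ldens{c}\vconv(\Ldens{x}^{\cconv \dr-1})^{\vconv \dl-1}$, one uses the entropy bound in Lemma~\ref{lem:blmetric}(\ref{lem:blmetricwasserboundsbatta}) together with the regularity properties (\ref{lem:blmetricregularvconv})--(\ref{lem:blmetricregularcconv}) to show the resulting perturbation in entropies is $O(\sqrt\delta)$, and the assumption $\delta\le(\ln(2)\dl/(16\sqrt{2}\,\dr))^2$ is chosen so this perturbation is harmlessly absorbed into the $\kappa$ slack.
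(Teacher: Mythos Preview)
Your Duality-Rule rewrite and the ensuing inequality
\[
A \;\le\; h - \tfrac{\dl}{\dr} + \tfrac{\dl}{\dr}(1-\alpha_{\dr}) + (\dl-1)\,h\,(1-\alpha_{\dr-1}),\qquad h=\entropy(\Ldens{x}),\ \alpha_k=\entropy(\Ldens{x}^{\cconv k}),
\]
are correct, but the claim that the last two terms are controlled by $\dl\,e^{-4(\dr-1)(2\dl/(11e\dr))^{4/3}}$ throughout the stated range of $h$ is wrong. Your only input is $\batta(\Ldens{x})\ge h$, and near the lower endpoint $h=(3/4)^{(\dl-1)/2}+(\dr-1)^{-3}$ this makes $(1-\batta^2)^{\dr-1}$ essentially $1$ (since $\dr\,h^2\to 0$ for fixed rate and growing degrees). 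The term $(\dl-1)h(1-\alpha_{\dr-1})$ can then be of order $(\dl-1)h$, which swamps $|h-\dl/\dr|$; a direct check at, say, $(\dl,\dr)=(10,20)$ and $h\approx 0.27$ shows your upper bound is strictly positive. In short, you only used the approximate-FP hypothesis for the $O(\sqrt\delta)$ perturbation, but in the low-entropy part it is the FP structure itself that drives $A$ negative, not generic extremes.

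The paper's proof splits the range at $h=\tfrac{1}{2e}\tfrac{\dl}{\dr}$. On the high-entropy part it proceeds much as you do (via the moment expansion $1-\entropy(\Ldens{x}^{\cconv k})=\sum_n\alpha_n m_n^k$, bounding by $m_1^{\dr-1}\le(1-2p)^{2(\dr-1)}$ with $h=h_2(p)$, and using $p\ge(4h/11)^{4/3}$). On the low-entropy part it instead writes $A=h-\tfrac{\dl}{\dr}\entropy(\Ldens{y})+(\dl-1-\tfrac{\dl}{\dr})\sum_n\alpha_n m_n^{\dr-1}(1-m_n)$ and bounds the last sum by $\max_x x^{\dr-1}(1-x)\le 1/(e(\dr-1))$; the key extra step is to use the FP relation $\Ldens{x}=\Ldens{c}\vconv\Ldens{y}^{\vconv \dl-1}$ to get $h\le\batta(\Ldens{y})^{\dl-1}\le\entropy(\Ldens{y})^{(\dl-1)/2}$, hence $\entropy(\Ldens{y})\ge h^{2/(\dl-1)}\ge 3/4$. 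That lower bound on $\entropy(\Ldens{y})$ is what makes $-\tfrac{\dl}{\dr}\entropy(\Ldens{y})$ dominate in the small-$h$ regime, and it is missing from your argument.
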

{\em Discussion:} In words, for sufficiently high degrees,
$A(\Ldens{x})$ is strictly negative for all $\Ldens{x}$ with entropies
in the range $(0, \dl/\dr)$. Note that $\dl/\dr$ corresponds to the
Shannon threshold for a code of rate $1-\dl/\dr$. In the preceding
lemma we introduced the notion of an {\em approximate} FP of DE:
we say that $(\Ldens{c}, \Ldens{x})$ is a $\delta$-approximate FP
if for some $\delta>0$ we have $d(\Ldens{x},
\Ldens{c}\vconv(\Ldens{x}^{\cconv \dr-1})^{\vconv \dl-1}) \leq
\delta$.

\subsection{Area Threshold}
The most important goal of this paper is to show that
suitable coupled ensembles achieve the capacity. The preceding
(Negativity) Lemma~\ref{lem:asymptoticnegativity} is an important
tool for this purpose. But we will in fact prove a refined statement,
namely we will determine the threshold for fixed dds.
This threshold is the so-called area threshold and it was first introduced
in \cite{MMRU09} in the context of the Maxwell construction. 

\begin{definition}[Area Threshold]\label{def:areathresholddef}
Consider the $(\dl, \dr)$-regular ensemble and transmission over a
complete and ordered channel family
$\{\Ldens{c}_\ih\}_{\ent=0}^{1}$.
For each $\ih \in [0, 1]$, 
let $\Ldens{x}_\ih$ be the forward DE FP associated to channel $\Ldens{c}_\ih$.
The {\em area threshold}, denote it by $\ent^A(\dl, \dr, \{\Ldens{c}_\ent\})$, is defined as
\begin{align*}
\ih^A(\dl, \dr, \{\Ldens{c}_\ent\}) & = \sup\{\ih \in [0, 1]: 
A(\Ldens{x}_\ih, \dl, \dr) \leq 0\},
\end{align*}
where $A(\Ldens{x}_{\ih}, \dl, \dr)$ is equal to $A$, which is given in Lemma~\ref{lem:areaunderBPGEXIT}, 
evaluated at the FP $\Ldens{x}_{\ih}$, when transmitting with the $(\dl, \dr)$-regular ensemble.
\qed
\end{definition}
Note that $A(\Delta_{+\infty}, \dl, \dr)=0$ and that $\Ldens{x}_\ih =
\Delta_{+\infty}$ for all $\ih < \ih^{\BP}(\dl, \dr, \{\Ldens{c}_{\ent}\})$.  Therefore the
set over which we take the supremum is non-empty and $\ih^{\BP}(\dl,
\dr, \{\Ldens{c}_{\ent}\}) \leq \ih^A(\dl, \dr, \{\Ldens{c}_\ent\})$.  Also note that we
have made the dependence of the area threshold on the
channel family and the dd explicit.\footnote{We keep the explicit notation
of $\entLE(\dl, \dr, \{\Ldens{c}_\ent\})$ and $\ent^A(\dl, \dr,
\{\Ldens{c}_\ent\})$ in the statements of the lemmas and theorems
but drop it in the proof for ease of exposition.}

Table~\ref{tab:areathreshold} gives some values for $\ih^{A}(\dl, \dr, \{\Ldens{c}_\ent\})$ for
various dds and channels.  

\begin{table} \centering \begin{tabular}{clllll} 
d.d. & rate & $\ih^{\text{Sha}}$ & $\ih_{\BEC}^{A}$ & $\ih_{\BSC}^{A}$ & $\ih_{\BAWGNC}^{A}$
\\ \Hline 
$(5, 6)$ & $0.1667$ & $0.8333$ & $0.8333$ & $0.8332$  & $0.8333$  \\ 
$(4, 5)$ & $0.2$ & $0.8$ & $0.7997$ & $0.7992$  & $0.7994$  \\ 
$(3, 4)$ & $0.25$ & $0.75$ & $0.7460$ & $0.7407$  & $0.7428$  \\ 
$(4, 6)$ & $0.3333$ & $0.6667$ & $0.6657$ & $0.6633$ & $0.6645$  \\ 
$(3, 5)$ & $0.4$ & $0.6$ & $0.5910$ & $0.5772$  & $0.5841$  \\ 
$(3, 6)$ & $0.5$ & $0.5$ & $0.4881$ & $0.4681$  & $0.4794$  \\ 
$(3, 7)$ & $0.5714$ & $0.4286$ & $0.4154$ & $0.3912$  & $0.4057$  \\ 
$(3, 8)$ & $0.6250$ & $0.3750$ & $0.3613$ & $0.3345$  & $0.3514$  \\ 
$(3, 9)$ & $0.6667$ & $0.3333$ & $0.3196$ & $0.2912$  & $0.3099$  \\ 
\end{tabular} 
\caption{ \label{tab:areathreshold} Numerically computed area thresholds for some
dds and channels. }
\end{table}
Recall that the GEXIT integral has a simple graphical interpretation
-- it is the area under the GEXIT curve, assuming of course
that both the curve and the integral exist.  The area threshold is
therefore that channel parameter $\ih^A(\dl, \dr, \{\Ldens{c}_\ent\})$ such that the GEXIT integral
from $\ih^A(\dl, \dr, \{\Ldens{c}_\ent\})$ to $1$ is equal to $1-\frac{\dl}{\dr}$, the design rate.

Consider e.g. the case of the $(10, 20)$-regular dd
depicted in Figure~\ref{fig:maxwell1020}.
\begin{figure}[htp]
\centering
\input{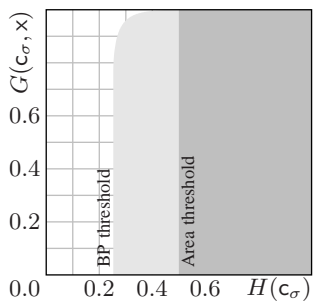}
\caption{\label{fig:maxwell1020} 
The area threshold for the $(10,20)$-regular ensemble and
transmission over the BSC.  
We have $\ih^A \approx 0.49985$. For comparison, the BP
threshold is at a channel entropy of roughly $0.2528$.} 
\end{figure}
From Lemma~\ref{lem:FPforlargeentropy} we know that the GEXIT curve
is Lipschitz continuous at least in the range $\ih \in [0.341, 1]$.
An explicit check shows that $A(\Ldens{x}_{\ent=0.341}) < 0$, so
that $\ent^A \geq 0.341$.  We know that for $\ent \in [0.341. 1]$
the expression $1-\frac{\dl}{\dl}-A(\Ldens{x}_\ent)$ corresponds
to the area under this GEXIT curve between $\ent$ and $1$. This
expression is therefore a decreasing function in $\ent$, or
equivalently, $A(\Ldens{x}_\ent)$ is an increasing function in
$\ent$.  Using bisection, we can therefore efficiently find the
area threshold and we get $\ih^A \approx 0.49985$.  Note that for
this case the area threshold has the interpretation as that unique
channel parameter $\ent^A$ so that the enclosed area under the GEXIT
curve between $\ent^A$ and $1$ is equal to $1-\frac{\dl}{\dr}$.
This is obviously the reason for calling $\ent^A$ the area threshold.

The same interpretation applies to any dd ($\dl, \dr)$ and any BMS
channel where the area threshold $\ih^A(\dl, \dr, \{\Ldens{c}_\ent\})$
is such that the GEXIT curve from $\ih^A(\dl, \dr, \{\Ldens{c}_\ent\})$
up till $1$ exists and is integrable.  Empirically this is true for
{\em all} regular dds and all BMS channels. Consider e.g. the case
of the $(3, 6)$ ensemble and transmission over the BAWGNC, see Figure~\ref{fig:maxwell}. From
Table~\ref{tab:bounds} we are assured that this curve exists and
is smooth at least in the range $\ih \in [0.5931, 1]$. This region
is unfortunately too small.  But it is easy to compute the curve
numerically over the whole range. Since the resulting curve is
smooth everywhere, it is easy to compute the area threshold numerically
in this way. We get $\ih^A \approx 0.4792$.
\begin{figure}[htp]
\centering
\input{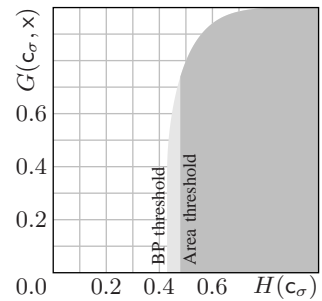}
\caption{\label{fig:maxwell} 
The area threshold for the $(3,6)$-regular ensemble and
transmission over the BAWGNC.  This upper bound is given by the entropy
value where the dark gray vertical line hits the $x$-axis.  
Numerically the upper bound
is at a channel entropy of roughly $0.4792$. For comparison, the BP
threshold is at a channel entropy of roughly $0.4291$.} 
\end{figure}

Fortunately, if we fix the rate then for all dd 
of sufficiently high degree this interpretation applies.
\blemma[Area Threshold Approaches Shannon]
\label{lem:areathresholdapproachesshannon}
Consider a sequence of $(\dl, \dr)$-regular ensembles of fixed
design rate $r=1-\dl/\dr$ and with $\dl, \dr$ tending to infinity. 

Assume that $\dr \geq 1 + 5 (\frac1{1-r})^{\frac43}$ and
that $\entLE(\dl, \dr, \{\Ldens{c}_{\ent}\}) < \frac{\dl}{\dr}-\dl e^{-4 (\dr-1)
(\frac{2 (1-r)}{11 e })^{\frac{4}{3}}} $, where $\entLE(\dl, \dr, \{\Ldens{c}_\ent\})$
is defined in Lemma~\ref{lem:continuityforlargeentropy}.
Then for any BMS channel family $\{\Ldens{c}_\ent\}$
\begin{align*} 
\frac{\dl}{\dr}-\dl e^{-4
(\dr-1) (\frac{2 (1-r)}{11 e })^{\frac{4}{3}}} \leq \ih^A(\dl, \dr, \{\Ldens{c}_\ent\}) \leq
\frac{\dl}{\dr}.
\end{align*} 
Furthermore, $A(\Ldens{x}_{\ih^A}, \dl, \dr) = 0$ and, for fixed
rate and increasing degrees, the sequence of the area thresholds
$\ih^{\text{A}}(\dl, \dr, \{\Ldens{c}_\ent\})$ converges to the
Shannon threshold $\ih^{\text{Shannon}}(\dl, \dr)=\frac{\dl}{\dr}
= 1 - r$ {\em universally} over the whole class of BMS channel
families.
\elemma 
\begin{IEEEproof} 
Note that $\entLE \leq \bar{\ih} \leq
\frac{e^{\frac14} \sqrt{2}}{(\dr-2)^{\frac14}} \downarrow_{\dr \rightarrow
\infty} 0$, where $\bar{\ih}$  is the universal upper
bound on $\entLE$ in Lemma~\ref{lem:FPforlargeentropy}.
Thus, $\entLE < \frac{\dl}{\dr}-\dl e^{-4 (\dr-1)
(\frac{2 (1-r)}{11 e })^{\frac{4}{3}}}$ is fulfilled for sufficiently
large degrees.

Let us begin with the lower bound on $\ih^A$. Consider any $\entLE < \ih < \frac{\dl}{\dr}-\dl e^{-4 (\dr-1) (\frac{2 (1-r)}{11 e
})^{\frac{4}{3}}}$.  Let $\Ldens{x}_{\ih}$ be the corresponding BP
FP. Clearly, $\entropy(\Ldens{x}_{\ih}) < \frac{\dl}{\dr}-\dl e^{-4
(\dr-1) (\frac{2 (1-r)}{11 e })^{\frac{4}{3}}}$. Suppose that
$\entropy(\Ldens{x})\in [(\frac34)^{\frac{\dl-1}{2}}+\frac1{(\dr-1)^3},
\frac{\dl}{\dr} - \dl e^{-4 (\dr-1) (\frac{2 \dl}{11 e
\dr})^{\frac{4}{3}}}-\kappa]$. Then from the (Negativity)
Lemma~\ref{lem:asymptoticnegativity} it follows that $A(\Ldens{x}_{\ih})
< 0$ and hence $\ih^A \geq \frac{\dl}{\dr}-\dl e^{-4
(\dr-1) (\frac{2 (1-r)}{11 e })^{\frac{4}{3}}}$. Now suppose that
$\entropy(\Ldens{x}_{\ih})  < (\frac34)^{\frac{\dl-1}{2}}+\frac1{(\dr-1)^3}$
(the left boundary in the Negativity lemma). Since $\ih > \entLE$, we know from Corollary~\ref{cor:continuitywrtwasserstein}
that $\entropy(\Ldens{x}_{\ih})$ is a continuous function wrt $\ih$
with $\entropy(\Ldens{x}_{\ih=1}) = 1$.  Thus, from the mean value
theorem, there must exists a channel entropy $\ih^*$ such that
$\entropy(\Ldens{x}_{\ih^*}) $ lies within the interval
prescribed by the Negativity lemma. Therefore, also in this case
$\ih^A \geq \frac{\dl}{\dr}-\dl e^{-4 (\dr-1) (\frac{2
(1-r)}{11 e })^{\frac{4}{3}}}$.

Let us now consider the upper bound. From above arguments, since $\entLE < \ih^A$,
the BP GEXIT integral from $\ih^A$ to $1$ is given by
Lemma~\ref{lem:areaunderBPGEXIT}. If we combine this with the
definition of the area threshold, i.e.,  the expression $A$
in Lemma~\ref{lem:areaunderBPGEXIT} is non-positive at $\ih^A$,
we get that the BP GEXIT integral at the area threshold is at least equal to $1 -
\frac{\dl}{\dr}$. Now, note that the BP GEXIT curve is always upper
bounded by $1$ and so the integral from $\ih^A$ to $1$ can be at
most equal to $1-\ih^A$. Putting things together we have that $\ih^A
\leq \ih^{\text{Shannon}}=\frac{\dl}{\dr}$.

Let us prove the last claim of the lemma. We want to show that
at the area threshold $A(\Ldens{x}_{\ih^A}, \dl, \dr) = 0$. Recall
that the area threshold was defined as the supremum over all $\ih$
so that $A(\Ldens{x}_{\ih}, \dl, \dr)$ is less than or equal to zero. Therefore,
 all we need to show is that $A(\Ldens{x}_{\ih}, \dl, \dr)$ is
continuous as a function of $\ih$ around $\ih^A$.

Note that $\ih^A$ is strictly larger than $\tilde{\ih}$.
Thus, from Corollary~\ref{cor:continuitywrtwasserstein} we conclude that the
Wasserstein distance $d(\Ldens{x}_{\ih}, \Ldens{x}_{\ih^A})$ is continuous wrt
$\ih$. It is not hard to verify that $A(\Ldens{x}_{\ih}, \dl, \dr)$ is also continuous wrt
the Wasserstein distance. Combining, we get that $A(\Ldens{x}_{\ih}, \dl, \dr)$ is
continuous wrt $\ih$ around $\ih^A$.  
\end{IEEEproof}

\section{Coupled Systems}\label{sec:coupled}
\subsection{Spatially Coupled Ensemble} 
Our goal is to show that coupled ensembles can achieve capacity on
general BMS channels.  Let us recall the definition of an ensemble
which is particularly suited for the purpose of analysis. We call
it the $(\dl, \dr, \Lc, w)$ ensemble.  This is the ensemble we use
throughout the paper. For a quick historical review on some of the
many variants see Section~\ref{sec:priorwork}.

The variable nodes of the ensemble are at positions $[-\Lc, \Lc]$, $\Lc
\in \naturals$. At each position there are $M$ variable nodes, $M \in
\naturals$. Conceptually we think of the check nodes to be located at
all integer positions from $[- \infty, \infty]$.  Only some of these
positions actually interact with the variable nodes.  At each position
there are $\frac{\dl}{\dr} M$ check nodes. It remains to describe how the
connections are chosen.  We assume that each of the $\dl$ connections of
a variable node at position $i$ is uniformly and independently chosen from
the range $[i, \dots, i+w-1]$, where $w$ is a ``smoothing'' parameter. In
the same way, we assume that each of the $\dr$ connections of a check
node at position $i$ is independently chosen from the range $[i-w+1,
\dots, i]$. A detailed construction of this ensemble can be found in
\cite{KRU10}.

For the whole paper we will always be interested in the limit when $M$
tends to infinity while $\Lc$ as well as $\dl$ and $\dr$ stay fixed. In
this limit we can analyze the system via density evolution, simplifying
our task.

Not surprisingly, spatially coupled ensembles inherit many of their
properties from the underlying ensemble. Perhaps most importantly,
the local connectivity is the same. Further, the design rate of the
coupled ensemble is close to that of the original one.  A proof of
the following lemma can be found in \cite{KRU10}.
\begin{lemma}[Design Rate]\label{lem:designrate}
The design rate of the ensemble $(\dl, \dr, \Lc, w)$, with $w \leq \Lc$,
is given by
\begin{align*}
R(\dl, \dr, \Lc, w) & = 
(1-\frac{\dl}{\dr}) - \frac{\dl}{\dr} \frac{w+1-2\sum_{i=0}^{w} 
\bigl(\frac{i}{w}\bigr)^{\dr}}{2 \Lc+1}.
\end{align*}
\end{lemma}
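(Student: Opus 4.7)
The plan is to invoke the standard identity $R = 1 - N_c/N_v$, where $N_v$ counts variable nodes and $N_c$ counts parity-check equations that carry a non-trivial constraint. Reading off the ensemble definition gives $N_v = M(2\Lc+1)$, and a nominal check count of $\frac{\dl}{\dr}M$ per position $j\in[-\Lc,\Lc+w-1]$, i.e.\ $\frac{\dl}{\dr}M(2\Lc+w)$ in total; check positions outside this range see no variable edges and contribute nothing. The rate loss relative to $1-\dl/\dr$ will come entirely from boundary check nodes whose $\dr$ sockets all land on ``virtual'' positions outside $[-\Lc,\Lc]$, making them degree-$0$ and trivially satisfied, hence removable.

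The core computation is therefore to evaluate the expected number of such degree-$0$ checks. By the construction, each of the $\dr$ sockets of a check at position $j$ independently selects a position uniformly in $[j-w+1,j]$, so a single socket is ``dangling'' with probability $p_j = |[j-w+1,j]\setminus[-\Lc,\Lc]|/w$ and, by independence, a whole check is degree-$0$ with probability $p_j^{\dr}$. The assumption $w\leq\Lc$ ensures that only the two boundary slabs contribute. On the left boundary, $j\in[-\Lc,-\Lc+w-1]$, the change of variables $i = w-1-(j+\Lc)$ gives $p_j = i/w$ for $i\in\{0,\ldots,w-1\}$; on the right, $j\in[\Lc+1,\Lc+w-1]$ and $i = j-\Lc$ gives $p_j = i/w$ for $i\in\{1,\ldots,w-1\}$. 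Summing the expectations $\frac{\dl M}{\dr}p_j^{\dr}$ over both boundaries and using $(0/w)^{\dr}=0$, $(w/w)^{\dr}=1$ to repackage the two sums, one obtains the total
\[
\frac{\dl M}{\dr}\Bigl[2\sum_{i=0}^{w}(i/w)^{\dr}-2\Bigr].
\]

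Subtracting this from the nominal check count and dividing by $M(2\Lc+1)$ then yields the claimed expression for $R$ after one line of algebra: the ``$-(-2)$'' combines with the $2\Lc+w$ to produce $(2\Lc+1)+(w+1)$ inside the bracket, and the leading $1$ pulls out a $(1-\dl/\dr)$. The argument is entirely a counting exercise and I do not foresee a genuine technical obstacle; the only place that requires care is the repackaging identity that merges the separate left/right boundary sums (with endpoint $i=0$ present in one and absent in the other) into the single symmetric sum $\sum_{i=0}^{w}(i/w)^{\dr}$, since a miscount of $1$ there would shift the ``$+1$'' in $w+1$.
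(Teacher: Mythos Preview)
Your argument is correct and is the standard counting proof: the paper itself does not reproduce a proof here but simply cites \cite{KRU10}, and what you outline is essentially that reference's argument. The key ingredients---counting variable nodes as $M(2\Lc+1)$, counting nominal checks as $\frac{\dl}{\dr}M(2\Lc+w)$ over the positions $[-\Lc,\Lc+w-1]$, and then subtracting the (expected) number of checks all of whose $\dr$ sockets land outside $[-\Lc,\Lc]$---are all right, and your boundary accounting and the repackaging via $(w/w)^{\dr}=1$ check out line by line to give the stated formula.

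One minor remark worth making explicit: the quantity you compute is the \emph{expected} number of degree-zero checks (since sockets are chosen independently and uniformly). The design rate as stated is therefore either the expected design rate or, equivalently, the asymptotic design rate as $M\to\infty$ (the regime the paper works in throughout); by concentration the fraction of degree-zero checks at each boundary position converges to $p_j^{\dr}$. This is implicit in your write-up but would be worth one sentence in a final version.
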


There is an entirely equivalent way of describing a spatially coupled
ensemble in terms of a circular construction. This construction has
the advantage that it is completely symmetric. This simplifies some
of the ensuing proofs.

\begin{definition}[Circular Ensemble] \label{def:circularensemble}
Given an $(\dl, \dr, \Lc, w)$ ensemble we can associate to it a {\em
circular} ensemble. This circular ensemble has $w-1$ extra sections, all
of whose variable nodes are set to zero.  To be concrete, we assume that
the sections are numbered from $[-\Lc, \Lc+w-1]$, where the sections in $[-\Lc,
\Lc]$ are the sections of the original ensemble and the sections in $[\Lc+1,
\Lc+w-1]$ are the extra sections. In this new circular ensemble all index
calculations (for the connections) are done modulo $2 \Lc+w$ and indices
are mapped to the range $[-\Lc, \Lc+w-1]$. For all positions in the range
$i \in [\Lc+1, \Lc+w-1]$ the channel is $\Ldens{c}_i=\Delta_{+\infty}$, and
consequently, $\Ldens{x}_i=\Delta_{+\infty}$. For all 
``regular'' positions $i \in [-\Lc, \Lc]$ the associated channel is
the standard channel $\Ldens{c}$.  This circular
ensemble has design rate equal to $1-\dl/\dr$. 
\qed
\end{definition}

%The relationship between the underlying ensemble and the resulting 
%spatially coupled circular structure is perhaps best expressed {\`a}
%la Tolkien:
%\begin{verse}
%One ring to rule them all, \\
%One ring to tidy them, \\ 
%One ring to couple them all, \\
%And in universality bind them.
%\end{verse}
As we will see, it is the global structure which helps all
the individual codes to perform so well -- individually they can
only achieve their BP threshold, but together they reach their MAP
performance.

\subsection{Density Evolution for Coupled Ensemble}\label{sec:decoupled}
Let us describe the DE equations for the $(\dl, \dr, \Lc, w)$ ensemble.
In the sequel, densities are $L$-densities.  Let $\Ldens{c}$ denote the
channel and let $\Ldens{x}_i$ denote the density which is emitted by
variable nodes at position $i$. Throughout the paper, $\Delta_{+\infty}$
denotes an $L$-density with all its mass at $+\infty$ and represents
the perfect decoding density. Also, $\Delta_{0}$ denotes an $L$-density
with all its mass at $0$ and represents a density with no information.

\begin{definition}[DE of the $(\dl, \dr, \Lc, w)$ Ensemble]\label{def:decoupled}
Let $\Ldens{x}_i$, $i\in \integers$, denote the average $L$-density which
is emitted by variable nodes at position $i$. For $i \not \in [-\Lc, \Lc]$
we set $\Ldens{x}_i=\Delta_{+\infty}$.  In words, the boundary variable nodes have
perfect information. For $i \in [-\Lc, \Lc]$, the FP condition implied by
DE is
\begin{align}\label{eq:densevolxi}
\Ldens{x}_i 
& = \Ldens{c} \vconv 
\Bigl(\frac{1}{w} \sum_{j=0}^{w-1} \bigl(\frac{1}{w} \sum_{k=0}^{w-1} 
\Ldens{x}_{i+j-k} \bigr)^{\cconv \dr-1} \Bigr)^{\vconv \dl-1}.
\end{align}
Define
\begin{align*}
g(\Ldens{x}_{i\!-\!w\!+\!1}, \!\dots\!,  \Ldens{x}_{i\!+\!w\!-\!1}) =  
\Bigl(\frac{1}{w} \sum_{j=0}^{w-1} \bigl(\frac{1}{w} \sum_{k=0}^{w-1} 
\Ldens{x}_{i\!+\!j\!-\!k} \bigr)^{\cconv \dr\!-\!1} \Bigr)^{\vconv \dl\!-\!1}.
\end{align*}
Note that 
$g(\Ldens{x}, \dots, \Ldens{x}) = (\Ldens{x}^{\cconv \dr-1})^{\vconv \dl-1}$,
where the right-hand side represents DE (without the effect of the
channel) for the underlying $(\dl, \dr)$-regular ensemble. Also define
\begin{align*}
\hat{g}(\Ldens{x}_{i\!-\!w\!+\!1}, \!\dots\!,  \Ldens{x}_{i\!+\!w\!-\!1}) = 
 \Bigl(\frac{1}{w} \sum_{j=0}^{w-1} \bigl(\frac{1}{w} \sum_{k=0}^{w-1} 
\Ldens{x}_{i\!+\!j\!-\!k} \bigr)^{\cconv \dr\!-\!1} \Bigr)^{\vconv \dl}.
\end{align*}
As before we see that $\hat{g}(\Ldens{x}, \!\dots\!,  \Ldens{x})$ denotes the
EXIT value of DE for the underlying $(\dl, \dr)$-regular ensemble. 
It is not hard to see \cite{RiU08} that both $g(\Ldens{x}_{i-w+1},
\dots, \Ldens{x}_{i+w-1})$ as well as $\hat{g}(\Ldens{x}_{i-w+1}, \dots,
\Ldens{x}_{i+w-1})$ are monotone wrt degradation in all their arguments
$\Ldens{x}_j$, $j=i-w+1, \dots, i+w-1$.  More precisely, if we degrade any
of the densities $\Ldens{x}_j$, $j=i-w+1, \dots, i+1-1$, then $g(\cdot)$
(respectively $\hat{g}(\cdot)$) is degraded.  We say that $g(\cdot)$ (respectively $\hat{g}(\cdot)$)
is {\em monotone} in its arguments.  \qed \end{definition}

\blemma[Sensitivity of DE]\label{lem:sensitivity}
Fix the parameters $(\dl, \dr)$ and $w$ and assume that
$d(\Ldens{a}_i, \Ldens{b}_i) \leq \kappa$, $i=-w+1, \dots, w-1$. Then
\begin{align*}
d(
\Ldens{c}\vconv g(\Ldens{a}_{-w+1}, \dots, \Ldens{a}_{w-1})& , 
\Ldens{c}\vconv g(\Ldens{b}_{-w+1}, \dots, \Ldens{b}_{w-1})) \\
& \leq 2 (\dl-1)(\dr-1) \kappa.
\end{align*}
\elemma
\begin{proof}
For $i \in [0, w-1]$, define $\tilde{\Ldens{a}}_i = \frac1{w} \sum_{k=0}^{w-1} \Ldens{a}_{i-k}$ and
$\tilde{\Ldens{b}}_i = \frac1{w} \sum_{k=0}^{w-1} \Ldens{b}_{i-k}$. 
Set $\Ldens{c}_i = \tilde{\Ldens{a}}_i^{\cconv \dr-1}$ and
$\Ldens{d}_i = \tilde{\Ldens{b}}_i^{\cconv \dr-1}$.
Then using properties 
(\ref{lem:blmetricconvexity}) and (\ref{lem:blmetricregularcconv}) of Lemma~\ref{lem:blmetric}
we see that
\begin{align*}
d(\Ldens{c}_i, \Ldens{d}_i) & \stackrel{\text{(\ref{lem:blmetricregularcconv})}}{\leq} (\dr-1)
d(\tilde{\Ldens{a}}_i, \tilde{\Ldens{b}}_i) \stackrel{\text{(\ref{lem:blmetricconvexity})}}{\leq} (\dr-1) \kappa.
\end{align*}
Using once again property (\ref{lem:blmetricconvexity}) of Lemma~\ref{lem:blmetric}
\begin{align*}
d(\frac1{w} \sum_{i=0}^{w-1} \Ldens{c}_i, \frac1{w} \sum_{i=0}^{w-1} \Ldens{d}_i) \leq (\dr-1) \kappa.
\end{align*}
Finally, using property (\ref{lem:blmetricregularvconv}) of Lemma~\ref{lem:blmetric}
\begin{align*}
& d(\Ldens{c}\vconv g(\Ldens{a}_{-w+1}, \dots, \Ldens{a}_{w-1}), 
\Ldens{c}\vconv g(\Ldens{b}_{-w+1}, \dots, \Ldens{b}_{w-1})) \\
& =  
d(\Ldens{c}\vconv (\frac1{w} \sum_{i=0}^{w-1} \Ldens{c}_i)^{\vconv \dl-1}, 
\Ldens{c}\vconv (\frac1{w} \sum_{i=0}^{w-1} \Ldens{d}_i)^{\vconv \dl-1}) \\ 
& \leq  2 (\dl\!-\!1) (\dr\!-\!1) \kappa.
\end{align*}
\end{proof}

\subsection{Fixed Points and Admissible Schedules}\label{sec:fpandschedules}
\begin{definition}[FPs of Density Evolution]\label{def:fixedpoints}
Consider DE for the $(\dl, \dr, \Lc, w)$ ensemble.
Let $\Ldens{\x}=(\Ldens{x}_{-\Lc}, \dots, \Ldens{x}_{\Lc})$. 
We call $\Ldens{\x}$ the {\em constellation} (of $L$-densities). 
We say that $\Ldens{\x}$ forms a FP
of DE with channel $\Ldens{c}$ if $\Ldens{\x}$ fulfills (\ref{eq:densevolxi})
for $i \in [-\Lc, \Lc]$.  As a short hand we say that $(\Ldens{c}, \Ldens{\x})$
is a FP.  We say that $(\Ldens{c}, \Ldens{\x})$ is a {\em non-trivial}
FP if $\Ldens{x}_i \neq \Delta_{+\infty}$ for at least one $i \in [-\Lc, \Lc]$.
Again, for $i\notin [-\Lc,\Lc]$, $\Ldens{x}_i = \Delta_{+\infty}$.
\qed
\end{definition}

\begin{definition}[Forward DE and Admissible Schedules]\label{def:forwardDE} 
Consider {\em forward} DE for the $(\dl, \dr, \Lc, w)$ ensemble.  More
precisely, pick a channel $\Ldens{c}$. Initialize 
$\Ldens{\x}^{(0)}=(\Delta_0, \dots, \Delta_0)$. Let $\Ldens{\x}^{(\ell)}$ be the result of
$\ell$ rounds of DE. This means that $\Ldens{\x}^{(\ell+1)}$ is generated from
$\Ldens{\x}^{(\ell)}$ by applying the DE equation \eqref{eq:densevolxi} to each
section $i\in [-\Lc,\Lc]$,
\begin{align*}
\Ldens{x}_i^{(\ell+1)} & = \Ldens{c}\vconv g(\Ldens{x}_{i-w+1}^{(\ell)},\dots,\Ldens{x}_{i+w-1}^{(\ell)}).
\end{align*}
We call this the {\em parallel} schedule.

More generally, consider a schedule in which in step $\ell$ an
arbitrary subset of the sections is updated, constrained only by the
fact that every section is updated in infinitely many steps. We call
such a schedule {\em admissible}. We call $\Ldens{\x}^{(\ell)}$
the resulting sequence of constellations.  
\qed
\end{definition}

\begin{lemma}[FPs of Forward DE]\label{lem:forwardDE}
Consider forward DE for the $(\dl, \dr, L, w)$ ensemble.  Let
$\Ldens{\x}^{(\ell)}$ denote the sequence of constellations under an admissible
schedule.  Then $\Ldens{\x}^{(\ell)}$ converges to a FP of DE, with each component being a symmetric $L$-density and this
FP is independent of the schedule.  In particular, it is equal
to the FP of the parallel schedule.
\end{lemma}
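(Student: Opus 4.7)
The plan is to prove the lemma in three stages: monotone convergence, verification that the limit is a symmetric FP, and schedule independence. The backbone is the observation that $\Delta_0$ is the most degraded symmetric density, so forward DE evolves monotonically in the degradation order and therefore converges, and the limit is forced by a standard sandwich argument to coincide across schedules.

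First I would show that the sequence $\{\Ldens{\x}^{(\ell)}\}$ is componentwise monotone with respect to degradation, i.e., $\Ldens{x}_i^{(\ell+1)}\prec \Ldens{x}_i^{(\ell)}$ for every $i$ and $\ell$. The base case uses that any symmetric density $\Ldens{a}$ satisfies $\Ldens{a}\prec\Delta_0$: for any non-increasing concave $f$ on $[0,1]$, $f(0)\ge f(x)$, and since $\absDdens{a}$ has unit mass, $\int f\,\absDdens{a}\le f(0)=\int f\,\absDdens{\Delta_0}$, which is precisely \eqref{equ:degradation}. The induction step follows from the stated monotonicity of $g$ in each argument (Definition~\ref{def:decoupled}) together with the fact that $\Ldens{c}\vconv\cdot$ preserves degradation; sections not touched at a given step trivially retain the inequality. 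Once monotonicity is established, the $|D|$-cdfs $\{\absDdist{X}_i^{(\ell)}\}$ are a monotone bounded sequence on $[0,1]$ and hence converge pointwise at continuity points of the limit, i.e., converge weakly, to a limit $\Ldens{x}_i^{(\infty)}$; by Lemma~\ref{lem:blmetric}(iii) this is equivalent to convergence in the Wasserstein metric.

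Next I would verify that the limit is a symmetric FP. Symmetry is preserved along the iteration since $\Delta_0$ is symmetric and both $\vconv$ and $\cconv$ preserve symmetry, and symmetry can be expressed through bounded continuous functionals, which weak convergence preserves. For the FP equation I would use admissibility: every section $i$ is updated infinitely often, so along a subsequence $\{\ell_k\}$ of update times for section $i$,
\[
\Ldens{x}_i^{(\ell_k)}=\Ldens{c}\vconv g(\Ldens{x}_{i-w+1}^{(\ell_k-1)},\dots,\Ldens{x}_{i+w-1}^{(\ell_k-1)}).
\]
Lemma~\ref{lem:sensitivity} (continuity of $\Ldens{c}\vconv g$ in Wasserstein) allows passing to the limit on the right, while monotone convergence handles the left, so \eqref{eq:densevolxi} holds at $i$.

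Finally, for schedule-independence I would sandwich the two limits. Let $\Ldens{\x}^{(\ell)}$ and $\Ldens{\y}^{(\ell)}$ denote the parallel and admissible constellations, both seeded at $\Delta_0$. By induction on $\ell$, $\Ldens{x}_i^{(\ell)}\prec \Ldens{y}_i^{(\ell)}$ for all $i$: for sections updated in admissible at step $\ell+1$, monotonicity of $g$ and the hypothesis directly give $\Ldens{x}_i^{(\ell+1)}\prec \Ldens{y}_i^{(\ell+1)}$; for sections not updated, the hypothesis combined with $\Ldens{x}_i^{(\ell+1)}\prec \Ldens{x}_i^{(\ell)}$ (from the first stage) gives $\Ldens{x}_i^{(\ell+1)}\prec \Ldens{y}_i^{(\ell)}=\Ldens{y}_i^{(\ell+1)}$. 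Taking limits, $\Ldens{\x}^{(\infty)}\prec \Ldens{\y}^{(\infty)}$. For the reverse, any FP $\Ldens{\x}^{\ast}$ satisfies $\Ldens{\x}^{\ast}\prec\Delta_0=\Ldens{\x}^{(0)}$ and, by monotonicity of the parallel DE map, $\Ldens{\x}^{\ast}\prec \Ldens{\x}^{(\ell)}$ for all $\ell$, hence $\Ldens{\x}^{\ast}\prec\Ldens{\x}^{(\infty)}$. Applying this to the FP $\Ldens{\x}^{\ast}=\Ldens{\y}^{(\infty)}$ produced by the previous stage yields $\Ldens{\y}^{(\infty)}\prec\Ldens{\x}^{(\infty)}$, and combining the two inequalities gives $\Ldens{\x}^{(\infty)}=\Ldens{\y}^{(\infty)}$. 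I expect the main technical nuisance to be the bookkeeping in this last sandwich (matching up which sections are updated when); the monotonicity-plus-continuity machinery in the cited lemmas does all the heavy lifting, so no conceptual obstacle is anticipated.
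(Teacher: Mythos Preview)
Your proof is correct, but the schedule-independence part takes a genuinely different route from the paper's.

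The paper handles the parallel schedule exactly as you do (monotone in degradation, hence convergent to a symmetric FP), but for general admissible schedules it argues via \emph{nested computation trees}: for any two admissible schedules and any time $\ell$ under the first, there is a time $\ell'$ under the second whose computation graph contains that of the first, so the second schedule's density at time $\ell'$ is upgraded relative to the first's at time $\ell$. By symmetry of roles the two limits coincide. This argument is structural and treats all admissible schedules on an equal footing.

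Your approach instead singles out the parallel schedule and proves two comparison facts: (a) $\Ldens{\x}^{(\ell)}\prec\Ldens{\y}^{(\ell)}$ for every $\ell$ by induction, and (b) the parallel limit dominates \emph{every} FP, so in particular it dominates the admissible limit once you have established the latter is a FP. Combining (a) and (b) gives equality. This buys you an explicit characterization that the paper only uses implicitly elsewhere (e.g.\ in Lemma~\ref{lem:condforcontinuity}): the forward-DE FP is the \emph{maximal} FP in the degradation order. It is also fully self-contained, relying only on the monotonicity of $g$ and Lemma~\ref{lem:sensitivity} rather than on the operational picture of computation trees. The paper's version is terser and avoids the asymmetry of treating the parallel schedule as a reference, but yours makes every step checkable without appeal to ``standard arguments''.
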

\begin{IEEEproof}
Consider first the parallel schedule. We claim that the vectors
$\Ldens{\x}^{(\ell)}$ are ordered, i.e., $\Ldens{\x}^{(0)}\succ \Ldens{\x}^{(1)}\succ \dots
\succ \underline{0}$ (the ordering is section-wise and $\underline{0}$ is the vector of $\Delta_{+\infty}$). This is true since
$\Ldens{\x}^{(0)}=(\Delta_0, \dots, \Delta_0)$, whereas $\Ldens{\x}^{(1)}\prec (\Ldens{c},
\dots, \Ldens{c}) \prec (\Delta_0, \dots, \Delta_0) =\Ldens{\x}^{(0)}$. It now
follows by induction on the number of iterations and the monotonicity
of the function $g(\cdot)$ that the sequence
$\Ldens{\x}^{(\ell)}$ is monotonically decreasing. More precisely, we have
$\Ldens{\x}^{(\ell+1)}_i \prec \Ldens{\x}^{(\ell)}_i$. Hence, from Lemma 4.75 in
\cite{RiU08}, we conclude that each section converges to a limit density
which is also symmetric.  Call the limit $\Ldens{\x}^{(\infty)}$.  Since the DE equations
are continuous it follows that $\Ldens{\x}^{(\infty)}$ is a FP of DE
\eqref{eq:densevolxi} with parameter $\Ldens{c}$. We call $\Ldens{\x}^{(\infty)}$
the {\em FP of forward DE}.

That the limit (exists in general and that it) does not depend on the
schedule follows by standard arguments and we will be brief.  The idea
is that for any two admissible schedules the corresponding computation
trees are nested. This means that if we look at the computation graph
of schedule lets say 1 at time $\ell$ then there exists a time $\ell'$
so that the computation graph under schedule $2$ is a superset of the
first computation graph. To be able to come to this conclusion we have
crucially used the fact that for an admissible schedule every section is
updated infinitely often. This shows that the performance under schedule
2 is at least as good as the performance under schedule 1.  Since the
roles of the schedules are symmetric, the claim follows.  \end{IEEEproof}

\subsection{Entropy, Error and Battacharyya Functionals for Coupled Ensemble}
\begin{definition}[Entropy, Error, and Battacharyya]\label{def:entropy}
Let $\Ldens{\x}$ be a constellation. 
Let $F(\cdot)$ denote either the $\entropy(\cdot)$ (entropy), $\perr(\cdot)$ (error probability), 
or $\batta(\cdot)$ (Battacharyya) functional defined in Section~\ref{sec:BPandDE}.

We define the (normalized) {\em entropy} , {\em error} and  
{\em Battacharyya} functionals of the constellation $\Ldens{\x}$ to be 
\begin{align*}
F(\Ldens{\x}) & = \frac1{2\Lc+1}\sum_{i=-\Lc}^{\Lc} F(\Ldens{x}_i).
\end{align*}
\qed
\end{definition}

\subsection{BP GEXIT Curve for Coupled Ensemble}\label{sec:ebpgexituncoupled}
We now come to a key object, the BP GEXIT curve for the coupled
ensemble.  We have discussed how to compute BP GEXIT curves for
uncoupled ensembles in Section~\ref{sec:ebpgexituncoupled}. For
coupled ensembles the procedure is similar.

In Section \ref{sec:fpandschedules} we have seen that for coupled systems
FPs of forward DE are well defined and that they can be computed by
applying a parallel schedule. This procedure allows us to compute {\em
some} FPs. 

But we can also use DE at fixed entropy, as discussed in
Section~\ref{sec:review}, to compute further FPs (in particular
unstable ones).  More, precisely, fix the desired average entropy
of the constellation, call it $\ent$.  Start with the initialization
$\Ldens{\x}^{(0)}=\underline{\Delta}_0$, the vector of all $\Delta_0$.
In each iteration proceed as follows. Perform one round of DE without
incorporating the channel, i.e., set
\begin{align*}
\Ldens{x}_{i}^{(\ell)} = 
g(\Ldens{x}_{i-w+1}^{(\ell-1)}, \cdots, \Ldens{x}_{i+w-1}^{(\ell-1)}).
\end{align*}
Now find a channel $\Ldens{c}_\sigma \in \{\Ldens{c}_\sigma\}$, assuming
it exists, so that after the convolution with this channel the average
entropy of the constellation is equal to $\ent$. Continue this procedure 
until the constellation has converged (under some suitable metric).

Assume that we have computed (via the above procedure) a complete family 
$\{\Ldens{c}_{\sigma}, \Ldens{\x}_{\sigma} \}$ of FPs of DE,
i.e., a family so that for each $\ih\in [0,1]$, there exists a parameter $\sigma$ so that
$\ih = \frac{1}{2 \Lc+1} \sum_{i=-\Lc}^{\Lc} \entropy(\Ldens{x}_{\sigma, i})$. Then
we can derive from it a BP GEXIT curve by projecting it onto 
\begin{align*}
\Big\{\entropy(\Ldens{c}_{\sigma}), 
\frac{1}{2 \Lc+1} \sum_{i=-\Lc}^{\Lc} G(\Ldens{c}_\sigma, 
\hat{g}(\Ldens{x}_{\sigma, i\!-\!w\!+\!1}, \!\dots\!,  \Ldens{x}_{\sigma, i\!+\!w\!-\!1})
)\Big\},
\end{align*}
where $\hat{g}(\cdot)$ was introduced in Section~\ref{sec:decoupled}, and 
$ \frac{1}{2 \Lc+1} \sum_{i=-\Lc}^{\Lc} G(\Ldens{c}_\sigma, 
\hat{g}(\Ldens{x}_{\sigma, i\!-\!w\!+\!1}, \!\dots\!,  \Ldens{x}_{\sigma, i\!+\!w\!-\!1})
)$ is the (normalized) GEXIT function of the constellation $\Ldens{\x}_{\sigma}$. 
\begin{figure}[htp]
\begin{centering}
\input{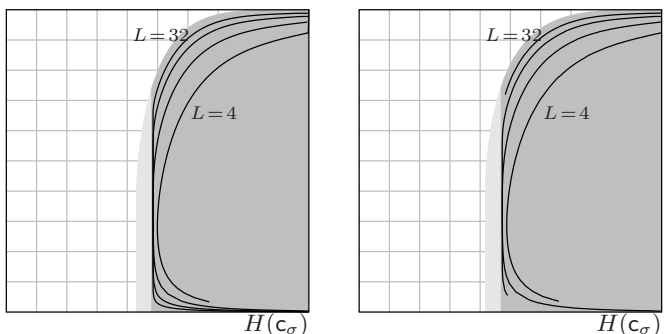}
\caption{BP GEXIT curves of the ensemble $(\dl=3, \dr=6, \Lc)$ for $\Lc=
4, 8, 16,$ and $32$ and transmission over the BAWGNC
(left) and the BSC (right).  The BP thresholds are 
$\ent^{\BPsmall}_{\BAWGNC/\BSC}(3, 6, 4)=0.4992/0.4878$, 
$\ent^{\BPsmall}_{\BAWGNC/\BSC}(3, 6, 8)=0.4850/0.47303$, 
$\ent^{\BPsmall}_{\BAWGNC/\BSC}(3, 6, 16)=0.4849/0.4729$, 
$\ent^{\BPsmall}_{\BAWGNC/\BSC}(3, 6, 32)=0.4849/0.4729$.
The light/dark gray areas mark the interior of the
BP/MAP GEXIT function of the underlying $(3, 6)$-regular ensemble,
respectively.}
\label{fig:lrLexit}
\end{centering}
\end{figure}
Figure~\ref{fig:lrLexit} shows the result of this numerical computation
when transmission takes place over the
BAWGNC (left-hand side) and the BSC (right-hand side).  Note that the resulting curves look  
similar to the curves when transmission takes place over the BEC, see
\cite{KRU10}. For small values of $\Lc$ the curves are far to the
right due to the significant rate loss that is incurred at the boundary.
For $\Lc$ around $10$ and above, the BP threshold of each ensemble is 
close to the area threshold of the underlying $(3, 6)$-regular ensemble,
namely $0.4792$ for the BAWGNC and
$0.4680$ for the BSC
(see the values in
Table~\ref{tab:areathreshold}).  The picture suggests that the threshold
saturation effect which was shown analytically to hold for the BEC in
\cite{MMRU09} also occurs for general BMS channels.

The aim of this paper is to prove rigorously that the situation
is indeed as indicated in Figure~\ref{fig:lrLexit}, i.e., that the BP
threshold of coupled ensembles is essentially equal to the area threshold
of the underlying uncoupled ensemble.

\subsection{Review for the BEC}
Let us briefly recall the main result of \cite{KRU10} which deals with
transmission over the BEC.
Let
$\epsilon^{\BPsmall}_{\BEC}(\dl, \dr, \Lc, w)$ and $\epsilon^{\MAPsmall}_{\BEC}(\dl, \dr,
\Lc, w)$ denote the BP threshold and the MAP threshold of the $(\dl, \dr,
\Lc, w)$ ensemble. Also, let $\epsilon^{\MAPsmall}_\BEC(\dl, \dr)$ denote the
MAP threshold of the underlying $(\dl, \dr)$-regular LDPC ensemble. Then
the main result of \cite{KRU10} states that
\begin{align*}
\lim_{w\to \infty}\!\lim_{\Lc\to \infty}\!\epsilon^{\BPsmall}_\BEC(\dl,\!\dr,\!\Lc,\!w)
& =\! 
\lim_{w\to \infty}\!\lim_{\Lc\to \infty}\!\epsilon^{\MAPsmall}_\BEC(\dl,\!\dr,\!\Lc,\!w) \\
& = \!\epsilon^{\MAPsmall}_\BEC(\dl, \dr).
\end{align*}
Also, (see \cite{RiU08}) as $\dl, \dr \to \infty$, with the ratio $\dl/\dr$
fixed, $\epsilon^{\MAPsmall}_\BEC(\dl,\dr) \to \dl/\dr$. 
Thus,
with increasing degrees, $(\dl, \dr, \Lc, w)$ ensembles under BP decoding
achieve the Shannon capacity for the BEC.

\subsection{First Result}\label{sec:firstresult}
Before we state and prove our main result (namely that coupled codes can
achieve capacity also for general BMS channels), let us first quickly
discuss a simple argument which shows that spatial coupling of codes
does have a non-trivial effect.

First consider the uncoupled case. We have seen in
Lemma~\ref{lem:bpboundsuncoupled} that when we fix the design rate
$1-\dl/\dr$ and increase the degrees the BP threshold converges to
$0$.  What happens if we couple such ensembles? We know that for
the BEC such ensembles achieve capacity. The next lemma asserts
that this implies a non-trivial BP threshold also for general BMS
channels.  
\blemma[Lower Bound on Coupled BP Threshold] Consider
transmission over an ordered and complete family $\{\Ldens{c}_\ent\}$
of BMS channels using a $(\dl, \dr, \Lc, w)$ ensemble and BP
decoding.

Let $\ent^{\BPsmall}=\ent^{\BPsmall}(\dl, \dr, \Lc, w, \{\Ldens{c}_\ent\})$ denote the
corresponding BP threshold and let
$\epsilon^{\BPsmall}=\epsilon^{\BPsmall}(\dl, \dr, \Lc, w)$ denote
the corresponding  BP threshold for transmission over the BEC.
Then
\begin{align}\label{equ:bmsversusbec}
\batta(\Ldens{c}_{\ent^{\BPsmall}(\dl, \dr, \Lc, w, \{\Ldens{c}_\ent\})}) \geq \epsilon_{\BPsmall}.
\end{align}
In particular, for every $\delta>0$ there exists a $w \in \naturals$ and a dd 
pair $(\dl, \dr)$ with $\dl/\dr$ fixed,  so that
\begin{align*}
\batta(\Ldens{c}_{\ent^{\BPsmall}(\dl, \dr, \Lc, w, \{\Ldens{c}_\ent\})}) \geq \dl/\dr-\delta.
\end{align*}
\elemma
\begin{IEEEproof}
Consider DE of the coupled ensemble (cf.
\eqref{eq:densevolxi}). Applying the Battacharyya functional, we get
\begin{align}\label{eq:battaxi}
\batta(\Ldens{x}_i) 
& =\batta(\Ldens{c}_\ent)  
 \Bigg(\batta\Bigl(\frac{1}{w} \sum_{j=0}^{w-1} \bigl(\frac{1}{w} \sum_{k=0}^{w-1} 
\Ldens{x}_{i+j-k} \bigr)^{\cconv \dr-1} \Bigr)\Bigg)^{\dl-1},
\end{align}
where we use the multiplicative property of the Battacharyya functional
at the variable node side.

Using the linearity of the Battacharyya functional and extremes of
information combining bounds for the check node convolution (\cite[Chapter
4]{RiU08}) we get
\begin{align}\label{eq:extremeinfoxi}
\batta(\Ldens{x}_i) 
& \leq \batta(\Ldens{c}_\ent) \Bigl(1 \!-\!\frac1{w}  
\sum_{j=0}^{w-1} \bigl(1 \!-\! \frac{1}{w} \sum_{k=0}^{w-1} 
\batta(\Ldens{x}_{i+j-k})\bigr)^{\dr-1} \Bigr)^{\dl-1}.
\end{align}
The preceding set of equations is formally equivalent to the DE equations
for the same spatially coupled ensemble and the BEC. Therefore, if
$\batta(\Ldens{c}_\ent) < \epsilon^{\BPsmall}(\dl, \dr, \Lc, w)$ then
the DE recursions, initialized with $\Ldens{c}_\ent$ must converge to
$\Delta_{+\infty}$, which implies (\ref{equ:bmsversusbec}).

Further, from \cite{KRU10} we know that for sufficiently large degrees $(\dl, \dr)$, with
their ratio fixed, and with $w$ sufficiently large, $\epsilon^{\BPsmall}(\dl, \dr, \Lc, w)$ approaches
$\dl/\dr$ arbitrarily closely (see the discussion in the preceding section), which proves
the final claim.  \end{IEEEproof}

\begin{example}[$(3, 6)$ Ensemble and BSC$(p)$]
Let us specialize to the case of transmission over the BSC using $(3,6)$-regular ensemble. Then we have
$\batta(\Ldens{c}) = 2\sqrt{p(1-p)}$. Using the above
argument and solving for $\epsilon$ in $2\sqrt{\epsilon(1-\epsilon)}
> \frac12,$ we  conclude that by a proper choice of $w$ and $(\dl, \dr)$ we can
transmit reliably at least up to an error probability of
$0.067$.
\end{example}

Combining the above result with Lemma~\ref{lem:entropyvsbatta} we conclude that
the BP threshold of the coupled ensemble is at least $(\dl/\dr)^2 - \delta$.
In summary, for general BMS channels and regular ensembles of fixed rate and
increasing degrees, their uncoupled BP threshold tends to $0$ but their coupled
BP threshold is lower bounded by a non-zero value.  We conclude that coupling
changes the performance in a fundamental way.  In the rest of the paper we will
strengthen the above result by showing that this non-zero value is in fact the
area threshold of the underlying ensemble and as degrees become large, this
will tend to the Shannon threshold, $\dl/\dr$.

\section{Main Results}\label{sec:main}
\subsection{Admissible Parameters}
In the sequel we will impose some restrictions on the
parameters. Rather than repeating these restrictions in each
statement, we collect them once and for all and give them a name.
\begin{definition}[Admissible Parameters]\label{def:admissible}
Fix the design rate $r$ of the uncoupled system. 
We say that the parameters $(\dl, \dr)$ and $w$ are {\em admissible} if the following conditions 
are fulfilled with $r=1-\frac{\dl}{\dr}$:
\renewcommand\theenumi{\roman{enumi}} 
\renewcommand{\labelenumi}{(\roman{enumi})}
\begin{enumerate}
\item \label{equ:admissibleone} $\dr \geq \sqrt{3} b \ln(b)$, $b=\frac{6}{\ln(4/3)(1-r)}$,
\item \label{equ:admissiblenine} $2(\dl-1)(\dr-1)(1-c^2)^{\frac{\dr-2}2} < 1$, $c=(1-r)(1-\dr e^{-4 (\dr-1) (\frac{2 (1-r)}{11 e })^{\frac{4}{3}}})-\frac{1}{\dr}.$
\item \label{equ:admissibletwo} $\entLE(\dl, \dr, \{\Ldens{c}_{\ent}\}) \leq (1-r)(1-\dr e^{-4 (\dr-1) (\frac{2 (1-r)}{11 e })^{\frac{4}{3}}})-\frac{1}{\dr}$, 
where $\entLE(\dl, \dr, \{\Ldens{c}_{\ent}\})$ is the bound stated in Lemma~\ref{lem:continuityforlargeentropy},
\item \label{equ:admissiblethree} $w > 2 \dl^3 \dr^2$,
\item \label{equ:admissiblefour} $w > 2 (\dl-1)(\dr-1) (\frac{16 \sqrt{2} \dr}{\ln(2) \dl})^2$,
\item \label{equ:admissiblefive} $w > 2 (\dl-1) (\dr-1) \dr^2  (4(\sqrt{2}+\frac2{\ln 2}\dl (\dr-1)))^2$,
\end{enumerate}
We say that the ensemble $(\dl, \dr, \Lc, w)$ is admissible if the
parameters $(\dl, \dr)$ and $w$ are admissible.  If we are only
concerned about the conditions on $(\dl, \dr)$, then we will say
that $(\dl, \dr)$ is admissible.
\qed
\end{definition}
{\em Discussion:}
Conditions  (\ref{equ:admissibleone}), (\ref{equ:admissiblenine})
and  (\ref{equ:admissibletwo}) are fulfilled if we take the degrees
sufficiently large.  Conditions (\ref{equ:admissiblethree}),
(\ref{equ:admissiblefour}), and (\ref{equ:admissiblefive}) can all
be fulfilled by picking a sufficiently large connection width $w$.

Why do we impose these conditions?  At several places we use simple
extremes of information combining bounds and these bounds are loose
and require, for the proof to work, the above conditions.  We believe
that with sufficient effort these bounds can be tightened and so
the restrictions on the degrees can be removed or at least significantly
loosened. We leave this as an interesting open problem.

Numerical experiments indicated that for any $3 \leq \dl \leq \dr$
and $w \geq 2$ the threshold saturation phenomenon happens, with a
``wiggle-size'' which vanishes exponentially in $w$.

Note that the above bounds imply the following bounds which we will
need at various places:
\renewcommand\theenumi{\roman{enumi}} 
\renewcommand{\labelenumi}{(\roman{enumi})}
\begin{enumerate}
\setcounter{enumi}{6}
\item \label{equ:admissiblesix} $\dr \geq \frac1{1-r}(1+\frac{2}{\ln(4/3)} \ln(2(\dr-1)^3))$,
\item \label{equ:admissibleseven} $\dr \geq 1 + 5 (\frac1{1-r})^{\frac43}$.
%\item \label{equ:admissibleeight} $\dr \geq 8e(1-r)$.
\end{enumerate}
Instead of condition (\ref{equ:admissibletwo}) above we can impose
the stronger but somewhat easier to check condition $\bar{\ih} \leq
(1-r)(1-\dr e^{-4 (\dr-1) (\frac{2 (1-r)}{11 e
})^{\frac{4}{3}}})-\frac{1}{\dr}$, where $\bar{\ih}$ is the upper
bound stated in Lemma~\ref{lem:FPforlargeentropy}, or even further
strengthen the condition to $\frac{e^{\frac14} \sqrt{2}}{(\dr-2)^{\frac14}}
\leq (1-r)(1-\dr e^{-4 (\dr-1) (\frac{2 (1-r)}{11 e
})^{\frac{4}{3}}})-\frac{1}{\dr}$. The last condition can be easily checked to
be satisfied for sufficiently large degrees.   

\subsection{Main Result}
\begin{theorem}[BP Threshold of the $(\dl, \dr, \Lc, w)$
Ensemble]\label{the:main} Consider transmission over a complete,
smooth, and ordered family of BMS channels, denote it by
$\{\Ldens{c}_\ent\}$, using the admissible ensemble $(\dl, \dr,
\Lc, w)$.  Let $\ent^{\BPsmall}(\dl, \dr, \Lc, w, \{\Ldens{c}_\ent\})$
and $\ent^{\MAPsmall}(\dl, \dr, \Lc, w,  \{\Ldens{c}_\ent\})$ denote
the corresponding BP and MAP threshold.  Further, let  $R(\dl, \dr,
\Lc, w)$ denote the design rate of this ensemble and set $r=1-\dl/\dr$.
Finally, let $\ent^{A}(\dl, \dr,  \{\Ldens{c}_\ent\})$  denote  the
area threshold of the underlying $(\dl, \dr)$-regular ensemble and
the given channel family.  Then
\begin{align}
& \ent^{A}(\dl, \dr, \{\Ldens{c}_\ent\}) \!-\!  f(\dl, \dr, w) \nonumber \\
& \leq \ent^{\BPsmall}(\dl, \dr, \Lc, w,  \{\Ldens{c}_\ent\}) \label{lem:bplowerbound} \\
& \leq \ent^{\MAPsmall}(\dl, \dr, \Lc, w,  \{\Ldens{c}_\ent\}) \nonumber \\
& \leq \ent^{A}(\dl, \dr,  \{\Ldens{c}_\ent\}) \!+\! \frac{(w-1)(\dr-1)^3}{\Lc} \label{equ:mapupperbound},
\end{align}
where
$f(\dl, \dr, w) = 8 (\dr-1)^3\big(\sqrt{2}+\frac2{\ln 2}\dl (\dr-1)\big) \sqrt{\frac{2 (\dl-1)(\dr-1)}{w}}$. 
Note that $f(\dl, \dr, w)$ depends only on the dd $(\dl, \dr)$ and
$w$ but is universal wrt the channel family $\{\Ldens{c}_\ent\}$.
Furthermore,
\begin{align}
\lim_{w \rightarrow \infty}
\lim_{\Lc \rightarrow \infty}
R(\dl, \dr, \Lc, w) & = 1 - \frac{\dl}{\dr}. \label{equ:ratelimit}
\end{align}
\end{theorem}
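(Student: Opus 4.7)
The plan is to treat the four conclusions separately and in order of increasing difficulty. The rate limit \eqref{equ:ratelimit} follows immediately from Lemma~\ref{lem:designrate}: for fixed $w$, the correction term $\frac{\dl}{\dr}\frac{w+1-2\sum_{i=0}^{w}(i/w)^{\dr}}{2\Lc+1}$ is $O(w/\Lc)$ and vanishes as $\Lc\to\infty$, leaving $1-\dl/\dr$; the limit is independent of $w$, so the outer limit is trivial. The middle inequality $\ent^{\BPsmall}\leq\ent^{\MAPsmall}$ is standard: above the BP threshold, forward DE for the coupled ensemble produces a non-trivial FP $\Ldens{\x}$ (Lemma~\ref{lem:forwardDE}), so the average conditional bit-entropy is bounded away from $0$ on an ensemble of codes on trees; the tree-like neighbourhood structure of the $(\dl,\dr,\Lc,w)$ ensemble then transfers this to the actual conditional entropy, forcing transmission to be above the MAP threshold in the sense of Definition~\ref{def:mapthreshold}.

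For the MAP upper bound \eqref{equ:mapupperbound} the plan is a standard Maxwell/area-theorem argument applied to the coupled ensemble. At any channel entropy $\ent>\ent^{\MAPsmall}(\dl,\dr,\Lc,w,\{\Ldens{c}_\ent\})$ the BP GEXIT integral of the coupled ensemble from $\ent$ to $1$ is lower bounded by $1-R(\dl,\dr,\Lc,w)$. Using the parametric family of FPs constructed section-by-section by DE at fixed entropy (Section~\ref{sec:fps}), this integral can be evaluated exactly, mimicking the computation in Lemma~\ref{lem:areaunderBPGEXIT} for each section $i\in[-\Lc,\Lc]$; the contributions telescope up to boundary terms coming from the $w-1$ sections near each endpoint where the coupled DE differs substantially from the uncoupled DE. Each such boundary section contributes at most $O((\dr-1)^3)$ to the discrepancy (because $A$ is uniformly Lipschitz in the Wasserstein metric on the set of densities of interest, by (the Entropy Product Inequality) Lemma~\ref{lem:magic}), and there are $O(w-1)$ of them per side, whence the penalty $(w-1)(\dr-1)^3/\Lc$.

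The hard part is the lower bound \eqref{lem:bplowerbound}, which is the threshold-saturation statement and the heart of the paper. The plan is a potential/sliding argument. One introduces a single-letter potential functional $\Phi$ on constellations whose first-order stationarity reproduces the coupled DE FP equation \eqref{eq:densevolxi}, and whose value on a spatially uniform constellation $\Ldens{\x}=(\Ldens{x},\dots,\Ldens{x})$ reduces to the quantity $A(\Ldens{x},\dl,\dr)$ of Lemma~\ref{lem:areaunderBPGEXIT}. Assume for contradiction that a non-trivial forward DE FP $\Ldens{\x}$ exists at channel entropy $\ent<\ent^{A}(\dl,\dr,\{\Ldens{c}_\ent\})-f(\dl,\dr,w)$. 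Since the boundary sections are pinned to $\Delta_{+\infty}$ while some interior section has strictly positive entropy, the constellation contains a transition region. Slide the profile by one section; the change in $\Phi$ is, by a Taylor expansion using the smoothness of variable- and check-node convolution under the Wasserstein metric (properties~(\ref{lem:blmetricregularvconv})--(\ref{lem:blmetricregularDE}) of Lemma~\ref{lem:blmetric} and Lemma~\ref{lem:sensitivity}), equal to $-A(\Ldens{x}^*,\dl,\dr)$ evaluated at some ``interior'' density $\Ldens{x}^*$, plus an error of order $\sqrt{(\dl-1)(\dr-1)/w}$ arising from the mismatch between neighbouring sections across the window of width $w$. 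The interior density $\Ldens{x}^*$ is forced to lie in the entropy range where (the Negativity) Lemma~\ref{lem:asymptoticnegativity} guarantees $A<0$ strictly, so $\Phi$ strictly decreases under sliding, contradicting the FP property.

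The main obstacles I anticipate are, first, defining a potential functional that is both spatially-additive (so that sliding acts cleanly on it) and whose stationary points reproduce DE in the general BMS setting — this is non-trivial because, unlike the BEC case handled in \cite{KRU10}, densities are infinite-dimensional and entropy is not quadratic. The duality identity (Lemma~\ref{lem:dualityrule}), the evaluation in Lemma~\ref{lem:areaunderBPGEXIT}, and the kernel bounds in Lemma~\ref{lem:magic} are the ingredients that make this possible. The second obstacle is bookkeeping: converting the sliding estimate into the explicit error $f(\dl,\dr,w)$ requires combining the $\sqrt{2(\dl-1)(\dr-1)/w}$ contraction-type factor with the $(\dr-1)^3$ prefactor from entropy-Lipschitz bounds, and showing that the interior density produced by the argument really does fall inside the negativity window of Lemma~\ref{lem:asymptoticnegativity}. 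This is exactly what conditions~(\ref{equ:admissiblethree})--(\ref{equ:admissiblefive}) of Definition~\ref{def:admissible} are designed to ensure.
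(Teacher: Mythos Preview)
Your handling of the rate limit and of $\ent^{\BPsmall}\leq\ent^{\MAPsmall}$ matches the paper. For the MAP upper bound \eqref{equ:mapupperbound}, the paper argues via the circular ensemble of Definition~\ref{def:circularensemble}: its BP GEXIT curve coincides with that of the uncoupled $(\dl,\dr)$ ensemble, so the standard area bound lower-bounds the conditional entropy above $\ent^A$; recovering the original ensemble by freezing $w-1$ sections removes at most $(w-1)/K$ entropy, and the factor $(\dr-1)^3$ enters not through a Lipschitz bound on $A$ as you suggest but through the lower bound $G(\Ldens{c}_\ih,\cdot)\geq 1/(2(\dr-1)^3)$ on the GEXIT value for $\ih\geq\ent^A$, which converts the entropy budget into a gap in $\ih$.

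For the BP lower bound \eqref{lem:bplowerbound} your route differs substantially from the paper's. The paper does not construct a global potential functional. Instead it proves two auxiliary results: a Saturation Theorem (Theorem~\ref{thm:existenceintermediateform}) asserting that any proper one-sided spatial FP which is small on the left and has $\batta\geq\xunstab(1)$ on a long right segment must have channel entropy within $c(\dl,\dr,\delta,w,\Msat,\Lc)\to f(\dl,\dr,w)$ of $\ent^A$, and an Existence Theorem (Theorem~\ref{thm:existencebasicform}) which manufactures such a FP via Schauder's fixed-point theorem applied to a map on constellations with prescribed Battacharyya mass $\xunstab(1)/2$. The main proof then takes the Schauder FP $(\Ldens{c}^*,\Ldens{\x}^*)$, extends it by $\Delta_0$ on the right, runs forward DE with the given channel $\Ldens{c}\prec\Ldens{c}^*$, and uses monotonicity to show the limit must be small on $[-\Lfp,\Lc+1]$; were it not, Saturation would force $\entropy(\Ldens{c})\geq\ent^A-c(\cdots)$, contradicting the choice of $\Ldens{c}$. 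Domination then transfers triviality to the actual coupled ensemble on $[1,\Lc+1]$. The Saturation proof does contain an interpolation construction (Definition~\ref{def:partialFPfamily}) and an Area Theorem for approximate FP families (Theorem~\ref{the:areatheoremforapproximatefpfamilypartial}) in which $A(\Ldens{x})$ emerges as the GEXIT integral along a sliding path with constant channel, so your instinct that $A$ plays the role of a displacement increment is correct---but the logical skeleton (Schauder existence, then saturation, then domination) is not the one you describe.

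Your potential-function outline also has a gap at the final step: ``$\Phi$ strictly decreases under sliding, contradicting the FP property'' does not follow. Stationarity $\nabla\Phi(\Ldens{\x})=0$ controls infinitesimal variations, not the finite change under a unit spatial shift, which alters the boundary and is precisely what your own computation says is nonzero. The potential-function proofs in the subsequent literature close this by establishing that $\Phi$ is a Lyapunov function for coupled DE (non-increasing along iterations) and then comparing $\Phi$ at the non-trivial FP with $\Phi$ at $\underline{\Delta}_{+\infty}$, not at a shifted copy of the FP. You have not supplied that Lyapunov property, and it does not follow from the lemmas you cite.
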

{\em Discussion}:                                                  
\begin{enumerate}[(i)]                                                    
\item The bound $ \ent^{\BPsmall} \leq \ent^{\MAPsmall} $ is trivial
and only listed for completeness.  Consider the upper bound on
$\ent^{\MAPsmall}$ stated in (\ref{equ:mapupperbound}). Start with
the circular ensemble stated in Definition~\ref{def:circularensemble}.
The original ensemble is recovered by setting the $w-1$ consecutive
positions in $[\Lc, \Lc+w-1]$ to $0$.  Define $K=2\Lc+w$.  We first
provide a lower bound on the conditional entropy for the circular
ensemble when transmitting over a BMS channel with entropy $\ent$.
We then show that setting $w-1$ sections to $0$ does not significantly
decrease this entropy. Overall this gives an upper bound on the MAP
threshold of the coupled ensemble in terms of the area threshold
of the underlying ensemble.

It is not hard to see that the BP GEXIT curve is the same for both the
$(\dl, \dr)$-regular ensemble and the circular ensemble (when all sections have the standard channel).  Indeed, 
forward DE (see Definition~\ref{def:forwardDE}) converges to the same FP
for both ensembles. Consider the circular ensemble and let
$\ent \in (\ent^{A}, 1]$.  The conditional entropy when
transmitting over the BMS channel with entropy $\ent$ is at least equal
to $1-\dl/\dr$ minus the area under the BP EXIT curve of $[\ent, 1]$
(see Theorem 3.120 in \cite{RiU08}).  Indeed, from the proof of Theorem 4.172 
in \cite{RiU08}, we have
$$
\liminf_{n\to \infty} \mathbb{E}[\entropy(X_1^n \mid Y_1^n(\ent))]/n \geq 1 - \frac{\dl}{\dr} - G(\{\Ldens{c}_{\ih},
\Ldens{x}_{\ih} \}_{\ih}^{1}).
$$

Note that the above integral, $G(\{\Ldens{c}_{\ih}, \Ldens{x}_{\ih}
\}_{\ih}^{1})$ is evaluated at the BP FPs. From
Lemmas~\ref{lem:FPforlargeentropy} and
\ref{lem:areathresholdapproachesshannon}, the BP FP densities $\Ldens{x}_{\ih}$
exist and the GEXIT integral is well-defined for all $\ih \geq \ih^A \geq
\entLE$.

Here, the entropy is normalized by $n=K M$, where $K$ is the length
of the circular ensemble and $M$ denotes the number of variable
nodes per section.  Assume that we set $w-1$ consecutive sections
of the circular ensemble to $0$ in order to recover the original
ensemble.  As a consequence, we ``remove'' an entropy (degrees of
freedom) of at most $(w-1)/K$ from the circular system. The remaining
entropy is therefore positive (and hence we are above the MAP
threshold of the coupled ensemble) as long as
$1-\dl/\dr-(w-1)/K-G(\{\Ldens{c}_{\ih}, \Ldens{x}_{\ih} \}_{\ih}^{1})
> 0$. From Lemmas~\ref{lem:areaunderBPGEXIT} and
\ref{lem:areathresholdapproachesshannon} we have $G(\{\Ldens{c}_{\ih},
\Ldens{x}_{\ih} \}_{\ih^A}^{1})=1-\dl/\dr$, so that the condition
becomes $ G(\{\Ldens{c}_{\ih}, \Ldens{x}_{\ih} \}_{\ih^A}^{1}) -
G(\{\Ldens{c}_{\ih}, \Ldens{x}_{\ih} \}_{\ih}^{1}) < (w-1)/K$. For all
channels with $\ent \geq \ent^{A}$ we have $ G(\Ldens{c}_{\ih},
\Ldens{x}_{\ih}) \geq \frac1{2(\dr-1)^3}$. For a derivation of this
statement we refer the reader to the proof of part (vi) of
Theorem~\ref{thm:existenceintermediateform}. This implies that
$ G(\{\Ldens{c}_{\ih},
\Ldens{x}_{\ih} \}_{\ih^A}^{\ih}) \geq (\ih - \ih^A)/(2(\dr-1)^3)$.
Furthermore,
$G(\{\Ldens{c}_{\ih}, \Ldens{x}_{\ih} \}_{\ih}^{1}) \leq
G(\{\Ldens{c}_{\ih}, \Ldens{x}_{\ih} \}_{\ih^A}^{1})$. This follows
from the definition of area threshold, which implies that for
$\ih>\ih^A$, $A(\Ldens{x}_{\ih}, \dl, \dr) > 0$ (cf.
Lemma~\ref{lem:areaunderBPGEXIT}) and then combining with
Lemma~\ref{lem:areaunderBPGEXIT}.  Putting things together we get
$$
G(\{\Ldens{c}_{\ih}, \Ldens{x}_{\ih}
\}_{\ih^A}^{1}) - G(\{\Ldens{c}_{\ih}, \Ldens{x}_{\ih}
\}_{\ih}^{1})
> \frac{\ent-\ent^{A}}{2(\dr-1)^3}. 
$$
We get the stated condition on $\ent^{\MAPsmall}$
by lower bounding $K$ by $2 \Lc$.

\item                                               
The lower bound  on $\ent^{\BPsmall}(\dl,\dr,\Lc,w,\{\Ldens{c}_{\ih}\})$ expressed in
\eqref{lem:bplowerbound} is the main result of this paper. It shows
that, up to a term which tends to zero when $w$ tends to infinity,
the BP threshold of the coupled ensemble is at least as large as the area threshold 
of the underlying ensemble.

Empirical evidence suggests that the convergence speed wrt $w$ is
exponential.  Our bound only guarantees a convergence speed of order
%$\sqrt{\ln (1/w)/w}$. 
$\sqrt{1/w}$.
\end{enumerate}

Let us summarize. In order to prove  Theorem~\ref{the:main} we
``only'' have to prove the lower bound on $\ih^{\BPsmall}$. 
Not surprisingly, this is also the most difficult
to accomplish.  The remainder of this paper is dedicated to this
task.

\subsection{Extensions}
In Theorem~\ref{the:main} we start with a smooth, complete and
ordered channel family.  But it is straightforward to convert this
theorem and to apply it directly to single channels or to a collection
of channels. The next statement makes this precise.  
\begin{corollary}[$(\dl,\dr,\Lc,w)$ Universally Achieves Capacity]\label{cor:capachieving}
The $(\dl, \dr, \Lc, w)$ ensemble is universally capacity achieving
for the class of BMS channels. More precisely, assume we are given
$\epsilon>0$ and a target rate $R$. Let ${\mathcal C}(R)$
denote the set of BMS channels of capacity at least $R$.
To each $\Ldens{c} \in {\mathcal C}(R)$ associate the
family $\{\Ldens{c}_\ent\}_{\ent =0}^1$, by
defining 
\begin{align*}
\Ldens{c}_\ent = 
\begin{cases}
\frac{1}{\entropy(\Ldens{c})} [(\entropy(\Ldens{c})-\ent) \Delta_{+\infty} + \ent \Ldens{c}], & 0 \leq \ent \leq 
\entropy(\Ldens{c}), \\
\frac{1}{1-\entropy(\Ldens{c})} [(\ent- \entropy(\Ldens{c}))\Delta_0 + (1-\ent)) \Ldens{c}], & \entropy(\Ldens{c}) \leq \ent \leq 1. 
\end{cases}
\end{align*}
Then there
exists a set of parameters $(\dl, \dr, \Lc, w)$ so that
\begin{align*}
R(\dl, \dr, \Lc, w) \geq R-4\epsilon, \\
\inf_{\Ldens{c} \in {\mathcal C}(R)} 
\ent^{\BPsmall}(\dl, \dr, \Lc, w, \{\Ldens{c}_\ent\}) \geq 1-R+\epsilon.
\end{align*}
Since for each $\Ldens{c} \in {\mathcal C}(R)$ the associated family
$\{\Ldens{c}_\ent\}_{\ent =0}^1$ is ordered by
degradation, this implies that we can transmit with this ensemble
reliably over each of the channels in ${\mathcal C}(R)$ at a rate
of at least $R-4 \epsilon$, i.e., arbitrarily close to the Shannon limit.
\end{corollary}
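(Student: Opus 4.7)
The plan is to derive the corollary from Theorem~\ref{the:main} by a careful $\epsilon$-budget, using that the BP-threshold lower bound there decomposes into two pieces both of which are uniform in the channel: $f(\dl,\dr,w)$ is channel-independent by inspection and decays like $\sqrt{1/w}$, while Lemma~\ref{lem:areathresholdapproachesshannon} lower bounds the area threshold by $\frac{\dl}{\dr}-\dl e^{-4(\dr-1)(\frac{2(1-r)}{11e})^{\frac{4}{3}}}$, a quantity that depends only on $(\dl,\dr)$. Accordingly, the strategy is to send the degrees to infinity at a fixed ratio just above $1-R$, then send $w$ to infinity, then send $\Lc$ to infinity, invoking only channel-uniform bounds at each step.

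Concretely, I would first fix $\dl/\dr = 1-R+2\epsilon$, so that $r := 1-\dl/\dr = R-2\epsilon$, and let $\dr$ (hence $\dl$) grow along this ratio. For $(\dl,\dr)$ large enough, three things happen simultaneously and uniformly in $\Ldens{c} \in \mathcal{C}(R)$: (a) the area-to-Shannon gap in Lemma~\ref{lem:areathresholdapproachesshannon} drops below $\epsilon/2$, giving $\ent^{A}(\dl,\dr,\{\Ldens{c}_\ent\}) \geq 1-R+3\epsilon/2$; (b) the channel-uniform bound $\overline{\ih} \leq \frac{e^{1/4}\sqrt{2}}{(\dr-2)^{1/4}}$ from Lemma~\ref{lem:FPforlargeentropy} falls below the right-hand side of Definition~\ref{def:admissible}(iii), enforcing that channel-dependent admissibility condition through a channel-independent sufficient condition; (c) the purely degree-dependent admissibility conditions (i), (ii), (vii), (viii) are satisfied. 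I would then pick $w$ large enough to meet admissibility conditions (iv)--(vi) and to force $f(\dl,\dr,w) \leq \epsilon/2$; Theorem~\ref{the:main} then yields
\[
\ent^{\BPsmall}(\dl,\dr,\Lc,w,\{\Ldens{c}_\ent\}) \;\geq\; \ent^{A} - f \;\geq\; 1-R+\epsilon,
\]
uniformly in $\Ldens{c}\in\mathcal{C}(R)$. Finally, since the rate loss in Lemma~\ref{lem:designrate} is bounded above by $\frac{\dl(w+1)}{\dr(2\Lc+1)}$, choosing $\Lc$ of order $w/\epsilon$ yields $R(\dl,\dr,\Lc,w) \geq r - 2\epsilon = R - 4\epsilon$.

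To feed the family $\{\Ldens{c}_\ent\}$ into Theorem~\ref{the:main}, I would verify that $\entropy(\Ldens{c}_\ent)=\ent$ by direct computation (which gives both completeness and the stated parameterization), and that the family is ordered by degradation: each of the two linear branches is a convex combination of $\Ldens{c}$ with a degradation-extremal density ($\Delta_{+\infty}$ on $[0,\entropy(\Ldens{c})]$ or $\Delta_0$ on $[\entropy(\Ldens{c}),1]$), and the characterization \eqref{equ:degradation} immediately implies that shifting mass onto the more degraded component produces a more degraded mixture. The family is only piecewise smooth at the kink $\ent=\entropy(\Ldens{c})$, but this suffices for the GEXIT machinery underlying Theorem~\ref{the:main} (cf.\ the remark after Lemma~\ref{lem:gexitsmoothfamily} and the piecewise-smooth formulation of Lemma~\ref{lem:areaunderBPGEXIT}). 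Reliable BP decoding of the original channel $\Ldens{c}$ then follows from the uniform threshold statement because $\Ldens{c}=\Ldens{c}_{\entropy(\Ldens{c})}$, $\entropy(\Ldens{c})\leq 1-R < \ent^{\BPsmall}$, and density evolution converges at every channel strictly below the family's BP threshold. The ``main obstacle'' is really just bookkeeping: at every place where Theorem~\ref{the:main} or its admissibility hypothesis references the channel, the dependence must be replaced by a channel-independent sufficient condition controllable through $(\dl,\dr,w)$ alone. The only non-obvious such spot is the admissibility condition on $\entLE$, which is uniformly controlled by the bound $\overline{\ih}$ of Lemma~\ref{lem:FPforlargeentropy}; beyond this, no new density-evolution analysis is needed.
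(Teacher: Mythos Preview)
Your proposal is correct and follows essentially the same approach as the paper: fix the degree ratio so that the Shannon threshold sits at $1-R+2\epsilon$, use the channel-uniform lower bound of Lemma~\ref{lem:areathresholdapproachesshannon} to push the area threshold within $\epsilon/2$ of Shannon, use the channel-independence of $f(\dl,\dr,w)$ to absorb the remaining $\epsilon/2$, and finally take $\Lc$ large to control the rate loss. Your treatment is in fact slightly more explicit than the paper's on two points---you spell out how the channel-dependent admissibility condition~(\ref{equ:admissibletwo}) is handled via the universal bound $\overline{\ih}$ of Lemma~\ref{lem:FPforlargeentropy}, and you address the piecewise-smoothness of the constructed family---but the skeleton of the argument is identical. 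One trivial quibble: fixing $\dl/\dr$ \emph{exactly} equal to $1-R+2\epsilon$ presumes this is rational; the paper instead chooses the ratio in an interval $[R-3\epsilon,R-2\epsilon]$, which is why its $\epsilon$-budget differs slightly from yours.
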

\begin{IEEEproof}
Fix the ratio of the degrees so that $R-3\epsilon \leq 1-\dl/\dr \leq R-2\epsilon$.
Note that for each $\Ldens{c} \in {\mathcal C}(R)$ the constructed
family $\{\Ldens{c}_\ent\}$ is piece-wise smooth, ordered and
complete.  By applying Theorem~\ref{the:main} to each such channel
family we conclude that for admissible parameters (i.e., as long
as we choose the degrees and the connection width sufficiently
large) the threshold of the ensemble $(\dl, \dr, w, L)$ for the
given channel family is at least $\ent^A(\dl, \dr,
\{\Ldens{c}_\ent\})-f(\dl, \dr, w)$, where $\ent^A(\dl, \dr,
\{\Ldens{c}_\ent\})$ is the area threshold and $f(\dl, \dr, w)$ is
a universal quantity, i.e., a quantity which does not depend on the
channel family and which converges to $0$ when $w$ tends to infinity.
Further, we know from Lemma~\ref{lem:areathresholdapproachesshannon}
that the area threshold $\ent^A(\dl, \dr, \{\Ldens{c}_\ent\})$
approaches the Shannon threshold uniformly over all BMS channels
for increasing degrees. By our choice of $(\dl, \dr)$ the Shannon threshold is 
$1-(1-\dl/\dr) \geq 1-R+2\epsilon$.
Therefore, by first choosing sufficiently
large degrees $(\dl, \dr)$, and then a sufficiently large connection
width $w$, we can ensure that the BP threshold is at least
$1-R+\epsilon$.  Finally, by choosing the constellation length $\Lc$
sufficiently large, we can ensure that the rate loss we incur with
respect to the design rate the underlying ensemble is sufficiently
small so that the design rate of the coupled ensemble is at least
$R-4 \epsilon$.  \end{IEEEproof}

\begin{corollary}[Universally Capacity Achieving
Codes]\label{cor:capachievingcodes} Assume we are given $\epsilon>0$
and a target rate $R$. Let ${\mathcal C}(R)$ denote the
set of BMS channels of capacity at least $R$.  Then there
exists a set of parameters $(\dl, \dr, \Lc, w)$ of rate at least
$R-5 \epsilon$ with the following property.  Let $C(n)$ be an element of $(\dl,
\dr, \Lc, w)$ with blocklength $n$, where we assume that $n$ only
goes over the subsequence of admissible values. Then
\begin{align*}
\lim_{n \rightarrow \infty} 
\mathbb{E}_{C(n) \in (\dl, \dr, \Lc, w)}[\ind_{\{ \sup_{\Ldens{c} \in {\mathcal C}(R)} P_b^{\BPsmall}(C(n), \Ldens{c}) \leq \epsilon \}}] = 1.
\end{align*}
In words, almost all codes in
$(\dl, \dr, \Lc, w)$ of sufficient length are good for all channels in 
${\mathcal C}(R)$.
\end{corollary}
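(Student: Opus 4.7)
The approach is to combine the ensemble-level capacity statement of Corollary~\ref{cor:capachieving} with a standard concentration theorem for sparse-graph ensembles and a compactness/covering argument on the channel class. I begin by applying Corollary~\ref{cor:capachieving} with $\epsilon$ replaced by some $\epsilon'<\epsilon$, producing admissible parameters $(\dl,\dr,\Lc,w)$ of design rate at least $R-4\epsilon'\geq R-5\epsilon$ with $\inf_{\Ldens{c}\in{\mathcal C}(R)}\ent^{\BPsmall}(\dl,\dr,\Lc,w,\{\Ldens{c}_\ent\})\geq 1-R+\epsilon'$. For every $\Ldens{c}\in{\mathcal C}(R)$ the associated family $\{\Ldens{c}_\ent\}$ passes through the actual channel at $\ent=\entropy(\Ldens{c})\leq 1-R<1-R+\epsilon'$, so forward DE initialized at $\Ldens{c}$ converges section-wise to $\Delta_{+\infty}$ by Lemma~\ref{lem:forwardDE}; consequently, there is a finite iteration count $\iter(\Ldens{c})$ after which the DE-predicted bit-error probability is below $\epsilon/4$.

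Next I would invoke the standard Azuma/edge-exposure concentration theorem for sparse-graph ensembles (Theorem~4.94 of \cite{RiU08} and its routine extension to the $(\dl,\dr,\Lc,w)$ construction, whose graph still has degrees bounded by $\max(\dl,\dr)$): for any fixed $\Ldens{c}$ and fixed iteration count $\iter$, the average bit-error probability $P^{\iter}_b(C(n),\Ldens{c})$ of a uniformly random code $C(n)$ concentrates exponentially in $n$ around its ensemble expectation, and the latter converges to the DE prediction. Combined with the previous paragraph, this yields, \emph{for each fixed} $\Ldens{c}\in{\mathcal C}(R)$, that $\prob\{P^{\BPsmall}_b(C(n),\Ldens{c})\leq\epsilon/2\}\to 1$ as $n\to\infty$.

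To upgrade pointwise-in-channel to uniform-in-channel, I exploit compactness of the space of BMS channels in the Wasserstein metric on $|D|$-distributions (Lemma~\ref{lem:blmetric}(iv)). This yields a finite $\delta$-cover of ${\mathcal C}(R)$ by balls $B(\Ldens{c}_j,\delta)$, $j=1,\dots,N$. Setting $\iter^*=\max_j\iter(\Ldens{c}_j)$ and applying the concentration step simultaneously to each of the $N$ centers, a union bound shows that with probability tending to one a random $C(n)$ satisfies $P^{\iter^*}_b(C(n),\Ldens{c}_j)\leq 3\epsilon/4$ for every $j$. For an arbitrary $\Ldens{c}\in B(\Ldens{c}_j,\delta)$, the output of $\iter^*$ rounds of BP on a fixed graph is Lipschitz in the input channel with respect to the Wasserstein metric: this is obtained by iterating the regularity properties of $\vconv$ and $\cconv$ from Lemma~\ref{lem:blmetric}(vi)--(viii), and converting back to error probability via Lemma~\ref{lem:blmetric}(ix). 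Choosing $\delta$ small enough makes this perturbation at most $\epsilon/4$, and the desired universal bound $\sup_{\Ldens{c}\in{\mathcal C}(R)}P^{\BPsmall}_b(C(n),\Ldens{c})\leq\epsilon$ follows.

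The main obstacle is this third step: pointwise concentration is routine, but the DE convergence time $\iter(\Ldens{c})$ could a priori be unbounded over the infinite class ${\mathcal C}(R)$, and without a \emph{uniform} iteration budget one cannot union-bound over codes. Uniformity is rescued precisely by the pairing of Wasserstein compactness (finite cover) with Lipschitz continuity of finite-iteration DE in the input channel; together these reduce the problem to a finite union bound with a common iteration count $\iter^*$. A subordinate care point is that the concentration theorem in \cite{RiU08} was stated for uncoupled LDPC ensembles, so one should verify the entirely analogous statement for the spatially coupled construction, which goes through verbatim because the coupled Tanner graph is still sparse with bounded degrees and the BP computation graph of any fixed depth is affected by changing a single edge in only a bounded number of places.
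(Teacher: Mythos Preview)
Your outline is sound and shares two of its three steps with the paper: Wasserstein compactness of the BMS class (Lemma~\ref{lem:blmetric}(\ref{lem:blmetricpolish})) to produce a finite cover, then exponential concentration at the finitely many centres together with a union bound. The divergence is in how you pass from the cover back to all of ${\mathcal C}(R)$. You do this via Lipschitz continuity of $\iter^*$-round BP in the channel; the paper instead \emph{upgrades} the cover to a \emph{dominating} cover, constructing from each centre $\Ldens{c}_j$ a nearby channel $\tilde{\Ldens{c}}_j$ (add a point mass $\sqrt\delta$ at $0$ in the $|D|$-cdf and truncate near $1$) with the two properties that every $\Ldens{a}$ in the $\delta$-ball around $\Ldens{c}_j$ satisfies $\Ldens{a}\prec\tilde{\Ldens{c}}_j$, while $d(\tilde{\Ldens{c}}_j,\Ldens{c}_j)\leq 3\sqrt\delta$ keeps $\tilde{\Ldens{c}}_j\in{\mathcal C}(R-\epsilon)$. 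Corollary~\ref{cor:capachieving} and concentration are then applied only to the finitely many $\tilde{\Ldens{c}}_j$, and the extension to all of ${\mathcal C}(R)$ is immediate from the monotonicity of BP error probability under channel degradation. This buys two things over your route. First, no uniform iteration budget $\iter^*$ is needed; the argument works directly at $P_b^{\BPsmall}$. Second, it sidesteps the one step in your sketch that is underspecified: Lemma~\ref{lem:blmetric}(\ref{lem:blmetricregularvconv})--(\ref{lem:blmetricregularDE}) controls DE on \emph{densities}, not BP on a \emph{fixed} Tanner graph that may contain short cycles, so to make your Lipschitz claim rigorous you still have to fold in the additional high-probability event that $C(n)$ has tree-like depth-$2\iter^*$ neighbourhoods for all but an $o(1)$ fraction of bits, after which the Wasserstein regularity bounds do apply bit by bit on those local trees. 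With that patch your argument is complete, but the dominating-cover device is shorter and avoids both complications.
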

\begin{proof}
Note that according to (\ref{lem:blmetricpolish}) in
Lemma~\ref{lem:blmetric} the space of $|D|$ distributions endowed with
the Wasserstein metric is compact, and hence so is ${\mathcal
C}(R)$. Hence there exists a finite set of channels, denote
it by $\{\Ldens{c}_i\}_{i=1}^{I(\delta)}$, so that each channel in
${\mathcal C}(R)$ is within a (Wasserstein) distance at most $\delta$
from the set $\{\Ldens{c}_i\}$. We will fix the value of $\delta$
shortly.

Let us modify the set
$\{\Ldens{c}_i\}$ so that ${\mathcal C}(R)$ is not only close to
$\{\Ldens{c}_i\}$ but is also ``dominated'' by it.
For each $\Ldens{c} \in \{\Ldens{c}_i\}$,
define  
\begin{align*}
\tilde{\absDdist{c}}(y) =
\begin{cases}
\sqrt{\delta}+(1-\sqrt{\delta})\absDdist{c}(y), & 0 \leq y \leq z^*(\absDdist{c}), \\
1, & z^*(\absDdist{c}) \leq y \leq 1, \\
\end{cases}
\end{align*}
where  $z^*(\absDdist{c})$ is the supremum of all $z$ so that
$\int_z^1 (1-\absDdist{c}(y)) \dee y=\sqrt{\delta}$. If no such $z
\in [0, 1]$ exists then set $z^*(\absDdist{c})=0$. We claim that
for any $\Ldens{a}$ so that $d(\Ldens{a}, \Ldens{c}) \leq \delta$,
$\Ldens{a} \prec \tilde{\Ldens{c}}$. In other words we claim that
$\int_z^1 \absDdist{a}(y) \dee y \leq \int_z^1 \tilde{\absDdist{c}}(y)
\dee y$ for any $z \in [0, 1]$ (cf. \eqref{equ:degradationcdfs}).

For $z^*(\absDdist{c}) \leq z \leq 1$,  $\int_z^1 \tilde{\absDdist{c}}(y)
\dee y =1-z$, the maximum possible, and hence this integral is at least as large as
$\int_z^1 \absDdist{a}(y) \dee y$. Consider therefore the range $0
\leq z \leq  z^*(\absDdist{c})$. In this case
\begin{align*}
& \int_z^1 \tilde{\absDdist{c}}(y) \dee y 
\stackrel{\text{(a)}}{\geq} \sqrt{\delta} (1-z) + (1-\sqrt{\delta}) \int_z^1 \absDdist{c}(y) \dee y \\
& \stackrel{\text{(b)}}{=} \int_z^1 \absDdist{c}(y) \dee y + \sqrt{\delta} \int_z^1 (1-\absDdist{c}(y)) \dee y 
 \geq \int_z^1 \absDdist{c}(y) \dee y + \delta  \\
& \stackrel{\text{(c)}}{\geq} \int_z^1 \absDdist{c}(y) \dee y +   \int_z^1 |\absDdist{a}(y) - \absDdist{c}(y) | \dee y  
 \geq \int_z^1 \absDdist{a}(y) \dee y.
\end{align*}
In (a) we use the definition of $\tilde{\absDdist{c}}(y)$. To obtain
(b) we use that for $z\leq z^*(\absDdist{c})$ we have $\int_z^1
(1-\absDdist{c}(y))
 \geq \int_{ z^*(\absDdist{c})}^1 (1-\absDdist{c}(y)) = \sqrt{\delta}$.
Finally, in (c) we use the alternative definition of the Wasserstein
distance in Lemma~\ref{lem:blmetric}.

Further,
\begin{align*}
& d( \tilde{\Ldens{c}}, \Ldens{a})  \leq d( \tilde{\Ldens{c}}, \Ldens{c})+ d( \Ldens{c}, \Ldens{a}) 
\leq \int_0^1 |\tilde{\absDdist{c}} - \absDdist{C}(y) | \dee y  +\delta \\
\leq & \int_0^{z^*} |\sqrt{\delta}(1-\absDdist{c}(y))| \dee y  + \int_{z^*}^1 (1-\absDdist{c}(y)) \dee y +\delta  \leq 3 \sqrt{\delta}.
\end{align*}
In words, any density $\Ldens{a}$ which was close to $\Ldens{c}$
is still close to $\tilde{\Ldens{c}}$. We have therefore the set
$\{\tilde{\Ldens{c}}_i\}_{i=1}^{I(\delta)}$ of channels which
``cover'' and ``dominate'' the set of channels ${\mathcal C}(R)$ in
the sense that for every $\Ldens{a} \in {\mathcal C}(R)$ there exists
an element $\tilde{\Ldens{c}}_i \in
\{\tilde{\Ldens{c}}_i\}_{i=1}^{I(\delta)}$ so that $d(\Ldens{a},
\tilde{\Ldens{c}}_i) \leq 3 \sqrt{\delta}$ and $\Ldens{a} \prec
\tilde{\Ldens{c}}_i$. This implies in particular that $\min_i
1-\entropy(\tilde{\Ldens{c}}_i) \geq R-h_2( \frac32\sqrt{\delta})
\geq R-\epsilon$, where in the last step we use the relation between the
Wasserstein distance and entropy given by (ix) in Lemma~\ref{lem:blmetric}, 
also we assumed that we fixed
$\delta$ so that $h_2( \frac32\sqrt{\delta}) \leq \epsilon$. In
words, all channels in $\{\tilde{\Ldens{c}}_i\}_{i=1}^{I(\delta)}$
have capacity at least $R-\epsilon$.

From Corollary~\ref{cor:capachieving} we know that, given a finite
set of channels from ${\mathcal C}(R-\epsilon)$, there exists a set
of parameters $(\dl, \dr, \Lc, w)$ which has rate at least $R-5
\epsilon$ and BP threshold at least $1-R+2\epsilon$ universally for
the whole family. Since each element of
$\{\tilde{\Ldens{c}}_i\}_{i=1}^{I(\delta)}$ is an element of
${\mathcal C}(R-\epsilon)$ this ensemble ``works'' in particular
for all channels $\{\tilde{\Ldens{c}}_i\}_{i=1}^{I(\delta)}$ and
these channels ``dominate'' all channels in ${\mathcal C}(R)$ in
the sense that for element of $\Ldens{c} \in {\mathcal C}(R)$ there
is an element of $\{\tilde{\Ldens{c}}_i\}_{i=1}^{I(\delta)}$ which
is degraded wrt $\Ldens{c}$.

For each element $\tilde{\Ldens{c}}_{i}$ we know by standard
concentration theorems that ``almost all'' elements of the ensemble 
have a bit error rate of the BP decoder going to zero
\cite{RiU08, RiU01}.  Since the ``almost all'' means all but an
exponentially (in the blocklength) small subset and since we only
have a finite number of channel families, this implies that almost
all codes in the ensemble work for all the channels in the finite
subset. But since the finite subset dominates all channels in
${\mathcal C}(R)$ this implies that almost all codes work
for all channels in this set.  \end{proof}

\subsection{Proof of Main Result -- Theorem~\ref{the:main}}\label{sec:proof}
We start by proving some basic properties which any spatial FP has
to fulfill.
Since we are considering a symmetric ensemble (in terms of the
spatial arrangement) it will be useful to consider ``one-sided" FPs.

\begin{definition}[FPs of One-Sided DE] We say that $\Ldens{\x}$
is a one-sided FP (of DE) with channel $\Ldens{c}$ if
(\ref{eq:densevolxi}) is fulfilled for $i \in [-\Lfp, 0]$
with $\Ldens{x}_{i}= \Delta_{+\infty}$  for $ i < -\Lfp$. We say that the FP has a
{\em free} boundary condition if $\Ldens{x}_{i}= \Ldens{x}_0$ for $i > 0$.
We say that it has a {\em forced} boundary condition if $\Ldens{x}_{i}=
\Delta_{0}$ for $i > 0$. Lastly, we say that it has an {\em increasing} boundary condition 
if $\Ldens{x}_{i-1}\prec \Ldens{x}_{i}$ for $i > 0$, where $\Ldens{x}_i$, for $i\geq 1$, are fixed
 but arbitrary symmetric densities. 
\qed
\end{definition}

\begin{definition}[Proper One-Sided FPs]
We say that $\Ldens{\x}$ is {\em non-decreasing} if $\Ldens{x}_i \prec
\Ldens{x}_{i+1}$ for $i=-\Lfp,\dots, -1$. 
Let $(\Ldens{c}, \Ldens{\x})$ be a {\em non-trivial} and {\em
non-decreasing} one-sided FP (with any boundary condition).  As a
short hand, we then say that $(\Ldens{c}, \Ldens{\x})$ is a {\em proper
one-sided FP}.
Figure~\ref{fig:one-sided_fixed_point} shows an example.
\qed
\end{definition}

\begin{definition}[One-Sided Forward DE and Schedules]
Similar to Definition~\ref{def:forwardDE}, one can define  {\em
one-sided forward DE} by initializing all sections with $\Delta_0$
and by applying DE according to an admissible schedule. 
\qed
\end{definition}

\begin{figure}[htp]
\begin{centering}
\input{ps/one-sided_fixed_point_arxiv}
\caption{\protect{A proper one-sided FP $(\Ldens{c}, \Ldens{\x})$ with free boundary condition
for the ensemble $(\dl=3,\dr=6,\Lfp=16,w=3)$ and the
channel $\Ldens{c}=$BAWGNC($\sigma$) with $\sigma=1.03978$. 
We have $\entropy(\Ldens{c})=0.46940$ and $\entropy(\Ldens{\x})=0.17$.
The height of the vertical bar at section $i$ is equal to $\entropy(\Ldens{x}_i)$.
}}
 \label{fig:one-sided_fixed_point}
\end{centering}
\end{figure}

There are two key ingredients of the proof.  The first ingredient
is to show that any one-sided spatial FP which is increasing,
``small'' on the left, and ``not too small'' and ``flat'' on the
right must have a channel parameter very close to the area threshold
$\ih^{\small A}$. This is made precise in (the Saturation)
Theorem~\ref{thm:existenceintermediateform}.

The second key ingredient is to show the existence of a such a one-sided FP $(\Ldens{c}^*,
\Ldens{\x}^*)$. Figure~\ref{fig:accordeonfp} shows a typical (two-sided)
such example.  This is accomplished in
(the Existence) Theorem~\ref{thm:existencebasicform}.
\begin{figure}[htp]
\begin{centering}
\input{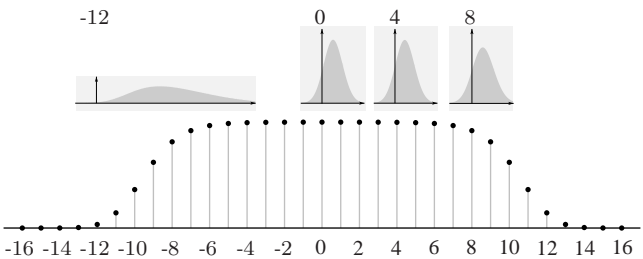}
\caption{Unimodal FP of the $(\dl=3, \dr=6, \Lc=16, w=3)$ ensemble for the
BAWGNC($\sigma$) with $\sigma=0.9480$ (channel entropy $\approx 0.4789$).
The constellation has entropy equal to $0.2$. The bottom figure plots the
entropy of the density at each section. Notice the small values towards
the boundary, a fast transition, and essentially constant values in the
middle. The top figure shows the actual densities at sections $\pm12,
\pm 8, \pm 4, 0$.  Notice that for densities towards the boundary the
mass shifts towards the ``right,'' indicating a high reliability.  
Also plotted in the middle figure
relating to section $0$ is the BP forward DE density of the  $(3,6)$-regular ensemble
at $\sigma=0.9480$. The density is right on the top of the density at
section $0$ of the coupled-code ensemble, i.e., these two densities
are visually indistinguishable. The
density in section $\pm 4$ is also ``close'' to the density at
section $0$.  Thus in the flat part, the densities become close
to the BP density of the underlying ensemble. 
\label{fig:accordeonfp}}
\end{centering} \end{figure} 
Once these two theorems have been established, the proof of 
our main theorem is rather short and straightforward.

\begin{theorem}[Saturation]\label{thm:existenceintermediateform}
Fix $r\in (0,1)$ and let $(\dl, \dr, w)$ be admissible, with $r=1 -
\frac{\dl}{\dr}$, in the sense of conditions (\ref{equ:admissiblenine}), (\ref{equ:admissibletwo}), (\ref{equ:admissiblefour}), (\ref{equ:admissiblefive}), (\ref{equ:admissiblesix}) and (\ref{equ:admissibleseven}) of Definition~\ref{def:admissible}.
Let $(\Ldens{c}^*, \Ldens{\x}^*)$ be a proper one-sided FP on $[-\Lfp, 0]$, with
forced boundary condition, so that for some $\delta>0$, $2(w-1) \leq \Lc$, and $\Lc+w \leq \Msat \leq \Lfp$
the following conditions hold.
\begin{enumerate}[(i)]
\item
{\em Constellation is close to $\Delta_{+\infty}$ ``on the left''}:
\begin{align*}
\batta(\Ldens{x}^*_{-\Lfp+\Lc}) \leq \delta.
\end{align*}

\item {\em Constellation is not too small ``on the right'':}
\begin{align*}
\batta(\Ldens{x}^*_{-\Msat}) \geq \xunstab(1).
\end{align*}
\end{enumerate}

Then
\begin{align*}
\vert \entropy(\Ldens{c}^*) -  \ent^A(\dl, \dr, \{\Ldens{c}_\ent\}) \vert \leq  c(\dl, \dr, \delta, w, \Msat, \Lc).
\end{align*}
Here $c(\dl, \dr, \delta, w, \Msat, \Lc)$ is a function which can be made
arbitrarily small by choosing $\delta$ sufficiently small, $w$
sufficiently large, and $\Lc$ and $\Msat$ sufficiently large compared to $w$.
(This implies of course that the constellation length $\Lfp$ is also chosen sufficiently large.)
More precisely,
\begin{align*}
f(\dl, & \dr, w) = \lim_{\delta \rightarrow 0} 
\lim_{\Lc, \Msat \rightarrow \infty} 
c(\dl, \dr, \delta, w, \Msat, \Lc) \\
& = 8 (\dr-1)^3(\sqrt{2}\!+\!\frac2{\ln 2}\dl(\dr\!-\!1))\sqrt{\frac{2 (\dl\!-\!1)(\dr\!-\!1)}{w}}.
\end{align*}
\end{theorem}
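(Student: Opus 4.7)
The plan is to compare the one-sided spatial FP against the uncoupled forward DE FP at the same channel $\Ldens{c}^*$, by first locating a quasi-flat interior section, then identifying the density there as an approximate uncoupled FP, and finally evaluating the area-threshold quantity $A$ of Lemma~\ref{lem:areaunderBPGEXIT} at that section via a telescoping ``spatial GEXIT'' sum.

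First I would locate a flat window. Since $\{\batta(\Ldens{x}_i)\}$ is non-decreasing (because $\Ldens{\x}^*$ is proper) and takes values in $[0,1]$, an averaging argument analogous to the one following Lemma~\ref{lem:degradationandwasserstein} produces an index $i^*\in[-\Msat,-2(w-1)]$ with $\batta(\Ldens{x}_{i^*+j})-\batta(\Ldens{x}_{i^*-j}) = O(w/(\Msat-\Lc))$ for $|j|\le w-1$. Converting to Wasserstein distance via property~(\ref{lem:blmetricentropy}) of Lemma~\ref{lem:degradationandwasserstein} and feeding the result into Lemma~\ref{lem:sensitivity}, the spatial FP equation at $i^*$ becomes an $\varepsilon$-approximate uncoupled FP with $\varepsilon=\varepsilon(w,\Msat,\Lc)\to 0$ as $\Msat/w\to\infty$. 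Combining condition~(ii) with admissibility condition~(\ref{equ:admissiblenine}) places us in the uniqueness/Lipschitz regime of Lemma~\ref{lem:condforcontinuity}, so $\Ldens{x}_{i^*}$ lies within Wasserstein distance $O(\varepsilon)$ of the forward DE FP $\Ldens{x}_\ent$ at $\ent=\entropy(\Ldens{c}^*)$, and hence, by Corollary~\ref{cor:continuitywrtwasserstein}, close in entropy as well.

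Next I would evaluate the spatial analogue of the area functional,
\begin{align*}
\mathcal{S} & = \sum_{i=-\Lfp+\Lc}^{0} \Bigl[ \entropy(\Ldens{x}_i) + \bigl(\dl-1-\tfrac{\dl}{\dr}\bigr)\entropy(\bar{\Ldens{y}}_i^{\cconv\dr}) - (\dl-1)\entropy(\bar{\Ldens{y}}_i^{\cconv\dr-1}) \Bigr],
\end{align*}
where $\bar{\Ldens{y}}_i=\tfrac{1}{w}\sum_{k=0}^{w-1}\Ldens{x}_{i-k}$. Expanding each $\entropy(\Ldens{x}_i)$ through the FP equation and applying the Duality Rule (Lemma~\ref{lem:dualityrule}) to trade check-node for variable-node entropy lets one re-index the spatial averages and telescope in $i$; the bulk cancels, leaving on the one side a boundary contribution equal to $(\mathrm{length})\cdot[\entropy(\Ldens{c}^*)-\ent^A]$ plus an $O(\delta)$ term from the left end (by condition~(i) together with the Wasserstein-Battacharyya bounds in Lemma~\ref{lem:blmetric}) and, by the explicit evaluation of Lemma~\ref{lem:areaunderBPGEXIT} applied locally around $i^*$, on the other side a contribution $(\mathrm{length})\cdot A(\Ldens{x}_{i^*},\dl,\dr)$. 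Since $A$ vanishes at the area threshold and is continuous along the FP branch by Lemma~\ref{lem:areathresholdapproachesshannon} and Corollary~\ref{cor:continuitywrtwasserstein}, dividing through by the length of the telescope and rearranging isolates $|\entropy(\Ldens{c}^*)-\ent^A|$ in terms of the accumulated local errors.

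The hard part will be controlling the mismatch between the spatial average $\bar{\Ldens{y}}_i^{\cconv\dr-1}$ and the uncoupled power $\Ldens{x}_{i^*}^{\cconv\dr-1}$, since the $\tfrac1w\sum$ and the $\cconv$-power do not commute. Here the Entropy Product Inequality (Lemma~\ref{lem:magic}) is the central tool: the non-decreasing order of the constellation puts $\bar{\Ldens{y}}_i$ and $\Ldens{x}_{i^*}$ in degradation order on the relevant window, which activates the degraded-case bound~(\ref{equ:degradedcase}) of Lemma~\ref{lem:magic} and yields cross-term estimates of the form $\batta(\bar{\Ldens{y}}_i-\Ldens{x}_{i^*})\cdot\batta(\cdot)$. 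Summing these differences over $i$ using the telescoping bound $\sum_{i}[\batta(\Ldens{x}_{i+1})-\batta(\Ldens{x}_i)]\le 1$ together with a Cauchy-Schwarz step produces the $\sqrt{(\dl-1)(\dr-1)/w}$ factor in $f(\dl,\dr,w)$; the prefactor $(\dr-1)^3(\sqrt{2}+\tfrac{2}{\ln 2}\dl(\dr-1))$ collects the Lipschitz constants from Lemma~\ref{lem:blmetric}, the kernel bound~(\ref{equ:boundonk}) of Lemma~\ref{lem:magic}, and the lower bound $1/(2(\dr-1)^3)$ on the GEXIT value used when inverting the relation between $A$ and $\ent$. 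Letting $\delta\to 0$ and $\Lc,\Msat\to\infty$ extinguishes all remaining boundary and approximation errors, leaving the universal bound $f(\dl,\dr,w)$ and completing the estimate.
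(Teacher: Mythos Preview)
Your proposal has two genuine gaps, both tied to the central mechanism of the paper's proof that you replace with a vague ``spatial telescope.''

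First, the spatial sum $\mathcal{S}$ does not telescope to anything involving $\ent^A$. The area threshold $\ent^A$ is defined as the channel entropy at which $A(\Ldens{x}_\ent,\dl,\dr)$ vanishes along the \emph{uncoupled} forward DE FP curve; it has no reason to emerge from summing the expression $A$ over spatial indices of a single spatial FP. What the paper actually does is fundamentally different: from the flattened one-sided FP $(\Ldens{c}^*,\Ldens{\z}^*)$ it builds the \emph{parametric} interpolated family of Definition~\ref{def:partialFPfamily} (interpolation in $\sigma$, not in $i$), then applies the Area Theorem for Approximate FP Families (Theorem~\ref{the:areatheoremforapproximatefpfamilypartial}). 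The key observation---which your approach has no analogue of---is that the channel is held \emph{constant} along the interpolation, so the GEXIT integral of the family is trivially zero. The Area Theorem then says this zero equals $A(\Ldens{x})$ up to the error $\cwdldrLdeltanew$, forcing $|A(\Ldens{x})|$ to be small. The passage from $|A(\xBPdens)|$ small to $|\ent^*-\ent^A|$ small is then a separate step: one writes $|A(\xBPdens)-A(\xBPdens_{\ent^A})|=|\int_{\ent^A}^{\ent^*} G|$ via Lemma~\ref{lem:areaunderBPGEXIT} and divides by the GEXIT lower bound $1/(2(\dr-1)^3)$. This is where the $(\dr-1)^3$ prefactor in $f(\dl,\dr,w)$ comes from, not from Lipschitz constants.

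Second, your use of Lemma~\ref{lem:condforcontinuity} to conclude that $\Ldens{x}_{i^*}$ is close to $\xBPdens$ comes too early. Condition~(ii) only gives $\batta(\Ldens{x}_{i^*})\ge\xunstab(1)\ge(\dr-1)^{-(\dl-1)/(\dl-2)}$, which is far too small to trigger admissibility condition~(\ref{equ:admissiblenine}); that condition requires $\batta(\Ldens{x})$ to exceed $c\approx\dl/\dr$. The paper resolves this ordering problem by first deducing $|A(\Ldens{x})|$ small from the interpolation argument, then invoking the Negativity Lemma~\ref{lem:asymptoticnegativity} as a \emph{contrapositive}: if $\entropy(\Ldens{x})$ were below $\dl/\dr-\dl e^{-4(\dr-1)(\cdot)^{4/3}}-1/\dr$, then $A(\Ldens{x})\le -1/\dr$, contradicting $|A(\Ldens{x})|$ small. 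Only after $\entropy(\Ldens{x})$ is forced large does the contraction argument (iterating DE and using Lemma~\ref{lem:dbbound}) place $\Ldens{x}$ near $\xBPdens$.
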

The proof of Theorem~\ref{thm:existenceintermediateform} can be
found in Appendix~\ref{app:areathmsforpartialapproxfamily}.  The
proof of the following Theorem~\ref{thm:existencebasicform} is
contained in Apendix~\ref{sec:existencebasicform}.

\begin{theorem}[Existence of FP]\label{thm:existencebasicform}
Fix $r \in (0,1)$ and let $(\dl, \dr, w)$ be admissible in the sense
of conditions (\ref{equ:admissibleone}), (\ref{equ:admissiblenine}), (\ref{equ:admissibletwo}), (\ref{equ:admissiblethree}), (\ref{equ:admissiblefour}), (\ref{equ:admissiblefive}) in Definition~\ref{def:admissible}
with $r = 1-\frac{\dl}{\dr}$.  Let $\{\Ldens{c}_\sigma\}_{\sigma=0}^{1}$
be a smooth, ordered and complete channel family.

In the sequel, $\Lfp(\dl, \dr, w)$ is a positive constant which depends
on the ensemble but not the channel or the channel family and $c(\dl,
\dr)$ is a positive constant which depends on $\dl$ and $\dr$, but not
on the channel $\Ldens{c}$, the channel family, $\Lfp$ or $w$.

For any $\Lfp > \Lfp(\dl, \dr, w)$ and $0<\delta<\frac{\xunstab(1)}4$,
there exists a proper one-sided FP $(\Ldens{c}^*,
\Ldens{\x}^*)$ on $[-\Lfp, 0]$ with parameters $(\dl, \dr, w)$
and with forced boundary condition so that the following conditions are fulfilled: 
\begin{enumerate}[(i)]
\item
{\em Constellation is close to $\Delta_{+\infty}$ ``on the left''}:
Let $$\Lfp_1 = (\Lfp+1)\Bigl(\frac12 - \frac{wc(\dl,
\dr)}{(\Lfp+1)\delta}\Bigr). $$ Then $\batta(\Ldens{x}^*_i) 
\leq \delta$ for $i\in [-\Lfp, -\Lfp+\Lfp_1-1]$.
\item 
{\em Constellation is not too small ``on the right''}: 
Let $$\Lfp_2 = (\Lfp+1) \Bigl(\frac{\xunstab(1)}{4} - \frac{wc(\dl, \dr)}{(\Lfp+1)\delta}\Bigr).$$ 
Then $\batta(\Ldens{x}^*_i) \geq \xunstab(1)$ for $i \in [-\Lfp_2, 0]$.
\end{enumerate}
\end{theorem}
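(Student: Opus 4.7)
The plan is to construct the FP by a channel sweep combined with a transition-region width bound, mimicking the strategy used for the BEC in \cite{KRU10} but with the Wasserstein machinery of Section~\ref{sec:wassersteinmetric} playing the role of erasure-probability monotonicity.

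\emph{Step 1 (one-sided FP for each channel).} For each $\sigma \in [0,1]$ consider one-sided forward DE with forced boundary data: take $\Ldens{x}_i = \Delta_{+\infty}$ for $i<-\Lfp$, $\Ldens{x}_i = \Delta_0$ for $i>0$, initialize interior positions with $\Delta_0$, and run the parallel schedule on $[-\Lfp,0]$. The same argument as in Lemma~\ref{lem:forwardDE}, using monotonicity of $g$ under degradation, shows that the iterates are monotone (decreasing) in the product degradation order and converge in Wasserstein distance to a symmetric one-sided FP $\Ldens{\x}(\sigma)$. Running DE instead from the ``perfect'' initialization ($\Delta_{+\infty}$ on $[-\Lfp,-1]$, $\Delta_0$ at $0$) and comparing it with its shift $(\Ldens{x}_{i+1}^{(0)})_i$ yields, by induction on the iteration, that $\Ldens{\x}(\sigma)$ is non-decreasing: $\Ldens{x}_i(\sigma) \prec \Ldens{x}_{i+1}(\sigma)$ for $i \in [-\Lfp,-1]$.

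\emph{Step 2 (choosing $\Ldens{c}^*$).} I would sweep $\sigma$ through the family and track the transition index
\begin{align*}
i^*(\sigma) \;=\; \min\{\,i \in [-\Lfp,0] \,:\, \batta(\Ldens{x}_i(\sigma)) \geq (\delta + \xunstab(1))/2\,\}.
\end{align*}
At $\sigma$ near $0$ the channel is essentially noiseless and left-boundary information propagates across the whole interval except for the last $O(w)$ sections pinned by the right boundary, so $i^*(\sigma)$ is close to $0$. At $\sigma$ near $1$ a direct extremes-of-information-combining bound in the spirit of Lemma~\ref{lem:bpboundsuncoupled} shows that the left-boundary information dies out within $O(w)$ sections, putting $i^*(\sigma)$ near $-\Lfp$. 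Smoothness of the channel family combined with Lemma~\ref{lem:sensitivity} propagated section-by-section (and with a one-sided analogue of Lemma~\ref{lem:condforcontinuity}) makes $\sigma \mapsto \Ldens{\x}(\sigma)$ continuous in the product Wasserstein metric, so $i^*(\sigma)$ varies with only bounded jumps. This lets me pick $\sigma^{\star}$ with $i^*(\sigma^{\star})$ sitting near the middle of $[-\Lfp,0]$, and set $\Ldens{c}^* := \Ldens{c}_{\sigma^\star}$, $\Ldens{\x}^* := \Ldens{\x}(\sigma^\star)$.

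\emph{Step 3 (bounding the transition region).} Because $\Ldens{\x}^*$ is non-decreasing, the set $\{i : \delta < \batta(\Ldens{x}_i^*) < \xunstab(1)\}$ is a consecutive interval, whose length I aim to bound by $w\,c(\dl,\dr)/\delta$ for a constant $c(\dl,\dr)$ depending only on the degrees. The FP equation $\Ldens{x}_i^* = \Ldens{c}^* \vconv g(\Ldens{x}_{i-w+1}^*,\ldots,\Ldens{x}_{i+w-1}^*)$ together with the Wasserstein/Battacharyya conversions in properties (\ref{lem:blmetricwasserboundsbatta})--(\ref{lem:blmetricbattaboundswasser}) of Lemma~\ref{lem:blmetric} and with Lemma~\ref{lem:sensitivity} should force a minimum per-step increment $\batta(\Ldens{x}_{i+1}^*) - \batta(\Ldens{x}_i^*) \gtrsim c'(\dl,\dr)\delta/w$ in the intermediate regime near the left end of the transition (where $\batta\approx\delta$); a telescoping argument then yields the length bound since the total Battacharyya increase across the region is at most $\xunstab(1) - \delta < 1$. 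Combining Steps~2 and~3 gives both requirements: the ``missing'' fraction $w c(\dl,\dr)/((\Lfp+1)\delta)$ subtracted from the target fractions $1/2$ and $\xunstab(1)/4$ in $\Lfp_1$ and $\Lfp_2$ is precisely the share of the constellation consumed by the transition region, and the condition $\Lfp > \Lfp(\dl,\dr,w)$ is exactly what is needed to make both $\Lfp_1$ and $\Lfp_2$ positive.

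The hard part will be the per-step lower bound in Step~3. The Lipschitz/sensitivity estimates provided by Lemma~\ref{lem:sensitivity} are only upper bounds and do not, by themselves, force a minimum step. What is needed is to exploit the expansivity of the uncoupled DE map $g$ around any intermediate Battacharyya $\beta \in (0,\xunstab(1))$ — $\xunstab(1)$ being precisely the smaller non-trivial FP of $g$ under the erasure-$1$ channel — and to quantify how the coupling average of width $w$ damps this expansivity by at most a factor of order $1/w$. Translating this qualitative contraction/expansion picture into the explicit quantitative step $c'(\dl,\dr)\delta/w$, so that it yields the claimed universal constant $c(\dl,\dr)$ in the $\Lfp_1, \Lfp_2$ expressions, is the technical crux of the proof.
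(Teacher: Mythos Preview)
Your Step~2 contains a genuine gap. You assert that $\sigma\mapsto\Ldens{\x}(\sigma)$ (the forward-DE fixed point) is continuous in the product Wasserstein metric, invoking Lemma~\ref{lem:sensitivity} ``propagated section-by-section'' together with a ``one-sided analogue of Lemma~\ref{lem:condforcontinuity}.'' But Lemma~\ref{lem:sensitivity} controls only a single DE step; iterating it gives a constant that blows up with the number of iterations. And Lemma~\ref{lem:condforcontinuity} delivers continuity only under the contraction hypothesis $\batta(\Ldens{c})(\dl-1)(\dr-1)(1-\batta(\Ldens{x})^2)^{\dr-2}<1$, which fails precisely in the transition regime $\batta(\Ldens{x}_i)\in(\delta,\xunstab(1))$ that your sweep is meant to locate. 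In the uncoupled case the forward-DE FP is discontinuous in $\sigma$ at the BP threshold; nothing in your argument rules out the same phenomenon here, i.e.\ a jump of $i^*(\sigma)$ across a macroscopic fraction of $[-\Lfp,0]$ as $\sigma$ crosses some critical value. Monotone limits of continuous functions need not be continuous.

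The paper sidesteps this obstacle entirely. Instead of sweeping the channel, it fixes the \emph{constellation} Battacharyya level at $\xunstab(1)/2$ and applies Schauder's fixed-point theorem to a map $V$ on the (compact, convex) set of ordered constellations with that level: $V$ is one DE step followed by a renormalization (convolve with a channel if $\batta(U(\Ldens{\x}))$ is too large, push right-most sections up to $\Delta_0$ if too small). Continuity of $V$ is a finite-step estimate, so it goes through. The Schauder FP need not a priori be a true DE FP (the ``push-up'' branch could be active); the paper rules this out by extracting the BEC component of the constellation, showing it dominates a proper one-sided BEC FP at $\epsilon=1$, and then invoking the Saturation Theorem~\ref{thm:existenceintermediateform} to obtain $1\approx\epsilon^A(\dl,\dr)<1$, a contradiction. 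Once a genuine DE FP with constellation Battacharyya $\xunstab(1)/2$ is in hand, properties (i)--(ii) follow from the Transition Length Lemma (Lemma~\ref{lem:transitionlength}), which is your Step~3 done via extremes of information combining (reduction to scalar BEC inequalities) rather than Wasserstein bounds.
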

{\em Discussion:} In words, the theorem says that for any fixed $w
\in \naturals$ and $\delta>0$, if we pick $\Lfp$ sufficiently large,
we can construct a FP constellation which is small on the left for
a linear fraction of the total length and reasonably large on the right,
also for a linear fraction of the total length.

{\em Proof of Theorem~\ref{the:main}:} We are ready to prove the remaining
statement of our main theorem, i.e., \eqref{lem:bplowerbound}.  Let $(\dl, \dr)$ and $w$ be admissible in the
sense of conditions (\ref{equ:admissibleone}), (\ref{equ:admissiblenine}),
(\ref{equ:admissibletwo}), (\ref{equ:admissiblethree}),
(\ref{equ:admissiblefour}), (\ref{equ:admissiblefive}) in
Definition~\ref{def:admissible} and set $r = 1-\frac{\dl}{\dr}$.
We want to  show that $\ent^{\BPsmall} \geq
\ent^{A} - 8 (\dr-1)^3(\sqrt{2}+\frac2{\ln 2}\dl(\dr-1))\sqrt{\frac{2 (\dl-1)(\dr-1)}{w}}$.

First note that $\ent^{\BPsmall}$ is a decreasing
function of $\Lc$. This follows by comparing DE for two constellations
of increasing size and verifying that DE of the larger constellation
``dominates'' (in the sense of degradation) DE of the smaller
constellation. In the ensuing arguments we will take advantage of
this fact -- if we can lower bound the threshold for a particular
constellation size then we will have automatically lower bounded
also the threshold for all smaller constellation sizes. This is
convenient since at several steps we will need to pick $\Lc$
``sufficiently'' large, where the restrictions on the constellation
size stem from our use of simple extremes of information combining
bounds.

Choose a channel, call it $\Ldens{c}$, from the channel family
$\{\Ldens{c}_\ent\}$ with $\entropy(\Ldens{c})<\ent^{A}- 8 (\dr-1)^3
(\sqrt{2}+\frac2{\ln 2}\dl(\dr-1)) \sqrt{\frac{2 (\dl-1)(\dr-1)}{w}}$.  We will
show that for any admissible ensemble $(\dl, \dr, \Lc, w)$, where $\Lc$ is
chosen ``sufficiently large,'' the forward DE process converges to
the trivial FP. By our remarks above concerning the monotonicity
of the threshold in terms of $\Lc$, this implies that for {\em any}
length $\Lc$, DE converges to the trivial FP, hence proving our
main statement.

As stated in Theorem~\ref{thm:existenceintermediateform}, $f(\dl,\dr,
w)$ is the limit of $c(\dl, \dr, \delta, w, \Msat, \Lc)$ when first $\Lc$
and $\Msat$ tend to infinity and then $\delta$ tends to zero.  
We claim that, for the fixed parameters $(\dl, \dr, w)$, for any $\delta>0$ there exist 
$\Lc, \Msat, \Lfp \in \naturals$, sufficiently large, so that
\begin{align}
& \Lfp(\dl, \dr, w) \leq \Lfp, \label{equ:basicexistencehypothesis0} \\
& 2(w-1) \leq \Lc, \label{equ:conditionwandL} \\
& \Lc  \leq (\Lfp+1)\Bigl(\frac12 - \frac{wc(\dl,
\dr)}{(\Lfp+1)\delta}\Bigr), \label{equ:basicexistencehypothesis1}\\
& \Lc\!+\!w\leq \Msat  \leq (\Lfp\!+\!1) \Bigl(\frac{\xunstab(1)}{4}\! -\! \frac{wc(\dl,
\dr)}{(\Lfp\!+\!1)\delta}\Bigr) \leq \Lfp \!-\! \Lc, \label{equ:basicexistencehypothesis2}\\
& \entropy(\Ldens{c})  < \ih^A-c(\dl, \dr, \delta,w, \Msat, \Lc),
\label{equ:entropylessthanhAminusa}
\end{align}
where $\Lfp(\dl, \dr, w)$ and $c(\dl, \dr)$ are the constants given
in Theorem~\ref{thm:existencebasicform}.  To fulfill
(\ref{equ:entropylessthanhAminusa}), as discussed in
Theorem~\ref{thm:existenceintermediateform}, $c(\dl, \dr, \delta,
w, \Msat, \Lc)$ is a continuous function in its parameters which
converges to $8 (\dr-1)^3 (\sqrt{2}+\frac2{\ln 2}\dl(\dr-1)) \sqrt{\frac{2
(\dl-1)(\dr-1)}{w}}$ if we let $\delta$ tend to $0$ and let $\Msat$
and $\Lc$ tend to infinity.  Therefore, by choosing $\delta$
sufficiently small, and $\Lc$ and $\Msat$ sufficiently large we
fulfill (\ref{equ:entropylessthanhAminusa}).  By a proper such
choice we also fulfill (\ref{equ:conditionwandL}) and the first
inequality of (\ref{equ:basicexistencehypothesis2}).  Now note that
increasing $\Lfp$ loosens all above conditions.  In particular, for
any $\delta>0$ and $\Msat, \Lc , w \in \naturals$, by choosing
$\Lfp$ sufficiently large we fulfill (\ref{equ:basicexistencehypothesis0}),
(\ref{equ:basicexistencehypothesis1}), and the last two inequalities
of (\ref{equ:basicexistencehypothesis2}).  We have now fixed all
parameters.

Let $(\Ldens{c}^*, \Ldens{\x}^*)$ be the proper one-sided FP on
$[-\Lfp, 0]$ whose existence is promised by
Theorem~\ref{thm:existencebasicform}.  Recall that it has a forced
boundary condition, i.e., it is a FP if we assume that
$\Ldens{x}^*_i=\Delta_0$ for $i > 0$. Furthermore, from
\eqref{equ:basicexistencehypothesis1} and
\eqref{equ:basicexistencehypothesis2}, and since  $(\Ldens{c}^*,
\Ldens{\x}^*)$ is a proper one-sided FP, we satisfy the conditions
of Theorem~\ref{thm:existenceintermediateform}. Thus we conclude
that $\entropy(\Ldens{c}^*) \geq \ih^A - c(\dl, \dr, \delta, w,
\Msat, \Lc)$.

Next, create from the FP $(\Ldens{c}^*, \Ldens{\x}^*)$ on $[-\Lfp,
0]$ the constellation $\Ldens{\x}$ on $[-\Lfp, \Lfp]$ by appending
to $\Ldens{\x}^*$, $\Lfp$ densities $\Delta_0$ on the right which
are part of the constellation and by defining $\Ldens{x}_i=\Delta_0$
for $i > \Lfp$ (forced boundary condition). Note that this redefined constellation
$(\Ldens{c}^*, \Ldens{\x})$ is not a FP since it does not
fulfill the FP equations for the positions $i \in [1, \Lfp]$.

Initialize DE with $\Ldens{\x}$, i.e., set $\Ldens{\x}^{(0)}=\Ldens{\x}$.
Apply forward DE to $\Ldens{\x}$ with the channel $\Ldens{c}$ as
chosen previously (cf. \eqref{equ:entropylessthanhAminusa}).  Call
the resulting constellation, after $\ell$ steps of DE,
$\Ldens{\x}^{(\ell)}$.

We claim that for all $\ell \geq 0$,  $\Ldens{\x}^{(\ell)}$ is spatially
monotonically increasing, i.e., $\Ldens{x}_i^{(\ell)} \prec
\Ldens{x}_{i+1}^{(\ell)}$, for all $i \in [-\Lfp, \Lfp]$, and that
$\Ldens{\x}^{(\ell)}$ is monotonically decreasing as a function of
$\ell$, i.e., $\Ldens{\x}^{(\ell+1)} \prec \Ldens{\x}^{(\ell)}$.

To prove the first claim recall that $\Ldens{\x}^{(0)} = \Ldens{\x}$,
which is monotonically increasing and has forced boundary condition
on the right.  But DE preserves the monotonicity
so that for every $\ell \geq 0$, $\Ldens{x}_i^{(\ell)}
\prec \Ldens{x}_{i+1}^{(\ell)}$, for all $i \in [-\Lfp, \Lfp]$.

Consider now the second claim. Assume we run one step of DE on
$\Ldens{\x}^{(0)}$ with the channel $\Ldens{c}^*$. Then for $i \in
[-\Lfp, 0]$, $\Ldens{x}_i^{(1)}=\Ldens{x}_i^{(0)}$ by construction.
For $i \in [1, \Lfp]$, $\Ldens{x}_i^{(1)} \prec \Ldens{c}^* \prec
\Delta_0 = \Ldens{x}_i^{(0)}$. In words, for each $i \in [-\Lfp,
\Lfp]$ the constellation is decreasing. It is therefore also
decreasing if we run one step of DE with the channel $\Ldens{c}
\prec \Ldens{c}^*$.  As a consequence, since DE preserves the order
imposed by degradation, we must have $\Ldens{\x}^{(\ell+1)} \prec
\Ldens{\x}^{(\ell)}$ for all  $\ell \geq 0$. Thus the process must
converge to a FP of DE with forced boundary condition. Call this
resulting FP $\Ldens{\x}^{(\infty)}$.

We claim that $\batta(\Ldens{x}^{(\infty)}_{\Lc+1}) < \xunstab(1)$.
Assume to the contrary that this is not true. Then we can apply
Theorem~\ref{thm:existenceintermediateform} to $(\Ldens{c},
\Ldens{\x}^{(\infty)})$ to arrive at a contradiction. Let us discuss
this point in detail. Since $\Ldens{x}^{(\ell)}_i \prec
\Ldens{x}^{(\ell)}_{i+1}$ for all $\ell$ we must have
$\Ldens{x}^{(\infty)}_i \prec \Ldens{x}^{(\infty)}_{i+1}$ for all
$i \in [-\Lfp, \Lfp]$.  Combined with the fact that
$\Ldens{x}_i^{(\infty)}=\Delta_0$ for $i>\Lfp$, we conclude that
$(\Ldens{c},\Ldens{\x}^{(\infty)})$ is a proper one-sided FP on
$[-\Lfp, \Lfp]$ with forced boundary condition.  Furthermore, from
\eqref{equ:basicexistencehypothesis0}, \eqref{equ:conditionwandL},
\eqref{equ:basicexistencehypothesis1} and
\eqref{equ:basicexistencehypothesis2} we see that $\Ldens{\x}^{(\infty)}$
satisfies all hypotheses of Theorem~\ref{thm:existenceintermediateform}.
More precisely, by assumption the constellation is large
for the last $\Lfp-\Lc$ sections. Hence from the choice of $\Msat$ as
given by \eqref{equ:basicexistencehypothesis2} we must have
$\batta(\Ldens{x}^{(\infty)}_{\Lfp-\Msat})\geq \xunstab(1)$.  From
\eqref{equ:basicexistencehypothesis1} it is clear that
$\batta(\Ldens{x}^{(\infty)}_{-\Lfp+\Lc})\leq \delta$.
 As a consequence, from
the Theorem~\ref{thm:existenceintermediateform} we conclude that
$\entropy(\Ldens{c}) \geq \ih^A - c(\dl, \dr, \delta, w, \Msat,
\Lc)$. But this contradicts our initial assumption on $\entropy(\Ldens{c})$
(cf. \eqref{equ:entropylessthanhAminusa}).

We are now ready to prove our main claim.  Consider a coupled
ensemble on $[1, \Lc+1]$ with parameters $(\dl, \dr, w)$.  More
precisely, the coupled ensemble has sections from $[1, \Lc+1]$ with
$i\notin [1, \Lc+1]$ set to $\Delta_{+\infty}$.  Initialize all
sections in $[1, \Lc+1]$  to $\Delta_0$.  Call this constellation
$\Ldens{\y}^{(0)}$.  Run forward DE with the channel $\Ldens{c}$
on $\Ldens{\y}^{(0)}$, call the result $\{\Ldens{\y}^{(\ell)}\}$,
and let $\Ldens{\y}^{(\infty)}$ denote the limit, which is a FP.
We have $\Ldens{y}^{(\ell)}_i
\prec \Ldens{x}^{(\ell)}_i$, $i \in [1, \Lc+1]$, since $\Ldens{y}_i^{(0)}
= \Ldens{x}_i^{(0)}$ for $i \in [1, \Lc+1]$ and $\Ldens{y}^{(0)}_i =
\Delta_{+\infty} \prec \Ldens{x}_i^{(0)}$ for $ i \not \in [1,
\Lc+1]$ and DE preserves the ordering.
Therefore $\batta(\Ldens{y}_i^{(\infty)}) \leq \batta(\Ldens{x}^{(\infty)}_{\Lc+1}) < \xunstab(1)$, for all $i \in [1, \Lc+1]$.  
Let $\batta_j$, for some $j\in [1,\Lc+1]$, denote the maximum of the Battacharyya
parameter over all sections of $\Ldens{\y}^{(\infty)}$. 
From extremes of information combining we have
\begin{align*} 
 \batta_j = \batta(\Ldens{y}^{(\infty)}_j)   & \leq \batta(\Ldens{c}
) (1 - (1 - \batta_j)^{\dr-1})^{\dl-1} \\ & \leq (1 - (1 -
\batta_j)^{\dr-1})^{\dl-1} .  
\end{align*} 
The last inequality implies that $\batta_j = 0$ since $\batta_j
\in [\xunstab(1), 1]$ is excluded. From
property~(\ref{lem:blmetricbattaboundswasser}) of Lemma~\ref{lem:blmetric}
we conclude that $d(\Ldens{y}^{(\infty)}_i, \Delta_{+\infty}) \leq
\batta(\Ldens{y}^{(\infty)}_i) \leq \batta_j=0$, for all $i\in
[1,L+1]$. In other words, $\Ldens{\y}^{(\infty)} =
\underline{\Delta}_\infty$, as claimed.

\subsection{Conclusion and Outlook}\label{sec:conclusion}
We have shown one can construct low-complexity coding schemes which
are universal for the whole class of BMS channels by spatially
coupling regular LDPC ensembles.  Thus, we resolve a long-standing
open problem of whether there exist low-density parity-check ensembles
which are capacity-achieving using BP decoding.  These ensembles
are not only attractive in an asymptotic setting but also for
applications and standards since they can easily be designed to
have both, good thresholds and low error floors.  In addition, these
ensembles are {\em universal} in the sense that one and the same
ensemble is good for the whole class of BMS channels, assuming that
the channel is known at the receiver. In fact, we have shown the
stronger statement that almost all codes in such an ensemble are
good for all channels in this class.

Let us discuss some open questions.

\begin{itemize}
\item[]{\em Maxwell Conjecture:}
As a byproduct of our proof, we know that the MAP threshold of
coupled ensembles is essentially equal to the area threshold of the
uncoupled ensemble. In addition we know that the MAP threshold of
the uncoupled ensemble is also upper bounded by the area threshold.
The Maxwell conjecture states that in fact the MAP threshold of the
uncoupled ensemble is {\em equal} to the area threshold. So if one
can establish that the MAP threshold of the uncoupled ensemble is
at least as large as the MAP threshold of the coupled ensemble,
then the Maxwell conjectured would be proved. A natural approach
to resolve this issue is to use interpolation techniques and it is
likely that the Maxwell conjecture can be proved in a way similar
as this was done in \cite{HMU11a} for other graphical models.

\item[]{\em Convergence Speed:}
As discussed previously, we only give weak bounds on the speed of
convergence of the ensemble to the Shannon capacity (as a function
of the degrees, the constellation length $\Lc$, as well as the
coupling width $w$). Numerical evidence suggests much stronger
results.  Settling the question of the actual convergence speed
is both challenging and interesting.

\item[]{\em Lifting of Restrictions:}
Our results apply only to sufficiently large degrees whereas
numerical calculations indicate that the threshold saturation
effect equally shows up for small degrees. This is a consequence
of the fact that at many places we have used simple extremes
of information combining bounds. With sufficient effort it is likely
that one can extend the proof to many dds 
which are currently not covered by our statement.

\item[]{\em General Ensembles:}
In a similar vein, we restricted our investigation to regular
ensembles to keep things simple, but the same technique applies in
principle also to irregular or even structured ensembles. Again,
depending on the structure of the underlying ensemble, much effort
might be required to derived the necessary bounds.

\item[]{\em Wiggle Size:}
Perhaps the weakest link in our derivation is the treatment of the
connection width $w$. In our current statements this connection
width has to be chosen large. Empirically, small such
lengths, such as the extreme case $w=2$ give already excellent
results and by increasing $w$ the convergence to the area threshold
seems to happen exponentially fast. How to derive practically relevant
bounds for such small values of $w$ is an important open problem.

\item[]{\em Scaling:}
More generally, from a practical point of view, what is needed is
a firm understanding of how the performance of such codes scale
in each of the parameters in $\dl$, $\dr$, $\Lc$, $M$, as well as
$w$.  Only then will it be possible to design codes in a principled
fashion. 

\item[]{\em Practical Issues:} Further important topics are, the
design of good termination schemes which mitigate the rate-loss, a
systematic investigation of how structure in the interconnection
pattern as well as the codes influences the performance, and how
to optimally choose the scheduling (e.g., windowed decoding) to
control the complexity of the decoder \cite{KMRU10}.

\item[]{\em General Models:} As was discussed briefly in the
introduction, the threshold saturation phenomenon has been empirically
found to hold in a large variety of systems. This suggests that one
should be able to formulate a rather general theory rather than
finding a separate proof for each of these cases.  For all
one-dimensional systems this has recently been accomplished in
\cite{KRU12}. For higher-dimensional or infinite-dimensional systems
this is a challenging open problem.
\end{itemize}

\section{Acknowledgments} We would like to thank H. Hassani, S. Korada, N.
Macris, C. M{\'e}asson, and A. Montanari for interesting discussions
on this topic and H. Hassani for his feedback on an early draft.  S. Kudekar would like to thank Misha Chertkov,
Cyril M{\'e}asson, Jason Johnson, Ren\'e Pfitzner and Venkat Chandrasekaran for their encouragement and Bob Ecke for hosting him
in the Center for Nonlinear Studies, Los Alamos National Laboratory
(LANL), where most of his work was done. He also gratefully
acknowledges his support from the U.S. Department of Energy at Los
Alamos National Laboratory under Contract No.  DE-AC52-06NA25396
as well as from NMC via the NSF collaborative grant CCF-0829945 on
``Harnessing Statistical Physics for Computing and Communications.''
The work of R. Urbanke was supported by the European project STAMINA,
265496.

\begin{appendices}

\section{Entropy versus Battacharyya -- Lemma \protect{\ref{lem:entropyvsbatta}}}\label{sec:someusefulfacts}
\begin{lemma}[Bounds on Binary Entropy Function]\label{lem:boundsonent}
Let $h_2(x)=-x \log_2(x)-(1-x) \log_2(1-x)$. 
Then for $x \in [0, 1/2]$,
\begin{align}
h_2(x)& \geq 1 - (1-2x)^2 \label{eq:lowerboundbinaryentropy}, \\
h_2(x)& \leq 2 \sqrt{x(1-x)}, \label{eq:upperboundbinaryentropyone} \\
h_2(x)& \leq \frac{11}{4} x^\frac{3}{4}. \label{eq:upperboundbinaryentropytwo}
\end{align}
\end{lemma}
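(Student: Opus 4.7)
The plan is to handle the three bounds independently, the third being the most delicate because it is numerically tight.

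For the lower bound \eqref{eq:lowerboundbinaryentropy}, I would rely on the Taylor expansion of $h_2$ about $1/2$. Substituting $u=1-2x\in[0,1]$ and expanding, one obtains $(1+u)\ln(1+u)+(1-u)\ln(1-u)=2\sum_{k\ge 1}\tfrac{u^{2k}}{2k(2k-1)}$, which rearranges to $1-h_2(x)=\tfrac{1}{\ln 2}\sum_{k\ge 1}\tfrac{u^{2k}}{2k(2k-1)}$. Since $|u|\le 1$ one has $u^{2k}\le u^2$ for every $k\ge 1$, so the sum is bounded above by $u^2\sum_{k\ge 1}\tfrac{1}{2k(2k-1)}$. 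Splitting the latter as $\sum_{k\ge 1}\bigl(\tfrac{1}{2k-1}-\tfrac{1}{2k}\bigr)=1-\tfrac12+\tfrac13-\tfrac14+\cdots=\ln 2$, we conclude $1-h_2(x)\le u^2=(1-2x)^2$, which is exactly \eqref{eq:lowerboundbinaryentropy}.

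For the upper bound \eqref{eq:upperboundbinaryentropyone}, I would study the sign of $f(x):=2\sqrt{x(1-x)}-h_2(x)$. A direct differentiation, together with the algebraic simplification $-4x(1-x)-(1-2x)^2=-1$, yields $f''(x)=\tfrac{1}{x(1-x)}\bigl(\tfrac{1}{\ln 2}-\tfrac{1}{2\sqrt{x(1-x)}}\bigr)$. Thus $f''$ changes sign exactly once on $(0,1/2)$, at the point $x^\star$ where $2\sqrt{x^\star(1-x^\star)}=\ln 2$, so $f$ is concave on $[0,x^\star]$ and convex on $[x^\star,1/2]$. Therefore $f'$ decreases from $f'(0^+)=+\infty$ (the $1/\sqrt{x}$ term dominating $\log_2(1/x)$) to a minimum at $x^\star$ and then increases to $f'(1/2)=0$. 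This forces $f'(x^\star)\le 0$, so $f'$ has exactly one sign change on $(0,1/2)$, from positive to negative. Combined with $f(0)=f(1/2)=0$, $f$ rises from $0$, attains a unique maximum, and returns to $0$, so $f\ge 0$ throughout.

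For the upper bound \eqref{eq:upperboundbinaryentropytwo}, which is numerically the sharpest, I would first pass through an intermediate bound. The termwise comparison $-\ln(1-x)=\sum_{k\ge 1}\tfrac{x^k}{k}\le\sum_{k\ge 1}x^k=\tfrac{x}{1-x}$ gives $-(1-x)\ln(1-x)\le x$ and hence $h_2(x)\le\tfrac{1}{\ln 2}(x-x\ln x)=x\log_2(e/x)$. It then suffices to show that $\phi(x):=x^{1/4}(1-\ln x)$ is bounded above by $\tfrac{11}{4}\ln 2$ on $(0,1/2]$. A direct calculation gives $\phi'(x)=-\tfrac14\, x^{-3/4}(3+\ln x)$, which vanishes only at $x^\star=e^{-3}\in(0,1/2)$ and there takes the value $\phi(x^\star)=4e^{-3/4}$. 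Since $\phi(0^+)=0$ and $\phi(1/2)=(1/2)^{1/4}(1+\ln 2)<\phi(x^\star)$, the global maximum on $[0,1/2]$ is $4e^{-3/4}$, and the numerical inequality $4e^{-3/4}\approx 1.8896<1.9062\approx \tfrac{11}{4}\ln 2$ closes the argument.

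The main obstacle is precisely the sharpness of \eqref{eq:upperboundbinaryentropytwo}: numerically $h_2(x)/x^{3/4}$ approaches $11/4$ near $x\approx 0.05$, so any lossy step would spoil the constant. The two-stage reduction works because the bound $h_2(x)\le x\log_2(e/x)$ is tight to first order as $x\to 0$ (precisely where the ratio is extremal), and the subsequent maximization of $\phi$ is exact; the only slack is absorbed in the final numerical check $4e^{-3/4}<\tfrac{11}{4}\ln 2$, which leaves roughly one percent to spare.
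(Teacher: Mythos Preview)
Your proof is correct in all three parts. For \eqref{eq:lowerboundbinaryentropy} you and the paper do essentially the same thing: expand $1-h_2$ as a power series in $(1-2x)^2$ and bound each term by the first, using $\sum_{k\ge 1}\tfrac{1}{2k(2k-1)}=\ln 2$.

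For \eqref{eq:upperboundbinaryentropyone} and \eqref{eq:upperboundbinaryentropytwo} you take a genuinely different and somewhat more direct route than the paper. For \eqref{eq:upperboundbinaryentropyone}, the paper substitutes $z=1-2x$, records the boundary data $g(0)=g(1)=0$, $g'(0)=0$, $g''(0)>0$, together with the fact that $g''$ has a unique zero, and then runs a Rolle-type root-counting contradiction. Your argument computes the same second derivative but uses it structurally: since $f$ is concave-then-convex, $f'$ is unimodal with $f'(0^+)=+\infty$ and $f'(1/2)=0$, so $f'$ changes sign exactly once and $f$ is nonnegative between its two zeros. This avoids the contradiction step and is a clean shape argument. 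For \eqref{eq:upperboundbinaryentropytwo}, both proofs first pass through the same intermediate inequality $-(1-x)\log_2(1-x)\le x/\ln 2$ (equivalently $h_2(x)\le x\log_2(e/x)$), which is the crucial step that preserves the leading behavior near $x=0$. From there the paper introduces an auxiliary function $g(x)=\tfrac{11}{4}x^{3/4}-\tfrac{ax}{\ln 2}+x\log_2 x$ with a carefully tuned constant $a\approx 1.035$, engineered so that $g$ has a double zero at an explicit interior point, and again applies a Rolle-type root count on $g''$. You instead reduce to a one-variable maximization of $\phi(x)=x^{1/4}(1-\ln x)$, solve $\phi'=0$ exactly at $x=e^{-3}$, and close with the numerical check $4e^{-3/4}<\tfrac{11}{4}\ln 2$. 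Your approach is shorter and avoids the somewhat opaque choice of $a$; the paper's approach trades this for a structural argument that does not rely on a final numerical comparison, though it still requires verifying $g(1/2)>0$ and the specific value $a>1$ by calculation.
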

\begin{IEEEproof}
To prove (\ref{eq:lowerboundbinaryentropy}), write
\begin{align*}
h_2(x) & \stackrel{\text{\cite[Lemma II.1]{WiS07}}}{=} 1 - \frac1{2\ln 2}\sum_{n=1}^{\infty} \frac{(1-2x)^{2n}}{n(2n-1)} \\
& 
\geq 1 \!-\! (1\!-\!2x)^{2}\underbrace{\frac1{2\ln 2}\sum_{n=1}^{\infty} \frac{1}{n(2n-1)}}_{=1} 
 = 1 - (1-2x)^2.
\end{align*}
Consider now (\ref{eq:upperboundbinaryentropyone}). 
Set $g(z)=2 \sqrt{(1-x)x}-h_2(x)\mid_{x=(1-z)/2}= \sqrt{1-z^2}-h_2(\frac{1-z}{2})$.
We want to show that $g(z) \geq 0$ for $z \in [0, 1]$.
We have
\begin{align*}
g'(z) & = -\frac{z}{\sqrt{1-z^2}} + \frac1{2\ln 2}\ln\Big(\frac{1+z}{1-z}\Big), \\
g''(z) & = -\frac1{(1-z^2)^{3/2}} + \frac1{(1-z^2)\ln2}.
\end{align*}
The following claims are straightforward to verify using the
explicit formulae for $g(z)$, $g'(z)$, and $g''(z)$: (i) $g(0) = g(1) = 0$, (ii) $g'(0) = 0$, (iii)
$g''(0) > 0$, (iv) $g''(z) = 0$ has exactly one solution in $[0,
1]$.

Suppose there exists a $w$,  $0<w<1$, so that $g(w) < 0$. Then from
(i), (ii) and (iii) we must have $g(z) = 0$ for at least three
distinct elements of $[0,1]$.  Rolle's theorem then implies that
$g'(z)=0$ has at least two distinct solutions in $(0,1)$ and hence
at least three distinct solutions in $[0,1]$ (since by (i) $g'(0)=0$).
Using Rolle's theorem again, this implies that $g''(z)=0$ has at
least two solutions in $[0,1]$, contradiction (iv).

We prove (\ref{eq:upperboundbinaryentropytwo}) along similar lines. Consider $g(x) =
\frac{11}{4} x^{\frac34} - a \frac{x}{\ln2} + x \log_2(x)$, where
$a=4 (1 + \ln(\frac{11 \ln(2)}{16})) \approx  1.035 > 1$. Note that
$g(x) \leq \frac{11}{4} x^{\frac34}- h_2(x)$ for $x \in [0, \frac12]$ 
(to verify this, upper bound the term $-(1-x)
\log_2(1-x)$ of the entropy function by $x/\ln(2)$).  So if we can
prove that $g(x) \geq 0$ for $x \in [0, \frac12]$ then we are done.

Direct inspections of the quantities shows that $g(0)=0$,
$g'(0+)=+\infty$, $g(x^*)=g'(x^*)=0$, where $x^* = \frac{14641
\log(2)^4}{65536} \approx 0.05157$, and $g(\frac12)>0$.

It follows that if there exists an $x \in [0, \frac12]$ so that
$g(x) <0$ then $g(x)$ must have at least $4$ roots in this range,
therefore by Rolle $g'(x)$ must have at least $3$ roots, and again
by Rolle $g''(x)$ must have at least $2$ roots. But an explicit
check shows that $g''(x)=-\frac{33}{64 x^{\frac54}} + \frac{1}{x
\ln(2)}=0$. So $g''(x)=0$ can only have a single solution.  \end{IEEEproof}

{\em Proof of Lemma~\ref{lem:entropyvsbatta}}: 
Let $\absDdens{a}$ denote the density in the $|D|$-domain. Then
\begin{align*}
\sqrt{\entropy(\absDdens{a})} & = \sqrt{\int_0^1 \!\!\!h_2\big(\frac{1-z}2\big) \absDdens{a}(z)
\dee z} 
 \stackrel{\eqref{eq:lowerboundbinaryentropy}}{\geq}
\sqrt{\int_0^1 \!\!\!(1 - z^2)\absDdens{a}(z) \dee z} \\
& \stackrel{\text{Jensen}}{\geq} \int_0^1 \sqrt{1 - z^2}\absDdens{a}(z) \dee z = \batta(\absDdens{a}). 
\end{align*} 
This proves that $\batta(\absDdens{a})^2$ lower bounds
$\entropy(\absDdens{a})$.  For the upper bound we have
\begin{align*}
\batta(\absDdens{a}) & = \int_0^1 \sqrt{1-z^2} \absDdens{a}(z) \dee z \nonumber \\
& = \int_0^1 \underbrace{\Big(\sqrt{1\!-\!z^2} \!-\!  h_2(\frac{1\!-\!z}2)\Big)}_{
	\geq 0 \; \text{by (\ref{eq:upperboundbinaryentropyone}) with $x=\frac{1-z}{2}$}} 
 \absDdens{a}(z) \dee z + \entropy(\absDdens{a}).
\end{align*}
\qed

\section{Upper Bound on BP Threshold -- Lemma~\ref{lem:bpboundsuncoupled}}\label{sec:proofofbpboundsuncoupled}
\begin{IEEEproof}
We use ideas from extremes of information combining.  We get an upper
bound on the BP threshold by assuming that the densities at check nodes
are from the BSC family and that densities at variable nodes are from the
BEC family.

Let $x$ represent the entropy of the variable-to-check
message and let $c$ denote the entropy of the channel. 
If for any $x \in [0, c]$
\begin{align}\label{equ:necessarycondition}
h_2((1 - (1 - 2 h_2^{-1}(x))^{\dr - 1})/2) > (x/c)^\frac{1}{\dl-1},
\end{align}
then DE will not converge to the perfect decoding FP.  The left-hand
side represents the minimum entropy at the output of a check node
which we can get if the input entropy is $x$ (and this minimum is
achieved if the input density is from the BSC family).  The right-hand
side represents the maximum input entropy which we can have at the
input of a variable node if we want an output entropy equal to $x$
(and this minimum is achieved if the input density is from the BEC
family). Note that we can extend the inequality
(\ref{equ:necessarycondition}) to {\em all} $x \in [0, 1]$ without
changing the condition since for $x \in (c, 1]$, the right hand
side is strictly bigger than $1$, whereas the left-hand side is
always bounded above by $1$.

The preceding condition is equivalent to saying that in order for DE to succeed,
we must have 
\begin{align*}
c \leq \frac{x}{(h_2((1 - (1 - 2 h_2^{-1}(x))^{\dr - 1})/2))^{\dl-1}},
\end{align*}
for all $x \in [0, 1]$.
We can also write this as
\begin{align*}
c \leq \frac{h_2(x)}{(h_2((1 - (1 - 2 x)^{\dr - 1})/2))^{\dl-1}},
\end{align*}
where $x \in [0, \frac12]$.

We want to show that $c$ cannot be too large, i.e., we are looking for an upper bound
on $c$. Note that any value of $x$ gives a bound.
Let us choose $x=\frac{1}{2 \sqrt{\dr-1}}$. This gives the bound
\begin{align*}
c \leq \frac{h_2(\frac{1}{2\sqrt{\dr-1}})}{(h_2(\frac{1-e^{-\sqrt{\dr-1}}}{2}))^{\dl-1}}.
\end{align*}
To obtain the above inequality we first write $(1-2x)^{\dr-1}$ as
$\text{exp}((\dr-1)\log(1-2x))$. For $x \in [0, \frac12]$ we use
the Taylor expansion $$\log(1-2x) = -2x - \frac{(2x)^2}{2} -
\frac{(2x)^3}3... \leq -2x=-\frac{1}{\sqrt{\dr-1}}.$$ Thus
$\text{exp}((\dr-1)\log(1-2x)) \leq \text{exp}(-\sqrt{\dr-1})$ and
$h_2((1 - (1 - 2 x)^{\dr - 1})/2) \geq h_2(\frac{1-e^{-\sqrt{\dr-1}}}{2})$.
We want to simplify the expression even further.  Using \cite[Lemma
II.1]{WiS07} and bringing out the first term in the summation,
\begin{align}\label{eq:taylorexpansionforbinaryentropy}
h_2(x) & = 1 - \frac1{2\ln 2}(1-2x)^2 - \frac1{2\ln 2}\sum_{n=2}^{\infty} \frac{(1-2x)^{2n}}{n(2n-1)} \nonumber \\
& \geq 1 - \frac1{2\ln 2}(1-2x)^2 - \frac1{2\ln 2}\sum_{n=2}^{\infty} (1-2x)^{2n} \nonumber \\
& = 1 - \frac1{2\ln 2}(1-2x)^2 - \frac{(1-2x)^4}{2\ln 2}\sum_{n=0}^{\infty} ((1-2x)^{2})^n \nonumber \\
& = 1 - \frac2{\ln 2}(x-\frac12)^2 - \frac{8(x-1/2)^4}{\ln(2)(1-4(x-1/2)^2)}. 
\end{align}
Substituting $x = (1 - e^{-\sqrt{\dr-1}})/2$ we have
\begin{align*}
h_2(\frac{1\!\!-\!\!e^{\!-\!\sqrt{\dr\!-\!1}}}{2})^{\dl\!-\!1} & \!\geq\! (1 \!\!-\!\! \frac{e^{-2\sqrt{\dr\!-\!1}}}{2\ln 2}\!-\! \frac{e^{-4\sqrt{\dr\!-\!1}}}{2\ln 2}\frac1{1\!\!-\!\!e^{-2\sqrt{\dr\!-\!1}}})^{\dl \!-\!1} \\
& \!\geq\! 1 \!-\! \frac{(\dl-1)}{2\ln 2}\left(e^{-2\sqrt{\dr\!-\!1}}\!+\! \frac{e^{-4\sqrt{\dr\!-\!1}}}{1\!-\!e^{-2\sqrt{\dr\!-\!1}}}\right). 
\end{align*}
We conclude that
\begin{align*}
c  \leq 
\frac{h_2(\frac{1}{2 \sqrt{\dr-1}})}{1 \!-\! \frac{(\dl-1)}{2\ln 2}\left(e^{-2\sqrt{\dr\!-\!1}}\!+\! \frac{e^{-4\sqrt{\dr\!-\!1}}}{1\!-\!e^{-2\sqrt{\dr\!-\!1}}}\right)}  
 \leq \frac{h_2(\frac{1}{2 \sqrt{\dr-1}})}{1-\dl e^{-2\sqrt{\dr-1}}}.
\end{align*}
%The right-hand-side of the above inequality goes to $0$ as 
%$\dl, \dr\to \infty$ (with the ratio $\dl/\dr$ kept fixed). 
\end{IEEEproof}

\section{Basic Properties of the Wasserstein Metric -- 
Lemma~\ref{lem:blmetric}}\label{sec:blmetric}
{\em Proof}:
\renewcommand\theenumi{\roman{enumi}} 
\renewcommand{\labelenumi}{(\roman{enumi})}
\begin{enumerate}
\item {\em Alternative Definitions}:
The equivalence of the basic definition (cf.
Definition~\ref{def:wasserstein}) and the first alternative description
is shown in (6.2) and (6.3) in \cite{Villani09}.  The equivalence
of the first and second alternative descriptions is shown in
\cite{Val73}.

\item{\em Boundedness}:
Follows directly from either of the two alternative descriptions.

\item {\em Metrizable and Weak
Convergence}: See \cite[Theorem 6.9]{Villani09}.

\item {\em Polish Space}: See \cite[Theorem 6.18]{Villani09}.

\item {\em Convexity}:
We have
\begin{align*}
& \Big\vert \int_{0}^{1} \!\!f(x)(\alpha \absDdens{a}(x) + 
\bar{\alpha}\absDdens{b}(x) - \alpha \absDdens{c}(x) - 
\bar{\alpha}\absDdens{d}(x)) \,\dee x \Big \vert \leq \\
&  \alpha \Big\vert\!\! \int_{0}^{1} \!\!\!\!\!\!f(x)(\absDdens{a}(x) \!-\! \absDdens{c}(x)) \dee x \Big\vert \!\!+\!\!  
\bar{\alpha} \Big\vert \!\!\int_{0}^{1} \!\!\!\!\!\!f(x)(\absDdens{b}(x) \!-\! \absDdens{d}(x)) \dee x \Big\vert.
\end{align*}

\item {\em Regularity wrt $\vconv$}: Let $\tilde{f}(\cdot)$ be
$\Lip(1)[0, 1]$. Without loss of generality assume that $\tilde{f}(0)=0$.
Indeed, since we consider the difference of densities, subtracting
 a constant does not affect the integral. Define $f(x)$ for $x \in
 [-1, 1]$ by setting
$f(x)=\tilde{f}(x)$ for $x \in [0, 1]$ and $f(x)=\tilde{f}(-x)$ for
$x \in [-1, 0]$. Then $f(x)$ is $\Lip(1)[-1, 1]$ and also $f(0)=0$.

Let $\Ddens{d}=\Ddens{a}\vconv \Ddens{c}$ and  $\Ddens{e}=\Ddens{b}\vconv \Ddens{c}$ be the $D$-domain representation. 
Thus $d(\Ddens{d}, \Ddens{e})$ is characterized by
\begin{align*}
& \Big\vert \int_{0}^{1} \tilde{f}(z) (\absDdens{d}(z) -\absDdens{e}(z)) \,\dee z \Big\vert
\nonumber \\ 
\stackrel{\text{(i)}}{=} & \Big\vert \int_{-1}^{1} f(z) (\Ddens{d}(z) -\Ddens{e}(z)) \,\dee z \Big\vert
\nonumber \\ 
\stackrel{\text{(ii)}}{=} &
\Bigl\vert \int_{-1}^{1} \int_{-1}^{1}  (\Ddens{a}(x)\Ddens{c}(y)-\Ddens{b}(x)\Ddens{c}(y))
f(g(x, y)) \,\dee x \dee y  \Bigr\vert \\
\stackrel{\text{(iii)}}{=} &
\Bigl\vert \int_{0}^{1} \absDdens{c}(y) \,\dee y\int_{0}^{1}  (\absDdens{a}(x)-\absDdens{b}(x))
h( x, y) \,\dee x \Bigr\vert.
\end{align*}
In step (i) we use the construction of $f(z)$ along with the relation
between $D$ and $\vert D\vert$ domains given by \eqref{eq:Dvs|D|}.
We defined $g(x, y)=\tanh(\tanh^{-1}(x)+\tanh^{-1}(y))=\frac{x+y}{1+x
y}$ and step (ii) follows by explicitly writing the variable node
convolution in the $D$-domain.  In step (iii) we defined
\begin{align*}
h(x, y)= \frac14 \sum_{i \in \{\pm1\}} \sum_{j \in \{\pm 1\}} f(g(i x, j y))(1+i x)(1+j y).
\end{align*}
To obtain this equivalent formulation of the integral in 
step (iii) we make use of the symmetry conditions of $D$-densities and the implied
relationship between $D$ and $|D|$ densities for $y\in [0, 1]$,
\begin{align}\label{eq:Dvs|D|}
\Ddens{a}(-y)=\Ddens{a}(y) \frac{1\!-\!y}{1\!+\!y}, \,\,
\Ddens{a}(y)=\absDdens{a}(y) \frac{1\!+\!y}{2}.
\end{align}

We claim that $h( x,y)$ is $\Lip(2)[0, 1]$ (as a function $x$). This
will settle the proof of the lemma.  Notice that $h(x,y)$ is a linear combination
of four functions. Let us consider a generic term. Writing $g( \cdot, \cdot)$
explicitly, we have
\begin{align*}
& \vert f(g(i x, j y))(1\!+\!i x)\!-\! f(g(i z, j y))(1\!+\!i z)\vert (1\!+\!j y) \\ 
&\!=\!\vert f(\frac{ix\!+\!jy}{1\!+\!ijxy})(1\!+\!i x)\!-\! 
f(\frac{iz\!+\!jy}{1\!+\!ijzy})(1\!+\!i z)\vert 
(1\!+\!j y)
\\ 
&\!\leq\!\vert f(\frac{ix\!+\!jy}{1\!+\!ijxy})(1\!+\!i x)\!-\!
f(\frac{iz\!+\!jy}{1\!+\!ijzy})(1\!+\!i x) \vert (1\!+\!j y)
\\ 
&\!+\!\vert f(\frac{iz\!+\!jy}{1\!+\!ijzy})(1\!+\!i x)\!-\!
f(\frac{iz\!+\!jy}{1\!+\!ijzy})(1\!+\!i z)\vert
(1\!+\!j y)
 \\ 
&\stackrel{\text{(i)}}{\leq} (1+i x)(1+j y) \frac{(1-y^2)}{(1+ijxy)(1+ijzy)} \vert(ix - iz) \vert \\ 
& +(1+j y) \vert (ix -iz)  \vert  
\end{align*}
In (i) we use the Lipschitz continuity of $f(\cdot)$ and $i^2=j^2=1$
to obtain the first term. We use $\vert f(\cdot) \vert\leq 1$ to obtain
the second term in (i). Indeed, since $\tilde{f}$ is $\Lip(1)[0,1]$ and $\tilde{f}(0)=0$
 we must have $\vert f(x) \vert = \vert \tilde{f}(|x|) \vert = \vert \tilde{f}(|x|) - \tilde{f}(0) \vert \leq \vert x \vert \leq 1$. 
Also,
in the above expression, we can replace $\vert (ix -iz)  \vert$ by $\vert
x -z \vert$.

Now we sum over all possible $i,j$ and divide by 4 to get
\begin{align*}
& \vert h(x,y) - h(z,y) \vert \leq \frac14\vert x - z\vert \times \Big(\sum_{i\in \{\pm 1\}, j\in \{\pm 1\}} (1+j y) \\ 
& + \sum_{i\in \{\pm 1\}, j\in \{\pm 1\}}
(1+i x)(1+j y) \frac{(1-y^2)}{(1+ijxy)(1+ijzy)} \Big).
\end{align*}
Since $\sum_{j\in \{\pm 1\}} j y = 0$ we have
$$\sum_{i\in \{\pm 1\}, j\in \{\pm 1\}} (1+j y) = 4.$$ 

Let us consider the other term. We split the sum into two parts,
one sum over $ij>0$ and the other over $ij < 0$.  We have
\begin{align*}
\sum_{ij < 0} (1+i x)(1+j y) \frac{(1-y^2)}{(1+ijxy)(1+ijzy)}& = 2 \frac{1-y^2}{(1-zy)}, \\
\sum_{ij > 0} (1+i x)(1+j y) \frac{(1-y^2)}{(1+ijxy)(1+ijzy)}& = 2 \frac{1-y^2}{(1+zy)}.
\end{align*}
Adding the two we get the total contribution 
$$
2(1-y^2)\Big(\frac{1}{1+zy}+\frac{1}{1-zy} \Big) = 4\frac{1-y^2}{1-z^2y^2} \leq 4.
$$
Putting everything together we get
\begin{align*}
& \vert h(x,y) - h(z,y) \vert \leq 2\vert x - z\vert. 
\end{align*}

To get a good bound on $d(\Ldens{a}^{\vconv i}\vconv\Ldens{c}, \Ldens{b}^{\vconv i}\vconv\Ldens{c})$ in terms
of $d(\Ldens{a}, \Ldens{b})$ for $i \geq 2$ consider
\[
\Ldens{c'} = \frac{1}{i}
\sum_{j=1}^{i}\Ldens{a}^{\vconv i-j}\vconv \Ldens{b}^{\vconv j-1},
\]
and note that the Wasserstein metric can be expressed directly in the L-domain as
\[
d(\Ldens{a},\Ldens{b}) = 
\int_0^\infty \Big| \int_{-x}^x (\Ldens{a}(y)-\Ldens{b}(y)) \dee y \Big|\, 
\frac{2e^{-x}}{(1+e^{-x})^2}\,\dee x
\]
Applying this representation we observe that
\[
d(\Ldens{a}\vconv \Ldens{c}\vconv \Ldens{c'} , \Ldens{b}\vconv\Ldens{c}\vconv \Ldens{c'}) =
\frac{1}{i} d(\Ldens{a}^{\vconv i}\vconv \Ldens{c}, \Ldens{b}^{\vconv i}\vconv \Ldens{c})
\]
which yields
\[
 d(\Ldens{a}^{\vconv i}\vconv \Ldens{c}, \Ldens{b}^{\vconv i}\vconv \Ldens{c})
\le 2 i d(\Ldens{a} , \Ldens{b})\,.
\]

\item {\em Regularity wrt $\cconv$}:
Let $f(x)$ be $\Lip(1)[0, 1]$.  Let $\Ddens{d}=\Ddens{a}\cconv
\Ddens{c}$ and  $\Ddens{e}=\Ddens{b}\cconv \Ddens{c}$ be the $D$-domain representation. 
\begin{align*}
& \Big\vert \int_{0}^{1}  f(z) (\absDdens{d}(z)-\absDdens{e}(z)) \,\dee z \Big\vert \nonumber \\ 
& \stackrel{\text{(a)}}{=} \Bigl\vert \int_0^1\int_0^1 (\absDdens{a}(x)\absDdens{c}(y)-\absDdens{b}(x)\absDdens{c}(y)) f(x y) 
\,\dee x \dee y \Bigr\vert \\
& \leq  \int_{0}^{1}  \dee y \,\Ddens{c}(y)  
\Bigl\vert \int_{0}^{1} f(x y) (\absDdens{a}(x)  -\absDdens{b}(x)) \,\dee x \Bigr\vert,
\end{align*}
where step (a) follows since in the $|D|$-domain, check-node convolution
corresponds to a multiplication of the values.

But note that if $f(x)$ is $\Lip(1)[0, 1]$ then $f(x y)$ is $\Lip(|y|)[0,
1]$. Hence,
\begin{align*}
d(\Ldens{a}\cconv\Ldens{c}, \Ldens{b}\cconv\Ldens{c}) 
& \le d(\Ldens{a},\Ldens{b}) \int_{0}^{1}  \dee y \,\absDdens{c}(y) y \\
& \stackrel{\perr(\Ldens{c}) = \int_0^1 \frac{(1-y)}2 \absDdens{c}(y) \dee y}{=} 
d(\Ldens{a},\Ldens{b})(1-2\perr(\Ldens{c})) \\
& \stackrel{\batta(\Ldens{c}) \leq 2 \sqrt{\perr(\Ldens{c})(1-\perr(\Ldens{c}))}}{\leq} 
d(\Ldens{a},\Ldens{b})\sqrt{1 - \batta^2(\Ldens{c})}.
\end{align*} 
Above, the relation between the Battacharyya and error parameters can be obtained
via extremes of information combining (see \cite{RiU08}).
Let us focus on the last part. 
To get a good bound on $d(\Ddens{a}^{\cconv i}, \Ddens{b}^{\cconv i})$ in terms
of $d(\Ddens{a}, \Ddens{b})$ for $i \geq 2$, consider
\[
\Ddens{c} = \frac{1}{i}
\sum_{j=1}^{i}\Ddens{a}^{\cconv i-j}\cconv \Ddens{b}^{\cconv j-1},
\]
and note that the Wasserstein metric can be expressed directly in the D-domain as
\[
d(\Ddens{a},\Ddens{b}) = 
\int_0^1 \Big| \int_{-x}^x (\Ddens{a}(y)-\Ddens{b}(y)) \dee y \Big|\, 
\dee x
\]
Applying this representation, we observe that
\[
d(\Ddens{a}\cconv \Ddens{c} , \Ddens{b}\cconv\Ddens{c}) =
\frac{1}{i} d(\Ddens{a}^{\cconv i}, \Ddens{b}^{\cconv i}).
\]
This yields
\begin{align*}
d (\Ddens{a}^{\cconv i}, \Ddens{b}^{\cconv i}) 
& \le i d(\Ddens{a} , \Ddens{b}) (1-2\perr(\Ddens{c})) \\
& = d(\Ddens{a} , \Ddens{b}) \sum_{j=1}^i
(1\!-\!2\perr(\Ddens{a}^{\cconv i-j} \cconv \Ddens{b}^{\cconv j-1})) \\
& = d(\Ddens{a} , \Ddens{b}) \sum_{j=1}^i
(1\!-\!2\perr(\Ddens{a}))^{i-j}
(1\!-\!2\perr(\Ddens{b}))^{j-1} \\
& \le d(\Ddens{a} , \Ddens{b}) \sum_{j=1}^i
(1\!-\!\batta^2(\Ddens{a}))^{\frac{i-j}{2}}
(1\!-\!\batta^2(\Ddens{b}))^{\frac{j-1}{2}}.
\end{align*}

\item {\em Regularity wrt DE}: Follows from properties
(\ref{lem:blmetricregularvconv}) and (\ref{lem:blmetricregularcconv}).

\item {\em Wasserstein Bounds Battacharyya and Entropy:}
Let $g$ be a positive function on $[0,1]$ and let $f$ be a $C^2$ concave
decreasing function on $[0,1].$ Then, for any $c \ge |g|_\infty,$
\[
-\int_0^1 f'(x) g(x) \dee x \le c \Bigl(
f(1-\frac{1}{c}\int_0^1 g(z) \dee z) - f(1)
\Bigr)\,.
\]
Before proving the inequality let us use it to establish the stated bounds.
Set $g(z) = |\absDdens{B}(z) - \absDdens{A}(z)|.$ Then
$|g|_\infty \le 1$ and 
$\int_0^1 g(z) \text{d}z = d(\Ddens{a},\Ddens{b}).$
Now, for the Battacharyya bound let $f(z) = \sqrt{1-z^2}$ and note
\begin{align*}
|\batta(\Ddens{b}) - \batta(\Ddens{a})|
&=
\Big|\int_0^1 f(z) (\Ddens{b}(z) - \Ddens{a}(z)) \text{d}z\Big| \\
&=
\Big|-\int_0^1 f'(z) (\absDdens{B}(z) - \absDdens{A}(z)) \text{d}z\Big| \\
&\le
-\int_0^1 f'(z) g(z)\text{d}z\,.
\end{align*}
We obtain
\begin{align*}
|\batta(\Ddens{b}) - \batta(\Ddens{a})|
& \le
\sqrt{1-(1-d(\Ddens{a},\Ddens{b}))^2} \\
& =
\sqrt{d(\Ddens{a},\Ddens{b})}
\sqrt{2-d(\Ddens{a},\Ddens{b})}\,.
\end{align*}
For the entropy case we set $f(z) = h_2(\frac{1-z}{2}).$
The same argument as above yields
\begin{align*}
|\entropy(\Ddens{b}) - \entropy(\Ddens{a})|
& \le
h_2(\frac{d(\Ddens{a},\Ddens{b})}{2}) \\
& \le
\frac{1}{\ln 2}\sqrt{d(\Ddens{a},\Ddens{b})}
\sqrt{2-d(\Ddens{a},\Ddens{b})}\,.
\end{align*}

We prove the stated inequality.
Let us define
\[
\hat{g}(z) =c \ind_{\{z \ge 1 - \frac{1}{c} \int_0^1 g(x)\text{d}x\}}\,,
\]
where $c \ge |g|_\infty.$
For each $z \in [0,1]$ we have
\(
\int_0^1 
(g(z) - \hat{g}(z)) \text{d}z\ge 0
\)
with equality at $z=1.$ Hence,
\begin{align*} 
0 & \ge
\int_0^1 f''(z)
\Bigl(
\int_0^z
(g(x) - \hat{g}(x))
\text{d}x\Bigr)
\text{d}z \\
& =
-\int_0^1  f'(z)
(g(z) - \hat{g}(z))\text{d}z.
\end{align*}
This yields
\begin{align*}
-\int_0^1  f'(z)
g(z)\text{d}z
& \le
-\int_0^1 f'(z)
\hat{g}(z))\text{d}z \\
& =
c \bigl(
f(1-\frac{1}{c}\int_0^1 g(x) \text{d}x) - f(1)
\bigr)\,.
\end{align*}

\item {\em Battacharyya Sometimes Bounds Wasserstein}:
Since the cumulative $|D|$-distribution of $\Delta_0$ is equal to $1$ on $[0, 1]$, 
the maximum possible value, we have
\begin{align}
d(\Ddens{a},\Delta_0)
& = \int_0^1 (1-\absDdist{A}(z))\text{d}z \nonumber \\
& = 1 - 2\perr (\Ldens{a}) 
\le \sqrt{1-\batta (\Ddens{a})^2}\,.\label{eq:disttodelta0}
\end{align}
Similarly, since the cumulative $|D|$-distribution of $\Delta_1$ is $0$
on $[0,1),$ we have
\begin{align}
d(\Ddens{a},\Delta_1)
 = \int_0^1 \absDdist{A}(z)\text{d}z 
& = 2\perr (\Ddens{a})  \le \batta (\Ddens{a})\,. \label{eq:disttodelta1}
\end{align}

\end{enumerate}

\section{Wasserstein Metric and Degradation -- 
Lemma~\ref{lem:degradationandwasserstein}}\label{sec:blmetricversusdegradation}
{\em Proof}:
\renewcommand\theenumi{\roman{enumi}} 
\renewcommand{\labelenumi}{(\roman{enumi})}
\begin{enumerate}
\item{\em Wasserstein versus Degradation}: 
Let $f$ be a function of bounded total variation on $[0,1].$
(This implies that $f$ has left and right limits.)
Note that we include $|f(0-)|$ and $|f(1+)|$ in the definition of total variation,
which we denote by $\int_0^1 | f'(x) | \text{d}x.$
Define $F(x) = \int_0^x f(z) \text{d}z.$
We claim that if $F \ge 0$ then
\[
\Bigl(\int_0^1 F(x) \text{d}x\Bigr)\Bigl(\int_0^1 | f'(x) | \text{d}x\Bigr) 
\ge \frac{1}{2} \Bigl(\int_0^1 | f(x) | \text{d}x\Bigr)^2
\]
This claim implies statement (\ref{lem:blmetricdegradation}) by setting
$f(z) = (\absDdist{b}(1-z)-\absDdist{a}(1-z))$ and noting that, in this case,
$\int_0^1 |f'(z)| \text{d}z \le 2.$

We now prove the claim.  Let $S$ be the set of points $x$ in $[0,1],$ including the endpoints,
where $f(x-) f(x+) \le 0.$ Note that $S$ is closed and we may assume $f=0$ on $S.$
The complement of $S$ is a collection of disjoint open intervals such that $f$ is either
strictly positive or strictly negative in each interval.
Consider the subset of intervals on which $f$ is strictly negative.
Without loss of generality we may take this collection to be finite.
Indeed, suppose there are countably infinitely many such intervals  $J_1, J_2, ...$
Define an approximation $f_k$ by setting $f_k(x) = -f(x)$ for $x \in \cup_{i=k+1}^\infty J_i$
and $f_k(x) = f(x)$ otherwise.
Then $F_k(x) = \int_0^x f_k(z) \text{d}z \ge F(x) \ge 0$ and
$F_k \rightarrow F$ uniformly.  Furthermore, 
$\int_0^1 |f_k(x)| = \int_0^1 |f(x)|$ and
$\int_0^1 |f'_k(x)|$ converges to $\int_0^1 |f'(x)|$ from below.

By taking unions of intervals as necessary we can find an increasing
sequence $0=x_1,x_2,...,x_{2k},x_{2k+1}=1$ such that on $I_i = [x_i,x_{i+1}]$ we have
$f \ge 0$ for $i$ odd and $f \le 0$ for $i$ even.
The sequence of points $x_i$ is strictly increasing except possibly for the last pair
which may coincide at $1$.
Define
\begin{align*}
h_i & = \max_{x \in I_i} |f(x)|\,, \\
w_i & = |\int_{I_i} f(x) \text{d}x|/h_i = \int_{I_i} |f(x)| \text{d}x /h_i\,,
\end{align*}
where $w_i = 0$ if $h_i = 0.$  Note that $w_i \le |I_i|.$
We have
\begin{align*}
\int_0^1 |f'(x)| \text{d}x  & \ge 2 \sum_{i=1}^{2k} h_i\, \int_0^1 |f(x)| \text{d}x  =  \sum_{i=1}^{2k} h_i w_i\,.
\end{align*}
We claim in addition that
\[
2 \int_0^1 F(x) \text{d}x \ge  \sum_{i=1}^{2k} h_i w^2_i\,.
\]
The desired result then follows from Jensen's inequality
\[
\frac{\sum_{i=1}^{2k} h_i w^2_i}
{\sum_{i=1}^{2k} h_i }
\ge
\Biggl(\frac{\sum_{i=1}^{2k} h_i w_i}
{\sum_{i=1}^{2k} h_i }\Biggr)^2\,.
\]

Now, note that
\[
\int_0^1 F(x) = \int_0^1 (1-x) f(x) \text{d}x\,.
\]
It is straightforward to show that for odd $i$ we have
\[
 \int_{I_i} (1-x) f(x) \text{d}x \ge \frac{1}{2}((\bar{x}_{i+1}+ w_{i})^2 - \bar{x}^2_{i+1}) h_i
\]
and for even $i$ we have 
\[
\int_{I_i} (1-x) f(x) \text{d}x \ge -\frac{1}{2}(\bar{x}_{i}^2 - (\bar{x}_{i} - w_i)^2) h_i
\]
where $\bar{x}=1-x.$
Indeed, for odd $i$ we have 
$\int_{x_i}^{z} (f(x)-h_i 1_{\{x \ge x_{i+1}-w_i\}}) \text{d}x \ge 0$
for all $z\in [x_i,x_{i+1}]$ with equality at $z=x_{i+1}.$
Hence
\begin{align*}
\int_{x_i}^{x_{i+1}} \!\!\!\!\!& (1-x) (f(x)-h_i 1_{\{x \ge x_{i+1}-w_i\}})\text{d}{x} \\
&  =
\int_{x_i}^{x_{i+1}} 
\Bigl(\int_{x_i}^z (f(x)-h_i 1_{\{x \ge x_{i+1}-w_i\}}) \text{d}x\Bigr)
\text{d}z \ge 0,
\end{align*}
which gives
\begin{align*}
\int_{x_i}^{x_{i+1}} & (1-x) f(x)\text{d}{x} \\
&  \ge
\int_{x_i}^{x_{i+1}}  (1-x) h_i 1_{\{x \ge x_{i+1}-w_i\}})\text{d}{x} \\
& = -\frac{1}{2} h_i (\bar{x}^2_{i+1} - (\bar{x}_{i+1} + w_i)^2)\,.
\end{align*}
The argument for even $i$ is similar.
We obtain
\begin{align*}
& 2 \int_{I_{2i-1} \cup I_{2i}} (1-x) f(x) \text{d}x \ge \\
& h_{2i-1} w_{2i-1}^2 + h_{2i} w_{2i}^2 + 
2(h_{2i-1} w_{2i-1} - h_{2i} w_{2i}) \bar{x}_{2i}
\end{align*}
Defining $\bar{x}_{2k+2}=0$ for notational convenience,
we can write
\begin{align*}
2 & \int_0^1 (1-x) f(x) \text{d}x  - \sum_{i=1}^{2k} h_i w_i^2 \\
& \ge 
2 \sum_{i=1}^k (h_{2i-1} w_{2i-1} - h_{2i} w_{2i}) \bar{x}_{2i} \\
& = 
2 \sum_{i=1}^k \Biggl( \sum_{j=1}^i (h_{2j-1} w_{2j-1} - h_{2j} w_{2j}) \Biggr)
(\bar{x}_{2i}-\bar{x}_{2(i+1)}) \\
& = 
2 \sum_{i=1}^k F(x_{2i+1}) (\bar{x}_{2i}-\bar{x}_{2(i+1)}) 
 \ge 0,
\end{align*}
and the proof is complete.

\item
{\em Entropy and Battacharyya Bound Wasserstein}:
Let us first focus on the inequality between the Wasserstein distance and the Battacharyya parameter. 
From point (i) we know that
\begin{align*}
d(\Ldens{a}, \Ldens{b}) & \leq 2 \sqrt{\int_0^1 z (\absDdist{B}-\absDdist{A}) \text{d}z} \\ 
& = 2 \sqrt{
\int_0^1 \Bigl(\int_z^1 (\absDdist{B}(x)-\absDdist{A}(x))\text{d}x \Bigr) \text{d}z
}. 
\end{align*}
By integrating by parts twice we have
\begin{align}\label{eq:battasecondderivative}
\batta(\Ddens{a}) & =
\int_0^1 \sqrt{1-z^2} \absDdens{a}(z) \text{d}z \nonumber \\
& = 
\int_0^1 (1-z^2)^{-\frac{3}{2}}\Bigl(\int_z^1 \absDdist{A}(x) \text{d}x \Bigr) \dee z,
\end{align}
and
\begin{align*}
\entropy(\Ddens{a}) & =
\int_0^1 h_2\Bigl(\frac{1-z}{2}\Bigr) \absDdens{a}(z) \text{d}z \\
& = 
\frac{1}{\ln 2}\int_0^1 (1-z^2)^{-1}\Bigl(\int_z^1 \absDdist{A}(x) \text{d}x \Bigr) \dee z.
\end{align*}
Thus we obtain
\begin{align*}
\int_0^1 \!\!\!\!z (\absDdist{B}\!-\!\absDdist{A}) \text{d}z 
&\leq (\ln 2) (\entropy(\Ldens{b}) \!-\! \entropy(\Ldens{a})) \leq \batta(\Ldens{b}) \!-\! \batta(\Ldens{a})\,. 
\end{align*}
This yields
\begin{align*}
d(\Ldens{a}, \Ldens{b}) 
& \leq 2 \sqrt{(\ln 2) (\entropy(\Ldens{b}) - \entropy(\Ldens{a}))} 
\leq 2 \sqrt{\batta(\Ldens{b}) - \batta(\Ldens{a})}.
\end{align*}
For the final inequality first note that
\(
g(z) = (1-z^2)^{-1}\Bigl(\int_z^1 (\absDdist{B}(x)-\absDdist{A}(x)) \text{d}x \Bigr)
\leq  1\,.
\)
Let 
\(
v = \int_0^1 g(z) \text{d}z  = (\ln 2) {(\entropy(\Ldens{b}) - \entropy(\Ldens{a}))}\,.
\)
It follows that
\begin{align*}
{\batta(\Ldens{b}) - \batta(\Ldens{a})} 
& = \int_0^1 \frac{1}{\sqrt{1-z^2}} g(z) \dee z \\
&\leq \int_{1-v}^1 \frac{1}{\sqrt{1-z^2}}  \dee z 
 = \arccos (1-v) \\
& \leq \frac{\pi}{2}\sqrt{v} 
 = \frac{\pi}{2}\sqrt{\ln 2 {(\entropy(\Ldens{b}) - \entropy(\Ldens{a}))} } \\
& \leq \sqrt{2 {(\entropy(\Ldens{b}) - \entropy(\Ldens{a}))} }\,.
\end{align*}

\item {\em Continuity for Ordered Families}: 
Assume that $\Ldens{a} \prec \Ldens{b}$. 
From point (\ref{lem:blmetricentropy}) we know that
\begin{align*}
d(\Ldens{a}, \Ldens{b}) & \leq 2 \sqrt{\batta(\Ldens{b}) - \batta(\Ldens{a})},
\end{align*}
and the continuity follows from the continuity of the Battacharyya parameter for smooth channel families. 
\end{enumerate}

\section{Sufficient Condition for Continuity -- Lemma~\ref{lem:condforcontinuity}, 
Continuity for Large Entropies -- Lemma~\ref{lem:continuityforlargeentropy},  
Universal Bound on Continuity Region -- Lemma~\ref{lem:FPforlargeentropy}} \label{app:bounds}
\blemma[Bound on $\batta$]\label{lem:bbound}
Consider two $L$-densities $\Ldens{a}_1 \prec \Ldens{a}_2$.
Then, for any degree distribution $\redge(\cdot)$,
\begin{align*} 
(\batta(\rho(\Ldens{a}_{2}))-\batta(\rho(\Ldens{a}_{1}))) & \leq
(\batta(\Ldens{a}_2)\! -\! \batta(\Ldens{a}_1)) 
\rho'(1 - \batta^2({\Ldens{a}_1})).
\end{align*}
\end{lemma}
\begin{IEEEproof}
Let $\Ldens{a}$ be a density and let $U$ be distributed according to the
corresponding $|D|$-distribution.  By Jensen's inequality we have
\begin{align*}
\batta(\Ldens{a}) &= \expectation [(1-U^2)^\frac{1}{2}]
\le (\expectation [1-U^2])^\frac{1}{2}
= (1-\moment_{\Ldens{a}, 1})^\frac{1}{2},
\end{align*}
where we have introduced the notation $\moment_{\Ldens{a}, k} = \expectation[U^{2 k}].$
The Taylor expansion 
\(
(1-u^2)^\frac{1}{2} = 1-\sum_{k=1}^\infty \alpha_{k} u^{2k}
\)
gives
\[
\batta(\Ldens{a}) = 1-\sum_{k=1}^\infty \alpha_{k} \moment_{\Ldens{a}, k}\,
\]
where $\alpha_{k}$ is positive for each $k$.
The functionals $\moment_{\Ldens{a}, k}$ have the important (Fourier) property
$\moment_{\rho(\Ldens{a}), k} = \rho(\moment_{\Ldens{a}, k})$ \cite{RiU08}.\footnote{We introduced here only
the even moments, since only these are needed. The odd moments are multiplicative as well.}
Since $u^k$ is convex and increasing for $k\ge 1,$
we have $\moment_{\Ldens{a}_1, k} \ge m_{\Ldens{a}_2, k}$.
Hence,
\begin{align*}
&\quad\batta(\rho(\Ldens{a}_2))-\batta(\rho(\Ldens{a}_1)) \\
&=
\sum_{k=1}^\infty \alpha_{k} \bigl(\rho(\moment_{\Ldens{a}_1, k})-\rho(\moment_{\Ldens{a}_2, k})\bigr) \\
&\le
\sum_{k=1}^\infty \alpha_{k} \rho'(\moment_{\Ldens{a_1}, k})
\bigl(\moment_{\Ldens{a}_1, k}-\moment_{\Ldens{a}_2, k}\bigr) \\
&\le
\rho'(\moment_{\Ldens{a_1}, 1})\bigl(\sum_{k=1}^\infty \alpha_{k} 
(\moment_{\Ldens{a}_1, k}-\moment_{\Ldens{a}_2, k})\bigr) \\
&\le
\rho'(1-\batta^2(\Ldens{a_1}))\bigl(\sum_{k=1}^\infty \alpha_{k} 
(\moment_{\Ldens{a}_1, k}-\moment_{\Ldens{a}_2, k})\bigr) \\
&=
\rho'(1-\batta^2(\Ldens{a_1}))
\bigl(\batta(\Ldens{a}_2)-\batta(\Ldens{a}_1)\bigr).
\end{align*}
\end{IEEEproof}

\blemma[Bound on Derivative of $\batta$]\label{lem:dbbound}
Consider two $L$-densities $\Ldens{a}_1 \prec \Ldens{a}_2$.  
Let $0 \leq \ih_1\leq \ih_2 \leq 1$ and let 
$\Ldens{c}_{\ent_1}$ and $\Ldens{c}_{\ent_2}$ 
denote the two corresponding channels from an ordered family $\{\Ldens{c}_\ent\}$.
Set $B_{\ih_i} = \batta(\Ldens{c}_{\ent_i}\}$ for $i=1,2$.
Then, for any dd pair $(\ledge, \redge)$
\begin{align*}
|\batta(\Tc_{\ih_1}(\Ldens{a}_{1}))-&\batta(\Tc_{\ih_2}(\Ldens{a}_{2}))|
\le \\ &\alpha\, 
|\batta(\Ldens{a}_{1})-\batta(\Ldens{a}_{2})|
+|B_{\ih_1}-B_{\ih_2}|\, ,\nonumber
\end{align*}
where
$\alpha = B_{\ih_1}\lambda'(1) \rho'(1 - \batta^2({\Ldens{a}_1}))$.
\end{lemma}

\begin{IEEEproof}
First, since $\batta(\Ldens{a} \vconv \Ldens{b})=\batta(\Ldens{a})
\batta(\Ldens{b})$, $\batta(\Tc_{\ent}(\Ldens{a}))= B_{\ent}
\lambda(\batta(\rho(\Ldens{a})))$.
Second, since 
$0\le \lambda(x)\le 1$ and $\lambda'(x)\le \lambda'(1)$,
$|\lambda(x_1)-\lambda(x_2)| \leq \lambda'(1) |x_1-x_2|$ for all $x_1, x_2 \in [0, 1]$.
This implies that $|\batta(T_{\ih_1}(\Ldens{a}_{1}))\!-\!\batta(T_{\ih_1}(\Ldens{a}_{2}))|$
is upper bounded by $\lambda'(1)B_{\ih_1} 
|\batta(\rho(\Ldens{a}_{1}))-\batta(\rho(\Ldens{a}_{2}))|$.
Using the triangle inequality, we get
\begin{align}
& |\batta(T_{\ih_1}(\Ldens{a}_{1}))-\batta(T_{\ih_2}(\Ldens{a}_{2}))| \nonumber \\
& \leq
|\batta(T_{\ih_1}(\Ldens{a}_{1}))\!-\!\batta(T_{\ih_1}(\Ldens{a}_{2}))|\!+\! 
|\batta(T_{\ih_1}(\Ldens{a}_{2}))\!-\!\batta(T_{\ih_2}(\Ldens{a}_{2}))| \nonumber \\
& \leq \label{eq:Contraction0} \lambda'(1)B_{\ih_1} 
|\batta(\rho(\Ldens{a}_{1}))-\batta(\rho(\Ldens{a}_{2}))|
+|B_{\ih_1}-B_{\ih_2}|.
\end{align}
The first term above can be bounded using Lemma~\ref{lem:bbound}. 
\end{IEEEproof}

{\em Proof of Lemma~\ref{lem:condforcontinuity}}: Denote by
$\Ldens{x}_{\ih}$ the BP FP for the channel $\Ldens{c}_{\ih}$ and
notice that any other FP $\Ldens{x}'_{\ih}$ for the same channel
is necessarily upgraded with respect to $\Ldens{x}_{\ih}$, i.e.,
$\Ldens{x}'_{\ih}\prec \Ldens{x}_{\ih}$.  Indeed, $\Ldens{x}'_{\ih}
\prec \Delta_0$. By applying the density evolution operator, we
deduce that $\Ldens{x}'_{\ih} \prec \Ldens{x}^{(\ell)}_{\ih}$, where
$\Ldens{x}^{(\ell)}_{\ih}$ is the density after $\ell$ iterations
of BP.  By taking the limit $\ell\to\infty$ we get $\Ldens{x}'_{\ih}
\prec \Ldens{x}_{\ih}$.  We conclude that if $\Ldens{x}_\ih$ does
not satisfy (\ref{eq:UniquenessCond}) then neither can any other
FP for the same channel.

Assume on the other hand that $\Ldens{x}_\ih$ satisfies
(\ref{eq:UniquenessCond}) and that there exists a distinct FP for
the same channel, necessarily upgraded with respect to $\Ldens{x}_{\ih}$,
also satisfying (\ref{eq:UniquenessCond}). Call this density
$\Ldens{x}_{\ih}'$.  In this case,
\begin{align*}
|\batta(\Ldens{x}_{\ih})-\batta(\Ldens{x}'_{\ih})|  
& \stackrel{\text{$\Ldens{x}_{\ih}, \Ldens{x}'_{\ih}$ are FPs}}{=} |\batta(T_{\ih}(\Ldens{x}_{\ih}))-\batta(T_{\ih}(\Ldens{x}'_{\ih}))| \\
& \stackrel{\text{Lemma~\ref{lem:dbbound}}}{\leq} (1-\delta) |\batta(\Ldens{x}_{\ih})-\batta(\Ldens{x}'_{\ih})|,
\end{align*}
a contradiction since $\delta>0$.
The above argument shows that there can be at most one FP with this
property and that this FP must be the forward DE one.

Let us now prove Lipschitz continuity, c.f. (\ref{eq:Lipschitz}).
Under our hypotheses,
the two FPs $\Ldens{x}_{\ih_1}$ and $\Ldens{x}_{\ih_2}$ are
the BP FPs for channels $\Ldens{c}_{\ih_1}$ and $\Ldens{c}_{\ih_2}$.
Consider therefore the respective BP sequences (starting with $\Delta_0$)
$\{\Ldens{x}^{(\ell)}_{\ih_1}\}_{\ell\ge 0}$,
$\{\Ldens{x}^{(\ell)}_{\ih_2}\}_{\ell\ge 0}$. For each $\ell$,
$\Ldens{x}^{(\ell)}_{\ih_1}$ (respectively $\Ldens{x}^{(\ell)}_{\ih_2}$)
is degraded with respect to $\Ldens{x}_{\ih_1}$ (respectively
$\Ldens{x}_{\ih_2}$),
and therefore satisfies the  condition (\ref{eq:UniquenessCond}),
since the latter does.
Furthermore, assuming without loss of generality
$\ent_2>\ent_1$, we have $\Ldens{x}^{(\ell)}_{\ih_2}\succ 
\Ldens{x}^{(\ell)}_{\ih_1}$.
Let $\delta_{(\ell)} \defas
|\batta(\Ldens{x}^{(\ell)}_{\ih_1})-\batta(\Ldens{x}^{(\ell)}_{\ih_2})|$.
Since DE is initialized with $\Delta_0$, we have $\delta_0=0$. By
applying Lemma \ref{lem:dbbound} we get $\delta_{\ell+1}\le
(1-\delta)\, \delta_{\ell} +|B_{\ih_1}-B_{\ih_2}|$, and therefore
\begin{align*}
\delta_{\ell}\le (1+(1-\delta)+(1-\delta)^2&+\cdots+(1-\delta)^{\ell-1})\,	 
|B_{\ih_1}-B_{\ih_2}| \\ & \le \frac{1-(1-\delta)^{\ell}}{1-(1-\delta)}\, |B_{\ih_1}-B_{\ih_2}|\, .
\end{align*}
The thesis follows by taking the $\ell\to\infty$ limit.\qed

{\em Proof of Lemma~\ref{lem:continuityforlargeentropy}}:
For $\beta \in [0, 1]$ define 
\begin{align} \label{equ:bhbound}
g(\beta) 
& = \frac{\beta}{(1-(1-\beta^2)^{\dr-1})^{\frac{\dl-1}{2}}}.
\end{align}
Note that $g(1)=1$ and that $g(\beta)$ is continuous.

Assume that we run forward DE with the channel $\Ldens{c}$ and that
$\batta(\Ldens{c})=g(\beta)$, for some $\beta \in [0, 1]$.  We then claim that
for the resulting FP $\Ldens{x}$, $\batta(\Ldens{x}) \geq \beta$.
To see this, let $\{\Ldens{x}^{(\ell)}\}$ denote the sequence of densities with
$\Ldens{x}^{(0)}=\Delta_0$. Using the Battacharyya functional on
the DE equations and then extremes of information combining bounds
we see that
\begin{align*}
\batta(\Ldens{x}^{(\ell)})  \geq \batta(\Ldens{c}) \Bigl(1-(1-\batta(\Ldens{x}^{(\ell-1)})^2)^{\dr-1} \Bigr)^{\frac{\dl-1}{2}}.
\end{align*}
Note that if $\batta(\Ldens{x}^{(\ell-1)}) \geq \beta$ then
\begin{align*}
\batta(\Ldens{x}^{(\ell)}) & \geq  
 \batta(\Ldens{c}) \Bigl(1-(1-\batta(\Ldens{x}^{(\ell-1)})^2)^{\dr-1} \Bigr)^{\frac{\dl-1}{2}} \\
 & \geq  g(\beta) 
 \Bigl(1-(1-\beta^2)^{\dr-1} \Bigr)^{\frac{\dl-1}{2}} = \beta.
\end{align*}
The induction is anchored by noting that $1=\batta(\Delta_0) \geq
\beta$ since we assumed that $\beta \in [0, 1]$.  In summary, for
each $\beta \in (0, 1]$, equation~\eqref{equ:bhbound} gives us the
lower bound $\batta(\Ldens{x}) \geq \beta$ for the FP $\Ldens{x}$
of forward DE with the channel $\batta(\Ldens{c})=g(\beta)$. Another
way of interpreting \eqref{equ:bhbound} is that it gives us an upper
bound on $\batta(\Ldens{c})$ if we fix $\batta(\Ldens{x})=\beta$.

According to Lemma~\ref{lem:condforcontinuity}, the GEXIT curve is
Lipschitz continuous (in the Battacharyya parameter) at the FP
$(\Ldens{c}_{\ih}, \Ldens{x}_{\ih})$ if
\begin{align} \label{equ:lipschitz}
\batta(\Ldens{x}_{\ih}) & \geq \sqrt{1-(\batta(\Ldens{c}_{\ih}) (\dl-1)(\dr-1))^{-\frac{1}{\dr-2}}}.
\end{align}
Note that (\ref{equ:bhbound}) as well as \eqref{equ:lipschitz}
(if we interpret the inequality as an equality) give rise to 
curves in the $(\batta(\Ldens{c}), \batta(\Ldens{x}))$ space.
Inserting (\ref{equ:bhbound}) into \eqref{equ:lipschitz} gives
us the points where these two curves cross.  If we set
$\sqrt{x}=\batta(\Ldens{x}_{\ih})$, massage the resulting expression,
and set it to $0$, we get  \eqref{equ:existence}.
As shown in the subsequent Lemma~\ref{lem:technicallemma1}, 
\eqref{equ:existence} has a unique positive solution in
$(0, 1]$ (i.e., the two curves only cross once), $b(x)<a(x)$ after this solution, and $g(\beta)$ is an increasing function
above this solution. The situation
is shown in Figure~\ref{fig:crossings}.
\begin{figure}[htp]
\centering
\input{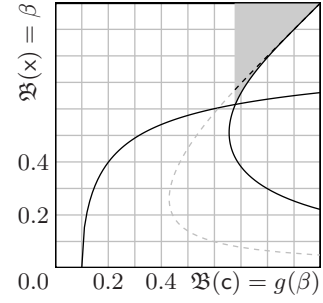}
\caption{\label{fig:crossings} 
Consider the $(3, 6 )$-regular ensembles.
The $C$-shaped curve on the right is (\ref{equ:bhbound}). This curve
has two branches. The top branch gives a tighter bound and pairs
$(\batta(\Ldens{c}), \batta(\Ldens{x}))$ generated by DE must lie
above this branch.  The second curve, given by \eqref{equ:lipschitz},
denotes the region (above the curve) where there can be at most one
FP. The GEXIT curve for the BEC is shown as a dashed curve.
The portion of this GEXIT curve starting at $(1, 1)$ which is
contained in the gray area is guaranteed to be smooth.  
} \end{figure}

Inserting this solution back into  \eqref{equ:bhbound} gives us a
value of $\batta(\Ldens{c}_{\ih})$ so that for all channels with
larger Battacharyya constant the densities generated by forward DE
are non-trivial and are Lipschitz continuous.  This insertion is
equivalent to evaluating $c(x)$ at $x=\xLE$.

Let us finish the proof by showing that $\batta(\Ldens{x}_{\ih}) \geq \xunstab(1)$
for all $\ih > \entLE$. Indeed, from the extremes of 
information combining we have
$$
\batta(\Ldens{x}_{\ih}) \leq (1 - (1 -
\batta(\Ldens{x}_{\ih})^{\dr-1}))^{\dl-1},
$$
where above we have replaced $\batta(\Ldens{c}_{\ih}) \leq 1$. Above inequality
implies that either $\batta(\Ldens{x}_{\ih})=0$ or $\batta(\Ldens{x}_{\ih})\in
[\xunstab(1), 1]$. From the
above discussion we know that for $\ih > \entLE$ the densities generated by forward DE are
non-trivial. Putting things together we conclude that
$\batta(\Ldens{x}_{\ih})\geq \xunstab(1)$. 

\begin{lemma}[Unique Zero]\label{lem:technicallemma1}
For $\dr \geq \dl\geq 3$ let
\begin{align*}
a(x) & = (1-(1-x)^{\dr-1})^{\dl-1}, \\
b(x) & = (\dl-1)^2(\dr-1)^2 x (1-x)^{2(\dr-2)}, \\
c(x) & = \sqrt{x/a(x)}.
\end{align*}
Then there is a unique solution of $a(x) = b(x)$ in the interval $(0,1]$, call it $\tilde{x}$.
Further, $c(x)$ is increasing for $x \in [\tilde{x}, 1]$.
\end{lemma}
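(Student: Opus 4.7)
I would split the argument into three blocks: existence of a zero $\tilde x$ of $a-b$ in $(0,1]$, uniqueness of that zero, and monotonicity of $c$ on $[\tilde x,1]$. Existence is immediate: $a(0)=b(0)=0$, $a'(0)=0<(\dl-1)^2(\dr-1)^2=b'(0)$, and $a(1)=1>0=b(1)$, so $a-b$ is strictly negative on some initial interval $(0,\delta)$ and positive at $1$; the intermediate value theorem then produces a root in $(0,1)$.

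For uniqueness, the plan is to show that $F(x):=\log(a(x)/b(x))$ is strictly increasing on $(0,1)$; since $F(0^+)=-\infty$ and $F(1^-)=+\infty$, the equation $F=0$ then has exactly one solution. Differentiating gives
\[
F'(x)=(\dl-1)\,\frac{(\dr-1)(1-x)^{\dr-2}}{1-(1-x)^{\dr-1}}-\frac{1}{x}+\frac{2(\dr-2)}{1-x},
\]
and the elementary convexity bound $1-(1-x)^{\dr-1}\le(\dr-1)x$ reduces the desired inequality $F'>0$ to
\[
f(x):=(\dl-1)(1-x)^{\dr-2}+\frac{2(\dr-2)x}{1-x}>1.
\]
For this a straightforward one-variable analysis suffices: for $\dl=3$, $f$ is monotone increasing on $[0,1]$ with $f(0)=2$; for $\dl\ge 4$, $f$ has its unique interior minimiser $x^*$ satisfying $(1-x^*)^{\dr-1}=2/(\dl-1)$, and there $f(x^*)=2(1+(\dr-2)x^*)/(1-x^*)\ge 2$. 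Hence $f\ge 2>1$ throughout $(0,1)$ whenever $\dl\ge 3$.

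For the monotonicity of $c$, I would introduce the auxiliary functions $p(x):=1-(1-x)^{\dr-1}$ and $q(x):=(\dl-1)(\dr-1)x(1-x)^{\dr-2}$, so that $a=p^{\dl-1}$ and $a'=(\dl-1)p^{\dl-2}p'$. Differentiating $c^2=x/a$ gives $\operatorname{sgn}c'(x)=\operatorname{sgn}(a(x)-xa'(x))=\operatorname{sgn}(p(x)-q(x))$, so the task reduces to showing $p\ge q$ on $[\tilde x,1]$. The identity
\[
(p-q)'(x)=(\dr-1)(1-x)^{\dr-3}\bigl[(2-\dl)+x\bigl((\dl-1)(\dr-1)-1\bigr)\bigr]
\]
exposes a linear factor that is negative at $0$ and positive at $1$, so $p-q$ has a single interior critical point, necessarily a minimum; combined with $(p-q)(0)=0$ and $(p-q)(1)=1$, this forces a unique zero $x_0\in(0,1)$ of $p-q$ with $p\ge q$ on $[x_0,1]$. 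It then remains to verify $\tilde x\ge x_0$, which I would do by rewriting $a(\tilde x)=b(\tilde x)$ as $q(\tilde x)=\sqrt{\tilde x}\,p(\tilde x)^{(\dl-1)/2}$ and observing that both $\sqrt{\tilde x}\le 1$ and $p(\tilde x)^{(\dl-1)/2}\le p(\tilde x)$ (the latter because $p\le 1$ and $(\dl-1)/2\ge 1$ under $\dl\ge 3$); this yields $q(\tilde x)\le p(\tilde x)$, hence $\tilde x\ge x_0$.

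The hard part will be the global lower bound $f>1$ that drives uniqueness: it is the only place where a nontrivial one-variable minimisation has to be carried out, and it is also where the hypothesis $\dl\ge 3$ enters in an essential way (the same hypothesis reappears in the comparison $p^{(\dl-1)/2}\le p$ needed for step three). Once that bound is secured, uniqueness and the inclusion $\tilde x\ge x_0$ are short consequences of the calculus facts above, and the remainder of the argument is boundary bookkeeping via the intermediate value theorem.
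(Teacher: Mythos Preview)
Your proof is correct, and it takes a genuinely different route from the paper.

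For uniqueness, the paper substitutes $y=(1-x)^{\dr-1}$ to transform $a=b$ into $A(y)=B(y)$ with $A(y)=(1-y)^{\dl-1}/(\dl-1)^2$ and $B(y)=(\dr-1)^2(y^{2-2/(\dr-1)}-y^{2-1/(\dr-1)})$, and then argues via separate convexity/unimodality properties of $A$ and $B$ on two subintervals, checking a specific value $B(1/2)$ to locate the crossing. Your monotonicity of $F=\log(a/b)$ is cleaner: the convexity bound $1-(1-x)^{\dr-1}\le(\dr-1)x$ reduces everything to the explicit minimisation of $f$, and the critical-point substitution $(1-x^*)^{\dr-1}=2/(\dl-1)$ makes the bound $f\ge 2$ transparent.

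For the monotonicity of $c$, both proofs identify the same sign condition: the paper's factor $q(x)=1-(1-x)^{\dr-2}((\dl\dr-\dl-\dr)x+1)$ is exactly your $p-q$. The paper shows $q$ has a unique root via Descartes' rule of signs; you do it by the explicit derivative $(p-q)'$ and its linear factor, which is more direct. The real divergence is in placing $\tilde x$ on the correct side of that root. The paper introduces an auxiliary point $\hat x$ (the largest $x$ with $b(\hat x)=1$), argues $\hat x\le\tilde x$, and then shows $q(\hat x)\ge 0$ through a second auxiliary point $\breve x$ and a chain of estimates. Your identity $q(\tilde x)=\sqrt{\tilde x}\,p(\tilde x)^{(\dl-1)/2}\le p(\tilde x)$ bypasses all of that in one line, using only $p\le 1$ and $(\dl-1)/2\ge 1$.

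In short, your argument is more elementary and self-contained; the paper's approach buys nothing extra here, and your version would be a strict simplification.
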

\begin{IEEEproof}
Set $L=\dl-1$ and $R=\dr-1$, multiply the equation by $1/L^2$ and set
$y=(1-x)^R$. This gives the equivalent equation $A(y)=B(y)$, where $A(y)  =
(1-y)^L/L^2$, and $B(y)  = R^2 (y^{2-\frac2{R}}-y^{2-\frac1{R}})$.

The function $A(y)$ is (i) decreasing and convex for $L \geq 2$, (ii)
$A(0)=1/L^2>0$, (iii) $A(1)=0$. The function $B(y)$ is (i) increasing
for $y \in [0, y_1=(\frac{2R-2}{2R-1})^R]$, (ii) decreasing for $y \in
[y_1, 1]$, (iii) concave for $y \in [y_2=(\frac{(2R-2)(R-2)}{(2R-1)(R-1)})^R, 1]$,
and (iv) $B(0)=B(1)=0$. Note that $0 \leq y_2 < y_1$ since we assumed that $R \geq 2$.

We conclude that in the region $[0, y_1]$ there is exactly one solution, call it $\tilde{y}$:
there is at least one since $1/L^2=A(0)>B(0)=0$, whereas $A(y_1)<1/L^2 \stackrel{R\geq L\geq 2}{\leq}
R/8 <
R 2^{-3+\frac1R} <
R^2 2^{-2+\frac1R} (2^{\frac1{R}}-1) = 
B(\frac12)
\leq B(y_1)$ (since $y_1$ is the position where $B(y)$ is maximized);
and there is only one solution since in $[0, y_1]$, $A(y)$ is strictly decreasing,
whereas $B(y)$ is increasing. 

In the region, $y \in [y_1, 1)$ there can be no further solution
since $A(y_1)< B(y_1)$, $A(1)=B(1)=0$, and $A(y)$ is convex whereas
$B(y)$ is concave.

Note that $b(x)$ starts at $0$, then increases until it reaches its
maximum, and then decreases back to $0$, which it reaches at $x=1$.
Let $\hat{x}$ be the largest value within $[0, 1]$ so that $b(\hat{x})=1$
(we will verify shortly that this is well defined).
Since $b(\hat{x})=1$ but $a(x)\leq 1$ for all $x \in [0, 1]$, it
is clear that $\hat{x} \leq \tilde{x}$.  Note that $\tilde{x}$ is obtained from $\tilde{y}$. 
Recall that we want to
show that $c(x)$ is increasing for $x \in [\tilde{x}, 1]$.  We will
show the stronger statement that $c(x)$ is increasing for $x \in
[\hat{x}, 1]$.  This is equivalent to showing that $x/a(x)$ is
increasing in this range. Note that $(x/a(x))'=p(x) q(x)$, where
\begin{align*}
q(x)=1 - (1 - x)^{\dr - 2} ((\dl \dr -\dl- \dr) x+1),
\end{align*}
and $p(x)\geq 0$ for $x \in [0, 1]$.  The factor $q(x)$ can be
written as $y^{\dr - 1} (\dl \dr -\dl- \dr)- y^{\dr - 2} ((\dl \dr
-\dl- \dr+1)+1)$, where $y=1-x$.  This polynomial has two sign changes
and hence by Descarte's rule of signs at most two positive roots.
It follows that $q(x)$ has at most $2$ roots for $x \leq 1$. Since
$q(0)=0$ and $q(1)=1$, there must be exactly one root of $q(x)$ in
$(0, 1]$ and once the function is positive, it stays so within $[0,
1]$. It therefore suffices to prove that $q(\hat{x}) \geq 0$. By definition of $\hat{x}$
we have $(1-\hat{x})^{\dr-2}=\frac{1}{(\dl-1)(\dr-1) \sqrt{\hat{x}}}$.
We therefore have $q(\hat{x})=r(z)\mid_{z=\hat{x}}$, where
\begin{align*} 
r(z)
& = 1 \!-\! 
\frac{(\dl \dr \!-\!\dl \!-\!\dr) \sqrt{z}}{(\dl\!-\!1)(\dr\!-\!1)} -
\frac{1}{(\dl\!-\!1)(\dr\!-\!1) \sqrt{z}}.
\end{align*}
A quick check shows that $r(z)\geq 0$ for $z \in [\frac{1}{(\dl
\dr-\dl - \dr)^2}, 1]$.  The proof will be complete if we can show
that $\hat{x} \in [\frac{1}{(\dl \dr-\dl - \dr)^2}, 1]$. We do this
in two steps.  We claim that $\hat{x} \geq \breve{x}=\frac{c
\ln\sqrt{(\dl-1)(\dr-1)}}{\dr-2}$, where
$c=\frac{1}{1+\frac{\ln\sqrt{(\dl-1)(\dr-1)}}{\dr-2}}$, and that
$\breve{x} \in [\frac{1}{(\dl \dr-\dl - \dr)^2}, 1]$. The second claim is
immediate. To see the first,
\begin{align*}
b(\breve{x}) 
& \geq  (\dl\!-\!1)^2(\dr\!-\!1) c  \ln\sqrt{(\dl\!-\!1)(\dr\!-\!1)} e^{2(\dr\!-\!2) \ln(1-\breve{x})} \\
& \geq  (\dl\!-\!1)^2(\dr\!-\!1) c  \ln\sqrt{(\dl\!-\!1)(\dr\!-\!1)} e^{-\frac{2(\dr\!-\!2) \breve{x}}{1-\breve{x}}} \\
& =  \frac{(\dl\!-\!1)(\dr\!-\!2)  \ln\sqrt{(\dl\!-\!1)(\dr\!-\!1)}}{\dr-2+\ln\sqrt{(\dl-1)(\dr-1)}}  \geq 1 = b(\hat{x}).
\end{align*}
This shows that that the maximum of $b(x)$ in $[0, 1]$ is above $1$ and so $\hat{x}$ is
well defined. Since further, $b(x)$ is a unimodal function and $\hat{x}$ was defined to be the
largest value of $x \in [0, 1]$ so that $b(\hat{x})=1$ it follows that $\hat{x} \geq \breve{x}$, as claimed.
\end{IEEEproof}

{\em Proof of Lemma~\ref{lem:FPforlargeentropy}}: 
Let $a(x), b(x)$ and $c(x)$ be as defined in
Lemma~\ref{lem:continuityforlargeentropy}. We will provide an upper
bound on the unique solution of $a(x) = b(x)$. 
Notice that $a(x)$ represents the DE equations for a
BEC with parameter $\epsilon=1$.  Therefore, we know that for $x\geq
\xunstab(1)$, $a(x) \geq x$. We claim that $b(x)$ and $l(x)=x$
intersect only at one point in $(0, 1]$.  Indeed $b(x)=x$, $x \in (0, 1]$, is equivalent to 
\begin{align*}
x = 1 - ((\dl-1)(\dr-1))^{-\frac1{\dr-2}} \triangleq \overline{x}.
\end{align*}
Since $b(1)=0$, whereas $l(1)=1$, we conclude that for $x \in
[\overline{x}, 1]$, $b(x) \leq x$.

We further claim that $\overline{x} \geq \xunstab(1)$. Let us assume
this for a moment. Then we have $a(x) \geq x \geq b(x)$ for $x \in
[\overline{x}, 1]$.  We conclude that the unique solution of
$a(x)=b(x)$ in $(0, 1]$ is upper bounded by $\overline{x}$.

We finish the lemma by proving $\overline{x} \geq \xunstab(1)$. Indeed, since
$\overline{x} \neq 0$, all we need to show is that $ (1 - (1 -
\overline{x})^{\dr-1})^{\dl-1} \geq \overline{x}$\footnote{Recall that for the
BEC(1), the DE equation is given by $x=(1 - (1-x)^{\dr-1})^{\dl-1}$.
Furthermore, there are 3 FPs namely, 0, $\xunstab(1)$ (unstable) and $1$
(stable). Finally,  we have that $(1 - (1-x)^{\dr-1})^{\dl-1} \geq x$ if and
only if $x=0$ or $x\in [\xunstab(1), 1]$. See Chapter 3 in \cite{RiU08} for more details.}.
For $3=\dr=\dl$ one can verify the validity of the claim directly.
In general, we have
\begin{align*}
(1 - (1 & - \overline{x})^{\dr-1})^{\dl-1}  \geq (1 - (1 - \overline{x})^{\dr-2})^{\dl-1} \\
& =  \Big(1 - \frac1{(\dl-1)(\dr-1)}\Big)^{\dl-1} 
\geq 1 - \frac{1}{\dr-1} \\
& \geq 1 - \Big(\frac1{(\dl-1)(\dr-1)}\Big)^{\frac1{\dr-2}}  = \overline{x},
\end{align*}
where the last inequality follows since 
$ (\frac1{(\dl-1)(\dr-1)})^{\frac1{\dr-2}} \stackrel{\dr\geq 4}{\geq} \frac1{\dr-1}$.

The Battacharyya parameter of the channel is thus upper bounded by
$\sqrt{\overline{x}/a(\overline{x})}$. Using the upper bound on the
entropy in Lemma~\ref{lem:entropyvsbatta},  we get the claimed bound.

It remains to show that this bound converges to $0$ when we fix
the rate and let the dds tend to infinity. To
simplify our notation, let $L=\dl-1$ and $R=\dr-1$.  We have
\begin{align*}
\overline{\ih} & =  \sqrt{\overline{x}/a(\overline{x})} 
 =\sqrt{\left(1-\left(LR \right)^{-\frac{1}{R-1}}\right)\left(1-\left(LR\right)^{-\frac{R}{R-1}}\right)^{-L}} \\
&\stackrel{\text{(a)}}{\leq} e^{\frac{1}{4}}\sqrt{1-\left(L R\right)^{-\frac{1}{R-1}}} = e^{\frac{1}{4}}\sqrt{1-e^{-\frac{\ln(R L)}{R-1}}} \\ 
& \leq e^{\frac{1}{4}}\sqrt{1-e^{-\frac{2}{\sqrt{R-1}}}}  \leq \frac{e^{\frac{1}{4}}\sqrt{2}}{(\dr-2)^{\frac{1}{4}}},
\end{align*}
where (a) is obtained by using the following sequence of inequalities,
\begin{align*}
&\sqrt{\left(1-\left(LR\right)^{-\frac{R}{R-1}}\right)^{-L}} 
 = \sqrt{e^{-L \ln(1-(LR)^{-\frac{R}{R-1}})}} \\
&\!\!\! \stackrel{\substack{\text{Taylor Expansion}\\\text{for}\ln(1-x)}}{\leq}\!\!\! \sqrt{e^{\frac{L (LR)^{-\frac{R}{R\!-\!1}}}{1\!-\!(LR)^{-\frac{R}{R\!-\!1}}}}} 
\!\!\leq \!\!\sqrt{e^{\frac{ (L(LR)^{-1})^{\frac{R}{R\!-\!1}}}{1\!-\!(LR)^{-1}}}} \! \leq\! e^{\frac{1}{2(\dr-2)}} \!\!\stackrel{\dr \geq 4}{\leq}\!\! e^{\frac{1}{4}}.
\end{align*}

We finish the proof by showing that $\entLE \leq \overline{\ih}$
and $\batta(\Ldens{x}_{\ih}) \geq \xunstab(1)$ for $\ih \geq
\overline{\ih}$.  Let us first show that $\entLE \leq \overline{\ih}$.
Note that $\entLE = \ent_{\BMS}(c(\xLE))$, where recall that
$\ent_{\BMS}(\cdot)$ is the function which maps the Battacharyya
constant of an element of the family to the corresponding entropy.
Thus we have $\entLE \leq c(\xLE)$. The proof is now complete by
observing that $c(\xLE) \leq c(\overline{x})$, due to the monotonicity
of the function $c(x)$ for $x \geq \tilde{x}$, as shown in
Lemma~\ref{lem:technicallemma1}.  \qed

\section{Entropy Product Inequality -- Lemma~\ref{lem:magic}}\label{app:magic}
By definition, we have
\[
\entropy(\Ldens{a}\vconv\Ldens{b})
=\int_{0}^{1} \int_{0}^{1}
\absDdens{a}(x)
\absDdens{b}(y)
k(x,y)
\text{d}x
\text{d}y,
\]
with the kernel as given in the statement.
Differentiating, we have
\begin{align*}
k_y(x,y)
& =
-\frac{1}{2 \ln(2)}\Bigl(
\ln \frac{1+y}{1-y}-
x \ln \frac{1+xy}{1-xy}
\Bigr), \\
k_{yy}(x,y)
& =
-\frac{1-x^2}{\ln(2) (1-y^2) (1-x^2y^2)},
\\
k_{xxyy}(x,y) & =
\frac{2}{\ln(2)} \frac{1+3x^2y^2}{(1-x^2y^2)^3}.
\end{align*}
Integrating by parts twice for each dimension, we see that
\begin{align*}
\entropy(\Ldens{a}\vconv\Ldens{b})
&=\int_{0}^{1} \int_{0}^{1}
\absDdens{a}(x)
\absDdens{b}(y)
k(x,y)
\dee x
\dee y
\\
&=\int_{0}^{1} \int_{0}^{1}
\tilde{\absDdist{A}}(x)
\tilde{\absDdist{B}}(y)
k_{xxyy}(x,y)
\dee x
\dee y.
\end{align*}
This proves the alternative representation of this integral.

Note that the bound (\ref{equ:boundonk}) is implied by
\(
\frac{1}{(1-x^2y^2)^3} \le (1-x^2)^{-\frac{3}{2}}(1-y^2)^{-\frac{3}{2}}\,.
\)
Let $u=(1-x^2)^{-1}$ and $v=(1-y^2)^{-1}.$  Then the desired inequality is equivalent to
\(
\frac{1}{(1/u+1/v-1/uv)^3} \le u^{\frac{3}{2}}v^{\frac{3}{2}}\,
\)
for $u,v \ge 1.$
Raising both sides to the power of $\frac{2}{3}$ this becomes
\(
\frac{1}{(1/u+1/v-1/uv)^2} \le uv\,.
\)
Multiplying both sides by $\frac{1}{(uv)^2}$ this can be written as
\(
uv \le (v+u-1)^2 
\)
which is equivalent to
\(
0 \le (v-1)^2 + (u-1)^2 + uv-1\,,
\)
proving the claim.

This bound $k_{xxyy}(x, y) \leq \frac{8}{\ln(2)} (1-x^2)^{-\frac32}
(1-y^2)^{-\frac32}$ immediately gives rise to the claim
(\ref{equ:degradedcase}): the right-hand side factorizes and,
excluding the constant $8/\ln(2)$, each factor is just the Battacharyya
kernel in this representation ($(1-x^2)^{-\frac32}$ is the second derivative of $\sqrt{1-x^2}$, the
Battacharyya kernel in the $|D|$-domain, cf. \eqref{eq:battasecondderivative}).
Note that we can use the upper bound on $k_{xxyy}(x,y)$ to obtain (\ref{equ:degradedcase})
since by \eqref{equ:degradationcdfs}, the differences  $(\tilde{\absDdist{B'}}(y)- \tilde{\absDdist{B}}(y))$
and $(\tilde{\absDdist{A'}}(x)- \tilde{\absDdist{A}}(x))$ are non-negative.

It remains to prove the claim (\ref{equ:partiallydegradedcase}).
We claim that if $d(\Ldens{b}', \Ldens{b}) \leq \delta$ then
$|\tilde{\absDdist{B'}}(y)- \tilde{\absDdist{B}}(y)| \leq \min\{\delta,
1-y\}$. The second bound is immediate since $0 \leq \vert\absDdist{B'}(y) - \absDdist{B}(y) \vert 
\leq 1$ so that $|\tilde{\absDdist{B'}}(y)- \tilde{\absDdist{B}}(y)| \leq \int_y^1 \dee y = 1-y$. To see that the difference is less than $\delta$
 we have $|\tilde{\absDdist{B'}}(y)- \tilde{\absDdist{B}}(y)| \leq \int_{y}^1
 \vert \absDdist{B'}(z)- \absDdist{B}(z)\vert \dee z \leq  \int_{0}^1 \vert
 \absDdist{B'}(z)- \absDdist{B}(z)\vert \dee z
 \stackrel{\text{(\ref{lem:alternative})},
 \text{Lemma}~\ref{lem:blmetric}}{=} d(\Ldens{b}', \Ldens{b})$.
We now have
\begin{align*}
& \entropy((\Ldens{a}'-\Ldens{a})\vconv(\Ldens{b}'-\Ldens{b})) \\
&\leq \int_{0}^{1} \!\!\!\!\int_{0}^{1} \!\!\!|\tilde{\absDdist{A'}}(x)-\tilde{\absDdist{A}}(x)| 
|\tilde{\absDdist{B'}}(y)-\tilde{\absDdist{B}}(y)|
k_{xxyy}(x,y)
\dee x \dee y \\
&\leq \frac{8}{\ln(2)} \batta(\Ldens{a'}-\Ldens{a}) \int_{0}^{1} \min\{\delta, 1-y\} (1-y^2)^{-\frac32} \dee y  \\
&\leq \frac{8}{\ln(2)} \batta(\Ldens{a'}-\Ldens{a}) \sqrt{2 \delta}, 
\end{align*}
where to obtain the second inequality we combine the upper bound on $k_{xxyy}(x,y)$ derived above with the
alternative representation of $\batta(\Ldens{a})$ as given in \eqref{eq:battasecondderivative}.
\section{Evaluation of GEXIT Integral -- Lemma~\ref{lem:areaunderBPGEXIT}}\label{app:areaunderBPGEXIT}
For the proof of Lemma~\ref{lem:areaunderBPGEXIT} it will be handy to have the following
two lemmas available.
\begin{lemma}[Entropy of Single-Parity Check Code]\label{lem:entropyofcheck}
Consider a single-parity check code of length $\dr$. Let $X$ denote
a codeword, chosen uniformly at random from this code. Let $Y$ denote
the result of passing the codeword through a BMS channel with 
density $\Ldens{x}$. Then 
\begin{align*}
H(X \mid Y) & = 
\dr \entropy(\Ldens{x}) - \entropy(\Ldens{x}^{\cconv \dr}).
\end{align*}
\end{lemma}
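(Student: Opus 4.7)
The plan is to decouple the parity-check constraint by embedding the code inside an unconstrained product of channels. Introduce $\dr$ i.i.d.\ uniform $\pm 1$ bits $X'_1,\dots,X'_{\dr}$ and let $Y'_i$ be independent BMS channel outputs, each with $L$-density $\Ldens{x}$ under its input $X'_i$. Set $\tilde S = X'_1 \oplus \dots \oplus X'_{\dr}$. Conditioning on $\{\tilde S = 0\}$ reproduces exactly the uniform distribution over the single-parity-check codewords, so $H(X \mid Y) = H(X' \mid Y', \tilde S = 0)$. Because the channel is symmetric and $\tilde S$ is marginally uniform, the conditional entropies given $\tilde S = 0$ and $\tilde S = 1$ agree, so in fact $H(X \mid Y) = H(X' \mid Y', \tilde S)$.

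Next, apply the chain rule in the form
\begin{align*}
H(X' \mid Y') = H(\tilde S \mid Y') + H(X' \mid Y', \tilde S),
\end{align*}
which is valid because $\tilde S$ is a deterministic function of $X'$. Rearranging yields $H(X \mid Y) = H(X' \mid Y') - H(\tilde S \mid Y')$. The first term is $\dr\,\entropy(\Ldens{x})$ by independence of the bits and of the channel uses. The second term is handled by observing that $\tilde S$ is the XOR of independent bits observed through independent BMS channels of $L$-density $\Ldens{x}$; by the tanh (check-node) rule, the posterior LLR of $\tilde S$ given $Y'$ has density $\Ldens{x}^{\cconv \dr}$, and hence $H(\tilde S \mid Y') = \entropy(\Ldens{x}^{\cconv \dr})$. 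Combining, we obtain the claimed identity.

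The main thing to get right is the symmetry step that lets us replace $H(X' \mid Y', \tilde S = 0)$ by the averaged quantity $H(X' \mid Y', \tilde S)$; this uses that BMS symmetry makes the two conditional laws equivalent under the bit-flip $X' \mapsto -X'$ applied to any single coordinate, which also flips $\tilde S$. Everything else is routine application of the chain rule and the definition of $\cconv$. Finally, to deliver the parenthetical Note after Lemma~\ref{lem:dualityrule}, I would record that the same argument carried out with two generally different densities $\Ldens{a},\Ldens{b}$ at the two positions in the $\dr=2$ case (where the single-parity-check code is the repetition code, so $H(X \mid Y) = \entropy(\Ldens{a} \vconv \Ldens{b})$) gives $\entropy(\Ldens{a}\vconv\Ldens{b}) = \entropy(\Ldens{a}) + \entropy(\Ldens{b}) - \entropy(\Ldens{a}\cconv\Ldens{b})$, i.e.\ the duality rule.
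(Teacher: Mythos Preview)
Your proof is correct and follows essentially the same approach as the paper: introduce independent uniform bits, let the parity play the role of an auxiliary variable (the paper calls it $Z$, you call it $\tilde S$), use symmetry to pass from conditioning on $\tilde S=0$ to conditioning on $\tilde S$, and then apply the chain rule to $H(X',\tilde S\mid Y')$ in two ways. Your derivation of the duality rule via the $\dr=2$ repetition-code case also matches the paper's.
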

\begin{IEEEproof}
Let $X_1,...,X_{\dr}$ be uniform random bits and let $Z$ denote
their parity.  Suppose $X_i$ is transmitted through the BMS channel
with density $\Ldens{x}$.  Let the received vector be $Y$.

The entropy of the single parity check code is $\entropy (X | Z=0,Y).$
By symmetry we have $\entropy (X | Z=0,Y) = \entropy (X | Z=1,Y) = \entropy (X | Z,Y).$
Now
$\entropy (X,Z | Y)  = \entropy (X | Y) + \entropy (Z | X,Y) =  \entropy (X | Y) =
\sum_{i=1}^{\dr} \entropy(\Ldens{x})$,
but we also have
$\entropy (X,Z | Y)  = \entropy (Z | Y) + \entropy (X | Z,Y)
 =\entropy(\Ldens{x}^{\cconv \dr}) + \entropy (X | Z,Y)$.
Thus, the entropy of the single parity check code is
\[
\entropy (X | Z,Y) = \dr \entropy(\Ldens{x}) - \entropy(\Ldens{x}^{\cconv \dr})\,.
\]
Now consider the channel that transmits a  bit once through the channel with
density $\Ldens{a}$ and again through a channel with density $\Ldens{b}.$ The entropy
of the combined channel is $\entropy(\Ldens{a} \vconv \Ldens{b}).$
This is equivalent to the single parity check code of two bits. Hence
\[
\entropy(\Ldens{a} \vconv \Ldens{b}) = \entropy(\Ldens{a}) + \entropy(\Ldens{b}) - \entropy(\Ldens{a} \cconv \Ldens{b}),
\]
which proves (the Duality Rule of) Lemma~\ref{lem:dualityrule}.
\end{IEEEproof}

\begin{lemma}[Entropy of Tree Code]\label{lem:entropyoftree}
Consider the $(\dl, \dr)$-regular computation tree of height $2$ (see e.g.,
Figure~\ref{fig:computationgraph}).
This tree represents a code of length $1+\dl (\dr-1)$ containing
$2^{1+\dl(\dr-2)}$ codewords.  Let $X$ be chosen uniformly at random
from the set of codewords and let $Y$ be the result of sending the components of $X$
through independent BMS channels. The root node goes through the BMS channel
$\Ldens{c}$ and all leaf nodes are passed through the BMS channel
$\Ldens{x}$. 
Then,
\begin{align}\label{eq:entropyoftreecodewithoutFP}
\entropy(X \mid Y)  = \entropy(\tilde{\Ldens{x}})& + \dl(\dr-1) \entropy(\Ldens{x})  -
\entropy(\tilde{\Ldens{x}}\cconv\Ldens{x}^{\cconv \dr-1})  \nonumber \\ 
& - (\dl-1) \entropy(\Ldens{x}^{\cconv \dr-1}),
\end{align}
where $\tilde{\Ldens{x}} = \Ldens{c}\vconv (\Ldens{x}^{\cconv \dr-1})^{\vconv \dl-1}$. 
\end{lemma}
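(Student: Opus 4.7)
My plan is to use the chain rule of entropy, splitting the tree code into the root contribution and the contribution of the leaves given the root, and then invoke (the Duality Rule) Lemma~\ref{lem:dualityrule} to convert the variable-node convolution at the root into the check-node convolution that appears on the right-hand side.

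First I would write
\begin{align*}
\entropy(X \mid Y) = \entropy(x_0 \mid Y) + \entropy(X\setminus\{x_0\} \mid x_0, Y).
\end{align*}
For the first term, I would use that on a tree BP is optimal so the posterior density of the root $x_0$ given all observations $Y$ is exactly $\Ldens{c}\vconv(\Ldens{x}^{\cconv\dr-1})^{\vconv\dl} = \tilde{\Ldens{x}}\vconv \Ldens{x}^{\cconv\dr-1}$, whence $\entropy(x_0\mid Y)=\entropy(\tilde{\Ldens{x}}\vconv \Ldens{x}^{\cconv\dr-1})$. Applying the Duality Rule with the factors $\tilde{\Ldens{x}}$ and $\Ldens{x}^{\cconv\dr-1}$ rewrites this as $\entropy(\tilde{\Ldens{x}}) + \entropy(\Ldens{x}^{\cconv\dr-1}) - \entropy(\tilde{\Ldens{x}}\cconv \Ldens{x}^{\cconv\dr-1})$, producing the three ``$\tilde{\Ldens{x}}$-terms'' of the claimed identity.

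For the second term I would observe that once $x_0$ is fixed the $\dl$ subtrees become conditionally independent (each subtree touches only one check), and within the $j$-th subtree the $\dr-1$ leaves $x_{j,1},\dots,x_{j,\dr-1}$ are constrained by the single linear relation $x_{j,1}\oplus\dots\oplus x_{j,\dr-1}=x_0$ and observed independently through $\Ldens{x}$. Re-running the proof of Lemma~\ref{lem:entropyofcheck} for a length-$(\dr-1)$ parity-check code whose parity value is revealed (channel symmetry makes the value of $x_0$ immaterial to the conditional entropy) gives $(\dr-1)\entropy(\Ldens{x}) - \entropy(\Ldens{x}^{\cconv\dr-1})$ per subtree. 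Summing over the $\dl$ subtrees contributes $\dl(\dr-1)\entropy(\Ldens{x}) - \dl\,\entropy(\Ldens{x}^{\cconv\dr-1})$.

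Adding the two contributions and combining $-\dl\,\entropy(\Ldens{x}^{\cconv\dr-1})$ with the $+\entropy(\Ldens{x}^{\cconv\dr-1})$ coming from the Duality Rule yields exactly the coefficient $-(\dl-1)$ in front of $\entropy(\Ldens{x}^{\cconv\dr-1})$, matching \eqref{eq:entropyoftreecodewithoutFP}. There is no real obstacle here; the only point requiring care is the single-parity-check step in the subtrees, where one must verify that conditioning on a random but known parity $x_0$ gives the same conditional entropy as conditioning on parity $0$, which follows from the symmetry of the channel densities and the uniformity of $x_0$ (equivalently, one can chain through $\entropy(X_j,Z_j\mid x_0,Y_j)$ in two ways, noting $\entropy(Z_j\mid x_0,Y_j)=0$).
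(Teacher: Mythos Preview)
Your proposal is correct and follows essentially the same approach as the paper: the paper also splits $\entropy(X\mid Y)$ via the chain rule into $\entropy(X_1\mid Y)+\entropy(X_{\sim 1}\mid X_1,Y_{\sim 1})$, computes the root term as $\entropy(\tilde{\Ldens{x}}\vconv\Ldens{x}^{\cconv\dr-1})$ and rewrites it with the Duality Rule, and evaluates the second term by noting that conditioning on the root splits the code into $\dl$ independent length-$(\dr-1)$ single-parity-check codes whose entropy is given by Lemma~\ref{lem:entropyofcheck}. Your care about the symmetry of conditioning on $x_0=0$ versus $x_0=1$ mirrors exactly what the paper states.
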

\begin{IEEEproof} Using the chain rule, rewrite $\entropy(X \mid Y)$ as
\begin{align*}
\entropy(X \mid Y) & = \entropy(X_1 \mid Y) + \entropy(X_{\sim 1} \mid X_1, Y_{\sim 1}),
\end{align*}
where $X_1$ corresponds to the root variable node and $X_{\sim 1}$ is
the set of all the leaf nodes.  The first term is computed by density
evolution by considering all the independent messages flowing from the 
leaf nodes into the root node. Indeed, we convolve the channel density $\Ldens{c}$
with the densities coming from the $\dl$ check nodes, each of which has density
$\Ldens{y} = \Ldens{x}^{\cconv \dr-1}$. Thus we get
\begin{align*}
\entropy(X_1 \mid Y) & = \entropy(\Ldens{c} \vconv \Ldens{y}^{\vconv \dl}) \stackrel{\tilde{\Ldens{x}} = \Ldens{c}\vconv \Ldens{y}^{\vconv \dl-1}}{=}  \entropy(\tilde{\Ldens{x}} \vconv \Ldens{x}^{\cconv \dr-1}) \\ 
&\stackrel{\text{Lemma~\ref{lem:dualityrule}}}{=}  \entropy(\tilde{\Ldens{x}})+
\entropy(\Ldens{x}^{\cconv \dr-1})-
\entropy(\tilde{\Ldens{x}}\cconv\Ldens{x}^{\cconv \dr-1}).
\end{align*}
Further,
\begin{align*}
\entropy(X_{\sim 1} \mid X_1=0, Y_{\sim 1}) & = 
\entropy(X_{\sim 1} \mid X_1=1, Y_{\sim 1}) \\ 
& = \dl [(\dr-1) \entropy(\Ldens{x})-\entropy(\Ldens{x}^{\cconv \dr-1})].
\end{align*}
Indeed, when we condition on the root node to take either $0$ or $1$,  we
split the code into $\dl$  codes, each of which is a single parity-check code of length
$\dr-1$. Using the previous Lemma~\ref{lem:entropyofcheck}, we obtain the above
expressions. Combining the above statements proves the claim.\footnote{ 
For completeness, although the exact marginal does not factor into the computation,
note that there are $2^{1+\dl(\dr-2)}$ codewords in the code.
Out of those, $2^{\dl (\dr-2)}$ have a $0$ in the root node. So the
marginal of $X_1=0/1$ is one-half.}
\end{IEEEproof}
\begin{remark}
We stress that in Lemma~\ref{lem:entropyoftree}, $(\Ldens{c},
\Ldens{x})$ need not form a FP pair. Thus $\Ldens{x}$ will be
different from $\tilde{\Ldens{x}}$, in general.  
We will use the above expression when $\tilde{\Ldens{x}}$ and
$\Ldens{x}$ are ``close'' (in the Wasserstein sense), i.e., $(\Ldens{c},\Ldens{x})$ forms
 an approximate FP pair. This will
allow us to give an estimate of the entropy of the tree code.
\end{remark}

{\em Proof of Lemma~\ref{lem:areaunderBPGEXIT}:} Note first that
the integral $G(\dl, \dr, \{\Ldens{c}_{\ih}, \Ldens{x}_{\ih}
\}_{\ih}^{1})$ is well defined.  This is true since we assumed that
$\ent \geq \entLE$. This implies that we are integrating over a
continuous function (cf.  Corollary~\ref{cor:continuityofBPGEXIT}).
Hence the integral exists.  All that remains to be shown is that
the value of this integral is indeed $1-\frac{\dl}{\dr}-A$, as
claimed.

To evaluate the integral we consider the code corresponding to the
$(\dl, \dr)$-regular computation tree of height $2$ as in
Lemma~\ref{lem:entropyoftree}. Let $X$ be chosen uniformly at random
from the set of codewords and assume that the component corresponding
to the root node is sent through the channel $\Ldens{c}_{\ih}$,
whereas all components corresponding to the leaf nodes are sent
through the channel $\Ldens{x}_{\ih}$. Let $Y$ be the received word.
Since $\{\Ldens{c}_{\ih}, \Ldens{x}_{\ih}\}_{\ih }$ is, by assumption,
a FP family, the density flowing from any check node into the root
node is $\Ldens{y}_{\ih}=\Ldens{x}_{\ih}^{\cconv \dr-1}$ and so the
total density seen by the variable node (excluding the observation
of the variable node itself) is $\Ldens{y}_{\ih}^{\vconv \dl}$.
Therefore, the GEXIT integral associated to the root of this tree
code is the desired integral. We will evaluate this integral by
first determining the sum of {\em all} the GEXIT integrals associated
to this tree and then by subtracting from it the GEXIT integrals
associated to the leaf nodes.

In the sequel we will perform manipulations, such as writing a total
derivative as the sum of its partial derivatives or writing a
function as the integral of its derivative. In a first pass we will
assume that all these operations are well defined.  In a second
step we will then see how to justify these steps by approximating
the desired integrals by a series of simple integrals.

Label the variable nodes of the tree with the set $\{1, \dots, 1+\dl(\dr-1)
\}$ so that the root has label $1$.  Note that by assumption
$\entropy(\Ldens{c}_{\ih})=\ih$, so that the entropy of the first
component of $Y$, call it $\ih_1$, is $\ih$. The entropy of the
remaining components, call them $\ih_i$, $i \in \{2, \dots,
1+\dl(\dr-1)\}$, are all equal and take on the value
$\entropy(\Ldens{x}_\ih)$. So we imagine that all components are
parameterized by $\ih$.

From Definition~\ref{def:gexitintegralbasic} we have,
\begin{align*}
G(\dl, \dr, \{\Ldens{c}_{\ih}, \Ldens{x}_{\ih} \}_{\ih^*}^{1}) 
& = \int_{\ih^*}^1 \frac{\partial \entropy(X_1\mid Y(\underline{\ih}))}{\partial \ih_1} \frac{\partial \ih_1(\ih)}{\partial \ih} \dee \ih.
\end{align*}
Note that
\begin{align}\label{eq:totalderivativeequaltosumofpartialderivatives}
\int_{\ih^*}^1 &\dee \ih \frac{\dee}{\dee \ih} \entropy(X\mid Y(\underline{\ih})) =\!\!
\int_{\ih^*}^1 \frac{\partial
\entropy(X_1\mid Y(\underline{\ih})}{\partial \ih_1}\frac{\partial \ih_1(\ih)}{\partial \ih} \dee \ih+ \nonumber \\
& + \underbrace{\sum_{i=2}^{1+\dl(\dr-1)} \int_{\ih^*}^1 \frac{\partial
\entropy(X_i\mid Y(\underline{\ih}))}{\partial \ih_i} \frac{\partial \ih_i(\ih)}{\partial \ih} \dee \ih}_{\text{GEXIT of leaf nodes}}.
\end{align}
The lhs evaluates to 
\begin{align*}
& \int_{\ih^*}^1 \!\!\!\!\dee \ih \frac{\dee}{\dee \ih} \entropy(X\mid Y)  = 
\entropy(X\mid Y(1)) - \entropy(X\mid Y(\ent^*)) \\
& = \Big(1 + \dl(\dr-1) - \dl\Big) -  \\ 
&  \Big(\entropy(\Ldens{x})(1 + \dl(\dr-1)) - 
\entropy(\Ldens{x}^{\cconv \dr}) -(\dl-1)\entropy(\Ldens{x}^{\cconv \dr-1})\Big).
\end{align*}
The last inequality is obtained by using Lemma~\ref{lem:entropyoftree}
for the two endpoints and recalling that we set $\Ldens{x} =
\Ldens{x}_{\ih^*}$.

Let us consider the leaf node contributions. By symmetry these
contributions are all identical. If we focus on a single check node,
then again due to symmetry, the GEXIT integrals of all leaf nodes
is the same. But the sum of all the GEXIT integrals is equal to the
change in entropy of a single-parity check code of length $\dr$.
Thus, using Lemma~\ref{lem:entropyofcheck}, we see that the integral
of any single GEXIT integral is equal to
\begin{align}\label{eq:gexitofasingleleafnode}
\frac1{\dr}\Big((\dr - 1) - (\dr\entropy(\Ldens{x}) - \entropy(\Ldens{x}^{\cconv \dr}))\Big).
\end{align}
Combining all these statements, we get
\begin{align*}
& G(\dl, \dr, \{\Ldens{c}_{\ih}, \Ldens{x}_{\ih} \}_{\ih^*}^{1}) 
= \Big(1 + \dl(\dr-1) - \dl\Big) -  \\ 
&  \Big(\entropy(\Ldens{x})(1 + \dl(\dr-1)) - \entropy(\Ldens{x}^{\cconv
\dr}) -(\dl-1)\entropy(\Ldens{x}^{\cconv \dr-1})\Big) \\
& - \frac{\dl(\dr-1)}{\dr}\Big((\dr - 1) - (\dr\entropy(\Ldens{x}) -
\entropy(\Ldens{x}^{\cconv \dr}))\Big) \\
& = 1 - \frac{\dl}{\dr} - A.
\end{align*}

It remains to justify the previous derivation. We proceed as follows.
Instead of working with $\{\Ldens{c}_\ent, \Ldens{x}_\ent\}$, we
will work with a simpler family which is piece-wise linear and
``close'' to the original family. Because it is piece-wise linear,
the operations are simple to justify. Because it is ``close'' to
the original family, the result is ``close'' to what we want to
show. By taking a sequence of such families which approximate the
original family closer and closer, we obtain the desired result.

Let us start by constructing a piece-wise linear family, call it
$\{\tilde{\Ldens{c}}_\ent, \tilde{\Ldens{x}}_\ent\}$, which
approximates the original family $\{\Ldens{c}_\ent, \Ldens{x}_\ent\}$.
Consider the channel family $\{\Ldens{c}_\ent\}$ and sample it
uniformly in $\ent$ with a spacing of $\Delta \ent$.  To be precise,
pick the samples (from the original family) at $i \Delta \ent$, for
an appropriate range of integers $i$.  By a suitable choice we can
ensure that $\ent^*=i \Delta \ent$ for some $i \in \naturals$.  In
general, $\ent=1$ will not be of the form $i \Delta \ent$. This
means that the last sample is not lying  on the lattice. But we can
ensure that also for the last sample the ``gap'' (in entropy) is
at most $\Delta \ent$.  This is all that is needed for the proof.
Hence, for notational convenience we will ignore this issue and
assume that all samples have the form $i \Delta \ent$.

Construct from this set of samples a family by constructing a
piece-wise linear interpolation, call the result
$\{\tilde{\Ldens{c}}_\ent\}$.  Note that since the entropy functional
is linear, this construction leads to a family so that
$\entropy(\tilde{\Ldens{c}}_\ent)=\ent$.  Further,
$\{\tilde{\Ldens{c}}_\ent\}$ is ordered and piece-wise smooth. We
claim that \begin{align*} d(\Ldens{c}_\ent, \tilde{\Ldens{c}}_\ent)
& = d(\Ldens{c}_\ent, \alpha \Ldens{c}_{i \Delta \ent} + \bar{\alpha}
\Ldens{c}_{(i+1) \Delta \ent}) \leq 2 \sqrt{\ln(2) \Delta \ent},
\end{align*} where $i = \lfloor \frac{\ent}{\Delta \ent} \rfloor$
and $\alpha \in [0, 1]$ is a suitable interpolation factor.  In the
last step we have made use of (\ref{lem:blmetricconvexity}) in
Lemma~\ref{lem:blmetric}, the convexity property of the Wasserstein
distance, and the fact that consecutive samples have an entropy
difference of (at most) $\Delta \ent$.  Further, since they are
ordered, i.e., $\Ldens{c}_{i \Delta \ent} \prec \Ldens{c}_\ent \prec \Ldens{c}_{(i+1) \Delta \ent}$, an entropy difference of at most $\Delta \ent$ implies a
Wasserstein distance of at most $2 \sqrt{\ln(2) \Delta \ent}$ (cf.
(ii) of Lemma~\ref{lem:degradationandwasserstein}).

To each $\Ldens{c}_{\ent=i \Delta \ent}$ corresponds a FP $\Ldens{x}_\ent$,
call it $\Ldens{x}_{i \Delta \ent}$.  Take the collection $\{\Ldens{x}_{i
\Delta \ent}\}$.  Since this collection is ordered we can construct
from it an ordered and piece-wise smooth family via a linear
interpolation of consecutive samples in the same manner as we have done
this for the channel family.  We have
\begin{align*}
d(\Ldens{x}_{(i+1) \Delta \ent}, \Ldens{x}_{i \Delta \ent}) 
& \stackrel{\text{(i)}}{\leq} 2 \sqrt{\batta(\Ldens{x}_{(i+1) \Delta \ent})-\batta(\Ldens{x}_{i \Delta \ent})} \\
& \stackrel{\text{(ii)}}{\leq} 2 \sqrt{\frac{1}{\delta} (\batta(\Ldens{c}_{(i+1) \Delta \ent})-\batta(\Ldens{c}_{i \Delta \ent})} \\
& \stackrel{\text{(iii)}}{\leq} \sqrt{\frac{8}{\delta}} (\Delta \ent)^{\frac14}.
\end{align*}
Step (i) follows from Lemma~\ref{lem:degradationandwasserstein},
property (\ref{lem:blmetricentropy}).  In step (ii) we made use of
the fact that $\ent^* > \tilde{\ent}(\dl, \dr, \{\Ldens{c}_\ent\})$, so that according
to Lemma~\ref{lem:condforcontinuity}, $\delta \geq
1-\batta(\Ldens{c}_{\ent^*}) (\dl-1)(\dr-1)
(1-\batta(\Ldens{x}_{\ent^*})^2)^{\dr-2}>0$. In step (iii) we used once more
Lemma~\ref{lem:degradationandwasserstein}, property
(\ref{lem:blmetricentropy}).
Now consider the distance $d(\Ldens{x}_\ent, \tilde{\Ldens{x}}_\ent)$. We have
\begin{align*}
d(\Ldens{x}_\ent, \tilde{\Ldens{x}}_\ent) & \leq \alpha d(\Ldens{x}_\ent, \Ldens{x}_{i\Delta\ent}) + \bar{\alpha} d(\Ldens{x}_\ent, \Ldens{x}_{(i+1)\Delta\ent}) 
 \leq \sqrt{\frac{8}{\delta}} (\Delta \ent)^{\frac14}.
\end{align*}
The last inequality above follows from considering the same steps as before, since the densities
are ordered and each of them are FPs at channels with entropy difference at most $\Delta \ent$. 
Recall that $\{\Ldens{c}_\ent, \Ldens{x}_\ent\}$ is a FP family, hence we can write
\begin{align*}
& d(\tilde{\Ldens{x}}_\ent, \tilde{\Ldens{c}}_\ent \vconv ((\tilde{\Ldens{x}}_\ent)^{\cconv \dr-1})^{\vconv \dl-1}) \\
\leq & d(\tilde{\Ldens{x}}_\ent, \Ldens{x}_\ent)\!+\!
d(\Ldens{c}_\ent \!\vconv\! ((\Ldens{x}_\ent)^{\cconv \dr-1})^{\vconv \dl-1},
\tilde{\Ldens{c}}_\ent \!\vconv\! ((\tilde{\Ldens{x}}_\ent)^{\cconv \dr-1})^{\vconv \dl-1}) \\
\leq & d(\tilde{\Ldens{x}}_\ent, \Ldens{x}_\ent)\!+\!2d(\tilde{\Ldens{c}}_\ent, \Ldens{c}_\ent)\!+\!2d((\Ldens{x}_\ent^{\cconv \dr-1})^{\vconv \dl-1},(\tilde{\Ldens{x}}_\ent^{\cconv \dr-1})^{\vconv \dl-1}) \\
\leq &  4 \sqrt{\ln(2) \Delta \ent} + (4 (\dl-1) (\dr-1)+1)  \sqrt{\frac{8}{\delta}} (\Delta \ent)^{\frac14}.
\end{align*}
In words, $\{\tilde{\Ldens{c}}_\ent, \tilde{\Ldens{x}}_\ent\}_{\ent \geq \ent^*}$ forms
an approximate FP family. Above, we have used properties (v) and (vi) of Lemma~\ref{lem:blmetric}. 

Let us now apply the family $\{\tilde{\Ldens{c}}_\ent,
\tilde{\Ldens{x}}_\ent\}_{\ent \geq \ent^*}$ to the depth-2 tree. More
precisely, we consider the depth-2 tree code where the root node is
passed through the channel $\tilde{\Ldens{c}}_{\ent}$ and the leaves are passed
through the channel $\tilde{\Ldens{x}}_{\ent}$.  We claim that all GEXIT
integrals are well defined and that their sum is indeed the difference of the
entropies. Let us prove this claim in steps. 

The root integral has the form
\begin{align*}
\sum_{i} \int_{i \Delta \ent}^{(i+1) \Delta \ent}
\entropy((\Ldens{c}_{(i+1) \Delta \ent}-\Ldens{c}_{i \Delta \ent}) \vconv \Ldens{z}_\ent) \frac{\dee \ent}{\Delta \ent},
\end{align*}
where $\tilde{\Ldens{x}}_\ent = (\frac{\ent}{\Delta \ent} -\lfloor
\frac{\ent}{\Delta \ent} \rfloor) \Ldens{x}_{\lceil \frac{\ent}{\Delta
\ent} \rceil \Delta \ent} +(\lceil \frac{\ent}{\Delta \ent} \rceil
-\frac{\ent}{\Delta \ent}) \Ldens{x}_{\lfloor \frac{\ent}{\Delta
\ent} \rfloor \Delta \ent}$ and $\Ldens{z}_\ent =
((\tilde{\Ldens{x}}_\ent)^{\cconv \dr-1})^{\vconv \dl}$.  If we
expand out $\Ldens{z}_\ent$ explicitly then we see that the 
segment from $i$ to $(i+1)$  has the form $\sum_{\alpha} (\frac{\ent}{\Delta
\ent} -\lfloor \frac{\ent}{\Delta \ent} \rfloor)^{j_\alpha} (\lceil
\frac{\ent}{\Delta \ent} \rceil -\frac{\ent}{\Delta \ent})^{k_\alpha}
\Ldens{b}_{i, \alpha}$ for some fixed densities $\Ldens{b}_{i, \alpha}$ which
are various convolutions of two consecutive densities $\Ldens{x}_{i
\Delta \ent}$ and $\Ldens{x}_{(i+1) \Delta \ent}$ and some
strictly positive integers $j_\alpha$ and $k_\alpha$.  Set
$\sigma=(\frac{\ent}{\Delta \ent} -\lfloor \frac{\ent}{\Delta \ent}
\rfloor)$, so that $\sigma$ goes from $0$ to $1$ in each segment.
Then in each segment the integral has the form 
\begin{align*}
& \int_{0}^{1}  
\entropy\Bigr( (\Ldens{c}_{(i+1) \Delta \ent}-\Ldens{c}_{i \Delta \ent})
\vconv \sum_{\alpha} \sigma^{j_\alpha} (1-\sigma)^{k_\alpha} \Ldens{b}_{i, \alpha} \Bigl) \dee \sigma  \\
= & \sum_{\alpha} \frac{j_\alpha! k_\alpha!}{(j_\alpha+k_\alpha+1)!} 
\entropy( (\Ldens{c}_{(i+1) \Delta \ent}-\Ldens{c}_{i \Delta \ent}) \vconv \Ldens{b}_{i, \alpha}).
\end{align*}
So the root integral is in fact well defined. The same argument can be repeated for
the leaf integrals to show that they are also well defined.

If we consider one segment and add all the contributions (which as
we saw can be written down explicitly) we can verify that the sum
of all the GEXIT integrals is indeed equal to the difference of the
entropy of the tree. This calculation is in principle straightforward
but somewhat tedious, so we skip the details.

If $\{\tilde{\Ldens{c}}_\ent, \tilde{\Ldens{x}}_\ent\}$ were a true
FP family then the GEXIT integral of the root node would be equal
to $1-\frac{\dl}{\dr} - A$. This follows by the same steps which
we used in our initial casual derivation: once we know that all
integrals exist and add up to the total change in the entropy of
the tree code, all that is needed to draw this conclusion is to
observe that for a true FP family we can use a symmetry argument
to compute the value of each leaf GEXIT integral.

However $\{\tilde{\Ldens{c}}_\ent, \tilde{\Ldens{x}}_\ent\}$ is only 
an approximate (in the Wasserstein distance) FP family.
But we know that by making $\Delta \ent$ sufficiently small, we can
make the approximation arbitrarily good.  It is intuitive that by
taking a sequence of such approximations which converges to a true
FP family the limiting value of the GEXIT integral of the root node
should again be $1-\frac{\dl}{\dr} - A$. Let us show this more 
precisely.

We have already established that the sum of the individual GEXIT
integrals is equal to the total change of the entropy of the tree
code. This change only depends on the endpoints but not on the
chosen path. In particular, the endpoints for $\{\tilde{\Ldens{c}}_\ent,
\tilde{\Ldens{x}}_\ent\}_{\ent =\ent^*}^1$ and $\{\Ldens{c}_\ent,
\Ldens{x}_\ent\}_{\ent=\ent^*}^1$  are the same.

All is left is therefore to prove that each leaf GEXIT integral has
a value which approaches (\ref{eq:gexitofasingleleafnode}) when
$\Delta \ent$ approaches $0$. We know that this would be true if
all the messages entering check nodes were $\tilde{\Ldens{x}}_\ent$
and so the GEXIT integral was $\int_{\ent^*}^{1} \entropy(\frac{\dee
\tilde{\Ldens{x}}_\ent}{\dee \ent} \vconv \tilde{\Ldens{x}}_\ent^{\cconv
\dr-1}) \dee  \ent$.  But the actual GEXIT integral is $\int_{\ent^*}^{1}
\entropy(\frac{\dee \tilde{\Ldens{x}}_\ent}{\dee \ent} \vconv
\Ldens{z}_\ent) \dee  \ent$,
where $\Ldens{z}_\ent$ is the density flowing from the ``interior'' of the tree into a leaf node. 
Let us now show that
\begin{align*} 
\int_{\ent^*}^{1} 
\Bigl(\entropy(\frac{\dee \tilde{\Ldens{x}}_\ent}{\dee \ent} \vconv \Ldens{z}_\ent) -
\entropy(\frac{\dee \tilde{\Ldens{x}}_\ent}{\dee \ent} \vconv \tilde{\Ldens{x}}_\ent^{\cconv \dr-1}) \Bigr) \dee  \ent
 \stackrel{\Delta \ent \rightarrow 0}{\rightarrow} 0.
\end{align*} 
In fact, let us show that
\begin{align*} 
\int_{\ent^*}^{1} 
|\entropy(\frac{\dee \tilde{\Ldens{x}}_\ent}{\dee \ent} \vconv \Ldens{z}_\ent) -
\entropy(\frac{\dee \tilde{\Ldens{x}}_\ent}{\dee \ent} \vconv \tilde{\Ldens{x}}_\ent^{\cconv \dr-1})| \dee  \ent
 \stackrel{\Delta \ent \rightarrow 0}{\rightarrow} 0.
\end{align*} 
Note that for any $\ih\in [\ih^*,1]$ we have
\begin{align*}
& d(\tilde{\Ldens{x}}_\ent^{\cconv \dr-1}, \Ldens{z}_\ent) = d(\tilde{\Ldens{x}}_\ent^{\cconv \dr-1}, \tilde{\Ldens{x}}_\ent^{\cconv \dr-2}\cconv \tilde{\Ldens{c}}_\ent \vconv (\tilde{\Ldens{x}}_\ent^{\cconv \dr-1})^{\vconv \dl-1}) \\
& \stackrel{\text{(\ref{lem:blmetricregularcconv})},\text{Lemma}~\ref{lem:blmetric}}{\leq}  d(\tilde{\Ldens{x}}_\ent, \tilde{\Ldens{c}}_\ent \vconv (\tilde{\Ldens{x}}_\ent^{\cconv \dr-1})^{\vconv \dl-1})
 \stackrel{\Delta \ent \rightarrow 0}{\rightarrow} 0.
\end{align*}
Using the same line of reasoning as in in the proof of
Corollary~\ref{cor:continuityofBPGEXIT}, we see that therefore for each $\ent$,
$\lim_{\Delta \ent \rightarrow 0} |\entropy(\frac{\dee \tilde{\Ldens{x}}_\ent}{\dee \ent} \vconv
\Ldens{z}_\ent) -\entropy(\frac{\dee \tilde{\Ldens{x}}_\ent}{\dee
\ent} \vconv \tilde{\Ldens{x}}_\ent^{\cconv \dr-1})  |=0$.
Since the integrand is also bounded, it follows by Lebesgue's dominated convergence theorem
that also the integral of this quantity over $\ent$
converges to $0$ when $\Delta \ent$ is taken to $0$.

The only thing which remains to be done is to prove that the GEXIT
integral of the root node when we use the linearized family converges
to the true GEXIT integral when we let $\Delta  \ent$ tend to $0$.
We will do this in several steps by considering the chain of integrals
\begin{itemize}
\item[(i)] $G(\dl, \dr, \{\Ldens{c}_{\ih}, \Ldens{x}_{\ih} \}_{\ih^*}^{1})$,
\item[(ii)] $G(\dl, \dr, \{\Ldens{c}_{\ih}, \hat{\Ldens{x}}_{\ih} \}_{\ih^*}^{1})$,
\item[(iii)] $G(\dl, \dr, \{\tilde{\Ldens{c}}_{\ih}, \hat{\Ldens{x}}_{\ih} \}_{\ih^*}^{1})$,
\item[(iv)] $G(\dl, \dr, \{\tilde{\Ldens{c}}_{\ih}, \tilde{\Ldens{x}}_{\ih} \}_{\ih^*}^{1})$,
\end{itemize}
and by showing that the value of consecutive such integrals is
arbitrarily close.  Here, $\{\hat{\Ldens{x}}_{\ih}\}$ is a family
which is piece-wise constant on each segment, taking on the value
of its left boundary.

First note that the integral in (i) is well defined, being the
integral over a continuous function.  That the integrals in (i) and
(ii) are close follows by the same line of arguments as we just
used above.  The same idea applies to prove that the integrals (iii)
and (iv) are close to each other.  Finally, the value of (ii) and
(iii) is in fact equal. This is true since $\{\hat{\Ldens{x}})_\ent\}$
is in fact constant on each segment and $\{\Ldens{c}_\ent\}$ agrees
with $\{\tilde{\Ldens{c}}_\ent\}$ at the endpoints of the segments.

\qed

\section{Negativity -- Lemma~\ref{lem:asymptoticnegativity}}\label{app:asymptoticnegativity}

We prove Lemma~\ref{lem:asymptoticnegativity} by showing the following slightly
stronger statement.
\begin{lemma}
Let $\Ldens{x}$ be an $L$-density and consider a degree-distribution
$(\dl, \dr)$ such that $\dr \geq 1+5 (\frac1{1-r})^{\frac43}$.  Define 
$I_1= [(\frac34)^{\frac{\dl-1}{2}}+\frac1{2(\dr-1)^3}, \frac{1}{2e} \frac{\dl}{\dr}]$, and
$I_2=[\frac1{2 e} \frac{\dl}{\dr}, \frac{\dl}{\dr} - 
\dl e^{-4 (\dr-1) (\frac{2 \dl}{11 e \dr})^{\frac43}}-\kappa]$,
where $\kappa>0$. 
\begin{enumerate}
\item
Assume that $\Ldens{x}$ is a $\delta$-approximate FP, i.e., $d(\Ldens{x},
\Ldens{c} \vconv (\Ldens{x}^{\cconv \dr-1})^{\vconv \dl-1}) \leq
\delta$, for some channel $\Ldens{c}$ and $\delta \leq (\frac{\ln(2) \dl}{16 \sqrt{2}\dr})^2$. Then
if $\entropy(\Ldens{x}) \in I_1$, $A \leq -\frac1{16 e} \frac{\dl}{\dr}$.
\item 
For $\entropy(\Ldens{x}) \in I_2$, 
$A \leq -\kappa$. 
\end{enumerate}
\end{lemma}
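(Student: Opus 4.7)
The plan splits on $h = \entropy(\Ldens{x})$ at $\tfrac{1}{2e}\tfrac{\dl}{\dr}$, treating the low-entropy range $I_1$ and the high-entropy range $I_2$ with different ingredients. Both arguments start from the rewriting
\[
A \;=\; h \;+\; (\dl-1)(H_\dr - H_{\dr-1}) \;-\; \tfrac{\dl}{\dr}H_\dr, \qquad H_k := \entropy(\Ldens{x}^{\cconv k}),
\]
obtained via the Duality Rule (Lemma~\ref{lem:dualityrule}) by substituting $H_\dr - H_{\dr-1} = h - \entropy(\Ldens{x}^{\cconv\dr-1}\vconv\Ldens{x})$. Combining with $H_\dr \geq H_{\dr-1}$ and the BEC extreme at variable-node convolution (Lemma~\ref{lem:extremes}(i)), which gives $\entropy(\Ldens{x}^{\cconv\dr-1}\vconv\Ldens{x}) \geq h\,H_{\dr-1}$, reduces the task to lower bounding $H_{\dr-1}$:
\[
A \;\leq\; h\,[\dl - (\dl-1)H_{\dr-1}] \;-\; \tfrac{\dl}{\dr}H_{\dr-1}.
\]

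For the high-entropy range $I_2$, I would apply the BSC extreme (Lemma~\ref{lem:extremes}(iii)) combined with the refined kernel estimate $h_2(x) \leq \tfrac{11}{4} x^{3/4}$ from Lemma~\ref{lem:boundsonent}: this yields $H_{\dr-1} \geq 1 - \exp\bigl(-4(\dr-1)(4h/11)^{4/3}\bigr)$, so that $1 - H_{\dr-1}$ decays exactly like $\rho_0 := e^{-4(\dr-1)(2\dl/(11e\dr))^{4/3}}$ at the lower end of $I_2$, and faster further in. Substituting this into the reduction of $A$ and using the upper endpoint $h \leq \dl/\dr - \dl\rho_0 - \kappa$ makes the deficit $\dl/\dr - h$ dominate the wiggle loss by exactly $\kappa$, producing $A \leq -\kappa$ uniformly on $I_2$. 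No approximate-FP hypothesis is needed here; the degree condition $\dr \geq 1 + 5(1/(1-r))^{4/3}$ ensures $I_2$ is nonempty and the calculation closes.

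For the low-entropy range $I_1$ the BSC-extreme bound on $H_{\dr-1}$ is too weak and the approximate-FP hypothesis enters. Battacharyya is multiplicative at variable nodes, so the FP identity yields
\[
\batta(\Ldens{x}) \;\leq\; \batta(\Ldens{c})\,\batta(\Ldens{x}^{\cconv\dr-1})^{\dl-1} \;+\; \varepsilon(\delta),
\]
with $\varepsilon(\delta) = O(\sqrt{\delta})$ obtained from the Wasserstein-to-Battacharyya bound in Lemma~\ref{lem:blmetric}(ix) applied to the $\delta$-approximate FP. Using $\batta(\Ldens{c}) \leq 1$ and $\entropy \leq \batta$, rearranging gives $\batta(\Ldens{x}^{\cconv\dr-1}) \geq (h - \varepsilon(\delta))^{1/(\dl-1)}$, and then $H_{\dr-1} \geq \batta^2(\Ldens{x}^{\cconv\dr-1}) \geq (h - \varepsilon(\delta))^{2/(\dl-1)}$ via $\entropy \geq \batta^2$. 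The lower endpoint $h \geq (3/4)^{(\dl-1)/2} + 1/(2(\dr-1)^3)$ is designed so that this forces $H_{\dr-1} \geq 3/4$ with margin to absorb the $\delta$-error; combined with the wiggle bound $h(1 - H_{\dr-1}) \leq h/4$ and the upper endpoint $h \leq \dl/(2e\dr)$, a direct calculation gives $A \leq -\dl/(16e\dr)$.

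The main obstacle is the careful bookkeeping in the $I_1$ case: the $\delta$-Wasserstein perturbation of the FP propagates through a chain of inequalities (Battacharyya identity, a $1/(\dl-1)$-power root, and the $\entropy \geq \batta^2$ conversion), each of which loses a controlled amount. The quantitative hypothesis $\delta \leq (\ln(2)\dl/(16\sqrt{2}\dr))^2$ and the $1/(2(\dr-1)^3)$ safety offset in the lower endpoint of $I_1$ are tuned precisely so that the cumulative error remains below the target negativity margin $\kappa = \dl/(16e\dr)$ uniformly in the admissible degree pairs.
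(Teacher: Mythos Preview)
Your treatment of the high-entropy range $I_2$ is essentially correct and matches the paper's in spirit: both lower-bound $H_{\dr-1}$ via the BSC extreme together with $h_2(p)\le\tfrac{11}{4}p^{3/4}$, so that $1-H_{\dr-1}$ is exponentially small in $\dr$ and absorbs the $(\dl-1)$ factor.

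The $I_1$ argument, however, has a genuine gap. Your reduction
\[
A \;\le\; h\,[\dl-(\dl-1)H_{\dr-1}] \;-\;\tfrac{\dl}{\dr}H_{\dr-1}
\]
is correct, but plugging in only $H_{\dr-1}\ge\tfrac34$ yields
\[
A \;\le\; h\cdot\tfrac{\dl+3}{4}\;-\;\tfrac{3\dl}{4\dr},
\]
and at the upper endpoint $h=\tfrac{\dl}{2e\dr}$ this becomes $\tfrac{\dl}{\dr}\cdot\tfrac{(\dl+3)-6e}{8e}$, which is \emph{positive} as soon as $\dl\ge 14$. Since the lemma must hold for arbitrarily large degrees (indeed, the admissibility conditions force $\dl$ into the hundreds), your ``direct calculation'' cannot close. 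Using the sharper pointwise bound $H_{\dr-1}\ge h^{2/(\dl-1)}$ does not save you either: for large $\dl$ one has $(\dl-1)(1-h^{2/(\dl-1)})\to 2\ln(1/h)$, so the wiggle term is asymptotically $h(1+2\ln(1/h))$, and a check at $h=(1-r)/(2e)$ shows the bound on $A$ stays positive once $r\gtrsim 0.45$.

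What the paper does instead is bound the troublesome term $H_\dr-H_{\dr-1}$ \emph{absolutely}, independent of $h$ and $H_{\dr-1}$. Using the moment expansion $\entropy(\Ldens{x}^{\cconv k})=1-\sum_{n\ge1}\alpha_n\,m_n^{k}$ (where $m_n=\int z^{2n}\,\absDdens{x}(z)\,\dee z$ are multiplicative under $\cconv$), one has
\[
H_\dr-H_{\dr-1}=\sum_{n\ge1}\alpha_n\,\psi(m_n),\qquad \psi(m)=(1-m)m^{\dr-1}\le\frac{(1-1/\dr)^{\dr}}{\dr-1}\le\frac{1}{e(\dr-1)}.
\]
So $(\dl-1-\dl/\dr)(H_\dr-H_{\dr-1})\le\tfrac{\dl}{e\dr}$, a bound that is $O(1)\cdot\tfrac{\dl}{\dr}$ rather than $O(\dl)\cdot h$. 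With this, $A\le h-\tfrac{\dl}{\dr}H_{\dr-1}+\tfrac{\dl}{e\dr}\le\tfrac{\dl}{\dr}(\tfrac{3}{2e}-\tfrac34)+\text{(small)}\le-\tfrac{1}{8e}\tfrac{\dl}{\dr}$, and the $\delta$-perturbation then costs only an additive $3\sqrt{\delta}$. This pointwise bound on $\psi$ is the missing idea; your Duality/BEC-extreme estimate $H_\dr-H_{\dr-1}\le h(1-H_{\dr-1})$ is simply too weak to survive multiplication by $\dl-1$ on $I_1$.
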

\begin{IEEEproof}
Set $\Ldens{y}=\Ldens{x}^{\cconv \dr-1}$.
Let us first characterize the area $A$ in a more convenient form. 
We have
\begin{align*}
A & = \entropy(\Ldens{x}) + (\dl - 1 - \dl/\dr) \entropy(\Ldens{x}^{\cconv \dr}) - 
(\dl - 1) \entropy(\Ldens{y}) \\
 & = \entropy(\Ldens{x})\!-\!\frac{\dl}{\dr} \entropy(\Ldens{y}) \!+\! 
 (\dl \!-\! 1 \!-\! \dl/\dr) (\entropy(\Ldens{x}^{\cconv \dr}) \!-\! \entropy(\Ldens{y})). 
\end{align*}
For the $L$-distributions $\Ldens{x}$ and $\Ldens{y}$ let $\absDdens{x}$ and $\absDdens{y}$ be
the associated $|D|$ distributions.
Following the lead of L. Boczkowski \cite{Boc11} we write
\begin{align}\label{eq:boczkowski}
\entropy(\Ldens{x}) 
& = \int_{0}^{1} \absDdens{x}(z) h_2(\frac{1-z}{2}) \dee z 
 \stackrel{\text{(a)}}{=} 
1- \int_{0}^{1} \absDdens{x}(z) \sum_{n \geq 1} \alpha_n z^{2 n} \dee z \nonumber \\
& = 1\!-\! \sum_{n \geq 1} \alpha_n \underbrace{\int_{0}^{1} \absDdens{x}(z) z^{2 n}}_{\moment_{\Ldens{x}, n}} \dee z 
\stackrel{\text{(b)}}{=} 1\!-\! \sum_{n \geq 1} \alpha_n \moment_{\Ldens{x}, n}.
\end{align}
In step (a) we have used the expansion of Lemma~\ref{lem:boundsonent},
where $\alpha_n=\frac{1}{2 \ln(2) n (2n-1)}$, $n \geq 0$. Note that
$\alpha_n \geq 0$ and that $\sum_{n \geq 1} \alpha_n=1$.  
Most importantly, as mentioned in the proof of Lemma~\ref{lem:bbound},
the moments $\moment_{\Ldens{x}, n}$ are multiplicative under $\cconv$. This implies that for
$d \geq 1$, $\entropy(\Ldens{x}^{\cconv d})  = 1- \sum_{n \geq
1} \alpha_n \moment_{\Ldens{x}, n}^d$.  E.g., for two distributions $\Ldens{x}$ and
$\Ldens{y}$ we have
\begin{align*}
& 1\!-\!\entropy(\Ldens{x} \cconv \Ldens{y}) =  1- \int \int \absDdens{x}(z_1) \absDdens{y}(z_2) 
h_2(\frac{1\!-\!z_1 z_2}{2}) \dee z_1 \dee z_2\\
& = \int \!\!\!\!\!\int  \absDdens{x}(z_1) \absDdens{y}(z_2) 
\sum_{n \geq 1} \alpha_n z_1^{2n} z_2^{2n} \dee z_1 \dee z_2 
 = \sum_{n \geq 1} \alpha_n \moment_{\Ldens{x}, n} \moment_{\Ldens{y}, n},
\end{align*}
where in the first equality we use that in the $|D|$-domain the check node
operation is simply a multiplication. 

Assume at first that $\entropy(\Ldens{x}) \in [(\frac34)^{\frac{\dl-1}{2}},
\frac{1}{2e} \frac{\dl}{\dr}+\frac{1}{2(\dr-1)^3}]$ and that
$\Ldens{x}=\Ldens{c} \vconv \Ldens{y}^{\vconv \dl-1}$ for some
channel $\Ldens{c}$.  Define $\psi(x)=(1-x)x^{\dr-1}$.  Then
\begin{align*}
A 
& = \entropy(\Ldens{x}) -\frac{\dl}{\dr} \entropy(\Ldens{y}) + 
(\dl - 1 - \dl/\dr)\sum_{n} \alpha_n \psi(\moment_{\Ldens{x}, n}) \\ 
& \stackrel{\text{(a)}}{\leq} \entropy(\Ldens{x}) -\frac{\dl}{\dr} \entropy(\Ldens{y}) + 
\underbrace{(\dl - 1 - \dl/\dr) \frac{(1-\frac{1}{\dr})^{\dr}}{\dr-1}}_{B} \\
& \stackrel{\text{(b)}}{\leq} \entropy(\Ldens{x}) -\frac{\dl}{\dr} \entropy(\Ldens{x})^{\frac{2}{\dl-1}} + 
\frac{\dl - 1 - \dl/\dr}{\dr-1} (1-\frac{1}{\dr})^{\dr}\\
 & \stackrel{\text{(c)}}{\leq} \frac1{2 e} \frac{\dl}{\dr}+ \frac1{2(\dr-1)^3}- \frac{\dl}{\dr} \frac34 + \frac{\dl}{\dr}\frac1e  
\leq - \frac1{8 e} \frac{\dl}{\dr}.
\end{align*}
In (a) we used the bound $\psi(x) \leq
\frac{(1-\frac{1}{\dr})^{\dr}}{\dr-1}$ so that $\sum_{n} \alpha_n
\psi(\moment_{\Ldens{x}, n}) \leq \frac{(1-\frac{1}{\dr})^{\dr}}{\dr-1}$. 
Consider step (b).
Set $\entropy(\Ldens{y})=h_2(p) \stackrel{\text{Lemma}~\ref{lem:boundsonent}}{\geq} 4 p \bar{p}$. Then
\begin{align*}
\entropy(\Ldens{x}) & = \entropy(\Ldens{c} \vconv \Ldens{y}^{\vconv \dl-1})  
\leq \entropy(\Ldens{y}^{\vconv \dl-1}) 
\leq \entropy(\Ldens{a}_{\BSC(p)}^{\vconv \dl-1})  \\
& \stackrel{\text{Lem.~\ref{lem:entropyvsbatta}}}{\leq} 
\batta(\Ldens{a}_{\BSC(p)}^{\vconv \dl-1}) = (4 p \bar{p})^{\frac{\dl-1}{2}} \leq  \entropy(\Ldens{y})^{\frac{\dl-1}{2}}.
\end{align*}
In step (c) we substituted the upper and lower bounds on
$\entropy(\Ldens{x})$ for the first and second expression respectively.
Also, in the last inequality, we have 
$\frac1{2 (\dr-1)^3} \leq \frac{\dl}{\dr}(\frac34 - \frac{13}{8e})$ since we assumed that
$\dr \geq 1+5 (\dr/\dl)^{\frac43} \geq 1+(2\frac{\dl}{\dr}(\frac34 - \frac{13}{8e}))^{-\frac43}$.

Let us summarize. If $\Ldens{x} = \Ldens{c} \vconv \Ldens{y}^{\vconv \dl-1}$ and if $\entropy(\Ldens{x})
\in  [(\frac34)^{\frac{\dl-1}{2}}, \frac{1}{2e} \frac{\dl}{\dr}+\frac1{2(\dr-1)^3}]$
then $A \leq - \frac1{8 e} \frac{\dl}{\dr}$.  Let us drop the 
condition  $\Ldens{x} = \Ldens{c} \vconv \Ldens{y}^{\vconv \dl-1}$
and assume instead that $d(\Ldens{x}, \Ldens{c} \vconv \Ldens{y}^{\vconv \dl-1}) \leq \delta$. 
Define $\tilde{\Ldens{x}}=\Ldens{c} \vconv \Ldens{y}^{\vconv \dl-1}$. Then
\begin{align*}
A & \leq \entropy(\tilde{\Ldens{x}}) -\frac{\dl}{\dr} \entropy(\Ldens{y}) +B +(\entropy(\Ldens{x}) -\entropy(\tilde{\Ldens{x}})) \\
  & \leq \entropy(\tilde{\Ldens{x}}) \!-\!\frac{\dl}{\dr} \entropy(\Ldens{y}) \!+\! B \!+\! 3 \sqrt{\delta} 
   \leq -\frac{1}{8e} \frac{\dl}{\dr} \!+\! 3 \sqrt{\delta} \leq -\frac{1}{16 e} \frac{\dl}{\dr}.
\end{align*}
The one-before last step follows since if $\entropy(\Ldens{x}) \in I_1$ then
$\entropy(\tilde{\Ldens{x}}) \in  [(\frac34)^{\frac{\dl-1}{2}},
\frac{1}{2e} \frac{\dl}{\dr}+\frac1{2(\dr-1)^3}]$ and so we can apply the previous procedure.
Also in the above computations we have used property
(\ref{lem:blmetricwasserboundsbatta}) of Lemma~\ref{lem:blmetric} to bound $ |
\entropy(\Ldens{x}) -\entropy(\tilde{\Ldens{x}})| \leq 3\sqrt{\delta}$.

For $\entropy(\Ldens{x}) \in [\frac1{2 e} \frac{\dl}{\dr},
\frac{\dl}{\dr}- \dl e^{-4 (\dr-1) (\frac{\dl}{16 e \dr})^{2}}-\kappa]$,
\begin{align*}
A & = \entropy(\Ldens{x}) -\frac{\dl}{\dr} \entropy(\Ldens{y}) + 
(\dl - 1 - \frac{\dl}{\dr}) \sum_{n} \alpha_n \psi(\moment_{\Ldens{x}, n}) \\
 & \stackrel{\text{(a)}}{\leq} \entropy(\Ldens{x}) -\frac{\dl}{\dr} \entropy(\Ldens{y}) + 
(\dl - 1 - \frac{\dl}{\dr}) \sum_{n} \alpha_n \moment_{\Ldens{x}, n}^{\dr-1} \\
 & \stackrel{\text{(b)}}{\leq} \entropy(\Ldens{x}) -\frac{\dl}{\dr} \entropy(\Ldens{y}) + 
(\dl - 1 - \frac{\dl}{\dr}) \sum_{n} \alpha_n \moment_{\Ldens{x}, 1}^{\dr-1} \\
 & \stackrel{\text{(c)}}{\leq} \entropy(\Ldens{x})\! -\!\frac{\dl}{\dr} \entropy(\Ldens{y}) \!+\! 
(\dl\! -\! 1 \!- \!\frac{\dl}{\dr})\!\! \sum_{n} \!\!\alpha_n (1\!-\!2 h_2^{\!-\!1}\!(\entropy(\Ldens{x})))^{2 (\dr\!-\!1)} \\
 & \stackrel{\text{(d)}}{\leq} h_2(p)- \frac{\dl}{\dr} (1-e^{-4 (\dr-1) p}) + (\dl\!-\!\frac{\dl}{\dr}) (1-2 p)^{2 (\dr-1)} \\
 & \stackrel{\text{(e)}}{\leq} \frac{\dl}{\dr}\!-\!\dl e^{-4 (\dr\text{-}1) (\frac{2 \dl}{11 e \dr})^{\frac{4}{3}}}
 \!\!\!\!-\!\kappa -\! \frac{\dl}{\dr} \!+\! \dl e^{-4 (\dr-1) (\frac{2 \dl}{11 e \dr})^{\frac{4}{3}}} \\
& \leq -\kappa.
\end{align*}
In (a) we upper bound $\psi(x)=(1-x) x^{\dr-1}$ by $x^{\dr-1}$,
$x \in [0, 1]$, and note that $\moment_{\Ldens{x}, n} \in [0, 1]$.
In (b) we use $\moment_{\Ldens{x}, n} \leq \moment_{\Ldens{x}, 1}$
(this is true since $x^{2n}$ is decreasing for each fixed $x \in
[0, 1]$ as a function of $n$) and that $x^{\dr-1}$ is increasing.
Step (c) is a consequence of the bound
$\moment_{\Ldens{x}, 1} \leq (1-2h_2^{-1}(\entropy(\Ldens{x})))^2$.
Let us prove this inequality. Equivalently, we
want to show
$\entropy(\Ldens{x}) \leq h_2\Big(\frac{1 - \sqrt{\moment_{\Ldens{x}, 1}}}2\Big)$.
By Jensen
\begin{align*}
\moment_{\Ldens{x}, n} & = \int \absDdens{x}(z) z^{2 n} \dee z
\geq  \Bigl(\int \absDdens{x}(z) z^2 \dee z \Bigr)^n =\moment_{\Ldens{x}, 1}^n.
\end{align*}
Using the above we have,
\begin{align*}
1- \sum_{n \geq 1} \alpha_n \moment_{\Ldens{x}, n} & \leq 1 - \sum_{n \geq 1} \alpha_n \moment^n_{\Ldens{x}, 1} 
= h_2\Big(\frac{1 - \sqrt{\moment_{\Ldens{x}, 1}}}2\Big).
\end{align*}
The claim is proven by noticing that the lhs above is equal to $\entropy(\Ldens{x})$.

Step (d) uses the following lower bound on
$\entropy(\Ldens{y})=\entropy(\Ldens{x}^{\cconv \dr-1})$.  Set $\entropy(\Ldens{x})=h_2(p)$. From
extremes of information combining we know that we get the lowest
entropy if we assume that $\Ldens{x}$ is a BSC density. Therefore,
\begin{align*}
\entropy(\Ldens{y}) &  \geq  h_2\bigl( \frac{1-(1-2 p)^{\dr-1}}{2} \bigr) 
\stackrel{\text{(\ref{eq:lowerboundbinaryentropy})}}{\geq} 1\!-\!(1\!-\!2p)^{2(\dr-1)} \\
& = 1-e^{2(\dr\text{-}1) \ln(1\text{-}2 p)}  \geq 1\!-\!e^{-4 (\dr-1) p}.
\end{align*}
Consider finally step (e).  We know that $h_2(p) \in I_2$.  Combined
with (\ref{eq:upperboundbinaryentropytwo}) and $(1\!-\!2p)^{2(\dr-1)} \leq e^{-4 (\dr-1) p}$  we conclude that $p
\geq (\frac{2}{11 e} \frac{\dl}{\dr})^{\frac43}$.  \end{IEEEproof}

\section{Spacing of FPs --Lemma~\ref{lem:spacing} and Transition Length of FPs --
Lemma~\ref{lem:transitionlength}}\label{sec:spacingandtransitionlength}

If we are given a proper one-side FP (with any boundary
condition) then consecutive elements of the FP cannot be too
different from each other. This is made precise in the following
lemma.  
\begin{lemma}[Spacing of FP]\label{lem:spacing}
Let $(\Ldens{c}, \Ldens{\x})$ be a proper one-sided FP on $[-\Lfp, 0]$, $\Lfp \geq 0$ with any 
boundary condition. 
\begin{enumerate}[(i)]
\item
For $i \in [-\Lfp+1, 0]$ 
\begin{align*}
d(\Ldens{x}_i, \Ldens{x}_{i-1}) \leq  \frac{\dl-1}{w}, \;\; \batta(\Ldens{x}_i)-\batta(\Ldens{x}_{i-1}) \leq  \frac{\dl-1}{w}.
\end{align*} 
\item
Let $\Ldens{\xavg}_i$ denote the weighted average $\Ldens{\xavg}_i=\frac1{w^2}\sum_{j, k=0}^{w-1} \Ldens{x}_{i+j-k}$. 
Then, for any $i \in [-\infty, \infty]$, 
\begin{align*}
d(\Ldens{\xavg}_i, \Ldens{\xavg}_{i-1})  \leq \frac{1}{w}, \;\;
\batta(\Ldens{\xavg}_i)-\batta(\Ldens{\xavg}_{i-1}) \leq \frac{1}{w}.  
\end{align*} 
\end{enumerate}
\end{lemma}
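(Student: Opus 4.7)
The plan rests on a simple \emph{shift identity} for the inner averages appearing in the DE equation. Define $\Ldens{z}_{i,j} = \frac{1}{w}\sum_{k=0}^{w-1}\Ldens{x}_{i+j-k}$. Substituting $k \mapsto k+1$ on the right-hand side gives $\Ldens{z}_{i-1,j} = \Ldens{z}_{i, j-1}$, i.e.\ shifting the section index by one is exactly compensated by shifting the offset index by $-1$. In either form ($\Ldens{\xavg}_i = \frac{1}{w}\sum_j \Ldens{z}_{i,j}$ for part~(ii) or $\Ldens{u}_i = \frac{1}{w}\sum_j \Ldens{z}_{i,j}^{\cconv \dr-1}$ for part~(i)), this identity collapses the difference between neighboring sections to a single boundary term.

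For part~(ii), the identity rewrites $\Ldens{\xavg}_{i-1}$ as $\frac{1}{w}\sum_{j=-1}^{w-2}\Ldens{z}_{i,j}$, so $\Ldens{\xavg}_i$ and $\Ldens{\xavg}_{i-1}$ agree on all but one of their $w$ summands: the right endpoint $\Ldens{z}_{i,w-1}$ in $\Ldens{\xavg}_i$ is replaced by the left endpoint $\Ldens{z}_{i,-1}$ in $\Ldens{\xavg}_{i-1}$. Writing both as convex combinations of the common bulk $\frac{1}{w-1}\sum_{j=0}^{w-2}\Ldens{z}_{i,j}$ (with weight $\frac{w-1}{w}$) and the differing endpoint (with weight $\frac{1}{w}$), the convexity property~(v) together with the universal bound $d\leq 1$ from~(ii) of Lemma~\ref{lem:blmetric} gives
\[
d(\Ldens{\xavg}_i, \Ldens{\xavg}_{i-1}) \leq \tfrac{1}{w}\, d(\Ldens{z}_{i,w-1}, \Ldens{z}_{i,-1}) \leq \tfrac{1}{w}\,.
\]
Linearity of $\batta(\cdot)$ as a functional on $|D|$-distributions gives the same $1/w$ bound for $\batta(\Ldens{\xavg}_i) - \batta(\Ldens{\xavg}_{i-1})$ directly, as an identity rather than an inequality.

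For part~(i), the same argument applied to $\Ldens{u}_i = \frac{1}{w}\sum_{j=0}^{w-1}\Ldens{z}_{i,j}^{\cconv \dr-1}$ (check-node convolution is applied componentwise, before the outer average, so the shift identity is preserved) yields $d(\Ldens{u}_i, \Ldens{u}_{i-1}) \leq 1/w$ and $\batta(\Ldens{u}_i) - \batta(\Ldens{u}_{i-1}) \leq 1/w$. To transfer these bounds back to $\Ldens{x}_i = \Ldens{c}\vconv \Ldens{u}_i^{\vconv \dl-1}$, I would use multiplicativity of the Battacharyya functional under variable-node convolution, $\batta(\Ldens{x}_i) = \batta(\Ldens{c})\batta(\Ldens{u}_i)^{\dl-1}$, combined with the elementary inequality $a^{\dl-1} - b^{\dl-1} \leq (\dl-1)(a-b)$ for $0\leq b\leq a\leq 1$ and $\batta(\Ldens{c})\leq 1$, to obtain the claimed $\batta(\Ldens{x}_i)-\batta(\Ldens{x}_{i-1}) \leq (\dl-1)/w$; and for the Wasserstein statement, the regularity property~(vi) of Lemma~\ref{lem:blmetric} applied to the $(\dl-1)$-fold variable-node convolution converts the $1/w$ bound on $d(\Ldens{u}_i,\Ldens{u}_{i-1})$ into a bound of order $(\dl-1)/w$ on $d(\Ldens{x}_i,\Ldens{x}_{i-1})$.

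The monotonicity ($\Ldens{x}_{i-1}\prec \Ldens{x}_i$) built into a proper one-sided FP is used only to ensure that $\batta(\Ldens{x}_i) - \batta(\Ldens{x}_{i-1})$ is non-negative, so that the upper bound is meaningful; otherwise it plays no role. No step is genuinely hard: the only real idea is the one-line shift identity, and everything else is a mechanical assembly of the convexity, linearity, multiplicativity, and regularity properties collected in Lemma~\ref{lem:blmetric}.
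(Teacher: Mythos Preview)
Your argument for part~(ii) and for the Battacharyya half of part~(i) is correct and matches the paper's approach in spirit: both rely on the shift identity so that two neighboring averages share all but one of their $w$ summands, and then exploit convexity/linearity. Your treatment of the Battacharyya bound in~(i) via multiplicativity and the mean-value inequality $a^{\dl-1}-b^{\dl-1}\le(\dl-1)(a-b)$ is a clean alternative to the paper's multinomial expansion and yields the correct constant.

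However, there is a concrete gap in the Wasserstein half of part~(i). Property~(vi) of Lemma~\ref{lem:blmetric} reads $d(\Ldens{a}^{\vconv i}\vconv\Ldens{c},\Ldens{b}^{\vconv i}\vconv\Ldens{c})\le 2i\,d(\Ldens{a},\Ldens{b})$, so from $d(\Ldens{u}_i,\Ldens{u}_{i-1})\le 1/w$ you only obtain $d(\Ldens{x}_i,\Ldens{x}_{i-1})\le 2(\dl-1)/w$, a factor of~$2$ too large. Your phrase ``of order $(\dl-1)/w$'' does not rescue the stated bound. The paper avoids this loss by \emph{not} passing through $d(\Ldens{u}_i,\Ldens{u}_{i-1})$: it expands $(\frac{1}{w}\sum_j\Ldens{a}_j)^{\vconv\dl-1}$ and $(\frac{1}{w}\sum_j\Ldens{b}_j)^{\vconv\dl-1}$ multinomially, where $\Ldens{a}_j=\Ldens{b}_j$ for all but one index. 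The terms with $d_1=0$ cancel exactly, and the total multinomial weight of the remaining terms is $1-(1-1/w)^{\dl-1}\le(\dl-1)/w$; convexity then gives the sharp constant. The key point is that the cancellation must happen \emph{before} the variable-node convolution is collapsed into a single density, because the $2$ in the regularity bound for $\vconv$ is generally unavoidable once you compare two arbitrary densities.
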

{\em Discussion:}
Each of these two claims states that consecutive distributions are
``close'' either wrt the Wasserstein distance or the 
Battacharyya parameter. Further, the difference is either for
the distributions themselves or their averages.
\begin{proof}
\begin{enumerate}[(i)]
\item
To simplify notation, for $i \in [-\Lfp+1, 0]$ fixed, let
$\Ldens{f}_{j} = \bigl(\frac{1}{w} \sum_{k=0}^{w-1} 
\Ldens{x}_{i+j-k-1} \bigr)^{\cconv \dr-1}$. 
Writing the DE equations explicitly, 
\begin{align*}
\Ldens{x}_i & = \Ldens{c} \vconv 
\Bigl(\frac{1}{w} \sum_{j=1}^{w} \Ldens{f}_{j} \Bigr)^{\vconv \dl-1}\!\!\!\!, \;\;
\Ldens{x}_{i-1} = \Ldens{c} \vconv 
\Bigl(\frac{1}{w} \sum_{j=0}^{w-1} \Ldens{f}_{j}\Bigr)^{\vconv \dl-1}.
\end{align*}
Note that the expressions for $\Ldens{x}_i$ and $\Ldens{x}_{i-1}$ are
similar. The only difference is that $\Ldens{x}_i$ contains $\Ldens{f}_w$
whereas $\Ldens{x}_{i-1}$ contains $\Ldens{f}_0$. Rewrite both expressions in the form 
\begin{align*}
\Ldens{x}_i = \Ldens{c} \vconv \Bigl(\frac{1}{w} \sum_{j=1}^{w} \Ldens{a}_{j} \Bigr)^{\!\vconv \dl\!-\!1}\!\!\!, & \;\;\;
\Ldens{x}_{i-1} = \Ldens{c} \vconv \Bigl(\frac{1}{w} \sum_{j=1}^{w} \Ldens{b}_{j} \Bigr)^{\!\vconv \dl\!-\!1},
\end{align*}
where $\Ldens{a}_i=\Ldens{b}_i = \Ldens{f}_{i-1}$, $i=2, \dots, w$,
$\Ldens{a}_1=\Ldens{f}_w$, and $\Ldens{b}_1=\Ldens{f}_0$.
Now expand $\Ldens{x}_i$ as well as $\Ldens{x}_{i-1}$ in the form 
\begin{align*}
\Ldens{x}_i & =     \sum_{d_1, \dots, d_w: d_1\!+\!\dots\!+\!d_w=\dl-1} 
\frac{\binom{\dl\!-\!1}{d_1, \dots, d_w}}{w^{-(\dl\!-\!1)}}
	 \Ldens{a}_1^{\vconv d_1}\! \vconv\! \Ldens{c}_{d_2, \dots, d_w}, \\
\Ldens{x}_{i-1} & 
 =     \sum_{d_1, \dots, d_w: d_1\!+\!\dots\!+\!d_w=\dl-1} 
\frac{\binom{\dl\!-\!1}{d_1, \dots, d_w}}{w^{-(\dl\!-\!1)}}
\Ldens{b}_1^{\vconv d_1} \!\vconv\! \Ldens{c}_{d_2, \dots, d_w},
\end{align*}
where $\Ldens{c}_{d_2, \dots, d_w}=\Ldens{a}_2^{\vconv d_2} \vconv \dots
\vconv \Ldens{a}_w^{\vconv d_w} \vconv \Ldens{c}$.
Note that the terms in the expansions of $\Ldens{x}_i$ and $\Ldens{x}_{i-1}$ 
with $d_1=0$ are identical. Therefore, if we consider
$\batta(\Ldens{x}_i)-\batta(\Ldens{x}_{i-1})$, these terms cancel. 
We can upper bound the difference by the Battacharyya constant of all those terms
of the expansion of $\Ldens{x}_i$ which correspond to $d_1\geq1$, i.e.,
\begin{align*}
& \batta(\Ldens{x}_i)-\batta(\Ldens{x}_{i-1}) \\
& \leq w^{-(\dl-1)} \!\!\!\!\!\!\!\!\!\!\!\!\!\!\!\!\! 
\sum_{\substack{d_1 \geq 1, \dots, d_w,\\\text{s.t.}\, d_1+\dots+d_w=\dl-1}} \!\!\binom{\dl-1}{d_1, \dots, d_w} \batta(
\Ldens{a}_1^{\vconv d_1} \vconv \Ldens{c}_{d_2, \dots, d_w}
)\\
& \leq w^{-(\dl-1)} \!\!\!\!\!\!\!\!\!\!\!\!\!\!\!\!\! 
\sum_{\substack{d_1 \geq 1, \dots, d_w,\\\text{s.t.}\, d_1+\dots+d_w=\dl-1}} \!\!\binom{\dl-1}{d_1, \dots, d_w} \\
& = 1 - (1-\frac{1}{w})^{\dl-1} \leq \frac{\dl-1}{w}.
\end{align*}
If we are interested in the Wasserstein distance instead, we can
proceed in an almost identical fashion. The only difference is
that in the last sequence of inequalities we use the convexity
property (\ref{lem:blmetricconvexity}) and the boundedness property
(\ref{lem:blmetricboundedness}) of (the Wasserstein metric)
Lemma~\ref{lem:blmetric}. 

\item
Using the convexity property (\ref{lem:blmetricconvexity}) of (the
Wasserstein metric) Lemma~\ref{lem:blmetric} and canceling common terms,
we get
\begin{align*}
& d(\Ldens{\xavg}_i, \Ldens{\xavg}_{i-1})  = 
d\bigl(\frac1{w^2}\sum_{j, k=0}^{w-1} \Ldens{x}_{i+j-k}, \frac1{w^2}\sum_{j, k=0}^{w-1} \Ldens{x}_{i+j-k-1}\bigr) \\
& =\frac1{w^2} d\bigl(\sum_{j=0}^{w-1} \Ldens{x}_{i+j}, \sum_{j=0}^{w-1} \Ldens{x}_{i-1-j}\bigr)
\leq \frac{1}{w}. 
\end{align*}
The proof for the Battacharyya parameter proceeds
in an identical fashion and uses the linearity of the Battacharyya
parameter.  
\end{enumerate}
\end{proof}

\begin{lemma}[Basic Bounds on FP]\label{lem:avgprop}
Let $(\Ldens{c},\Ldens{\x})$ be a proper one-sided FP on $[-\Lfp, 0]$, $\Lfp \geq 0$ with any boundary condition.
Let $\batta_i=\batta(\Ldens{x}_i)$ denote the Battacharyya parameter of
the density of the $i$-th section.  
Then for all $i\in [-\Lfp,0]$,
\begin{align*}
\batta_i \leq
\batta(\Ldens{c}) (1-(1-\frac1{w^2}\sum_{j, k=0}^{w-1}\batta_{i+j-k})^{\dr-1})^{\dl-1}.
\end{align*}
\end{lemma}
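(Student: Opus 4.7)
The statement is a direct application of the Battacharyya functional to the fixed point equation \eqref{eq:densevolxi}, combined with extremes of information combining bounds. The plan is to walk down the right-hand side of the DE recursion layer by layer and track what each operation does to the Battacharyya parameter.

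First I would apply $\batta(\cdot)$ to both sides of \eqref{eq:densevolxi}. Using the multiplicative property of $\batta$ under variable-node convolution, $\batta(\Ldens{a}\vconv\Ldens{b}) = \batta(\Ldens{a})\batta(\Ldens{b})$, the channel factor $\batta(\Ldens{c})$ splits off cleanly and the outer $\vconv(\dl-1)$ turns into a $(\dl-1)$-th power. The linearity of the Battacharyya functional then lets me push $\batta$ inside the outer average over $j$, giving an expression of the form
\begin{align*}
\batta_i = \batta(\Ldens{c})\Big(\frac{1}{w}\sum_{j=0}^{w-1}\batta\big((\tfrac{1}{w}\textstyle\sum_{k=0}^{w-1}\Ldens{x}_{i+j-k})^{\cconv \dr-1}\big)\Big)^{\dl-1}.
\end{align*}

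Next, for each inner $\cconv(\dr-1)$ term I would invoke the standard extremes of information combining bound at a check node, which states (as used in \eqref{eq:extremeinfoxi} and \cite[Chapter~4]{RiU08}) that the worst case (largest Battacharyya) is the BEC, yielding
\begin{align*}
\batta\big((\tfrac{1}{w}\textstyle\sum_{k=0}^{w-1}\Ldens{x}_{i+j-k})^{\cconv \dr-1}\big) \leq 1-\big(1-\tfrac{1}{w}\textstyle\sum_{k=0}^{w-1}\batta_{i+j-k}\big)^{\dr-1},
\end{align*}
using linearity of $\batta$ once more to exchange it with the inner average. This already gives a bound; the only remaining step is to collapse the outer average over $j$ into a single expression involving the doubly averaged Battacharyya.

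For that last step I would use Jensen's inequality applied to the concave function $f(x)=1-(1-x)^{\dr-1}$ on $[0,1]$ (concavity follows since $f''(x)=-(\dr-1)(\dr-2)(1-x)^{\dr-3}\leq 0$). Setting $s_j=\frac{1}{w}\sum_{k=0}^{w-1}\batta_{i+j-k}$, Jensen gives $\frac{1}{w}\sum_j f(s_j) \leq f\big(\frac{1}{w}\sum_j s_j\big) = f\big(\frac{1}{w^2}\sum_{j,k}\batta_{i+j-k}\big)$, and substituting yields exactly the claimed bound. None of the steps requires the FP to be proper or one-sided in any essential way beyond the fact that \eqref{eq:densevolxi} holds at position $i$. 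I do not expect any real obstacle here; the only thing to be careful about is indexing (boundary sections $\Ldens{x}_{i+j-k}$ with index outside $[-\Lfp,0]$ must be interpreted as $\Delta_{+\infty}$, i.e., Battacharyya zero), which is consistent with Definition~\ref{def:fixedpoints}.
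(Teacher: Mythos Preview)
Your proposal is correct and follows essentially the same route as the paper: apply $\batta$ to the DE equation, use multiplicativity at the variable node and linearity to pull out $\batta(\Ldens{c})$ and the outer average, bound each check-node term by the BEC extreme, and then collapse the outer average via Jensen (the paper phrases Jensen in terms of the convex function $(1-x)^{\dr-1}$ rather than the concave $1-(1-x)^{\dr-1}$, but this is the same inequality). One small inaccuracy in your closing remark: for indices $>0$ the boundary densities are \emph{not} $\Delta_{+\infty}$ in general (they depend on the boundary condition---free, forced, or increasing), but this does not matter since the bound holds for whatever $\batta_{i+j-k}\in[0,1]$ appear in the sum.
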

\begin{IEEEproof}
For all $i \in [-\Lfp, 0]$
\begin{align*}
\Ldens{x}_i
& = \Ldens{c} \vconv
\Bigl(\frac{1}{w} \sum_{j=0}^{w-1} \bigl(\frac{1}{w} \sum_{k=0}^{w-1}
\Ldens{x}_{i+j-k} \bigr)^{\cconv \dr-1} \Bigr)^{\vconv\dl-1}.
\end{align*}
Since the Battacharyya parameter is multiplicative in $\vconv$
and linear,
\begin{align*}
\batta(\Ldens{x}_i)
& = \batta(\Ldens{c})
\Bigl(\frac{1}{w} \sum_{j=0}^{w-1} \batta\Big(\bigl(\frac{1}{w} \sum_{k=0}^{w-1}
\Ldens{x}_{i+j-k} \bigr)^{\cconv \dr-1}\Big) \Bigr)^{\dl-1}.
\end{align*}
Further, recall from Lemma~\ref{lem:extremes}, property
(\ref{lem:extremesmaxcconv}), and the ensuing discussion, that
$ \batta(\Ldens{a}^{\cconv \dr-1}) \leq 1-(1-\batta(\Ldens{a}))^{\dr-1}$, so that
$$
\batta\Big(\bigl(\frac{1}{w} \sum_{k=0}^{w-1}
\Ldens{x}_{i+j-k} \bigr)^{\cconv \dr-1}\Big) \leq 1 - \bigl(1-\frac{1}{w}
\sum_{k=0}^{w-1} \batta_{i+j-k} \bigr)^{\dr-1}.
$$
Combining, we get
\begin{align*}
\batta_i \leq \batta(\Ldens{c})\Big( 1 - \frac{1}{w} \sum_{j=0}^{w-1} \bigl(1-\frac{1}{w}
\sum_{k=0}^{w-1} \batta_{i+j-k} \bigr)^{\dr-1} \Big)^{\dl-1}.
\end{align*}
Let $f(x) = (1-x)^{\dr-1}$, $x \in [0, 1]$. Since $f''(x) =
(\dr-1)(\dr-2)(1-x)^{\dr-3}\geq 0$, $f(x)$ is convex.
Let $y_{j} = \frac1w\sum_{k=0}^{w-1}\batta_{i+j-k}$. 
Then by Jensen,
\begin{align*}
\frac{1}{w} \sum_{j=0}^{w-1} f(y_{j}) \geq f(\frac1w\sum_{j=0}^{w-1}y_j),
\end{align*}
which proves the claim.
\end{IEEEproof} 

\begin{lemma}[Basic Properties of $h(x)$, \cite{KRU10}]\label{lem:propertyofh(x)}
Consider the $(\dl, \dr)$-regular ensemble with $\dl \geq 3$ and
let $\epsilon \in (\epsilon^{\BPsmall}, 1]$, where
$\epsilon^{\BPsmall}(\dl, \dr)$ is the BP threshold the regular
ensemble when transmitting over the BEC.  Define $h(x) = \epsilon
(1-(1-x)^{\dr-1})^{\dl-1} - x$.

\begin{enumerate}[(i)]
\item For $\epsilon>\epsilon^{\BPsmall}$, $h(x)=0$ has exactly three solutions, one
of them being 0 and the other two denoted by $\xunstab(\epsilon)$ and $\xstab(\epsilon)$
with $0<\xunstab(\epsilon) < \xstab(\epsilon)$. 
Further, $h(x)\leq 0$ for all $x\in [0, \xunstab(\epsilon)]$ and $h(x) \geq 0$ for all
 $x \in [\xunstab(\epsilon), \xstab(\epsilon)]$.\label{lem:propertyofh(x)zero}

\item $h'(\xunstable) > 0$ and $h'(\xstable) < 0$; 
$|h'(x)| \leq \dl \dr$ for $x \in [0, 1]$. \label{lem:propertyofh(x)one}

\item
There exists a unique value $0\leq x_*(\epsilon) \leq \xunstable$ so
that $h'(x_*(\epsilon)) = 0$, and there exists a unique value $\xunstable
\leq x^*(\epsilon) \leq \xstable$ so that $h'(x^*(\epsilon))=0$. 
Further, $h(x)$ is decreasing in $[0, x_*(\epsilon)]$. 
\label{lem:propertyofh(x)two}

\item
Let 
$\kappa_*(\epsilon) = \min\{ -h'(0), \frac{-h(x_*(\epsilon))}{x_*(\epsilon)} \}$.
The quantity $\kappa_*(\epsilon)$ is non-negative and depends only
on the channel parameter $\epsilon$ and the degrees $(\dl, \dr)$. \label{lem:propertyofh(x)three}
\item
For $0 \leq \epsilon \leq 1$,
$x_*(\epsilon) > \frac{1}{\dl^2 \dr^2}$.
\label{lem:propertyofh(x)four}
\item
For $0 \leq \epsilon \leq 1$,
$\kappa_*(\epsilon) \geq \frac1{8\dr^2}$.
\label{lem:propertyofh(x)five}

\item
Let $\kappa_*$ and $x_*$ denote the universal lower bounds, given in the previous part, on $\kappa_*(\epsilon)$ and
$x_*(\epsilon)$, respectively. 
If we draw a line from $0$ with slope $-\kappa_*$, then $h(x)$
lies below this line for $x \in [0 , x_*]$. \label{lem:propertyofh(x)six}

\item For $\epsilon \in (\epsilon^{\BPsmall}, 1]$ we have 
\begin{align}\label{equ:boundonxuone}
\xunstab(\epsilon) \geq \xunstab(1) \geq (\dr-1)^{-\frac{\dl-1}{\dl-2}}.
\end{align} \label{lem:propertyofh(x)seven}

\end{enumerate}
\end{lemma}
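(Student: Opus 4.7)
The plan is to handle the eight parts in order, working with the decomposition $h(x) = f_\epsilon(x) - x$ where $f_\epsilon(x) = \epsilon(1-(1-x)^{\dr-1})^{\dl-1}$. The useful setup facts are $f_\epsilon(0) = 0$, $f'_\epsilon(0) = 0$ (which is where the hypothesis $\dl \geq 3$ enters), $f_\epsilon(1) = \epsilon$, and the explicit derivative
\begin{align*}
f'_\epsilon(x) = \epsilon(\dl-1)(\dr-1)(1-(1-x)^{\dr-1})^{\dl-2}(1-x)^{\dr-2},
\end{align*}
which vanishes at both endpoints of $[0,1]$ and is unimodal on the interval. Parts (i)--(iv) then follow from the geometric picture of $f_\epsilon$ against $y=x$. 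For (i), the condition $\epsilon > \epsilon^{\BPsmall}$ is exactly the statement that $f_\epsilon$ crosses the identity twice on $(0,1]$, producing simple zeros $\xunstab < \xstab$ with the stated sign behavior of $h$ in between. For (ii), simple-zero sign changes force the strict derivative inequalities, while $|h'| \leq f'_\epsilon + 1 \leq (\dl-1)(\dr-1) + 1 \leq \dl\dr$ is direct. Part (iii) is an immediate consequence of unimodality of $f'_\epsilon$ combined with $\max f'_\epsilon > 1$ (valid precisely because $\epsilon > \epsilon^{\BPsmall}$): the equation $f'_\epsilon = 1$ has exactly two roots $x_* < x^*$, necessarily bracketing $\xunstab$, and $h$ is decreasing on $[0,x_*]$. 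Part (iv) follows from $h(x_*) < 0 < x_*$ and $-h'(0) = 1$.

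For (v), I plan to show $f'_\epsilon(x) < 1$ at $x = 1/(\dl^2\dr^2)$, which by monotonicity of $f'_\epsilon$ on $[0,x_*]$ forces $x_* > 1/(\dl^2\dr^2)$. The concavity bound $1-(1-x)^{\dr-1} \leq (\dr-1)x$ together with $(1-x)^{\dr-2} \leq 1$ gives $f'_\epsilon(x) \leq (\dl-1)(\dr-1)^{\dl-1}x^{\dl-2}$; evaluating at $x = 1/(\dl^2\dr^2)$ and using $\dl \geq 3$, $\dr \geq \dl$ shows this is strictly below $1$. Part (vi) is the main technical step. The $-h'(0) = 1 \geq 1/(8\dr^2)$ piece is trivial, so the work is to bound $-h(x_*)/x_* = 1 - f_\epsilon(x_*)/x_*$ from below by $1/(8\dr^2)$. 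From $f'_\epsilon(x_*) = 1$, one derives the exact identity
\begin{align*}
\frac{f_\epsilon(x_*)}{x_*} = \frac{1-(1-x_*)^{\dr-1}}{(\dl-1)(\dr-1)\, x_*\, (1-x_*)^{\dr-2}},
\end{align*}
and the trapezoidal bound $1-(1-x)^{\dr-1} \leq \tfrac{(\dr-1)x}{2}\bigl(1+(1-x)^{\dr-2}\bigr)$ (valid because $t \mapsto (1-t)^{\dr-2}$ is convex) gives
\begin{align*}
\frac{f_\epsilon(x_*)}{x_*} \leq \frac{1}{2(\dl-1)}\Bigl(1 + (1-x_*)^{-(\dr-2)}\Bigr).
\end{align*}
The hard part will be closing the estimate to $\leq 1 - 1/(8\dr^2)$: for small $x_*$ the factor $(1-x_*)^{-(\dr-2)}$ is near $1$ and the bound reduces to roughly $1/(\dl-1) \leq 1/2$, while for larger $x_*$ one combines the $f'_\epsilon(x_*) = 1$ constraint with a Bernoulli-type estimate $(1-x_*)^{\dr-2} \geq 1 - (\dr-2)x_*$ to quantify the gap. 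Part (vii) is then immediate from (vi): on $[0,x_*]$, $f''_\epsilon = (f'_\epsilon)' \geq 0$ since $f'_\epsilon$ is increasing there, so $h$ is convex on this interval, and a convex function with $h(0) = 0$ lies below the secant from $(0,0)$ to $(x_*, h(x_*))$, whose slope is $h(x_*)/x_* \leq -\kappa_*$.

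Finally for (viii), I would first establish monotonicity of $\xunstab$ in $\epsilon$. Rewriting the defining equation as $\epsilon(x) = x/(1-(1-x)^{\dr-1})^{\dl-1}$, this function is $\cup$-shaped on $(0,1]$ with unique minimum value $\epsilon^{\BPsmall}$; the unstable root $\xunstab(\epsilon)$ lies on the decreasing left branch, so $\xunstab$ is decreasing in $\epsilon$ and hence $\xunstab(\epsilon) \geq \xunstab(1)$. For the explicit lower bound at $\epsilon = 1$, the defining equation $\xunstab(1) = (1-(1-\xunstab(1))^{\dr-1})^{\dl-1}$ combined with $1-(1-x)^{\dr-1} \leq (\dr-1)x$ yields $\xunstab(1) \leq ((\dr-1)\xunstab(1))^{\dl-1}$, which rearranges to $\xunstab(1)^{\dl-2} \geq (\dr-1)^{-(\dl-1)}$, i.e. $\xunstab(1) \geq (\dr-1)^{-(\dl-1)/(\dl-2)}$.
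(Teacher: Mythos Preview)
The paper does not actually prove this lemma; it is stated with attribution to \cite{KRU10} and used as a black box. So there is no in-paper proof to compare against, and your proposal must stand on its own.

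Your arguments for parts (i)--(v), (vii), and (viii) are correct and clean. The unimodality of $f'_\epsilon$ is the right structural fact, and your derivations of the derivative bounds, the location of $x_*$ and $x^*$, the convexity argument for (vii), and the $\epsilon(x)$-branch argument for (viii) all go through without difficulty. One minor remark on (v): showing $f'_\epsilon(x_0)<1$ at the single point $x_0=1/(\dl^2\dr^2)$ does not by itself force $x_0<x_*$, since $f'_\epsilon<1$ also holds to the right of $x^*$; but your upper bound $(\dl-1)(\dr-1)^{\dl-1}x^{\dl-2}$ is increasing in $x$, so in fact $f'_\epsilon<1$ on all of $[0,x_0]$, which is what you need. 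Worth saying explicitly.

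Part (vi) is where there is a real gap. Your identity for $f_\epsilon(x_*)/x_*$ and the trapezoidal bound are both correct, and the ``small $x_*$'' case (say $(\dr-2)x_*\le 1/2$) does close easily via Bernoulli, giving $f_\epsilon(x_*)/x_*\le 3/(2(\dl-1))\le 3/4$. But your ``large $x_*$'' case does not work as written: Bernoulli gives $(1-x_*)^{\dr-2}\ge 1-(\dr-2)x_*$, which is vacuous once $(\dr-2)x_*\ge 1$ and weak just below. The constraint $f'_\epsilon(x_*)=1$ alone yields only $(1-x_*)^{-(\dr-2)}\le(\dl-1)(\dr-1)$, and plugging that into your trapezoidal bound gives $\frac{1+(\dl-1)(\dr-1)}{2(\dl-1)}$, which exceeds $1$ whenever $\dr\ge 4$. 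What is missing is an \emph{a priori} upper bound on $x_*$: since $f'_\epsilon$ is unimodal with maximizer $x_m$ (determined by $(1-x_m)^{\dr-1}=\frac{\dr-2}{(\dl-1)(\dr-1)-1}$), one has $x_*\le x_m$ for every $\epsilon$, and then $(1-x_*)^{-(\dr-2)}\le(1-x_m)^{-(\dr-2)}$ can be bounded explicitly in terms of $\dl,\dr$. Working this out (for $\dr\ge\dl\ge 3$) does close the estimate, but it is a different argument from the one you sketched, and you have not supplied it.
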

\begin{remark}
The function $h(x)$ is the DE equation for the $(\dl, \dr)$-regular ensemble when
transmitting over the BEC. The two non-zero solutions, $\xunstab(\epsilon)$ and $\xstab(\epsilon)$
 represent the unstable and the stable FPs of DE \cite{RiU08}. In the following, we will be using extremes of
information combining techniques to relate the Battacharyya parameters via $h(x)$. 
\end{remark}
 
In Figure~\ref{fig:one-sided_fixed_point} we see that within a few sections
the constellation changes from reliable sections (towards the boundary)
to sections which all have more or less the same reliability. In other words,
this transition happens quickly. This is made precise in the following lemma.
\begin{lemma}[Transition Length] \label{lem:transitionlength}
Let $\epsilon^{\BPsmall}$ be the BP threshold for transmission over the BEC using the $(\dl, \dr)$-regular (uncoupled) ensemble.
For $\epsilon \in (\epsilon^{\BPsmall}, 1]$, let $\xunstab(\epsilon)$ be
the smaller of the two strictly positive roots of the equation $h(x)=0$,
where $h(x) = \epsilon (1 - (1-x)^{\dr-1})^{\dl-1} - x$.  For $0 \leq
\epsilon \leq \epsilon^{\BPsmall}$, {\em define} $\xunstab(\epsilon) =
\lim_{\delta \downarrow \epsilon^{\BPsmall}} \xunstab(\delta)$.  

Consider transmission over a BMS channel $\Ldens{c}$. Let $w$ be admissible 
in the sense of property (\ref{equ:admissiblethree}) of Definition~\ref{def:admissible}.
Let $(\Ldens{c}, \Ldens{\x})$ be a proper one-sided
FP on $[-\Lfp, 0]$ with any boundary condition. Let
$\batta_i=\batta(\Ldens{x}_i)$ denote the Battacharyya parameter
of the density associated to the $i$-th section and define
$\epsilon=\batta(\Ldens{c})$.

Then, there exists a positive constant $c(\dl, \dr)$ which depends on
$\dl$ and $\dr$, but not on $\Lfp$ or the channel $\Ldens{c}$,
so that for any $\delta >0$
\begin{align*}
\Big\vert \{ i: \delta < \batta_i < \xunstab(\epsilon) \}\Big\vert \leq w 
\frac{c(\dl, \dr)}{\delta}.
\end{align*}
\end{lemma}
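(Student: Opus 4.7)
The plan is to combine a pointwise lower bound on $\bar{\batta}_i - \batta_i$, where
$\bar{\batta}_i := w^{-2}\sum_{j,k=0}^{w-1}\batta_{i+j-k}$,
with a global summation bound which is only $O(w)$ thanks to boundary effects cancelling in the interior. From Lemma~\ref{lem:avgprop} one has $\batta_i \leq \epsilon(1-(1-\bar{\batta}_i)^{\dr-1})^{\dl-1}$, which rewrites as $\bar{\batta}_i - \batta_i \geq -h(\bar{\batta}_i)$, with $h$ and $\xunstab(\epsilon)$ as in Lemma~\ref{lem:propertyofh(x)} and $\epsilon = \batta(\Ldens{c})$. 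Since $h \leq 0$ on $[0,\xunstab(\epsilon)]$, this inequality already forces $\bar{\batta}_i \geq \batta_i$ whenever $\batta_i < \xunstab(\epsilon)$ (otherwise $\bar{\batta}_i < \xunstab(\epsilon)$ too, $-h(\bar{\batta}_i) \geq 0$, and we would deduce $\bar{\batta}_i \geq \batta_i$, a contradiction). Using that $|h'| \leq \dl\dr$ on $[0,1]$ (Lemma~\ref{lem:propertyofh(x)}(ii)), I will upgrade this to the key pointwise bound
\[
\bar{\batta}_i - \batta_i \geq \frac{-h(\batta_i)}{1+\dl\dr},
\]
which depends only on $\batta_i$.

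Next I will sum this inequality over $i\in[-\Lfp,0]$. A change of summation variables gives $\sum_i \bar{\batta}_i = w^{-2}\sum_m N(m)\,\batta_m$, where $N(m)$ counts the number of triples $(i,j,k)$ hitting $m$. Sections more than $w$ from either boundary have $N(m)=w^2$ and cancel exactly against $\sum_i \batta_i$; only the $O(w)$ boundary sections survive, each contributing at most $1$. Thus $\sum_{i=-\Lfp}^{0}(\bar{\batta}_i-\batta_i) \leq C_0\,w$ for an absolute constant $C_0$.

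I will then set $T = \{i : \delta < \batta_i < \xunstab(\epsilon)\}$ and split $T = T_1 \cup T_2$ with $T_1 = \{i\in T: \batta_i \leq x_*(\epsilon)\}$ and $T_2 = \{i\in T : \batta_i > x_*(\epsilon)\}$. For $i\in T_1$, the slope bound of Lemma~\ref{lem:propertyofh(x)}(vii) yields $-h(\batta_i) \geq \kappa_*(\epsilon)\batta_i \geq \kappa_*(\epsilon)\delta$, so the pointwise bound summed over $T_1$ gives $|T_1|\,\kappa_*(\epsilon)\delta/(1+\dl\dr) \leq C_0 w$. The universal lower bound $\kappa_*(\epsilon) \geq 1/(8\dr^2)$ from Lemma~\ref{lem:propertyofh(x)}(vi) then produces $|T_1| \leq c_1(\dl,\dr)\, w/\delta$.

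The hard part will be bounding $|T_2|$, since $-h$ can vanish as $\batta_i$ approaches $\xunstab(\epsilon)$ and the pointwise bound therefore degrades. My plan is to use monotonicity of the constellation to replace $\bar{\batta}_i$ by $\batta_{i+w-1}$, giving the discrete-gradient inequality $\batta_{i+w-1}-\batta_i \geq -h(\batta_i)/(1+\dl\dr)$, and then to sum this over disjoint windows of width $w$ inside $T_2$. Because the total increment of $\batta$ across $T_2$ is at most $\xunstab(\epsilon)-x_*(\epsilon) \leq 1$, a window-counting argument combined with the universal shape estimates of Lemma~\ref{lem:propertyofh(x)}(v)-(vi), together with Lemma~\ref{lem:spacing} to absorb the $O(w)$ indices within distance $w$ of either endpoint of $T_2$, will give $|T_2| \leq c_2(\dl,\dr)\, w$. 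Since $\delta \leq 1$, this $O(w)$ contribution is absorbed into the final bound $|T| \leq c(\dl,\dr)\,w/\delta$.
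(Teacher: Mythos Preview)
Your treatment of $T_1$ is essentially correct and close in spirit to the paper's part (i): the pointwise bound $\bar\batta_i-\batta_i\ge -h(\batta_i)/(1+\dl\dr)$ together with $-h(x)\ge\kappa_* x$ on $[0,x_*]$ and the boundary calculation $\sum_i(\bar\batta_i-\batta_i)=O(w)$ does yield $|T_1|\le c_1(\dl,\dr)\,w/\delta$. One small point: to drop from the full sum to the sum over $T_1$ you need nonnegativity of the omitted terms, which you have only shown for indices with $\batta_i<\xunstab(\epsilon)$. Either restrict the global sum to $\{i:\batta_i<\xunstab(\epsilon)\}$ (the same boundary computation gives the same $O(w)$ bound for this partial sum) or note that by monotonicity this set is an initial segment and redo the telescoping there.

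The $T_2$ plan, however, has a genuine gap. Your discrete-gradient inequality $\batta_{i+w-1}-\batta_i\ge -h(\batta_i)/(1+\dl\dr)$ degenerates as $\batta_i\uparrow\xunstab(\epsilon)$ because $-h(\xunstab(\epsilon))=0$, and nothing in Lemma~\ref{lem:propertyofh(x)}(v)--(vi) controls $-h$ on $(x_*(\epsilon),\xunstab(\epsilon))$. Summing over disjoint windows gives only $\sum_j(-h(\batta_{i_j}))\le(1+\dl\dr)$, which does not bound the number of windows: a constellation could in principle linger arbitrarily close to $\xunstab(\epsilon)$ for many sections while keeping each $-h(\batta_i)$ tiny. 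The ``total increment $\le 1$'' constraint points the wrong way---it upper-bounds the sum of increments, but you need a \emph{lower} bound on each increment to count windows.

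The paper's remedy is a momentum argument, and in fact your own global-sum identity is exactly the right ingredient, applied at a finer scale. Summing $\bar\batta_i-\batta_i\ge -h(\bar\batta_i)$ over $i\le k$ (rather than over all $i$) and evaluating the left side by your boundary computation gives
\[
\batta_{k+w-1}-\batta_{k-w+1}\ \ge\ \frac{6}{w}\Bigl(-\sum_{i\le k}h(\bar\batta_i)\Bigr).
\]
The point is that the right side accumulates the contributions from \emph{all} earlier sections, in particular those with $\bar\batta_i\in[0,x_*(\epsilon)]$ where $-h$ is uniformly bounded below. Using the Spacing Lemma~\ref{lem:spacing}(ii) ($|\bar\batta_i-\bar\batta_{i-1}|\le 1/w$) to compare the sum to a Riemann integral, one obtains $-\tfrac{6}{w}\sum_{i\le k}h(\bar\batta_i)\ge \tfrac{3}{4}\kappa_*(x_*)^2$, a strictly positive constant depending only on $(\dl,\dr)$. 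This yields a \emph{uniform} lower bound on $\batta_{k+w-1}-\batta_{k-w+1}$ for every $k$ with $\batta_{k\pm(w-1)}\in[x_*(\epsilon),\xunstab(\epsilon)]$, independent of how small $-h(\batta_k)$ is, and then a direct step count gives $|T_2|\le c_2(\dl,\dr)\,w$. Your local bound cannot see this accumulated drift; replacing the global sum by the partial sums up to $k$ is the missing idea.
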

\begin{proof}
Throughout the proof we set $\epsilon=\batta(\Ldens{c})$ and we
write $\batta_i$ for $\batta(\Ldens{x}_i)$.

Note first that we have to prove the statement only for $\epsilon
\in (\epsilon^{\BPsmall}, 1]$. This is true since we have defined
$\xunstab(\epsilon)$ to coincide with $\xunstab(\epsilon^{\BPsmall})$ for
$\epsilon \in [0, \epsilon^{\BPsmall}]$ and since further the function
$h$, which we use to bound the process, is strictly decreasing as a
function of $\epsilon$. Hence, in the sequel our language will reflect
the fact that we have $\epsilon \in (\epsilon^{\BPsmall}, 1]$.

(i) {\em The number of sections such that $\batta_i \in [\delta,
x_*(\epsilon)]$ is at most $w(\frac1{\kappa_*\delta}+1)$}. If $\delta>x_*(\epsilon)$
then the number of sections in this part is 0. Hence wlog assume $\delta < x_*(\epsilon)$.  
Let $i$ be the smallest index so that $\batta_i \geq \delta$. If
$\batta_{i+(w-1)} \geq x_*(\epsilon)$ then the claim is trivially
fulfilled. Assume therefore that $\batta_{i+(w-1)} \leq x_*(\epsilon)$.
From the monotonicity of $g(\cdot)$ and the fact that $\Ldens{\x}$ is increasing,
\begin{align*}
\Ldens{x}_i & = \Ldens{c} \vconv g(\Ldens{x}_{i-(w-1)}, \dots, \Ldens{x}_i, \dots, \Ldens{x}_{i+(w-1)}) \\
& \prec \Ldens{c}\vconv g(\Ldens{x}_{i+(w-1)}, \dots, \Ldens{x}_{i+(w-1)}).
\end{align*}
This implies
\begin{align*}
\batta_i \stackrel{\text{extremes of info. comb.}}{\leq} \epsilon 
g(\batta_{i+(w-1)}, \dots, \batta_{i+(w-1)}).
\end{align*}
As a consequence we get
\begin{align*}
& \batta_{i+(w-1)} - \batta_i  \geq \batta_{i+(w-1)} - \epsilon g(\batta_{i+(w-1)},\dots, \batta_{i+(w-1)}) \\
& = -h(\batta_{i+(w-1)}) 
 \stackrel{\text{Lemma~\ref{lem:propertyofh(x)} (\ref{lem:propertyofh(x)two})}}{\geq} -h(\delta)  
 \stackrel{\text{Lemma~\ref{lem:propertyofh(x)} (\ref{lem:propertyofh(x)six})}}{\geq} 
\kappa_*(\epsilon) \delta.
\end{align*}
This is equivalent to
$\batta_{i+(w-1)} \geq \batta_i + \kappa_*(\epsilon) \delta$.
More generally, using the same line of reasoning,
$\batta_{i+l(w-1)} \geq \batta_i + l \kappa_*(\epsilon) \delta$,
as long as $\batta_{i+l (w-1)} \leq x_*(\epsilon)$.

We summarize, the total distance we have to cover is $x_*-\delta$
and every $(w-1)$ sections we cover a distance of at least $\kappa_*(\epsilon)
\delta$ as long as we have not surpassed $x_*(\epsilon)$.  Therefore,
after $(w-1) \lfloor \frac{x_*(\epsilon)-\delta}{\kappa_*(\epsilon)
\delta} \rfloor$ sections we have either passed $x_*$ or we must be
strictly closer to $x_*$ than $\kappa_*(\epsilon) \delta$.  Hence,
to cover the remaining distance we need at most $(w-2)$ extra sections.
The total number of sections needed is therefore upper bounded by
$w-2+(w-1) \lfloor \frac{x_*(\epsilon)-\delta}{\kappa_*(\epsilon)
\delta} \rfloor$, which, in turn, is upper bounded by $w
(\frac{x_*(\epsilon)}{\kappa_*(\epsilon) \delta}+1)$. The final
claim follows by bounding $x_*(\epsilon)$ with $1$ and $\kappa_*(\epsilon)$
by $\kappa_*$. \\

(ii) {\em The number of sections such that $\batta_i \in [
x_*(\epsilon), \xunstab(\epsilon)]$ is at most $ 2 w (\frac{4}{3 \kappa_* (x_*)^2}+1)$}
Let us define $\bavg_i = \frac1{w^2}\sum_{j,k =0}^{w-1}\batta_{i+j-k}.$
From Lemma~\ref{lem:avgprop}, $\batta_i \leq \epsilon
g(\bavg_i,\bavg_i, \dots, \bavg_i) = \bavg_i + h(\bavg_i)$.  Summing
this inequality over all sections from $-\infty$ to $k \leq 0$ we
get,
\begin{align*}
\sum_{i=-\infty}^{k} \batta_i \leq \sum_{i=-\infty}^{k} \bavg_i + \sum_{i=-\infty}^{k} h(\bavg_i).
\end{align*}
Writing $ \sum_{i=-\infty}^{k} \bavg_i$ in terms of the $\batta_i$s and rearranging terms,
\begin{align*}
- \sum_{i=-\infty}^{k} h(\bavg_i) & \leq
\frac{1}{w^2}\sum_{i=1}^{w-1} {w-i+1 \choose 2} (\batta_{k+i}-\batta_{k-i+1}) \\
& \leq \frac{w}6 (\batta_{k+(w-1)}-\batta_{k-(w-1)}).
\end{align*}
Let us summarize:
\begin{align}\label{equ:momentum}
\batta_{k+(w-1)}-\batta_{k-(w-1)} & \geq - \frac{6}{w} \sum_{i=-\infty}^{k} h(\bavg_i).
\end{align}

Without loss of generality we can assume that there exists a section
$k$ so that $x_*(\epsilon) \leq \batta_{k-(w-1)}$ (we know from
point (i) that we must reach this point unless the constellation is too
short, in which case the statement is trivially fulfilled).  Consider sections
$\batta_{k-(w-1)}, \dots, \batta_{k+(w-1)}$, so that in addition
$\batta_{k+(w-1)} \leq \xunstab(\epsilon)$. If no such $k$ exists
then there are at most $2w-1$ points in the interval $[x_*(\epsilon),
\xunstab(\epsilon)]$, and the statement is correct a fortiori.

Our plan is to use (\ref{equ:momentum}) to lower bound
$\batta_{k+(w-1)}-\batta_{k-(w-1)}$.  This means, we need a lower
bound for $-\frac{6}{w} \sum_{i=-\infty}^{k} h(\bavg_i)$. Since by
assumption $\batta_{k+(w-1)} \leq \xunstab(\epsilon)$, it follows
that $\bavg_{k} \leq \xunstab(\epsilon)$, so that every contribution
in the sum $-\frac{6}{w} \sum_{i=-\infty}^{k} h(\bavg_i)$ is positive
(cf. \text{Lemma~\ref{lem:propertyofh(x)} (\ref{lem:propertyofh(x)zero})}).
Further, by (the Spacing) Lemma~\ref{lem:spacing}, $w(\bavg_i -
\bavg_{i-1}) \leq 1$. Hence,
\begin{align} \nonumber %\label{equ:negarea1}
& -\frac{6}{w} \sum_{i=-\infty}^{k} h(\bavg_i)  \geq
-6 \sum_{i=-\infty}^{k} h(\bavg_i)(\bavg_{i} - \bavg_{i-1}) \\
& \geq  6\kappa_*(\epsilon) \int_{0}^{x_*(\epsilon)/2} x \,\text{d}x 
 =  \frac{3\kappa_*(\epsilon)(x_*(\epsilon))^2}{4} .  \nonumber %\label{equ:negarea2}
\end{align}
Let us explain how we obtain the last inequality. First we claim that
there must exist a section $i$ with $\bavg_i$ between $x_*(\epsilon)/2$
and $x_*(\epsilon)$. Indeed, suppose on the contrary that this was not
true. Let $k^* \leq k$ be the smallest section number such that
$\bavg_{k^*} \geq x_*(\epsilon)$.  Clearly, such a $k^*$ exists.
Indeed, since $x_*(\epsilon) \leq \batta_{k-(w-1)}$, it follows
that $\bavg_{k} \geq x_*(\epsilon)$.  Since $\bavg_{-\infty}=0$,
we must have $\bavg_{k^*-1}\leq x_*(\epsilon)/2$.  This implies
that $\bavg_{k^*} - \bavg_{k^*-1} > x_*(\epsilon)/2$.  Using (the
Spacing) Lemma~\ref{lem:spacing} we conclude that $\frac{\dl-1}{w}
\geq x_*(\epsilon)/2$. Hence $w\leq 2\dl/x_*(\epsilon)$. Using the
universal lower bound on $x_*(\epsilon)$, we get $w \leq  2\dl^3\dr^2$,
a contradiction to the hypothesis of the lemma.  Finally, according
to Lemma~\ref{lem:propertyofh(x)} part (\ref{lem:propertyofh(x)three}),
$-h(x) \geq \kappa_*(\epsilon) x$ for $x \in [0, x_*(\epsilon)]$,
which implies the inequality.
Combined with (\ref{equ:momentum}) this implies that
\begin{align*}
\batta_{k+(w-1)}-\batta_{k-(w-1)} \geq \frac{3\kappa_*(\epsilon)(x_*(\epsilon))^2}{4}.
\end{align*}
We summarize, the total distance we have to cover is
$\xunstab(\epsilon)-x_*(\epsilon)$ and every $2(w-1)$ steps we cover
a distance of at least $\frac{3\kappa_*(\epsilon)(x_*(\epsilon))^2}{4}$
as long as we have not surpassed $\xunstab(\epsilon)$. Allowing for
$2(w-1)-1$ extra steps to cover the last part, bounding again $w-1$
by $w$, bounding $\xunstab(\epsilon)-x_*(\epsilon)$ by $1$ and
replacing $\kappa_*(\epsilon)$ and $x_*(\epsilon)$ by their universal
lower bounds, proves the claim.  
\end{proof}

%\section{
%Interpolation Yields Approximate FP Family
%-- Lemma~\ref{lem:interpolationyieldsapproximatefpfamilypartial}, Area Theorem for Approximate FP Family 
% -- Theorem~\ref{the:areatheoremforapproximatefpfamilypartial}
% and the Saturation Theorem -- Theorem~\ref{thm:existenceintermediateform}
% }\label{app:areathmsforpartialapproxfamily}

\section{
 Saturation -- Theorem~\ref{thm:existenceintermediateform}
 }\label{app:areathmsforpartialapproxfamily}

Before we proceed to prove the Saturation theorem, we introduce a key technical
element required in the proof, {\em a family of spatial (approximate) FPs}. This is the
content of Definition~\ref{def:partialFPfamily} and Theorem~\ref{lem:interpolationyieldsapproximatefpfamilypartial}. 
Then, Theorem~\ref{the:areatheoremforapproximatefpfamilypartial} shows that the GEXIT
integral of this family depends only on its end-points. Combined with the
Negativity lemma~\ref{lem:asymptoticnegativity} this imposes a strong constraint on the
channel value of the spatial FPs, culminating in the proof of the Saturation theorem. 

\begin{definition}[Interpolation]
\label{def:partialFPfamily} 
Let $(\Ldens{c}^*, \Ldens{\x}^*)$, $\Ldens{c}^* \in \{\Ldens{c}_\ent\}$, denote an increasing
one-sided constellation on $[-\Lfp, 0]$ for the parameters $(\dl, \dr, w)$.
Let $\ent^*=\entropy(\Ldens{c}^*)>0$
and let $0 \leq \Lc \leq \Lfp$.

The family (of constellations) for the $(\dl, \dr, \Lc,
w)$-ensemble, based on $(\Ldens{c}^*, \Ldens{\x}^*)$, is denoted by
$\{\Ldens{c}_{\sigma}, \Ldens{\x}_{\sigma}\}_{\sigma=0}^{\ih^*}$.

Each element $\Ldens{\x}_{\sigma}$ is symmetric
with respect to the spatial index and the components are indexed by $[-\Lc, \Lc]$.  Hence it suffices to
define the constellations in the range $[-\Lc, 0]$ and then we set
$\Ldens{x}_{\sigma, i}=\Ldens{x}_{\sigma, -i}$ for $i \in [0, \Lc]$.
As usual, we set $\Ldens{x}_{\sigma, i}=\Delta_{+\infty}$ for $i\notin
[-\Lc,\Lc]$.  For $i \in [-\Lc, 0]$ and $\sigma \in [0, \ih^*)$
define
\begin{align*}
\Ldens{x}_{\sigma, i}  & = \begin{cases} 
\Ldens{a}_{\sigma, i}, & \sigma \in (\frac{\ih^*}{2}, \ih^*), \\
\frac{2}{\ih^*} \sigma \Ldens{x}^*_{i-\Lfp+\Lc}+(1-\frac{2}{\ih^*}\sigma) \Delta_{+\infty}, & \sigma \in [0, \frac{\ih^*}2], 
\end{cases} 
\end{align*}
where for $\sigma \in (\frac{\ih^*}{2}, \ih^*)$,
\begin{align*}
\Ldens{a}_{\sigma, i}  = &
\alpha(\sigma) \Ldens{x}^*_{i-\lceil (2-\frac2{\ent^*}\sigma)(\Lfp-\Lc)\rceil} + \\
 & (1-\alpha(\sigma)) \Ldens{x}^*_{i-\lceil (2-\frac2{\ent^*}\sigma) (\Lfp-\Lc) \rceil+1}, \\
\alpha(\sigma) =  & \Big( (\Lfp-\Lc) (2-\frac2{\ent^*}\sigma)\Big) \!\!\!\!\mod(1).
\end{align*}
Finally, $\Ldens{c}_{\sigma}=\Ldens{c}_{\ih=\ih^*}=\Ldens{c}^*$.
\qed
\end{definition}

{\em Discussion}:
\begin{enumerate}[(i)]
\item Notice that in the above definition when $\sigma$ approaches $\ih^*$, then 
$\Ldens{x}_{\sigma, i} = \Ldens{x}^*_i$. 

\item 
In the definition above, we keep the channel constant across the sections and over $\sigma$.
In other words, the channel remains constant for all the constellations in the family. 

We denote the two partitions in the interpolation as phases, e.g.,
$(\ih^*/2,\ih^*)$ corresponds to phase I and  $[0, \frac{\ih^*}{2}]$
corresponds to phase II.

\item
The above interpolation might look complicated.  But there is a
straightforward interpretation.  Think of one-sided constellations.
We are interested in a constellation of size $\Lc$.

In phase I, the basic idea is to ``move'' the
constellation $\Ldens{\x}^*$ to the right and at each point in time to
``chop off'' the overhanging parts both on the left and on the right. We
do this until the left most section of $\Ldens{\x}^*$ is at position
$-\Lc$. If $\Ldens{\x}^*$ were a continuous function, i.e., suppose we had a continuum of sections,
then this would be all we need to do. But $\Ldens{\x}^*$ is discrete, so in order to get
a continuous interpolation we interpolate between
two consecutive elements of $\Ldens{\x}^*$. This mimics the ``wave effect'' we
mentioned in the beginning. 

In phase II, the residual constellation is uniformly brought down to
$\Delta_{+\infty}$ in each section.
\end{enumerate}

In the next lemma we show that if we have an interpolated family constructed
via the above definition, then the resulting family is a family of approximate FPs. 
\begin{lemma}[Interpolation Yields Approximate FP Family]
\label{lem:interpolationyieldsapproximatefpfamilypartial} 
Let $(\Ldens{c}^*, \Ldens{\x}^*)$, $\Ldens{c}^* \in \{\Ldens{c}_\ent\}$,
denote an increasing one-sided constellation on $[-\Lfp, 0]$ with
free or fixed boundary condition for the parameters $(\dl, \dr, w)$ and let
$w \leq \Lc < \Lfp$. Assume that $(\Ldens{c}^*, \Ldens{\x}^*)$
fulfills the following conditions, for some $0<\delta\leq \frac1w$.
\begin{enumerate}[(i)]
\item
{\em Constellation is close to $\Delta_{+\infty}$ ``on the left''}:
\begin{align*}
\batta(\Ldens{x}^*_{-\Lfp+\Lc}) \leq \delta.
\end{align*}

\item {\em Constellation is flat ``on the right''}
\begin{align*}
\Ldens{x}^*_{-\Lc} = \Ldens{x}^*_{-\Lc+1} = \dots = \Ldens{x}^*_0 = \Ldens{x}.
\end{align*}
Also, $d(\Ldens{x}^*_{-\Lc-w+1}, \Ldens{x}) \leq \delta.$

\item {\em Constellation is approximate FP}:
For $i \in [-\Lfp, 0]$,
\begin{align*}
d(\Ldens{x}_i^*,\Ldens{c}^* \vconv g(\Ldens{x}_{i-w+1}^*,\dots,\Ldens{x}_{i+w-1}^*)) \leq \delta.
\end{align*}
\end{enumerate}

Let $\{\Ldens{c}_{\sigma}, \Ldens{\x}_{\sigma}\}_{\sigma=0}^{\ih^*}$
denote the family as described in Definition~\ref{def:partialFPfamily}.
Then this family is an approximate FP family. More precisely, for
$\underline{\sigma}=0$ and $\overline{\sigma}=\ih^*$ 
\begin{enumerate}[(i)]
\item $\{\Ldens{c}_\sigma\}_{\underline{\sigma}}^{\overline{\sigma}}$ and 
$\{\Ldens{\x}_{\sigma}\}_{\underline{\sigma}}^{\overline{\sigma}}$ are ordered by degradation, increasing,
and piece-wise linear,
\item  $\Ldens{x}_{\sigma, i}=\Delta_{+\infty}$ for $i \notin [-\Lc, \Lc]$ and for all $\sigma$ and
\item
for any $\sigma\in [\underline{\sigma}, \overline{\sigma})$ and any $i\in [-\Lc+w-1,-w+1] \cup [w-1, L-w+1]$ 
\begin{align}
d(\Ldens{x}_{\sigma,i},& \Ldens{c}_{\sigma}\vconv
g(\Ldens{x}_{\sigma,i-w+1},\dots,\Ldens{x}_{\sigma,i+w-1}))  \nonumber \\ 
& \leq \frac{2(\dl-1)(\dr-1)}{w} + \delta. \label{equ:partialddistance}
\end{align}
\end{enumerate}
\end{lemma}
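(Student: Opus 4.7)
The plan is to verify parts (i) and (ii) by direct inspection of the interpolation, and then concentrate almost all of the effort on part (iii), which I would split into the two phases separately.

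Parts (i) and (ii) should follow essentially from the construction. In each phase $\Ldens{x}_{\sigma,i}$ is a convex combination of two fixed $L$-densities whose weights are piecewise affine in $\sigma$ (with breakpoints only at the integer values of $(\Lfp-\Lc)(2-2\sigma/\ent^*)$ in phase I), so piecewise linearity is clear. To check monotonicity in $\sigma$ I would treat each phase separately: in phase II, increasing $\sigma$ puts more weight on $\Ldens{x}^*_{i-\Lfp+\Lc}$ which is downgraded with respect to $\Delta_{+\infty}$; in phase I, increasing $\sigma$ decreases the offset $m(\sigma)$, so since $\Ldens{\x}^*$ is spatially increasing the convex combination becomes more downgraded. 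The spatial monotonicity in $i$ is inherited from $\Ldens{\x}^*$, and (ii) holds by the explicit extension $\Ldens{x}_{\sigma,i}=\Delta_{+\infty}$ outside $[-\Lc,\Lc]$.

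For the approximate FP claim (iii), fix $\sigma$ in phase I and, by symmetry, take $i \in [-\Lc+w-1,-w+1]$. All indices $j \in [i-w+1,i+w-1]$ lie in $[-\Lc,0]$ and write $\Ldens{x}_{\sigma,j}=\alpha\Ldens{x}^*_{j-m}+(1-\alpha)\Ldens{x}^*_{j-m+1}$. Because $j-m \in [-\Lfp,0]$ and hypothesis (iii) is an approximate FP on the original constellation, each of $\Ldens{x}^*_{i-m}$ and $\Ldens{x}^*_{i-m+1}$ is within $\delta$ of the DE update of its neighborhood; taking the $\alpha$-convex combination and using bilinearity of $\vconv$ yields
\[
d\bigl(\Ldens{x}_{\sigma,i},\,\Ldens{c}^*\vconv[\alpha g(\text{Arg}_1)+(1-\alpha)g(\text{Arg}_2)]\bigr)\le \delta,
\]
where $\text{Arg}_1=(\Ldens{x}^*_{i-m-w+1},\dots,\Ldens{x}^*_{i-m+w-1})$ and $\text{Arg}_2$ is its shift by one. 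Since the componentwise $\alpha$-convex combination of $\text{Arg}_1$ and $\text{Arg}_2$ equals the actual argument of $g$, after the triangle inequality the remaining task is to bound $d(\Ldens{c}^*\vconv g(\text{Arg}_r),\Ldens{c}^*\vconv g(\text{Mixed}))$ for $r=1,2$.

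For this sensitivity bound the plan is to open up $g$ and work at the level of the inner $w$-averages rather than the individual entries. The key observation is that the $w$-averages appearing in $g(\text{Arg}_1)$ and $g(\text{Arg}_2)$ are shifts by one of the same moving average, so by Lemma~\ref{lem:spacing}(ii) they differ in the Wasserstein metric by at most $1/w$ (and hence the mixture differs from each by at most $\max(\alpha,1-\alpha)/w$); crucially this is a factor $\dl-1$ smaller than the entrywise bound $(\dl-1)/w$ coming from Lemma~\ref{lem:spacing}(i). Propagating this through the $\cconv\dr-1$, the outer average, the $\vconv\dl-1$, and finally $\vconv\Ldens{c}^*$ exactly as in the proof of Lemma~\ref{lem:sensitivity} gives a bound of the form $2(\dl-1)(\dr-1)\max(\alpha,1-\alpha)/w$, and using $4\alpha(1-\alpha)\le 1$ when I combine the two triangle-inequality terms yields a contribution of at most $\frac{2(\dl-1)(\dr-1)}{w}$. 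Adding the previously accumulated $\delta$ from the approximate FP step gives precisely \eqref{equ:partialddistance}. Phase II is handled by the same template, writing $\Ldens{x}_{\sigma,j}$ as a $\beta$-convex combination of the shifted $\Ldens{x}^*_{j-\Lfp+\Lc}$ (which satisfies the approximate FP at interior positions because the neighborhood lies in $[-\Lfp,0]$) and $\Delta_{+\infty}$ (a trivial FP that contributes no error), together with hypothesis (ii) to absorb the one-step discontinuity at $\sigma=\ent^*/2$.

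The main obstacle is the $\dl-1$ gain in the sensitivity bound: a naive application of Sensitivity using the entrywise spacing $(\dl-1)/w$ would give $(\dl-1)^2(\dr-1)/w$ and miss the target. I expect this to be resolved by redoing the sensitivity argument one level up (on the averaged densities) exactly as just sketched, so that Lemma~\ref{lem:spacing}(ii) rather than (i) is the operative spacing estimate.
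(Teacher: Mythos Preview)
Your Phase I argument is correct and matches the paper's proof almost step for step: both split via convexity into the two pure shifts, insert hypothesis (iii) to collect the $\delta$, and then control the sensitivity term by telescoping at the level of the inner $w$-averages, recovering $2(\dl-1)(\dr-1)/w$ rather than the naive $2(\dl-1)^2(\dr-1)/w$. One small remark: you cite Lemma~\ref{lem:spacing}(ii), but that statement is about the \emph{double} average of a genuine FP. What you actually need, and what the paper writes out directly, is the elementary identity
\[
d\Bigl(\tfrac{1}{w}\sum_{k=0}^{w-1}\Ldens{x}^*_{l-k},\ \tfrac{1}{w}\sum_{k=0}^{w-1}\Ldens{x}^*_{l+1-k}\Bigr)=\tfrac{1}{w}\,d(\Ldens{x}^*_{l-w+1},\Ldens{x}^*_{l+1})\le \tfrac{1}{w},
\]
which uses only boundedness of the Wasserstein distance, not the FP property.

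Your Phase II sketch, however, misidentifies the relevant input. There is no discontinuity at $\sigma=\ent^*/2$ (the two definitions match in the limit), and hypothesis (ii) plays no role in this lemma. The operative hypothesis for Phase II is (i): since $\batta(\Ldens{x}^*_{-\Lfp+\Lc})\le\delta$ and $\Ldens{\x}^*$ is increasing, Lemma~\ref{lem:blmetric}(\ref{lem:blmetricbattaboundswasser}) gives $d(\Ldens{x}^*_{j-\Lfp+\Lc},\Delta_{+\infty})\le\delta$ for all $j\in[-\Lc,0]$. With this your template still goes through, because now the two interpolation endpoints are within $\delta$ of each other, so the sensitivity contribution is $\le 2(\dl-1)(\dr-1)\delta\le 2(\dl-1)(\dr-1)/w$; the Phase~I telescoping is neither needed nor applicable here. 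The paper takes an even shorter route: it bounds both $\Ldens{x}_{\sigma,i}$ and $\Ldens{c}^*\vconv g(\cdot)$ directly by their Wasserstein distance to $\Delta_{+\infty}$ (the first trivially, the second via Lemma~\ref{lem:sensitivity}), obtaining $c\delta+2(\dl-1)(\dr-1)c\delta$ with $c=2\sigma/\ent^*\le 1$, and then uses $\delta\le 1/w$.
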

{\em Discussion:} For the boundary $[-L, -\Lc+w-2] \cup [\Lc-w+2,
L]$ and in the middle $[-w+2, w-2]$ the interpolation does not in
general result in an approximate FP. Fortunately this does not cause
problems.  We will see in
Theorem~\ref{the:areatheoremforapproximatefpfamilypartial} that
each section gives only a small contribution to the GEXIT integral.
If we choose $\Lc$ sufficiently large then we can safely ignore a
fixed number of sections.
\begin{proof}
\begin{enumerate}
\item That $\{\Ldens{c}_\sigma\}_{\underline{\sigma}}^{\overline{\sigma}}$ and 
$\{\Ldens{\x}_{\sigma}\}_{\underline{\sigma}}^{\overline{\sigma}}$ are ordered by degradation, increasing,
and piece-wise linear follows by construction.
\item  In the same way, that $\Ldens{x}_{\sigma, i}=\Delta_{+\infty}$ for $i \notin [-\Lc, \Lc]$ 
and for all $\sigma$ also follows by construction.

\item It remains to check that the family so defined constitutes an
approximate FP family. Since the family, by definition, is symmetric around the
section $0$, we check only for the sections belonging in $[-\Lc+w-1,-w+1]$.
\begin{enumerate}

\item[] {\em Phase I}: Think
of $i$ and $\sigma$ as fixed, $i\in [-\Lc+w-1,-w+1]$. Define $c=c(
\sigma)$ and $j= i-\lceil (2-\frac2{\ent^*}\sigma)(\Lfp-\Lc)\rceil$.
Set $\Ldens{z}^*_j=c \Ldens{x}^*_j+\bar{c} \Ldens{x}^*_{j+1}$. 
With these conventions, we want to bound 
\begin{align*}
d(\Ldens{z}^*_{j}, \Ldens{c}_{\ih^*} \vconv g(\Ldens{z}^*_{j-w+1}, \cdots, \Ldens{z}^*_{j+w-1})).
\end{align*}
Using the convexity property (\ref{lem:blmetricconvexity}) of (the
Wasserstein metric) Lemma~\ref{lem:blmetric}, it is sufficient to bound
\begin{align*}
d(\Ldens{x}^*_{j},\Ldens{c}_{\ih^*} \vconv g(\Ldens{z}^*_{j-w+1}, \cdots, \Ldens{z}^*_{j+w-1})), \;\text{and}\\ 
d(\Ldens{x}^*_{j+1},\Ldens{c}_{\ih^*} \vconv g(\Ldens{z}^*_{j-w+1}, \cdots, \Ldens{z}^*_{j+w-1})
)
\end{align*}
separately. The two bounds are identical and their derivation is
also essentially identical. Let us therefore concentrate on the
first expression. Using first the triangle inequality and 
then the regularity properties (\ref{lem:blmetricregularvconv}) and
(\ref{lem:blmetricregularcconv}) as well as the convexity property
(\ref{lem:blmetricconvexity}), we upper bound the first expression by 
\begin{align*}
& d( \Ldens{c}_{\ih^*}\vconv
g(\Ldens{x}^*_{j-w+1},\dots,\Ldens{x}^*_{j+w-1}), \nonumber \\
& \quad\quad \Ldens{c}_{\ih^*} \vconv g(\Ldens{z}^*_{j-w+1}, \cdots, \Ldens{z}^*_{j+w-1}) ) + \nonumber \\
& + d( \Ldens{x}^*_j,  \Ldens{c}_{\ih^*}\vconv g(\Ldens{x}^*_{j-w+1},\dots,\Ldens{x}^*_{j+w-1})) \\
& \leq  2 d \Bigl(
\Bigl(\frac{1}{w} \sum_{l=0}^{w-1} \bigl(\frac{1}{w} \sum_{k=0}^{w-1} 
\Ldens{x}^*_{j\!+\!l\!-\!k} \bigr)^{\cconv \dr\!-\!1} \Bigr)^{\vconv \dl\!-\!1}, \nonumber \\
& \quad\quad\quad\quad\quad\quad 
\Bigl(\frac{1}{w} \sum_{l=0}^{w-1} \bigl(\frac{1}{w} \sum_{k=0}^{w-1} 
\Ldens{z}^*_{j\!+\!l\!-\!k} \bigr)^{\cconv \dr\!-\!1} \Bigr)^{\vconv \dl\!-\!1}
\Bigr) \!+\!\delta \nonumber \\
& \leq  \frac{2 (\dl\!-\!1)}{w} \sum_{l=0}^{w-1}d \Bigl(
\bigl(\frac1w \sum_{k=0}^{w-1} 
\Ldens{x}^*_{j\!+\!l\!-\!k} \bigr)^{\cconv \dr\!-\!1} , \\
& \quad\quad\quad\quad\quad\quad\quad\quad\quad\quad\quad\quad  \bigl(\frac{1}{w} \sum_{k=0}^{w-1} 
\Ldens{z}^*_{j\!+\!l\!-\!k} \bigr)^{\cconv \dr\!-\!1}
\Bigr) \!+\!\delta \nonumber \\
& \leq  \frac{2 (\dl\!-\!1)(\dr-1)}{w^2} \sum_{l=0}^{w-1} d
\Bigl(
\sum_{k=0}^{w-1} 
\Ldens{x}^*_{j\!+\!l\!-\!k}, 
\sum_{k=0}^{w-1} 
\Ldens{z}^*_{j\!+\!l\!-\!k} 
\Bigr) \!+\!\delta \nonumber \\
& = \frac{2 (\dl\text{-}1)(\dr\text{-}1)}{w^2} \!\!\sum_{l=j}^{j+w-1} 
\!\!\!\!d(\!\sum_{k=0}^{w-1} \!\Ldens{x}^*_{l\!-\!k}, \!\!\sum_{k=0}^{w-1} \!\!c\Ldens{x}^*_{l\!-\!k}\!\!+\!\!\bar{c} \Ldens{x}^*_{l\!-\!k+1}\!) \!+\!\delta  \\
& = \frac{2 (\dl\!-\!1)(\dr\!-\!1)}{w^2} \bar{c} \sum_{l=j}^{j+w-1} d(\Ldens{x}^*_{l-w+1},  \Ldens{x}^*_{l+1})  \!+\!\delta \\
& \leq \frac{2 (\dl\!-\!1)(\dr\!-\!1)}{w} \!+\!\delta,
\end{align*}
where to obtain the first inequality we use the approximate nature of $\Ldens{\x}^*$ and in the last step we have used property (\ref{lem:blmetricboundedness}) of Lemma~\ref{lem:blmetric}.
\item[] {\em Phase II}: 
In this regime we interpolate the ``tail'' of the original constellation
uniformly to $\Delta_{+\infty}$.  From the assumption of the lemma we
have $\batta(\Ldens{x}^*_{-\Lfp+\Lc}) \leq \delta$. Since
$\Ldens{\x}^*$ is increasing we must have $\batta(\Ldens{x}^*_{i-\Lfp+\Lc})
\leq \batta(\Ldens{x}^*_{-\Lfp+\Lc})$ for $i \in [-\Lc, 0]$.
Lemma~\ref{lem:blmetric}, property (\ref{lem:blmetriccontinuity}),
then implies that $d(\Ldens{x}^*_{i-\Lfp+\Lc}, \Delta_{+\infty})
\leq \delta$ for all $i\in [-\Lc,0]$.

Again, think
of $i$ and $\sigma$ as fixed, $i\in [-\Lc+w-1,-w+1]$.
Set $c=2\sigma/\ih^*$ and $j=i-\Lfp+\Lc$. Then 
\begin{align*}
& d(\!c \Ldens{x}^*_{j} \!\!+\! \bar{c} \Delta_{\!+\!\infty},
\Ldens{c}_{\ih^*}\!\!\vconv\! g(\!c \Ldens{x}^*_{j\!-\!w\!+\!1} \!\!+\! \bar{c}\Delta_{\!+\!\infty},\!\dots\!,\!c\! \Ldens{x}^*_{j\!+\!w\!-\!1}\!\!+\! \bar{c}\Delta_{\!+\!\infty}\!) \\
& \leq  d(\!c \Ldens{x}^*_{j} \!\!+\! \bar{c} \Delta_{\!+\!\infty}, \Delta_{+\infty})  \\ 
& + d(\Delta_{+\infty},\Ldens{c}_{\ih^*}\!\!\vconv\! g(\!c \Ldens{x}^*_{j\!-\!w\!+\!1} \!\!+\! \bar{c}\Delta_{\!+\!\infty},\!\dots\!,\!c\! \Ldens{x}^*_{j\!+\!w\!-\!1}\!\!+\! \bar{c}\Delta_{\!+\!\infty}\!) \\
& \leq 2 (\dl-1)(\dr-1) \delta c + c \delta \\
& \stackrel{\delta \leq 1/w}{\leq} \frac{2 (\dl-1)(\dr-1)}{w}  +  \delta,
\end{align*}
where  
 to obtain the penultimate inequality we use Lemma~\ref{lem:sensitivity} to bound
the distance of
$\Ldens{c}_{\ih^*}\!\vconv\! g(c \Ldens{x}^*_{j\!-\!w\!+\!1} +
\bar{c}\Delta_{+\infty},\dots,c \Ldens{x}^*_{j\!+\!w\!-\!1}+
\bar{c}\Delta_{+\infty}\!)$ to $\Delta_{+\infty} (= \Ldens{c}_{\ih^*}\!\vconv\! g(\Delta_{+\infty},\dots,\Delta_{+\infty})$, since $\Delta_{+\infty}$ is always an FP of DE) and the second
expression is the distance of $c \Ldens{x}^*_{j} + \bar{c}
\Delta_{+\infty}$ to $\Delta_{+\infty}$, which is bounded using the previous arguments.
\end{enumerate}
\end{enumerate}
\end{proof}

Next, we show that if we have an approximate family of FPs, then the area under
the GEXIT integral associated to the family depends only on the ``end points'' 
of the interpolated family.  
\begin{theorem}[Area Theorem for Approx. FP Family]\label{the:areatheoremforapproximatefpfamilypartial}
Let $\{\Ldens{c}_\sigma, \Ldens{\x}_\sigma\}_{\underline{\sigma}}^{\overline{\sigma}}$ denote 
an approximate FP family for the $(\dl, \dr, \Lc, w)$ ensemble.
More precisely, 
\begin{enumerate}[(i)]
\item  $\{\Ldens{c}_\sigma\}_{\underline{\sigma}}^{\overline{\sigma}}$ and 
$\{ \Ldens{\x}_\sigma\}_{\underline{\sigma}}^{\overline{\sigma}}$ are ordered 
by degradation, increasing, and piece-wise linear\footnote{In fact, we will
apply this theorem to the family given in Definition~\ref{def:partialFPfamily}. More generally, however, 
given a set of distinct ordered densities $\Ldens{a}_1 \prec
\Ldens{a}_2 \prec \dots \prec \Ldens{a}_n$, we get a piece-wise linear family by
linearly interpolating always between consecutive densities.},  
\item  $\Ldens{x}_{\sigma, i}=\Delta_{+\infty}$ for $i \notin [-\Lc, \Lc]$ and for all $\sigma$, 
\item  $\Ldens{x}_{\underline{\sigma}, i}=\Ldens{x}_{\underline{\sigma}}$ for $i \in [-\Lc, \Lc]$, 
\item  $\Ldens{x}_{\overline{\sigma}, i}=\Ldens{x}_{\overline{\sigma}}$ for $i \in [-\Lc, \Lc]$, and
\item for all 
$i \in [-\Lc+w-1, -w+1] \cup [w-1, \Lc-w+1]$ 
and $\sigma \in
[\underline{\sigma}, \overline{\sigma}]$ 
\begin{align*} d(\Ldens{x}_{\sigma, i}, \Ldens{c}_\sigma \vconv
g(\Ldens{x}_{\sigma, i-w+1}, \dots, \Ldens{x}_{\sigma, i+w-1})) \leq \delta.
\end{align*}
\end{enumerate}
Define
\begin{align*}
A(\{\Ldens{c}_\sigma, \Ldens{\x}_\sigma\}_{\underline{\sigma}}^{\overline{\sigma}})=\sum_{i=-\Lc}^{\Lc} G(\{\Ldens{c}_\sigma, 
\hat{g}(\Ldens{x}_{\sigma, i\!-\!w\!+\!1}, \!\dots\!,  \Ldens{x}_{\sigma, i\!+\!w\!-\!1})
\}_{\underline{\sigma}}^{\overline{\sigma}}),
\end{align*}
where $G(\{\Ldens{c}_\sigma, 
\hat{g}(\Ldens{x}_{\sigma, i\!-\!w\!+\!1}, \!\dots\!,  \Ldens{x}_{\sigma, i\!+\!w\!-\!1})
\}_{\underline{\sigma}}^{\overline{\sigma}})$ is the GEXIT integral introduced in 
Definition~\ref{def:gexitintegralbasic}.  Let
\begin{align*}
A(\Ldens{x}) = \entropy(\Ldens{x}) +  (\dl-1-\frac{\dl}{\dr}) \entropy(\Ldens{x}^{\cconv \dr})  - 
(\dl-1) \entropy(\Ldens{x}^{\cconv \dr-1}).
\end{align*}
Then
$A(\{\Ldens{c}_\sigma, \Ldens{\x}_\sigma\}_{\underline{\sigma}}^{\overline{\sigma}})$
is well defined and  
\begin{align*}
\Big|\!\frac{A(\!\{\Ldens{c}_\sigma, \Ldens{\x}_\sigma\}_{\underline{\sigma}}^{\overline{\sigma}})}{2 \Lc+1}\!&-A(\Ldens{x}_{\overline{\sigma}})+A(\Ldens{x}_{\underline{\sigma}})\!\Big| \leq \cwdldrLdelta,
\end{align*}
where
\begin{align*}
\cwdldrLdelta = & \frac{11 w (1+\dl\dr)}{2 \Lc\!+\!1} \! +\! 4(\sqrt{2}\!+\!\frac2{\ln 2}\dl (\dr\!-\!1)) \sqrt{\delta}.
\end{align*} 
\end{theorem}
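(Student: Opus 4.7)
The proof will follow the blueprint of the exact-FP area theorem (Lemma~\ref{lem:areaunderBPGEXIT}), extended to the coupled approximate-FP setting. As a preliminary reduction I replace $\{\Ldens{c}_\sigma,\Ldens{\x}_\sigma\}$ by a suitable piece-wise linear approximation and argue by the limiting procedure identical to the one that justifies the manipulations at the end of the proof of Lemma~\ref{lem:areaunderBPGEXIT}, so that every $\sigma$-derivative below is well-defined.

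For each $i\in [-\Lc+w-1,-w+1]\cup[w-1,\Lc-w+1]$ (the good range on which the approximate-FP hypothesis is assumed) I introduce the virtual FP density $\tilde{\Ldens{x}}_{\sigma,i}:=\Ldens{c}_\sigma\vconv\Ldens{y}_{\sigma,i}^{\vconv \dl-1}$, where $\Ldens{y}_{\sigma,i}:=\tfrac{1}{w}\sum_{j=0}^{w-1}\bigl(\tfrac{1}{w}\sum_{k=0}^{w-1}\Ldens{x}_{\sigma,i+j-k}\bigr)^{\cconv \dr-1}$, together with the leaf-average $\bar{\Ldens{x}}_{\sigma,i}:=\tfrac{1}{w^2}\sum_{j,k=0}^{w-1}\Ldens{x}_{\sigma,i+j-k}$; by hypothesis $d(\Ldens{x}_{\sigma,i},\tilde{\Ldens{x}}_{\sigma,i})\le\delta$. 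Consider the depth-$2$ computation tree rooted at section $i$. Averaging the identity of Lemma~\ref{lem:entropyoftree} over the coupling randomness of its $\dl(\dr-1)$ leaves (the bilinearity of $\cconv$ in each argument together with the linearity of $\entropy$ allows expectations to pass inside the convolutions) yields
\[\bar H_i(\sigma)=\entropy(\tilde{\Ldens{x}}_{\sigma,i})+\dl(\dr-1)\entropy(\bar{\Ldens{x}}_{\sigma,i})-(\dl-1)\entropy(\Ldens{y}_{\sigma,i})-\entropy(\tilde{\Ldens{x}}_{\sigma,i}\cconv\Ldens{y}_{\sigma,i}).\]
Exactly as in Lemma~\ref{lem:areaunderBPGEXIT}, the total derivative $d\bar H_i/d\sigma$ decomposes as the root GEXIT integrand at section $i$ plus the GEXIT integrands at the $\dl(\dr-1)$ leaves, and the latter are evaluated check-by-check via Lemma~\ref{lem:entropyofcheck} (using that at the approximate FP the effective message from the root into each check has density $\tilde{\Ldens{x}}_{\sigma,i}$, so that by symmetry the leaf GEXITs within a check share a common value). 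Integrating over $[\underline\sigma,\overline\sigma]$ and summing over the good range, the leaf contributions telescope against the $\dl(\dr-1)\entropy(\bar{\Ldens{x}}_{\sigma,i})$ term. At the flat endpoints the averages $\bar{\Ldens{x}}$ and $\Ldens{y}$ collapse to $\Ldens{x}_\bullet$ and $\Ldens{x}_\bullet^{\cconv \dr-1}$ respectively, so a direct algebraic manipulation identifies the remainder as exactly $A(\Ldens{x}_{\overline\sigma})-A(\Ldens{x}_{\underline\sigma})$.

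Two error sources then complete the bound. The $O(w)$ excluded sections near $\pm\Lc$ and around the center contribute GEXIT integrals that I bound trivially, using $|d\ent(\Ldens{c}_\sigma)/d\sigma|\le 1$ and the $|h'|\le \dl\dr$-type estimate of Lemma~\ref{lem:propertyofh(x)}(ii), yielding the $\frac{11w(1+\dl\dr)}{2\Lc+1}$ term after normalization. The second error comes from replacing $\tilde{\Ldens{x}}_{\sigma,i}$ by $\Ldens{x}_{\sigma,i}$ in each occurrence in $\bar H_i$ and its check-level counterparts: the simple entropy--Wasserstein bound of Lemma~\ref{lem:blmetric}(ix) produces the $\sqrt 2\sqrt\delta$ piece, while Lemma~\ref{lem:magic}(ii) (the partially-degraded Entropy Product Inequality) applied to the mixed convolution term $\entropy(\tilde{\Ldens{x}}_{\sigma,i}\cconv\Ldens{y}_{\sigma,i})$ produces the $\tfrac{2}{\ln 2}\dl(\dr-1)\sqrt\delta$ piece, the factor $\dl(\dr-1)$ counting the variable-to-check messages whose Battacharyya differences accumulate in $\Ldens{y}_{\sigma,i}$. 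Together these reproduce the stated coefficient $4(\sqrt 2+\tfrac{2}{\ln 2}\dl(\dr-1))$. The principal obstacle will be the bookkeeping required to verify the telescoping: each $\Ldens{x}_{\sigma,i}$ appears inside $\bar{\Ldens{x}}_{\sigma,i'}$ and $\Ldens{y}_{\sigma,i'}$ for $O(w)$ neighbouring indices $i'$ through the nested double sums, and one must show that after summing over the good $i$-range the total coefficient of each $\entropy(\Ldens{x}_{\sigma,i})$ matches the one prescribed by the uncoupled functional $A$, with the $O(w)$ boundary mismatch absorbed precisely into the trivial boundary bound.
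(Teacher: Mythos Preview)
Your approach is essentially the paper's: both use the depth-$2$ computation tree rooted at each interior position $i$, write the root GEXIT as the tree-entropy change minus the leaf GEXIT contributions, and evaluate the leaf contributions via the single-parity-check identity of Lemma~\ref{lem:entropyofcheck}. Your averaged formula $\bar H_i(\sigma)$ is correct and is a compact restatement of what the paper calls $T(\Ldens{x})$ (up to the $\delta$-error from $\tilde{\Ldens{x}}\neq\Ldens{x}$).

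Two points of clarification. First, what you call ``telescoping'' is not telescoping; the actual mechanism---and the resolution of the bookkeeping problem you flag at the end---is a \emph{reorganisation} of the double sum $\sum_i\sum_{\text{leaves of tree }i}$ by check-node position $j$. After averaging over the tree randomness, the $\dr$ messages entering a check at position $j$ are all equal to $\tfrac{1}{w}\sum_k\Ldens{x}_{\sigma,j-k}$ except the root message, which is $\tilde{\Ldens{x}}$; symmetry of the parity-check code then makes each leaf GEXIT equal to $\tfrac{1}{\dr}$ of the check-entropy change, up to the $\tfrac{8}{\ln 2}\sqrt{2\delta}$ correction from Lemma~\ref{lem:magic}(ii). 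The paper makes this explicit by splitting the checks into a ``good'' range $[-\Lc+2w-2,-w+1]\cup[2w-2,\Lc-w+1]$ (all $\dr$ neighbours are approximate-FP) and an $O(w)$ ``bad'' range bounded trivially, which is where the constant $11w(1+\dl\dr)$ comes from. Second, the boundary GEXIT integrals do not require $|h'|\le\dl\dr$; one simply uses that every GEXIT integrand lies in $[0,1]$ (Lemma~\ref{lem:gexitsmoothfamily}) and that the integration range in entropy has length at most $1$.
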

{\em Discussion:} In words, the theorem says that for any family of 
spatial FPs which start and end at a constant (over all sections) FP, the 
GEXIT integral is given by the end-points and is close to the difference
of the $A$ expression introduced in Lemma~\ref{lem:areaunderBPGEXIT}. In fact, 
from the Lemma~\ref{lem:areaunderBPGEXIT} we see that, graphically, this is equal
to the area under the BP GEXIT curve of the underlying ensemble between the two 
end-points. 
%{\em Proof of Theorem~\ref{the:areatheoremforapproximatefpfamilypartial}:}
\begin{proof}
Let us consider the circular ensemble which is associated to $(\dl,
\dr, \Lc, w)$ (see Definition~\ref{def:circularensemble}).  As
defined in the statement of the lemma, for  $i \in [-\Lc, \Lc]$,
the channel ``seen'' at position $i$ is
$\Ldens{c}_{\sigma,i}=\Ldens{c}_{\sigma}$. 
For the remaining sections $i \in [\Lc+1, \Lc+w-1]$ we impose the ``natural'' condition
$\Ldens{c}_{\sigma,i}=\Delta_{+\infty}$.
As a consequence, for these positions
$\Ldens{x}_{\sigma,i}=\Delta_{+\infty}$.

Since $\{\Ldens{c}_\sigma\}$ as well as $\{\Ldens{\x}_\sigma\}$ are
piece-wise linear, all GEXIT integrals are well defined (see the
proof of Lemma~\ref{lem:areaunderBPGEXIT}). Consequently,
$A(\{\Ldens{c}_\sigma,
\Ldens{\x}_\sigma\}_{\underline{\sigma}}^{\overline{\sigma}})$ is
well-defined.

Instead of determining $A(\{\Ldens{c}_\sigma,
\Ldens{\x}_\sigma\}_{\underline{\sigma}}^{\overline{\sigma}})$,
directly, let us determine the equivalent quantity associated to
the circular ensemble, i.e., we include the $w-1$ extra positions
$[\Lc+1, \Lc+w-1]$.  Since for all ``extra'' positions the associated
channel is constant, and so the additional integrals are zero, the
numerical value of these two unnormalized GEXIT integrals is in
fact identical.

We will now derive upper and lower bounds for the GEXIT integrals
for the given approximate FP family. Recall: for $i \in [-\Lc+w-1,
-w+1] \cup [w-1, \Lc-w+1]$ we have a $\delta$-approximate (in the
Wasserstein metric) FP family. For $i \in [-\Lc, -\Lc+w-2] \cup
[-w+2, w-2] \cup [\Lc-w+2, \Lc]$ all we know is that the channel
is a monotone function of $\sigma$.  Finally, for $i \in [\Lc+1,
\Lc+w-1]$ the channel is frozen to ``perfect.''

Let us start by deriving a lower bound.
\begin{enumerate}
\item[] {\em Boundary:} For $i \in [-\Lc, -\Lc+w-2] \cup [-w+2, w-2] \cup [\Lc-w+2, \Lc]$ the GEXIT
integral is non-negative. Thus, in this regime, we get a lower bound by setting each GEXIT integral to 0 (cf. Lemma~\ref{lem:gexitsmoothfamily}). 
\item[] {\em Interior:} Consider the GEXIT integrals for $i \in [-\Lc+w-1, -w+1] \cup [w-1, \Lc-w+1]$.
\begin{itemize}
\item[]{\em Technique:}
Rather than evaluating these integrals directly we use the
technique introduced in \cite{MMU08}, i.e., we consider the
computation tree of height $2$ rooted in node $i$ as shown in
Figure~\ref{fig:computationgraph} for the specific case $(\dl=2, \dr=4)$.
\begin{figure}[hbt] \centering
\setlength{\unitlength}{1.0bp}% 
\begin{picture}(160,160)
\put(5,0){\includegraphics[scale=1.0]{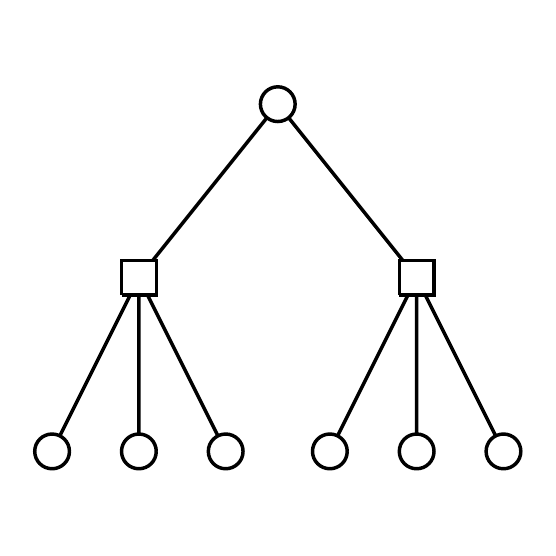}}
\put(85,10){\makebox(0,0){\small{leaves}}}
\put(85,150){\makebox(0,0){\small{root}}} \end{picture}
\caption{Computation tree of height 2 for $(2,4)$-regular LDPC ensemble.}
\label{fig:computationgraph} \end{figure} 
More precisely, there are $\dl$ check nodes connected to this root
variable node and $(\dr-1)$ further variable nodes connected to each
such check node.  So in total there are $\dl$ check nodes in this tree
and $1+\dl(\dr-1)$ variable nodes. We call the starting variable node, 
the {\em root} and all other variable nodes, {\em leaves}.  By symmetry
it suffices to consider one branch of this computation tree in detail.
Let $j$, $j \in [i, i+w-1]$, denote the position of a particular
check node.  We assume that the choice of $j$ is done uniformly
over this interval.  Let $k_{l}$, $l \in [1, \dr-1]$, $k_l \in[j-w+1,
j]$, denote the position of the $l$-th variable node attached to this
check node, and let the index of the root node be $0$. For the leaf nodes we
assume again a uniform choice of $k_l$ over the allowed interval.
Note that, wlog, we have set the position $l=0$ for the root variable
node.  For each computation tree assign to its root node the channel
$\Ldens{c}_{\sigma,i}$, whereas each leaf variable node at position
$k_{l}$ ``sees'' the channel $\Ldens{x}_{\sigma, k_{l}}$.  Note that
for our model of the tree, the distribution (averaged over this choice)
which flows into the root node is exactly $\hat{g}(\Ldens{x}_{\sigma,
i\!-\!w\!+\!1}, \!\dots\!,  \Ldens{x}_{\sigma, i\!+\!w\!-\!1})$,
as required for the computation of $A(\{\Ldens{c}_\sigma,
\Ldens{\x}_\sigma\}_{\underline{\sigma}}^{\overline{\sigma}})$.

Let us describe the basic trick which will help us to accomplish the
computation.  We will first determine the sum of all GEXIT integrals
associated to such a tree. From this we will then subtract the GEXIT
integrals associated to its leaf nodes. This will give us the GEXIT
integral associated to the root node, which is what we are interested in.

More precisely, we use \eqref{eq:totalderivativeequaltosumofpartialderivatives}. 
The lhs of this equation gives us the contribution of the overall tree and the rhs contains
the GEXIT integral of the root node plus the GEXIT integrals of the leaf nodes.
For the current case, we stress that all the operations (integrals of derivatives and partial
derivatives) in \eqref{eq:totalderivativeequaltosumofpartialderivatives} are well-defined since the family we consider is piece-wise linear
%(see proof of Lemma~\ref{lem:areaunderBPGEXIT} for more details.)
\item[] {\em Contributions from overall tree:} Recall that for $i
\in [-\Lc, \Lc]$, $\Ldens{x}_{i,
\underline{\sigma}}=\Ldens{x}_{\underline{\sigma}}$ and $\Ldens{x}_{i,
\overline{\sigma}}=\Ldens{x}_{\overline{\sigma}}$.

Consider first the case $\sigma=\overline{\sigma}$ and $i \in
[-\Lc+w-1, -w+1] \cup [w-1, \Lc-w+1]$.  From Lemma~\ref{lem:entropyoftree}
we know that the conditional entropy $\entropy(X \mid Y)$ of the
tree code is given by
\begin{align*}
& \entropy(\tilde{\Ldens{x}}_{\overline{\sigma}})+ 
\dl(\dr-1) \entropy(\Ldens{x}_{\overline{\sigma}}) - 
\entropy(\tilde{\Ldens{x}}_{\overline{\sigma}}\cconv\Ldens{x}_{\overline{\sigma}}^{\cconv \dr-1}) \\
& - (\dl-1) \entropy(\Ldens{x}^{\cconv \dr-1}_{\overline{\sigma}}),
\end{align*}
where $\tilde{\Ldens{x}}_{\overline{\sigma}}= \Ldens{c}_{\overline{\sigma}}
\vconv (\Ldens{x}_{\overline{\sigma}}^{\cconv \dr-1})^{\vconv
\dl-1}$.  Now recall that $d(\tilde{\Ldens{x}}_{\overline{\sigma}}, \Ldens{x}_{\overline{\sigma}}) \leq \delta$.
Define $T(\Ldens{x})$ as
\begin{align*}
(1\!+\!\dl(\dr\!-\!1)) \entropy(\Ldens{x})\!-\!\entropy(\Ldens{x}^{\cconv \dr})\!-\!(\dl\!-\!1) 
\entropy(\Ldens{x}^{\cconv \dr-1}).
\end{align*}
Then (dropping the subscripts $\overline{\sigma}$ for a moment),
\begin{align*}
 & \vert \entropy(X \mid Y)  - T(\Ldens{x})\vert \\
  &\leq \vert  \entropy(\tilde{\Ldens{x}})\! -\!  
 \entropy(\Ldens{x})\vert \!+\!  \vert
\entropy(\tilde{\Ldens{x}}\cconv\Ldens{x}^{\cconv \dr-1}) \!-\!
\entropy(\Ldens{x}^{\cconv \dr})\vert \\  
& \stackrel{\text{Lem.~\ref{lem:blmetric}.\ref{lem:blmetricwasserboundsbatta}}}{\leq} 
h_2(d(\tilde{\Ldens{x}}, \Ldens{x})/2) \!+\!  h_2(d(\tilde{\Ldens{x}}\cconv\Ldens{x}^{\cconv \dr-1}, \Ldens{x}^{\cconv \dr})/2) \\
& \stackrel{\text{Lem.~\ref{lem:blmetric}.\ref{lem:blmetricregularcconv}}}{\leq}   
2  h_2(d(\tilde{\Ldens{x}}, \Ldens{x})/2) 
\stackrel{\text{(\ref{eq:upperboundbinaryentropyone})}}{\leq}   
4 \sqrt{d(\tilde{\Ldens{x}}, \Ldens{x})/2} \leq 2\sqrt{2 \delta}.
\end{align*}
Exactly the same argument tells us that the entropy of such a tree
for $\sigma=\underline{\sigma}$ is, up to a possible error of size
$2\sqrt{2\delta}$, equal to $T(\Ldens{x}_{\underline{\sigma}})$.  We
conclude: the difference of the total entropy of such a tree is
lower bounded by
$T(\Ldens{x}_{\overline{\sigma}})-T(\Ldens{x}_{\underline{\sigma}}) -
4\sqrt{2\delta}$, call this $B-4\sqrt{2\delta}$.

\item[] {\em Contributions from leaves:} We need to find the
contributions of GEXIT integrals associated to all the leaf nodes
of each such tree rooted at a position $i \in [-\Lc+w-1, -w+1] \cup
[w-1, \Lc-w+1]$.  The exact such sum is difficult to determine. But
we only need an upper bound to derive a lower bound on the overall
GEXIT integral.  Note that GEXIT integrals are non-negative. Hence,
let us compute the sum of GEXIT integrals of leaf nodes of {\em
all} computation trees, whether they are rooted in a position $i
\in [-\Lc+w-1, -w+1] \cup [w-1, \Lc-w+1]$ or not.

By symmetry, this contribution is easy to determine.  More precisely,
consider the following equivalent procedure. Pick a check node at
position $j$, $j \in [-\Lc, \Lc+w-1]$.  Every check node has $\dr$
connected variable nodes, where each variable node is picked with
uniform probability and independently from the range $[j-w+1, j]$
and the choice of the $\dr$ variables is iid (note that the connections
 are taken on the circular ensemble).

\item[] {\em Contributions from checks in the range $[-\Lc, -\Lc+2w-3]
\cup [-w+2, 2w-3] \cup [\Lc-w+2, \Lc+w-1]$:} Check nodes in this
range might see some frozen channels or channels which do not form
approximate FPs. Hence we upper bound all GEXIT integrals associated
to check nodes in this range by $1$ (cf. Lemma~\ref{lem:gexitsmoothfamily}). The number of such integrals
is $(7w-8) \dl (\dr-1)$.

\item[] {\em Contributions from checks in the range $[-\Lc+2w-2,
-w+1] \cup[2w -2, \Lc-w+1]$:} Check nodes in this range only see
channels which are approximate FPs and none of the channels are
frozen.  There are $(2 \Lc-6w+8) \dl (\dr-1)$ such integrals.
Let us determine the contribution for each such integral.
Since we consider an average over all possible
computation trees, the (average) density entering a check node is
equal for all the leaf nodes (there are $\dr-1$ such densities).
Let us call this density $\Ldens{x}_{\sigma}$. If we focus on a
check node at position $j$, this density is equal to
$$
\Ldens{x}_{\sigma} = \frac1w\sum_{k=0}^{w-1} \Ldens{x}_{\sigma, j-k}.
$$
However, the density entering the check node, at position $j$, from the
root node will be different from $\Ldens{x}_{\sigma}$, since we do not have a
family of true FPs. Call this density $\tilde{\Ldens{x}}_{\sigma}$. This density
is equal to 
$$
\tilde{\Ldens{x}}_{\sigma} = \frac1w\sum_{k=0}^{w-1} \Ldens{c}_{\sigma}\vconv
g(\Ldens{x}_{\sigma, j-k-w+1},\dots,\Ldens{x}_{\sigma,j-k+w-1}).
$$
Since we assumed that we have an approximate FP family and due to the
convexity of the Wasserstein metric, we conclude that
$d(\Ldens{x}_{\sigma}, \tilde{\Ldens{x}}_{\sigma}) \leq \delta$.
Let us define $P(\Ldens{x})=\entropy(\Ldens{x}) -
\frac1{\dr}\entropy(\Ldens{x}^{\cconv \dr})$. From Lemma~\ref{lem:entropyofcheck}
 we have that $P(\Ldens{x})$ is the GEXIT integral of a leaf node if we had
a true FP. Since we have an approximate FP, 
each such integral can be upper bounded by
$P(\Ldens{x}_{\overline{\sigma}})-P(\Ldens{x}_{\underline{\sigma}}) +
\frac8{\ln 2} \sqrt{2\delta}$, call it $C+ \frac{8}{\ln 2} \sqrt{2\delta}$. 
We derive this as follows. We want to bound the difference
$$
\Big \vert\int_{\underline{\sigma}}^{\overline{\sigma}}  \entropy(\frac{\dee \Ldens{x}_{\sigma}}{\dee \sigma}\vconv \Ldens{z}_{\sigma})\dee \sigma
- \int_{\underline{\sigma}}^{\overline{\sigma}}\entropy(\frac{\dee \Ldens{x}_{\sigma}}{\dee \sigma}\vconv \tilde{\Ldens{z}}_{\sigma})\dee \sigma
\Big\vert,$$
where $\Ldens{z}_{\sigma}=\Ldens{x}_{\sigma}^{\cconv \dr-1}$ and $\tilde{\Ldens{z}}_{\sigma}=\Ldens{x}_{\sigma}^{\cconv \dr-2}\cconv \tilde{\Ldens{x}}_{\sigma}$.
Since the family, $\{\Ldens{x}_{\sigma}\}$ is piece-wise linear, we use
\eqref{eq:totalderivativeequaltosumofpartialderivatives} (applied in this case to the single parity-check code),
Lemma~\ref{lem:entropyofcheck} and
symmetry to conclude that $\int_{\underline{\sigma}}^{\overline{\sigma}} \dee \sigma \entropy(\frac{\dee \Ldens{x}_{\sigma}}{\dee \sigma}\vconv \Ldens{z}_{\sigma}) = P(\Ldens{x}_{\overline{\sigma}})-P(\Ldens{x}_{\underline{\sigma}}) $.
Since the family, $\{\Ldens{x}_{\sigma}\}$ is piece-wise linear and ordered by degradation, we can reparameterize 
the GEXIT integrals with the Battacharyya parameter which we denote by $b=\batta(\Ldens{x}_{\sigma})$. Thus
\begin{align*}
 \Big \vert\!\!\int_{\underline{b}}^{\overline{b}} \!\!\!\!\entropy(\frac{\dee \Ldens{x}_{b}}{\dee b}\!\vconv\! (\Ldens{z}_{b} \!-\! \tilde{\Ldens{z}}_b)) 
 \dee b
\Big\vert 
\! \leq \!\frac8{\ln 2}\!\sqrt{\!2d(\Ldens{x}^{\cconv \dr\!-\!1}_b\!,\! \Ldens{x}_\sigma^{\cconv
\dr\!-\!2} \!\cconv\! \tilde{\Ldens{x}}_b)}.
\end{align*}
To see the last inequality, using
(\ref{equ:partiallydegradedcase}), Lemma~\ref{lem:magic} we have
$$
\entropy((\Ldens{x}_{b'}\!-\!\Ldens{x}_{b})\vconv(\Ldens{z}_b\!-\!\tilde{\Ldens{z}}_b))
\le
\frac{8}{\ln(2)} \batta(\Ldens{x}_{b'}\!-\!\Ldens{x}_b) \!\sqrt{2d(\tilde{\Ldens{z}}_b,\Ldens{z}_b)},
$$
where $\Ldens{x}_{b} \prec \Ldens{x}_{b'}$. Since
$\batta(\Ldens{x}_{b'})=b'$ and $\batta(\Ldens{x}_b) = b$, we get
$\frac{\entropy((\Ldens{x}_{b'}-\Ldens{x}_{b})\vconv(\Ldens{z}_b-\tilde{\Ldens{z}}_b))}{b'-b}
\le \frac{8}{\ln(2)}\sqrt{2 d(\tilde{\Ldens{z}}_b,\Ldens{z}_b)},$
which gives us the bound.  The last expression can be further upper
bounded (using (\ref{lem:blmetricregularcconv}), Lemma~\ref{lem:blmetric})
by $\frac8{\ln 2} \sqrt{2d(\Ldens{x}_b, \tilde{\Ldens{x}}_b)} \leq
\frac8{\ln 2} \sqrt{2\delta}$.

%The proof of the
%error term follows along the same lines as the argument we used
%towards the end of the proof of Lemma~\ref{lem:areaunderBPGEXIT}
%-- if we consider a parity-check code of length $\dr$ where $\dr-1$
%inputs are $\Ldens{x}_\sigma$ and one is $\tilde{\Ldens{x}}_\sigma$
%then the error term in the GEXIT integral compared to when all
%inputs are $\Ldens{x}_\sigma$ can be bounded by $2
%\sqrt{d(\Ldens{x}^{\cconv \dr-1}_\sigma, \Ldens{x}_\sigma^{\cconv
%\dr-2} \cconv \tilde{\Ldens{x}}_\sigma)}$.  The latter expression
%can be bounded by $2 \sqrt{d(\Ldens{x}_\sigma, \tilde{\Ldens{x}}_\sigma)}
%\leq 2 \sqrt{\delta}$.

\item[]{\em Accounting:} Putting everything together, we have
\begin{align*}
& \underbrace{(2 \Lc-4 w+6)}_{\text{nb. of interior nodes\,}}\, 
\underbrace{(B-4\sqrt{2\delta})}_{\text{sum of GEXIT integrals per tree}}  \\
& -\underbrace{(2\Lc-6w+8)\dl (\dr-1) C}_{\text{contributions of approx. FP channels}}  + \\
& -\underbrace{(7 w-8) \dl (\dr-1)}_{\text{frozen and non FP contributions}}+ \\
& -\underbrace{(2\Lc-6w+8) \dl (\dr-1) \frac8{\ln 2} \sqrt{\delta }}_{\text{correction due to approx. FP nature}} \\
\geq & (2 \Lc+1) (A(\Ldens{x}_{\overline{\sigma}})-A(\Ldens{x}_{\underline{\sigma}})) + D,
\end{align*}
where
\begin{align*}
D = & \underbrace{-(4 w-5) B\!-\!(7 w-8) \dl (\dr-1)}_{\text{$\geq - 11  w(1+\dl \dr)$ since $B \leq 1+\dl \dr$}} \\
& - 4\sqrt{\delta} (2 \Lc +1)[\sqrt{2} + \frac2{\ln 2}\dl (\dr-1)].
\end{align*}
\end{itemize}
\end{enumerate}
Let us derive an upper bound in the same manner.  \begin{itemize}
\item[] {\em Boundary:} For $i \in [-\Lc, -\Lc+w-2] \cup [-w+2,
w-2] \cup [\Lc-w+2, \Lc]$ the GEXIT integrals are at most $1$. This
gives a contribution of $4w-5$.  As usual,  for $i \in [\Lc+1,
\Lc+w-1]$ the GEXIT integral is $0$ and does not contribute to the
area.

\item[] {\em Interior:} Consider the GEXIT integrals for $i \in [-\Lc+w-1, -w+1] \cup [w-1, \Lc-w+1]$.
\begin{itemize}
\item[]{\em Technique:} We use the same procedure as beforehand.
But this time we need a lower bound of the GEXIT integrals of the
leaf nodes.

\item[] {\em Contributions from overall tree:} As before, the overall
contribution of each tree is equal to
$T(\Ldens{x}_{\overline{\sigma}})-T(\Ldens{x}_{\underline{\sigma}})$ plus
an error term of absolute value equal to $4\sqrt{2 \delta}$.

\item[] {\em Contributions from leaves:} The idea is same as before and as before, we will
 consider the computation from the point of view of check nodes. As before, we
split the contribution in two regimes, $[-\Lc, -\Lc+2w-3]
\cup [-w+2, 2w-3] \cup [\Lc-w+2, \Lc+w-1]$ and $[-\Lc+2w-2, -w+1] \cup [2w -2,
\Lc - w+1]$. 
\item[] {\em Contributions from checks in the range $[-\Lc, -\Lc+2w-3]
\cup [-w+2, 2w-3] \cup [\Lc-w+2, \Lc+w-1]$:} Check nodes in this range
might see some frozen channels or channels which are not approximate
FPs. Since we are looking for an upper bound, we set the contribution of 
 such check nodes to be 0. 
\item[]
{\em Contributions from checks in the range $[-\Lc+2w-2, -w+1] \cup
[2w -2, \Lc - w+1]$:} As we discussed before, check nodes in this
range only see channels which are approximate FPs and none of the
channels are frozen. Further, all these GEXIT integrals corresponds
to computation trees whose root $i$ is in the range $[-\Lc+w-1,
-w+1]  \cup [w-1, \Lc-w+1]$. We can, therefore, subtract all their
contributions, which are obtained by arguments similar to those
used in the lower bound. There are $(2 \Lc-6w+8) \dl (\dr-1)$ such
integrals and the contribution for each such integral is at least
$C- \frac8{\ln 2} \sqrt{\delta}$. Here, the last term takes into
account the approximate FP nature of the channels and $C$ was defined
in the arguments for obtaining the lower bound.

\item[]{\em Accounting:} We have
\begin{align*}
& \underbrace{(4w-5)}_{\text{boundary}} + \underbrace{(2 \Lc-4w+6)}_{\text{nb. interior nodes}} 
\underbrace{(B+4\sqrt{2 \delta})}_{\text{total contribution per tree}} + \\
& -\underbrace{(2 \Lc-6w+8)\dl(\dr-1) C}_{\text{contr. of interior check nodes}}+ \\
& +\underbrace{(2\Lc-6w+8) \dl (\dr-1) \frac8{\ln 2} \sqrt{\delta}}_{\text{correction due to approx. FP nature}} \\
\leq & (2 \Lc+1) (A(\Ldens{x}_{\overline{\sigma}})-A(\Ldens{x}_{\underline{\sigma}})) + E,
\end{align*}
where
\begin{align*}
E = & \underbrace{(6 w-7) \dl  (\dr-1)C}_{\text{$\leq 6 w \dl \dr$ since $C \leq \frac{\dr+1}{\dr}$}} + \underbrace{(4w-5)}_{\leq 4w\dl\dr} \\
& + 4\sqrt{\delta}(2 \Lc +1) [\sqrt{2}+\frac2{\ln 2}\dl (\dr-1)].
\end{align*}
\end{itemize}
\end{itemize}
\end{proof}

{\em Proof of Theorem~\ref{thm:existenceintermediateform}}: Rather
than deriving the bound $c(\dl, \dr, \delta, w, \Msat, \Lc)$ for all
values of the parameters,  we are only interested in the behavior
of this bound for values of $\delta$ tending to $0$ and values of
$\Msat$ and $\Lc$ tending to $\infty$. Hence, in the sequel, nothing
is lost by assuming at several spots that $\delta$ is ``sufficiently''
small and $\Msat$ and $\Lc$ are ``sufficiently'' large (consequently $N$ is also sufficiently large). 
This will simplify our arguments significantly.

Let $(\Ldens{c}^*, \Ldens{\x}^*)$ denote the proper one-sided FP
on $[-\Lfp, 0]$ with forced boundary condition which fulfills the
stated conditions for some $\delta>0$ and $2(w-1) \leq \Lc$ and $\Lc+w \leq \Msat
\leq \Lfp$.  We prove the claim in several steps, where in each
step we assert further properties that such a FP has to fulfill.

{\em Constellation is almost flat and not too small ``on the right'':}
Recall that by assumption $\batta(\Ldens{x}^*_{-\Msat}) \geq \xunstab(1)$
so that $\batta(\Ldens{x}^*_{i}) \geq \xunstab(1)$ for $i \in [-\Msat,
0]$.  Using the same reasoning as in the discussion at the end of
Lemma~\ref{lem:degradationandwasserstein}, we can conclude that
there exists an $i^* \in [-\Msat, -\Lc-w]$ such that $D(\Ldens{x}^*_j,
\Ldens{x}^*_k) \leq D(\Ldens{x}^*_{i^*},
\Ldens{x}^*_j) +  D(\Ldens{x}^*_j,
\Ldens{x}^*_k) +  D(\Ldens{x}^*_k,
\Ldens{x}^*_{i^*+\Lc+w})  =  D(\Ldens{x}^*_{i^*}, \Ldens{x}^*_{i^*+\Lc+w}) \leq
\frac{2 (\Lc+w)}{\Msat}$ for all $j\leq k$ and $j,k \in [i^*, i^*+\Lc+w]$.  From part
(\ref{lem:blmetricdegradation}) of
Lemma~\ref{lem:degradationandwasserstein} we conclude that
$d(\Ldens{x}^*_j, \Ldens{x}^*_k) \leq \sqrt{8 (\Lc+w)/\Msat}$ for all $i^* \leq j \leq k \leq i^*+\Lc+w$. 
Clearly, the right-hand side can be made
arbitrarily small by picking $\Msat$ sufficiently larger than $\Lc+w$.

{\em Constellation can be made exactly flat and not too small ``on the right'':} 
Create from $(\Ldens{c}^*, \Ldens{\x}^*)$ the increasing constellation
$(\Ldens{c}^*, \Ldens{\z}^*)$ on $[-\Lfp, 0]$ with free boundary condition in the following way,
\begin{align*} 
\Ldens{z}_i^* = \begin{cases} \Ldens{x}_{i}^*,
& i \in [-\Lfp, i^*+w], \\ 
\Ldens{x}_{i^*+w}^*, & i \geq i^*+w.
\end{cases} \end{align*} 
The graphical interpretation is simple.  We replace the ``almost''
flat part on the right plus the extra part on the  right which might
not be flat with an exactly flat part. To simplify our subsequent
notation we set $\Ldens{x}=\Ldens{x}^*_{i^*+w}$ and from above arguments note that
$\batta(\Ldens{x}) \geq \xunstab(1)$. Hence $\batta(\Ldens{z}^*_{i}) \geq \xunstab(1)$ for
all $i \geq i^*+w$.

{\em Constellation is approximate FP}: 
Note that by going from $\Ldens{\x}$ to $\Ldens{\z}$ no component in $[-\Lfp, i^*+\Lc+w]$ is changed by
more than a distance $\kappa=\sqrt{8 (\Lc+w)/\Msat}$.  Therefore, if we run
DE on the modified components it is clear that in this range the  output must still be close to the
original output. More precisely, we have for every $i  \in [-\Lfp, i^*+\Lc+1]$
\begin{align*}
& d(\Ldens{z}_i^*, \Ldens{c}^*\vconv g(\Ldens{z}_{i-w+1}^*,\dots,\Ldens{z}_{i+w-1}^*)) \\
& \leq
 d(\Ldens{z}_i^*, \Ldens{x}_i^*)+
 d(\Ldens{x}_i^*,  \Ldens{c}^* \vconv g(\Ldens{z}_{i-w+1}^*,\dots,\Ldens{z}_{i+w-1}^*)) \\
& \leq \!\kappa\!+\! d(\Ldens{c}^*\!\vconv\! g(\Ldens{x}_{i\!-\!w\!+\!1}^*,\dots,\Ldens{x}_{i\!+\!w\!-\!1}^*), \Ldens{c}^*\!\vconv \!g(\Ldens{z}_{i\!-\!w\!+\!1}^*,\dots,\Ldens{z}_{i\!+\!w\!-\!1}^*)) \\
& \leq \kappa+ 2 (\dl-1) (\dr-1)\kappa,
\end{align*}
where to get the penultimate inequality we first replace $\Ldens{x}_i^*$ by $\Ldens{c}^* \vconv
g(\Ldens{x}_{i-w+1}^*,\dots,\Ldens{x}_{i+w-1}^*)$, since $\Ldens{\x}^*$ is a true FP, and then 
to obtain the  last inequality we apply Lemma~\ref{lem:sensitivity}.
Since $\kappa$ can be made arbitrarily small by choosing $\Msat$
sufficiently large, this verifies the approximate FP nature for $i
\in [-\Lfp, i^*+\Lc+1]$.  Let us now focus on $i  \in [i^*+\Lc+2, 0]$.
Note that since $\Lc \geq 2 (w-1)$, we can use the above argument
in particular for $i=i^*+2w-1$. For this choice of $i$ all involved
densities, $\Ldens{z}_{i-w+1}^*, \dots, \Ldens{z}^*_{i+w-1}$, are equal to $\Ldens{x}$.  Therefore, the previous argument
shows that 
\begin{align}\label{eq:xisapproxFP} 
d(\Ldens{x}, \Ldens{c} \vconv g(\Ldens{x}, \dots,
\Ldens{x})) \leq  \kappa+ 2(\dl-1) (\dr-1)\kappa. 
\end{align}
But for $i \geq
i^*+w$ all components of $\Ldens{\z}^*$ are equal to $\Ldens{x}$
and so the approximate FP nature of $\Ldens{\z}^*$  is also verified
for $i \geq i^*+2w-1$. Since $ i^*+2w-1 \leq i^*+\Lc+1$, we
 conclude that $\Ldens{\z}^*$ is an approximate FP.

{\em From FP to FP family:} From the approximate FP $(\Ldens{c}^*,
\Ldens{\z}^*)$ on $[-\Lfp, 0]$ we create the approximate FP family
$\{\Ldens{c}^*_\sigma,
\Ldens{\z}^*_\sigma\}_{\underline{\sigma}=0}^{\overline{\sigma}=\ih^*}$ on
$[-\Lc, 0]$ as described in Definition~\ref{def:partialFPfamily}.

{\em Computing GEXIT integral --
\protect{Definition~\ref{def:gexitintegralbasic}}:} Using the basic
definition of the GEXIT functional in
Definition~\ref{def:gexitintegralbasic} we conclude that the GEXIT
integral associated to $\{\Ldens{c}^*_\sigma,
\Ldens{\z}^*_\sigma\}_{\underline{\sigma}=0}^{\overline{\sigma}=\ih^*}$, $A(\{\Ldens{c}^*_\sigma, \Ldens{\z}^*_\sigma\}_{\underline{\sigma}}^{\overline{\sigma}})$ is
$0$ since the channel remains constant throughout the interpolation.

{\em Computing GEXIT integral --
\protect{Theorem~\ref{the:areatheoremforapproximatefpfamilypartial}}:} We
now compute the GEXIT integral associated to $\{\Ldens{c}^*_\sigma,
\Ldens{\z}^*_\sigma\}_{\underline{\sigma}=0}^{\overline{\sigma}=\ih^*}$
by first applying
Lemma~\ref{lem:interpolationyieldsapproximatefpfamilypartial} and
then Theorem~\ref{the:areatheoremforapproximatefpfamilypartial}.

More precisely, from the previous arguments we satisfy all the hypotheses of
Lemma~\ref{lem:interpolationyieldsapproximatefpfamilypartial}. This allows
us to conclude that the FP family constructed above is 
$\frac{2(\dl-1)(\dr-1)}{w}+\delta$ approximate FP
(cf.  \eqref{equ:partialddistance}) if $\Msat$ is chosen sufficiently large.
Furthermore, since the starting
($\Ldens{z}_{\sigma=\ih^*, i}^* = \Ldens{x}$ for all sections $i\in [-L,0]$) and ending
constellations ($\Ldens{z}_{\sigma=0,i}^* = \Delta_{+\infty}$ for all
sections $i\in [-L,0]$) are flat, we satisfy all the hypotheses of
Theorem~\ref{the:areatheoremforapproximatefpfamilypartial} from which we 
conclude that the GEXIT integral is upper bounded by $A(\Ldens{x}) +  
\cwdldrLdeltanew$.\footnote{Note that $A(\Delta_{+\infty})=0$.}

{\em Flat region has entropy not much smaller than $\frac{\dl}{\dr}$:} 
From part (\ref{lem:propertyofh(x)seven}) Lemma~\ref{lem:propertyofh(x)} we get 
\begin{align*}
\xunstab^2(1) \geq (\dr\!-\!1)^{-2(\frac{\dl-1}{\dl-2})} \geq
(\dr\!-\!1)^{-3} + \Bigl(\frac34\Bigr)^{\frac{\dl-1}{2}}, \end{align*}
where in the last step we have used condition (\ref{equ:admissiblesix})
in Definition~\ref{def:admissible}.  We conclude that
\begin{align}\label{equ:lowerbound}
\entropy(\Ldens{x}) \stackrel{\text{Lem.~\ref{lem:entropyvsbatta}}}{\geq} \batta^2(\Ldens{x})
\geq \xunstab^2(1) 
\geq (\dr\!-\!1)^{-3} + \Bigl(\frac34\Bigr)^{\frac{\dl-1}{2}}.
\end{align}
We now proceed by contradiction.  
Let us assume
that $\entropy(\Ldens{x}) \leq \frac{\dl}{\dr} - \dl e^{-4
(\dr-1)(\frac{2 \dl}{11 e \dr})^{\frac43}}- \frac{1}{\dr}$.
As we just discussed,
\begin{align*}
d(\Ldens{x}, \Ldens{c}^*\!\vconv \!(\Ldens{x}^{\cconv \dr-1})^{\vconv \dl-1}) \!\leq\!
\frac{2(\dl\!-\!1)(\dr\!-\!1)}{w} \!+\! \delta 
\!\leq\! (\frac{\ln(2) \dl}{16 \sqrt{2} \dr})^2.
\end{align*}
In the last step we assumed without loss of generality that $\delta$
is chosen sufficiently small. The inequality then follows from the
 condition
(\ref{equ:admissiblefour}) in Definition~\ref{def:admissible}.
This, together with (\ref{equ:lowerbound}), guarantees that we
satisfy the hypothesis of (the Negativity)
Lemma~\ref{lem:asymptoticnegativity}.  Hence we conclude
that $A(\Ldens{x}) \leq - \frac{1}{\dr}$. From  condition
(\ref{equ:admissiblefive}) in Definition~\ref{def:admissible}
$4(\sqrt{2}+\frac{2}{\ln 2}\dl (\dr-1))  \sqrt{\frac{2 (\dl-1)(\dr-1)}{w}}
< \frac{1}{\dr}$.
Hence for a sufficiently small $\delta$
and a sufficiently large $\Lc$, this leads to the conclusion that
the GEXIT integral $A(\{\Ldens{c}^*_\sigma, \Ldens{\z}^*_\sigma\}_{\underline{\sigma}}^{\overline{\sigma}}) \leq A(\Ldens{x}) +  
\cwdldrLdeltanew < 0$, a contradiction
 to the previous computation.  As a consequence, we must
have
\begin{align}\label{eq:hstarislarge}
\ent^* = \entropy(\Ldens{c}^*) \geq \entropy(\Ldens{x}) \geq
\frac{\dl}{\dr}\!-\!\dl e^{-4 (\dr-1)(\frac{2 \dl}{11 e \dr})^{\frac43}}
\!-\!\frac{1}{\dr}.
\end{align}

{\em The flat region is close to} $\xBPdens$: We will now show that
$\Ldens{x}$ is close to $\xBPdens(\Ldens{c}^*)$, the BP FP when
transmitting over the channel $\Ldens{c}^*$ using the underlying $(\dl, \dr)$-regular ensemble. In the sequel we
will denote $\xBPdens(\Ldens{c}^*)$ by $\xBPdens$. 
To do this, we will first bound the Wasserstein distance between $\xBPdens$
 and $\tilde{\Ldens{x}}$, where $\tilde{\Ldens{x}}$ is defined to be equal to
$\Ldens{x}^*_{i^*+L+w}$.  Thus to bound the distance between $\Ldens{x}$
and $\xBPdens$ we bound the distances $d(\Ldens{x}, \tilde{\Ldens{x}})$ and
$d(\tilde{\Ldens{x}}, \xBPdens)$. Note from the previous part we have that $d(\Ldens{x},
\tilde{\Ldens{x}}) = d(\Ldens{x}^*_{i^*+w}, \Ldens{x}^*_{i^*+L+w}) \leq \kappa$ and hence the distance between $\Ldens{x}$ and $\tilde{\Ldens{x}}$ can be made arbitrarily small by taking $\Msat$
sufficiently large. Let us now bound $d(\tilde{\Ldens{x}}, \xBPdens)$. 
First, we show that $d(\tilde{\Ldens{x}}, \Ldens{c}^*\vconv
g(\tilde{\Ldens{x}},\dots,\tilde{\Ldens{x}}))$ can be made arbitrarily small. Indeed, 
\begin{align}\label{equ:tildexapproximateFP}
d(\tilde{\Ldens{x}}, \Ldens{c}^*\vconv
g(\tilde{\Ldens{x}},\dots,\tilde{\Ldens{x}}))  \leq &d(\tilde{\Ldens{x}},\Ldens{x}) +  d(\Ldens{x}, \Ldens{c}^*\!\vconv\!
g(\Ldens{x},\dots,\Ldens{x})) \nonumber \\
& + d(\Ldens{c}^*\!\vconv\!
g(\Ldens{x},\dots,\Ldens{x}), \Ldens{c}^*\!\vconv\!
g(\tilde{\Ldens{x}},\dots,\tilde{\Ldens{x}})) \nonumber \\
& \leq \kappa \!+\! \kappa \!+\! 4(\dl\!-\!1)(\dr\!-\!1)\kappa,
\end{align}
where to get the last inequality we have used the approximate FP nature
of $\Ldens{x}$ (cf. \eqref{eq:xisapproxFP}) and the (sensitivity) Lemma~\ref{lem:sensitivity}. Since $\kappa$ can be made arbitrarily
small, we can make the distance $d(\tilde{\Ldens{x}},  \Ldens{c}^*\vconv
g(\tilde{\Ldens{x}},\dots,\tilde{\Ldens{x}}))$ as small as desired. 

Run forward DE, with the channel $\Ldens{c}^*$, starting from
$\tilde{\Ldens{x}}_0 = \tilde{\Ldens{x}}$, $\xBPdens_0=\xBPdens$, and $\Ldens{w}_0
= \Delta_0$, respectively.  Let
$\tilde{\Ldens{x}}_{\ell}=T_{\Ldens{c}^*}(\tilde{\Ldens{x}}_{\ell-1})$, 
$\xBPdens_{\ell}=T_{\Ldens{c}^*}(\xBPdens_{\ell-1})=\xBPdens$,
and $\Ldens{w}_{\ell}=T_{\Ldens{c}^*}(\Ldens{w}_{\ell-1})$,  $\ell
\geq 1$. Recall that $T_{\Ldens{c}^*}(\cdot)$ is the DE operator for the $(\dl,
\dr)$-regular ensemble when transmitting over the channel $\Ldens{c}^*$. We will choose the value of  $\ell$ shortly.  Then
\begin{align*}
& d(\tilde{\Ldens{x}}, \xBPdens) 
\leq  d(\tilde{\Ldens{x}}_0, \tilde{\Ldens{x}}_{\ell}) + d(\tilde{\Ldens{x}}_{\ell}, \Ldens{w}_{\ell})+d(\Ldens{w}_{\ell}, \xBPdens) \\
\leq &\sum_{j=0}^{\ell-1} d(\tilde{\Ldens{x}}_j, \tilde{\Ldens{x}}_{j+1}) \!+ \!
2 \sqrt{\batta(\Ldens{w}_{\ell})\!-\!\batta(\tilde{\Ldens{x}}_{\ell})} \!+\! 2 \sqrt{\batta(\Ldens{w}_{\ell})-\batta(\xBPdens)}).
\end{align*}
In the last step we use that $\Ldens{w}_\ell \succ \tilde{\Ldens{x}}_{\ell}$,
since $\Ldens{w}_0 = \Delta_0 \succ \tilde{\Ldens{x}}_0$ and DE preserves degradation.
Similarly, we use $\Ldens{w}_\ell \succ \xBPdens$. Therefore we can
upper bound the Wasserstein distance in terms of the difference of the
respective Battacharyya constants according to (\ref{lem:blmetricentropy}),
Lemma~\ref{lem:degradationandwasserstein}.

Choose $\ell = \lfloor\frac{L}{w-1}\rfloor$. We then claim that $\Ldens{x}
\prec \tilde{\Ldens{x}}_{j}$ for all $0\leq j \leq \ell$.  Let us
prove this claim immediately. From construction, we have
$\Ldens{x}=\Ldens{x}^*_{i^*+w} \prec \Ldens{x}^*_{i^*+L+w} =
\tilde{\Ldens{x}}_0$. Next, we claim that
$\tilde{\Ldens{x}}_j \succ \Ldens{x}^*_{i^*+L+1-(w-1)(j-1)}$ for $1\leq j\leq
\ell$.  Before we prove this claim, we apply it immediately 
to conclude that $$ \tilde{\Ldens{x}}_j \succ \Ldens{x}^*_{i^*+L+1-(w-1)(j-1)}\stackrel{
j\leq \ell \leq\frac{L}{w-1}}{\succ}
 \Ldens{x}^*_{i^*+ w} = \Ldens{x}.$$ To prove the intermediate claim 
we argue inductively that 
\begin{align*}
\tilde{\Ldens{x}}_j & = \Ldens{c}^*\vconv g(\tilde{\Ldens{x}}_{j-1},\dots,\tilde{\Ldens{x}}_{j-1}) \\
& \succ \Ldens{c}^*\vconv g(\Ldens{x}^*_{i^*+L+1-(w-1)j},\dots,\Ldens{x}^*_{i^*+L+1-(w-1)(j-2)}) \\
& = \Ldens{x}^*_{i^*+L+1-(w-1)(j-1)}.
\end{align*}
The induction is completed by verifying that $\tilde{\Ldens{x}}_1 \succ \Ldens{x}^*_{i^*+L+1}$.
Indeed, from the monotonicity of the spatial FP, $\Ldens{\x}^*$, we get
\begin{align}\label{eq:flatregionclosetoxBPone}
\Ldens{x}^*_{i^*+L+1} & = \Ldens{c}^*\vconv g(\Ldens{x}^*_{i^*+L-w+2},\dots,\Ldens{x}^*_{i^*+L+w}) \nonumber \\
& \stackrel{\Ldens{x}^*_{i^*+L+w}=\tilde{\Ldens{x}}_0}{\prec} \Ldens{c}^*\vconv g(\tilde{\Ldens{x}}_0,\dots,\tilde{\Ldens{x}}_0) = \tilde{\Ldens{x}}_1.
\end{align}

Let us now bound the distance $d(\tilde{\Ldens{x}}_j, \tilde{\Ldens{x}}_{j+1})$ for $1\leq j\leq \ell$. 
Since these elements are derived by DE we can
use our bounds on how the Wasserstein distance behaves under DE (cf. (\ref{lem:blmetricregularDE}), Lemma~\ref{lem:blmetric}) to
conclude that $d(\tilde{\Ldens{x}}_j, \tilde{\Ldens{x}}_{j+1}) \leq \alpha
d(\tilde{\Ldens{x}}_{j-1}, \tilde{\Ldens{x}_{j}})$, where $\alpha= 2(\dl-1)(\dr-1)(1 -
\batta^2(\Ldens{x}))^{\frac{\dr-2}2}$. To obtain $\alpha$ we have used $\tilde{\Ldens{x}}_{j} \succ \Ldens{x}$ for all $0\leq j \leq \ell$  to get $\min\{\batta(\tilde{\Ldens{x}}_{j-1}),
\batta(\tilde{\Ldens{x}}_{j})\} \geq \batta(\Ldens{x})$.
Continuing with above inequality, it is not hard to see that we get
$d(\tilde{\Ldens{x}}_j, \tilde{\Ldens{x}}_{j+1}) \leq \alpha^j d(\tilde{\Ldens{x}}_{0}, \tilde{\Ldens{x}_{1}})$.
This gives a bound of
$$
\sum_{j=0}^{\ell-1} d(\tilde{\Ldens{x}}_j, \tilde{\Ldens{x}}_{j+1}) \leq d(\tilde{\Ldens{x}}_0,
\tilde{\Ldens{x}}_{1}) \frac{\alpha^{\ell}-1}{\alpha-1}\leq d(\tilde{\Ldens{x}}_0,
\tilde{\Ldens{x}}_{1}) \frac{1}{1-\alpha},$$
where in the last inequality we use $\batta(\Ldens{x}) \stackrel{\text{Lemma}~\ref{lem:entropyvsbatta}}{\geq} \entropy(\Ldens{x}) \geq \frac{\dl}{\dr}-\dl e^{-4 (\dr-1)(\frac{2 \dl}{11 e \dr})^{\frac43}}
-\frac{1}{\dr}$ combined with the condition (\ref{equ:admissiblenine}) in Definition~\ref{def:admissible} to get $\alpha < 1$.
From \eqref{equ:tildexapproximateFP} 
we know that we can make $d(\tilde{\Ldens{x}}_0, \tilde{\Ldens{x}}_{1})$ as small
as we want by choosing $\Msat$ sufficiently large.

Let us now bound the two
terms containing Battacharyya parameters. Note that in each iteration
the distance of the respective Battacharyya
constants decreases by a factor of at least
$\beta=\batta(\Ldens{c}^*)(\dl\!-\!1)(\dr\!-\!1)
(1\!-\!\min\{\batta(\Ldens{x}), \batta(\Ldens{x}_{\BPsmall})\}^2)^{\dr-2}$.
Indeed, from Lemma~\ref{lem:dbbound},
\begin{align*}
&\batta(\Ldens{w}_{\ell})\!-\!\batta(\tilde{\Ldens{x}}_{\ell}) \leq  \Big(\!\batta(\Ldens{c}^*)(\dl\!-\!1)(\dr\!-\!1) (1\!-\!\batta(\Ldens{x})^2)^{\dr-2}\Big)^{\ell}  \\ 
&\batta(\Ldens{w}_{\ell})\!-\!\batta(\xBPdens) \!\leq\!
\Big(\!\batta(\Ldens{c}^*)(\dl\!-\!1)(\dr\!-\!1) (1\!-\!\batta(\Ldens{x}_{\BPsmall})^2)^{\dr-2}\Big)^{\ell}.
\end{align*}
For the first inequality we again use $\batta(\tilde{\Ldens{x}}_{j}) \geq \batta(\Ldens{x})$ for all $0\leq j \leq \ell$.
Above we have also used $\batta(\Ldens{w}_0) - \batta(\tilde{\Ldens{x}}_0)
= \batta(\Delta_0) - \batta(\tilde{\Ldens{x}}) \leq 1$ and $\batta(\Ldens{w}_0)
- \batta(\xBPdens) = \batta(\Delta_0) - \batta(\xBPdens) \leq 1$. 
We now have
\begin{align*}
\batta(\Ldens{c}^*)(\dl\!-\!1)(\dr\!-\!1) (1\!-\!\batta(\Ldens{x})^2)^{\dr-2} < 1, \\
\batta(\Ldens{c}^*)(\dl\!-\!1)(\dr\!-\!1) (1\!-\!\batta(\xBPdens)^2)^{\dr-2} < 1.
\end{align*}
For the first inequality we use condition (\ref{equ:admissiblenine}) in Definition~\ref{def:admissible} combined with  
$\batta(\Ldens{x}) \stackrel{\text{Lemma}~\ref{lem:entropyvsbatta}}{\geq} \entropy(\Ldens{x}) \geq \frac{\dl}{\dr}-\dl e^{-4 (\dr-1)(\frac{2 \dl}{11 e \dr})^{\frac43}}
-\frac{1}{\dr}$.
For the second inequality we use condition (\ref{equ:admissibletwo}) in Definition~\ref{def:admissible}
combined with $\ih^*\geq \frac{\dl}{\dr}-\dl e^{-4 (\dr-1)(\frac{2 \dl}{11 e \dr})^{\frac43}}
-\frac{1}{\dr} \geq \entLE$ and Lemma~\ref{lem:continuityforlargeentropy}. 

Therefore we can bound the sum of the two Battacharyya terms by $4 \beta^{\ell/2}$ with 
$\beta=\batta(\Ldens{c}^*)(\dl\!-\!1)(\dr\!-\!1)
(1\!-\!\min\{\batta(\Ldens{x}), \batta(\Ldens{x}_{\BPsmall})\}^2)^{\dr-2} < 1$.

Putting everything together we conclude that by choosing $\Lc, \Msat$ sufficiently large $d(\Ldens{x}, \xBPdens)$ can be
made as small as desired.

$\ih^*$ {\em is close to} $\ih^A$: 
From Theorem~\ref{the:areatheoremforapproximatefpfamilypartial} we have
$$
\Big\vert\frac{A(\!\{\Ldens{c}^*_\sigma, \Ldens{\z}^*_\sigma\}_{\underline{\sigma}}^{\overline{\sigma}})}{2 \Lc+1} - A(\Ldens{x}) \Big\vert \leq \cwdldrLdeltanew. 
$$
From above arguments we have $A(\!\{\Ldens{c}^*_\sigma, \Ldens{\z}^*_\sigma\}_{\underline{\sigma}}^{\overline{\sigma}})=0$ hence
$$
\vert A(\Ldens{x}) \vert \leq \cwdldrLdeltanew. 
$$
Using the formula for $A(\cdot)$ given in Lemma~\ref{lem:areaunderBPGEXIT} and properties (\ref{lem:blmetricregularcconv}) and (\ref{lem:blmetricwasserboundsbatta}) given in Lemma~\ref{lem:blmetric} we have
\begin{align*}
& \vert A(\xBPdens) - A(\Ldens{x}) \vert \leq 2\sqrt{2}\sqrt{d(\Ldens{x}, \xBPdens)} \\
& \times \Big(1 + \sqrt{\dr}(\dl - 1- \frac{\dl}{\dr}) + \sqrt{\dr-1}(\dl-1)\Big). 
\end{align*}
Recall that $\xBPdens=\xBPdens(\Ldens{c}^*)$. Combining, we get
\begin{align*}
& \vert A(\xBPdens) \vert \leq \cwdldrLdeltanew  \\ 
& + 2\sqrt{2}\sqrt{\delta} \Big(1 + \sqrt{\dr}(\dl - 1- \frac{\dl}{\dr}) + \sqrt{\dr-1}(\dl-1)\Big). 
\end{align*}

Further the BP GEXIT value for all channels between $\ih^*$ and $\ih^A$ is lower
bounded by $\frac1{2(\dr-1)^3}$. To show this we first note that
from condition (\ref{equ:admissibletwo}) and (\ref{equ:admissibleseven}) in Definition~\ref{def:admissible} 
we satisfy the hypotheses
of Lemma~\ref{lem:areathresholdapproachesshannon}.
Hence from Lemma~\ref{lem:areathresholdapproachesshannon}  we have $\ih^A\geq \entLE$.
Also, from \eqref{eq:hstarislarge} we have $\ih^*\geq \entLE$.

Then for any $\ih  \geq \min\{\ih^A, \ih^*\}$ we have
 $\batta(\Ldens{x}_{\ih}) \geq \xunstab(1)$ (cf.  Lemma~\ref{lem:continuityforlargeentropy}). Thus we conclude that $\batta(\Ldens{x}_{\ih}) \geq \xunstab(1) \geq
\frac1{(\dr-1)^{3/2}}$ for any $\ih  \geq \min\{\ih^A, \ih^*\}$. 
Denoting $\Ldens{y}_{\ih} = \Ldens{x}^{\cconv \dr-1}_{\ih}$ we have,
\begin{align*}
G(\Ldens{c}_\ih,& \Ldens{y}_{\ih}^{\vconv \dl}) \stackrel{\text{concavity of GEXIT}}{\geq}
2\perr(\Ldens{y}_{\ih}^{\vconv \dl}) \\ 
& \stackrel{\substack{\text{extremes of info.,} \\ \text{mult. prop. of Batta}}}{\geq} 1 - \sqrt{1 - (\batta(\Ldens{y}_{\ih}))^{2\dl}} \\
& \stackrel{\text{(a)}}{\geq} 1 - \sqrt{1 - \frac1{(\dr-1)^3}} \geq \frac1{2(\dr-1)^3}.
\end{align*}
To obtain (a) we use $\batta(\Ldens{x}_{\ih}) =\batta(\Ldens{c}_{\ih}) (\batta(\Ldens{y}_{\ih}))^{\dl-1}$,
since $\Ldens{c}_{\ih}$ and $\Ldens{x}_{\ih}$ form a FP pair. This implies that
$(\batta(\Ldens{y}_{\ih}))^{2\dl} =
(\frac{\batta(\Ldens{x}_{\ih})}{\batta(\Ldens{c}_{\ih})})^{\frac{2\dl}{\dl-1}} \geq
(\batta(\Ldens{x}_{\ih}))^{\frac{2\dl}{\dl-1}} \geq
(\xunstab(1))^{\frac{2\dl}{\dl-1}} \stackrel{\text{Lemma}~\ref{lem:propertyofh(x)}}{\geq} (\dr-1)^{\frac{-2\dl}{\dl-2}} \geq
(\dr-1)^{-3}.$ The last inequality follows since condition
(\ref{equ:admissiblesix}) in Definition~\ref{def:admissible} implies that $\dl \geq 6$.
This implies
\begin{align*}
\Big\vert \int_{\ih^A}^{\ih^*} G(\Ldens{c}_{\ih}, \Ldens{y}_{\ih}^{\vconv \dl}) \dee \ih\Big\vert
\geq \vert \ih^* - \ih^A \vert \frac1{2(\dr-1)^3}.
\end{align*}
Since $\ih^*$ and $\ent^A$ are both greater than $\entLE$, from Lemma~\ref{lem:areaunderBPGEXIT} we have 
$$
\Big\vert \int_{\ih^A}^{\ih^*} G(\Ldens{c}_{\ih}, \Ldens{y}_{\ih}^{\vconv \dl}) \dee \ih\Big\vert
=\vert A(\xBPdens) - A(\xBPdens_{\ent^A})\vert = \vert A(\xBPdens) \vert,
$$
where the last equality follows since $A(\xBPdens_{\ent^A})=0$ (cf. Lemma~\ref{lem:areathresholdapproachesshannon}).

Putting everything together we get
\begin{align*}
& \vert \ih^* - \ih^A \vert \!\leq\! 2(\dr\!-\!1)^3 \Big(\cwdldrLdeltanew  \\ 
& + 2\sqrt{2}\sqrt{\delta} (1 + \sqrt{\dr}(\dl - 1- \frac{\dl}{\dr}) + \sqrt{\dr-1}(\dl-1))\Big). 
\end{align*}

\section{Existence of FP -- Theorem~\ref{thm:existencebasicform}}\label{sec:existencebasicform}
\begin{IEEEproof} Before proceeding to the main part of the proof,
let us show that if we assume that there exists a proper FP on
$[-\Lfp, 0]$, with forced boundary condition on the right and
$\Delta_{+\infty}$ on the left ($i < -\Lfp$) and with Battacharyya
parameter of the constellation (cf. Definition~\ref{def:entropy})
equal to $\xunstab(1)/2$, then the desired properties (i) and (ii)
mentioned in the statement of the theorem follow.

{\em Constellation is close to $\Delta_{+\infty}$ ``on the left''}:
Let $\Lfp_1$ be the largest integer so that for all $i < -\Lfp+\Lfp_1$,
$\batta(\Ldens{x}_i) \leq \delta$.  We have a proper FP and $w >
2\dl^3\dr^2$ (because $w$ is by assumption admissible in the sense of condition 
(\ref{equ:admissiblethree}) in Definition~\ref{def:admissible}).  Hence by applying (the
Transition Length) Lemma~\ref{lem:transitionlength} we conclude
that the number of sections with Battacharyya parameter bounded
between $\delta$ and $\xunstab(1)$ is at most $w c(\dl, \dr)/\delta$,
where $c(\dl, \dr)$ is the constant defined in
Lemma~\ref{lem:transitionlength}.  Since the Battacharyya parameter
of the constellation is $\xunstab(1)/2$, we have
\begin{align*}
(\Lfp+1) \frac{\xunstab(1)}2  \geq 
(\Lfp+1 - \Lfp_1 -
w c(\dl, \dr)/\delta)\xunstab(1). 
\end{align*}
This implies that 
$\Lfp_1 \geq (\Lfp+1)\Big(\frac12 - \frac{wc(\dl, \dr)}{(\Lfp+1)\delta}\Big)$.
Using property (\ref{lem:blmetricbattaboundswasser}) of (the
Wasserstein metric) Lemma~\ref{lem:blmetric}, we conclude that
for all $i< -\Lfp+\Lfp_1$, $d(\Ldens{x}_i, \Delta_{+\infty}) \leq
\delta$.

{\em Constellation is not too small ``on the right''}: 
Let $\Lfp_1$ be as defined previously. Again, since the Battacharyya parameter
of the constellation is equal to $\xunstab(1)/2$ we have
\begin{align*}
(\Lfp+1)\frac{\xunstab(1)}{2} \leq \Lfp_1\delta  + (\Lfp\!+\!1 \!-\! \Lfp_1), 
\end{align*}
where on the rhs above we have replaced the sections with value greater than $\delta$
 by the maximum value of $1$.

This implies that 
$\Lfp_1 \leq (\Lfp+1) \frac{1- \frac{\xunstab(1)}2}{1 - \delta}$.
Thus if we define $\Lfp_2$ as the number of sections with Battacharyya
parameter at least equal to $\xunstab(1)$,
we must have
\begin{align*} \Lfp_2 & \geq (\Lfp+1) - \Lfp_1 - wc(\dl, \dr)/\delta \\ 
& \geq
(\Lfp+1)\Big(\frac{\xunstab(1)}4  - \frac{wc(\dl,
\dr)}{\delta(\Lfp+1)}\Big), 
\end{align*} 
where we used $\delta\leq \frac{\xunstab(1)}{4}$ to obtain the above
expression.

It remains to show the existence of the proper FP itself, with
Battacharyya parameter of the constellation equal to $\xunstab(1)/2$.
We use the Schauder FP theorem in a strong form recently proved by
Cauty \cite{Cau01}: This theorem states that every continuous map
$f$ from a convex compact subset $S$ of a topological vector space
to itself has a FP.

Recall that a {\em topological vector space} ${\mathcal S}$ is a
vector space over a topological field $\field$ (most often the real or
complex numbers with their standard topologies) which is endowed
with a topology such that vector addition ${\mathcal S} \times
{\mathcal S} \rightarrow {\mathcal S}$ and scalar multiplication
$\field \times {\mathcal S} \rightarrow {\mathcal S}$ are continuous
functions.

Let ${\mathcal S} = L_1[0, 1]$ (where $L_1$ denotes the $L_1$ norm).
Note that ${\mathcal S}$ is a real normed vector space and hence a
topological vector space.  Let ${\mathcal P}$ denote the space of
probability measures on $[0, 1]$  endowed with the Wasserstein
metric.  Note that ${\mathcal P} \subset {\mathcal S}$, where we
represent elements of ${\mathcal P}$ by their cumulative distribution functions.
Note that the topology on ${\mathcal P}$ induced by ${\mathcal S}$ coincides with
our choice (cf. second alternative definition in part
(\ref{lem:alternative}) of Lemma~\ref{lem:blmetric}).  Also, on
${\mathcal P}$ the topology induced by the Wasserstein metric is
equivalent to the weak topology.  Since $[0, 1]$ is a complete
separable metric space, so is ${\mathcal P}$, see \cite[Theorem
6.18]{Villani09}.  Since $[0, 1]$ is compact, so is ${\mathcal P}$,
see \cite[Remark 6.19]{Villani09}.

A Cartesian product of a family of topological vector spaces, when
endowed with the product topology, is a topological vector space.
Hence,  ${\mathcal S}^{\Lfp+1}$, endowed with the product topology,
is a topological vector space.

Let $S$ be the subset 
\begin{align*} S  = & \{ \absDdist{\x} \in
{\mathcal S}^{\Lfp+1}: \text{$\absDdist{x}_i$ is a $|D|$-distribution},\;
i \in [-\Lfp,0]; \\ & \batta(\absDdist{\x}) = \xunstab(1)/2;\,
\absDdist{x}_{-\Lfp} \prec \absDdist{x}_{-\Lfp+1} \prec \dots\prec
\absDdist{x}_0\}.  
\end{align*}
{\em Discussion:} As we discussed above, we think of the elements
of ${\mathcal P}$ as cumulative distribution functions. In particular,
these are the cdfs in the so called $|D|$ domain. In the sequel,
rather than only referring to cdfs it will often be more convenient
to write down the $|D|$ distributions $\absDdens{x}$ or $D$
distributions $\Ddens{x}$, directly.

{\em $S$ is non-empty}: Setting all elements of $\absDdens{\x}$
equal to $\xunstab(1)/2 \Delta_0+(1-\xunstab(1)/2) \Delta_1$ gives
an element in this space.

{\em $S$ is convex}: Let  $\Ldens{\x}, \Ldens{\y} \in S$ with $|D|$-distributions given by $\absDdens{\x}$ and $\absDdens{\y}$ respectively. 
Let $\absDdens{\vc}=\beta \absDdens{\x} + (1-\beta)\absDdens{\y}$
for some $\beta \in (0,1)$. 
Since $\batta(\cdot)$ is a linear operator, we see that
$$ 
\batta(\absDdens{\vc}) = \beta \batta(\absDdens{\x}) + (1-\beta)\batta(\absDdens{\y}) = \xunstab(1)/2.
$$ 
Also, using \eqref{equ:degradationcdfs}, we see that
$\absDdens{\vc}_{i-1}\prec \absDdens{\vc}_{i}$ for all $i \in [-\Lfp+1, 0]$.
Hence $\beta \Ldens{\x} + (1-\beta) \Ldens{\y} \in S$. 

{\em $S$ is closed}:
Consider a sequence $\{\absDdens{\x}^{(\ell)} \}_{\ell=1}^{\infty}$
of elements of $S$ and assume that this sequence converges in the
Wasserstein metric to a limit, call it $\absDdens{\x}^{(\infty)}$.  We
need to show that $\absDdens{\x}^{(\infty)} \in S$, i.e., we claim
that $S$ is closed.  In this respect, recall from our discussion
above that $S \subseteq {\mathcal P}^{\Lfp+1}$ and that on ${\mathcal
P}^{\Lfp+1}$ the topology induced by the Wasserstein metric is the
weak topology.

From Lemma $4.25$ in \cite{RiU08} we know that each component of
$\absDdens{\x}^{(\infty)}$ is a symmetric $|D|$ distribution. It
therefore remains to shows that (i)
$\batta(\absDdens{\x}^{(\infty)})=\xunstab(1)/2$, and (ii)
$\absDdens{x}^{(\infty)}_{i-1} \prec \absDdens{x}^{(\infty)}_{i}$
for all $i\in [-\Lfp+1, 0]$. Both claims follow from the fact that
we can encode the above properties in terms of continuous functions
and that continuous functions preserve the properties under limits.

Let us show this in detail. We begin with (i).  Consider the sequence
$\{\absDdens{\x}^{(\ell)}\}$. We have
\begin{align*}
\batta(\absDdens{\x}^{(\ell)})  & \!=\! \frac1{\Lfp+1}\sum_{j=-N}^0 \batta(\absDdens{x}^{(\ell)}_j), \\ 
\batta(\absDdens{x}^{(\ell)}_j) & \! =\! \int_{0}^{1} \absDdens{x}^{(\ell)}_j(y) \sqrt{1-y^2} \,\text{d}y.
\end{align*}
Now note that $\sqrt{1-y^2}$ is a bounded and continuous function
on $[0, 1]$.  Therefore, (weak) convergence of $\{\absDdens{\x}^{(\ell)}\}$
to $\absDdens{\x}^{(\infty)}$ implies (weak) convergence of
$\batta(\absDdens{\x}^{(\ell)})$ to
$\batta(\absDdens{\x}^{(\infty)})=\xunstab(1)/2$.

Let us show (ii). From \eqref{equ:degradationcdfs}, 
$\absDdens{x}^{(\ell)}_{j-1} \prec \absDdens{x}^{(\ell)}_j$ is
equivalent to $\int_{z}^1 \absDdist{x}^{(\ell)}_{j-1}  (x) \,\text{d}x \leq
\int_{z}^1 \absDdist{x}^{(\ell)}_{j} (x) \,\text{d}x $ for all $z\in [0,1]$. We have
\begin{align}\label{equ:inequalitya}
& \int_{z}^1 \absDdist{x}^{(\infty)}_{j-1}  (x) \,\text{d}x  \leq  \int_{z}^1 
\absDdist{x}^{(\infty)}_{j} (x) \,\text{d}x +\nonumber   \\
&  \int_{z}^1 \absDdist{x}^{(\infty)}_{j-1} (x) \,\text{d}x - \int_{z}^1  
\absDdist{x}^{(\ell)}_{j-1}  (x) \,\text{d}x \nonumber \\
& - \int_{z}^1 \absDdist{x}^{(\infty)}_{j} (x) \,\text{d}x + \int_{z}^1  
\absDdist{x}^{(\ell)}_{j} (x) \,\text{d}x. 
\end{align} 
By assumption, the sequence $\{\absDdens{\x}^{(\ell)}\}$ converges in the
sense of the Wasserstein metric.
Therefore from property (\ref{lem:blmetricmetrizable}) 
of Lemma~\ref{lem:blmetric}, for all $j\in [-\Lfp+1, 0]$, $\lim_{\ell\to
\infty}\absDdist{x}^{(\ell)}_{j} (x) = \absDdist{x}^{(\infty)}_{j} (x) $
for all $x\in [0,1]$ such that  $\absDdist{x}^{(\infty)}_{j}$ is
continuous at $x$ (in other words, weak convergence is equal to convergence in distribution). 
This implies that for all $j$
$$ \lim_{\ell \to \infty}\Big\vert \int_{z}^1 \absDdist{x}^{(\ell)}_{j}
 (x) \,\text{d}x - \int_{z}^1 \absDdist{x}^{(\infty)}_{j}
 (x) \,\text{d}x\Big\vert = 0$$
so that from (\ref{equ:inequalitya}) we conclude that
\begin{align*}
& \int_{z}^1 \absDdist{x}^{(\infty)}_{j-1} (x) \,\text{d}x  \leq  \int_{z}^1 
\absDdist{x}^{(\infty)}_{j} (x) \,\text{d}x.
\end{align*}

{\em $S$ is compact}:
Note that $S$ is a closed subset of ${\mathcal P}^{\Lfp+1}$, which is
compact since it is the product of compact spaces.  Hence $S$ is
compact as well.

{\em Definition of map $V(\cdot)$}: 
In order to show (via Schauder's FP theorem) that $S$ contains a
FP of DE we need to exhibit a continuous map which maps $S$ into
itself.  Our first step is to define a map, call it $V(\absDdens{\x})$,
which ``approximates'' the DE equation and is well-suited for
applying the FP theorem.  The final step in our proof is then to
show that the FP of the map $V(\absDdens{\x})$ is in fact a FP of DE
itself.

The map $V(\absDdens{\x})$ is constructed as follows.  For $\absDdens{\x}\in
S$, let $U(\absDdens{\x})$ be the map,
$$
(U(\absDdens{\x}))_i = g(\absDdens{x}_{i-w+1}, \dots, \absDdens{x}_{i+w-1}), \quad i\in [-\Lfp,0],
$$
where $\absDdens{x}_i=\Delta_{+\infty}$ for $i<-\Lfp$, and where
$\absDdens{x}_i=\Delta_0$ for $i>0$.
Define $V: S\to S$ as
\begin{align*}
V(\absDdens{\x}) & = 
\begin{cases}
U(\absDdens{\x}) \vconv \absDdens{c}, &\text{s.t.} \; 
\batta(\absDdens{c})=\frac{\xunstab(1)}{2 \batta(U(\absDdens{\x}))}, \\ 
				&		\xunstab(1)/2 \leq \batta(U(\absDdens{\x})), \\
\underline{\bar{\alpha}}(\absDdens{\x}) U(\absDdens{\x}) + 
\underline{\alpha}(\absDdens{\x}) \underline{\Delta}_0, & \text{otherwise}.
\end{cases}
\end{align*}
In words, if $U(\absDdens{\x})$ is ``too large'', upgrade it by an
appropriate channel $\absDdens{c}$.  If, on the other hand, $U(\absDdens{\x})$
is ``too small'' then we take a convex combination with $\Delta_0$.
 In the preceding expressions, terms like $\underline{\bar{\alpha}}
U(\absDdens{\x})$ denote component-wise products, i.e., the result is
a vector of densities, where the $i$-th component is the result of
multiplying the $i$-th component of $U(\absDdens{\x})$ with the scalar
$\bar{\alpha}(\absDdens{\x})_i$. Further, $\underline{\bar{\alpha}}$
is a shorthand for $(1-\underline{\alpha}(\absDdens{\x}))$.

It remains to specify the components of $\underline{\alpha}(\absDdens{\x})$.
Note that $\underline{\alpha}(\absDdens{\x}) \in [0, 1]^{\Lfp+1}$. Further,
we require that its components are increasing and that they are all
either $0$ or $1$, except possibly one. I.e.,
$\underline{\alpha}(\absDdens{\x})$ has the form $(0, 0, \dots, 0,
\alpha_i, 1, \dots, 1)$, where $i \in [-\Lfp, 0]$, and $\alpha_i \in
[0, 1]$. This defines the vector uniquely. Pictorially we can think
of this map in the following way.  We start at component
$(U(\absDdens{\x}))_0$. We take an increasing convex combination with
$\Delta_0$ until the overall Battacharyya constant is equal
to $\xunstab(1)/2$. If this is not sufficient, then we set
$(V(\absDdens{\x}))_0 = \Delta_0$ and repeat this procedure with
component $(U(\absDdens{\x}))_{-1}$, and so on. To apply Schauder's theorem, we need to 
 show that the map $V(\cdot)$ is well-defined and continuous.  

{\em Map $V(\cdot)$ is well defined}: First consider the case
$\batta(U(\absDdens{\x})) \geq \xunstab(1)/2$.  In this case
$\frac{\xunstab(1)}{2 \batta(U(\absDdens{\x}))} \leq 1$.  Since the
Battacharyya parameter is a strictly increasing and continuous function of the
channel\footnote{That the Battacharyya parameter is continuous follows since the
channel family is smooth. Further, since the Battacharyya kernel is strictly
concave and the channel family is ordered by degradation, the Battacharyya parameter is
strictly increasing.},
there exists a unique $\absDdens{c} \in \{\absDdens{c}_\sigma\}$ such that
$\batta(\absDdens{c})=\frac{\xunstab(1)}{2 \batta(U(\absDdens{\x}))}$.
Note also that $U(\absDdens{\x})$ is monotone (spatially)
since $g(\cdot)$ is monotonic (as a function of its arguments) and $\absDdens{\x}$
is monotone.  Consequently, $U(\absDdens{\x})\vconv \absDdens{c}$
is monotone.  Further, from the multiplicative property of the Battacharyya
parameter at the variable node, we get that $\batta(V(\absDdens{\x})) =
\batta(U(\absDdens{\x}))\batta(\absDdens{c}) = \xunstab(1)/2$. It follows that in this case $V(\absDdens{\x}) \in
S$.

Consider next the case $\batta(U(\absDdens{\x})) < \xunstab(1)/2$.  If
we choose $\underline{\alpha}=\underline{1}$ then we get a Battacharyya
 parameter of $1$. Further, the increase in the Battacharyya
 parameter is continuous. Hence there exists an $\underline{\alpha}$ so that
the resulting constellation has Battacharyya constant equal to
$\xunstab(1)/2$.  Also, by construction the resulting constellation
is monotone.  This shows that also in this case $V(\absDdens{\x}) \in
S$. In both the cases above, the map maintains the symmetric nature of the 
 $D$-distributions. 

We summarize, $V$ maps $S$ into itself.  In the rest of the proof,
we will use the notation $d(\absDdens{\x}, \absDdens{\y}) = \sum_{i=-\Lfp}^{0}
d(\absDdens{x}_i, \absDdens{y}_i)$ to denote the Wasserstein distance between two
constellations $\absDdens{\x}$ and $\absDdens{\y}$.

{\em Continuity of map $V(\cdot)$}: We will show that for every
$\absDdens{\x} \in S$ and for any $\varepsilon>0$, there exists a
$\nu>0$ such, that if $\absDdens{\y} \in S$ and $d(\absDdens{\x}, \absDdens{\y}) \leq \nu$, 
then $d( V(\absDdens{\x}), V(\absDdens{\y}))\leq \varepsilon$.
Note that if $d(\absDdens{\x}, \absDdens{\y})\leq \nu$ then 
\begin{itemize}
\item[(i)] $d( U(\absDdens{\x})_i, U(\absDdens{\y})_i) \leq 2(\dl-1)(\dr-1) \nu$, $i \in [-\Lfp, 0]$;
\item[(ii)] $|\batta(U(\absDdens{\x})_i)- \batta(U(\absDdens{\y})_i)| \leq 
\sqrt{4 (\dl-1)(\dr-1) \nu}$, $ i \in [-\Lfp, 0]$;
\item[(iii)] $d(\absDdens{c}_{\absDdens{\x}}, \absDdens{c}_{\absDdens{\y}}) \leq 
2 \sqrt{\frac{2(\Lfp+1)\sqrt{4 (\dl-1)(\dr-1) \nu}}{\xunstab(1)}}$ if
$\batta(U(\absDdens{\x})) \geq  \xunstab(1)/2$ and 
$\batta(U(\absDdens{\y})) \geq  \xunstab(1)/2$.\footnote{We abuse notation slightly to denote the channel associated to $\absDdens{\x}, \absDdens{\y}$ by $\absDdens{c}_{\absDdens{\x}}, \absDdens{c}_{\absDdens{\y}}$, respectively, rather than denoting them by the standard parameterization $\sigma$.}
\end{itemize}
Assertion (i) is equivalent to Lemma~(\ref{lem:sensitivity}) since
if $d(\absDdens{\x}, \absDdens{\y})\leq \nu$ then a fortiori
$d(\absDdens{x}_i, \absDdens{y}_i)\leq \nu$, $i \in [-\Lfp, 0]$.
Assertion (ii) follows from assertion (i) by applying property
(\ref{lem:blmetricwasserboundsbatta}) of Lemma~\ref{lem:blmetric}.
To see assertion (iii) we write
\begin{align*}
|\batta(\absDdens{c}_{\absDdens{\x}})- \batta(\absDdens{c}_{\absDdens{\y}})| 
& = |\frac{\xunstab(1)}{2\batta(U(\absDdens{\x}))}- \frac{\xunstab(1)}{2\batta(U(\absDdens{\y}))}| \\
& \leq \frac{\xunstab(1)}{2} \frac{|\batta(U(\absDdens{\y}))-\batta(U(\absDdens{\x}))|}{\batta(U(\absDdens{\x})) \batta(U(\absDdens{\y}))} \\
& \leq \frac{2 (\Lfp+1)\sqrt{4 (\dl-1)(\dr-1) \nu}}{\xunstab(1)}.
\end{align*}
The last inequality follows from assertion (ii) and $\batta(U(\absDdens{\x})), \batta(U(\absDdens{\y})) \geq \xunstab(1)/2.$
Recall that the channel family is ordered by degradation. We can
therefore apply property (\ref{lem:blmetricentropy}) of
Lemma~\ref{lem:degradationandwasserstein} to prove our claim.

Choosing $\nu$ as a function of $\absDdens{\x}$ and using assertion (ii) above,
we can therefore assume that either
$\batta(U(\absDdens{\x})) \geq  \xunstab(1)/2$ and 
$\batta(U(\absDdens{\y})) \geq  \xunstab(1)/2$ or
$\batta(U(\absDdens{\x})) \leq  \xunstab(1)/2$ and 
$\batta(U(\absDdens{\y})) \leq  \xunstab(1)/2$.
In the first case,
\begin{align*}
d(V(\absDdens{\x}), V(\absDdens{\y}))  = &
d(\absDdens{c}_{\absDdens{\x}} \vconv U(\absDdens{\x}), \absDdens{c}_{\absDdens{\y}} \vconv U(\absDdens{\y})) \\
\stackrel{\text{(\ref{lem:blmetricregularvconv}), Lem.~\ref{lem:blmetric}}}{\leq} & 2d(U(\absDdens{\x}), U(\absDdens{\y})) +
2d(\absDdens{c}_{\absDdens{\x}}, \absDdens{c}_{\absDdens{\y}}) \\
 \stackrel{\text{(i) \& (iii)}}{\leq} &4 (\dl-1)(\dr-1) \nu (\Lfp+1) + \\
& + 
4 \sqrt{\frac{2(\Lfp+1)\sqrt{4 (\dl-1)(\dr-1) \nu}}{\xunstab(1)}}.
\end{align*}

Let us now focus on the second case. 
Let $i^*$ denote the largest integer in $[-\Lfp, 0]$ such that
$\alpha(\absDdens{\x})_{i^*}$ is non-zero. Clearly if $\batta(U(\absDdens{x})) <
\xunstab(1)$, then $i^*\leq 0$, else we set $i^*=1$. Similarly, let $j^*$ be the
corresponding index in $\underline{\alpha}(\absDdens{\y})$. Let us denote $\alpha(\absDdens{\x})_{i^*}=\alpha$ and
$\alpha(\absDdens{\y})_{j^*}=\beta$. Note that $0\leq \alpha, \beta \leq 1$.
Wlog we can assume that $j^*\leq i^*$.  With this we can upper bound $d(V(\absDdens{\x}), V(\absDdens{\y}))$ by,
\begin{align}\label{eq:continuityofmapV}
& \sum_{i=-\Lfp}^{j^*-1} d(U(\absDdens{\x})_i, U(\absDdens{\y})_i) + d(U(\absDdens{\x})_{j^*},\bar{\beta}U(\absDdens{\y})_{j^*}+\beta\Delta_0) \nonumber \\
& + \sum_{i=j^*+1}^{i^*-1} d(U(\absDdens{\x})_j, \Delta_0) + d(\bar{\alpha}U(\absDdens{\x})_{i^*}+\alpha\Delta_0,\Delta_0).
\end{align}
Above we have used that for $i\geq i^*+1$ we have $V(\absDdens{\y})_i = V(\absDdens{\x})_i = \Delta_0.$
In the case $i^*=j^*$, the terms in the interval $[j^*,i^*]$ collapse to $d(\bar{\alpha}U(\absDdens{\x})_{i^*}+\alpha\Delta_0, \bar{\beta}U(\absDdens{\y})_{i^*}+\beta\Delta_0)$.

Let us first consider the case when $j^* < i^*$. Note that
$\batta(V(\absDdens{\x}))=\batta(V(\absDdens{\y}))=\xunstab(1)/2$. This implies
that if we replace the Wasserstein distance by the Battacharyya parameter in \eqref{eq:continuityofmapV} the expression evaluates to 0.
Then writing the $j^*$ term as $
\bar{\beta}(\batta(U(\absDdens{\x})_{j^*}\!)
-\batta(U(\absDdens{\y})_{j^*}))+\beta(\batta(U(\absDdens{\x})_{j^*}) - \batta(\Delta_0))$  we get
\begin{align}\label{eq:differenceofBattasissmall}
&  \beta(1 - \batta(U(\absDdens{\x})_{j^*}))
 \!+\!\sum_{i=j^*\!+\!1}^{i^*\!-\!1} ( 1- \batta(U(\absDdens{\x})_i)) \nonumber \\
 & + (1 - \batta(\bar{\alpha}U(\absDdens{\y})_{i^*}+\alpha\Delta_0)) \nonumber \\ 
&  \leq  \sum_{i=-\Lfp}^{j^*}\vert \batta(U(\absDdens{\x})_i)
-\batta(U(\absDdens{\y})_i)  \vert,
\end{align}
where above we use $\batta(\Delta_0)=1$.

We now continue with \eqref{eq:continuityofmapV}. We use $d(U(\absDdens{\x})_{j^*},\bar{\beta}U(\absDdens{\y})_{j^*}+\beta\Delta_0) \leq d(U(\absDdens{\x})_{j^*},U(\absDdens{\y})_{j^*})+\beta d(U(\absDdens{\x})_{j^*}, \Delta_0)$, (\ref{lem:blmetricbattaboundswasser}) of Lemma~\ref{lem:blmetric} and \eqref{eq:differenceofBattasissmall} to get the upper bound
\begin{align*}
 \sum_{i=-\Lfp}^{j^*} d(& U(\absDdens{\x})_i, U(\absDdens{\y})_i)   \\
 & + 
 \sqrt{2(N+1)}\sqrt{\sum_{i=-\Lfp}^{j^*}\vert \batta(U(\absDdens{\x})_i)
-\batta(U(\absDdens{\y})_i)  \vert}.
  \end{align*}
  Finally using assertions (i) and (ii) above we get that
\begin{align*} 
d(V(\absDdens{\x}), & V(\absDdens{\y})) \leq 2(N+1)(\dl-1)(\dr-1)\nu \\
& +  2(N+1)((\dl-1)(\dr-1)\nu)^{\frac14}.
\end{align*}
For the case when $j^*=i^*$ we have
\begin{align*}
& d(\bar{\alpha}U(\absDdens{\x})_{i^*}+\alpha\Delta_0,
\bar{\beta}U(\absDdens{\y})_{i^*}+\beta\Delta_0)  \\ 
 &\leq
d(\bar{\alpha}U(\absDdens{\x})_{i^*}+\alpha\Delta_0,
\bar{\alpha}U(\absDdens{\y})_{i^*}+\alpha\Delta_0) \\
& +
d(\bar{\alpha}U(\absDdens{\y})_{i^*}+\alpha\Delta_0,
\bar{\beta}U(\absDdens{\y})_{i^*}+\beta\Delta_0)  \\
& \!\leq\!
d(U(\absDdens{\x})_{i^*},U(\absDdens{\y})_{i^*}\!) \!+\!
d(\bar{\alpha}U(\absDdens{\y})_{i^*}\!+\!\alpha\Delta_0,
\bar{\beta}U(\absDdens{\y})_{i^*}\!+\!\beta\Delta_0).
\end{align*}
Wlog we can assume $\beta \geq \alpha$. This implies $\bar{\alpha}U(\absDdens{\y})_{i^*}\!+\!\alpha\Delta_0 \prec
\bar{\beta}U(\absDdens{\y})_{i^*}\!+\!\beta\Delta_0$. Hence from (\ref{lem:blmetricentropy})
 of Lemma~\ref{lem:degradationandwasserstein} we can bound
the second Wasserstein distance above 
by the difference of the Battacharyya parameters. Further,
\begin{align*}
& \vert \batta(\bar{\alpha}U(\absDdens{\y})_{i^*}\!+\!\alpha\Delta_0) - \batta(\bar{\beta}U(\absDdens{\y})_{i^*}\!+\!\beta\Delta_0)\vert \\
& \leq\vert \batta(\bar{\alpha}U(\absDdens{\y})_{i^*}\!+\!\alpha\Delta_0) - \batta(\bar{\alpha}U(\absDdens{\x})_{i^*}\!+\!\alpha\Delta_0) \vert \\
& + \vert \batta(\bar{\alpha}U(\absDdens{\x})_{i^*}\!+\!\alpha\Delta_0) - \batta(\bar{\beta}U(\absDdens{\y})_{i^*}\!+\!\beta\Delta_0)\vert. 
\end{align*}
The first Battacharyya difference on the rhs can be bounded by 
$\vert \batta(U(\absDdens{\y})_{i^*}) - \batta(U(\absDdens{\x})_{i^*}) \vert$. For the second
difference we use same arguments as \eqref{eq:differenceofBattasissmall} to obtain
\begin{align*}
& \vert \batta(\bar{\alpha}U(\absDdens{\x})_{i^*}\!+\!\alpha\Delta_0) - \batta(\bar{\beta}U(\absDdens{\y})_{i^*}\!+\!\beta\Delta_0)\vert \\ 
& \leq\sum_{i=-\Lfp}^{i^*-1}\vert \batta(U(\absDdens{\x})_i)
-\batta(U(\absDdens{\y})_i)  \vert.
\end{align*}
Combining everything with the assertions (i) and (ii), in this case we get
\begin{align*}
d(V(\absDdens{\x}), & V(\absDdens{\y})) \leq 2(N+1)(\dl-1)(\dr-1)\nu \\
& + 2\sqrt{2}\sqrt{N+1} ((\dl-1)(\dr-1)\nu)^{\frac14}.
\end{align*}

{\em Existence of FP of $V(\cdot)$ via Schauder}: We can invoke
Schauder's FP theorem to conclude that $V(\cdot)$ has a FP in $S$,
call it $\absDdens{\x}^*$.

{\em Existence of FP of DE ($U(\cdot)$)}: Let us show
that, as a consequence, DE itself has a FP $(\absDdens{c}^*,
\absDdens{x}^*)$ with the desired properties.

If $\batta(U(\absDdens{\x}^*)) \geq \xunstab(1)/2$, then
$\absDdens{\x}^*=V(\absDdens{\x}^*)= U(\absDdens{\x}^*) \vconv \absDdens{c}^*$
with $\absDdens{c}^* \in \{\absDdens{c}_\sigma\}$. Hence indeed, $(\absDdens{c}^*,
\absDdens{x}^*)$ is a FP of DE.

Consider hence the case $\batta(U(\absDdens{\x}^*)) < \xunstab(1)/2$.  
We will show that it leads to a contradiction.
Recall that in this case 
\begin{align}\label{equ:deequation}
\absDdens{\x}^* & = (1-\underline{\alpha}(\absDdens{\x}^*)) U(\absDdens{\x}^*) + 
\underline{\alpha}(\absDdens{\x}^*) \underline{\Delta}_0,
\end{align}
and that $\absDdens{x}^*_i=\Delta_0$ for $i \geq 1$.

Given a density $\absDdens{x}$ we say that it has a ``BEC component'' of $u$
if $\absDdens{x}$ contains a delta at $0$ of ``weight'' $u$ (i.e., contains a mass of $u$ at $\Delta_0$).  In the sequel
we will think of $u$ as the erasure probability of a binary erasure channel.

Let $\underline{u}$ be the vector of BEC components corresponding to
$\absDdens{\x}^*$.  Since $\batta(U(\absDdens{\x}^*)) < \xunstab(1)/2$ we
know that $\underline{u}$ has some non-trivial components in $[-\Lfp, 0]$,
and by definition of the right boundary, $u_i=1$ for $i>0$.
We claim that for $i \in [-\Lfp, 0]$,
\begin{align}\label{equ:zdensity}
u_i \geq g(u_{i-w+1}, \dots, u_{i+w-1}).
\end{align}
Let us prove this claim immediately.
Extract the BEC component from both the left-hand as well as the right-hand side
of (\ref{equ:deequation}). This gives
\begin{align}
u_i & = (1-\alpha_i) \text{BEC}(U(\absDdens{\x}^*)_i) + \alpha_i \nonumber \\
    & \geq (1-\alpha_i) g(u_{i-w+1}, \dots, u_{i+w-1}) + \alpha_i, \label{equ:ude}
\end{align}
where we wrote $\alpha_i$ as a shorthand for
$\alpha(\absDdens{\x}^*)_i$ and BEC($\cdot$) denotes weight at $\Delta_0$.  To see the second step,
i.e., to see that $\text{BEC}(U(\absDdens{\x}^*)_i) \geq g(u_{i-w+1},
\dots, u_{i+w-1})$, let $\absDdens{\vc}^*$ denote the density at the
output of the check nodes when the input is $\absDdens{\x}^*$. Let
$\underline{v}$ denote the (BEC) density at the output of the check
nodes when the input is $\underline{u}$.  Some thought shows that
$\underline{v}$ is also the BEC component of $\absDdens{\vc}^*$.  In
words, at check nodes the BEC component evolves according to density
evolution -- we get an erasure at the output of a check node if
and only if at least one of the incoming messages is an erasure.
At variable nodes we only get a bound.  If all inputs to a variable
node are erasures then the output is also an erasure, but this is
only a sufficient condition. Thus \eqref{equ:ude} is proved.
If $\alpha_i=1$, then $u_i=1$
and (\ref{equ:zdensity}) is true.  If $\alpha_i<1$, then $u_i \geq
\frac{u_i-\alpha_i}{1-\alpha_i} \geq g(u_{i-w+1}, \dots, u_{i+w-1})$,
where the second step follows from (\ref{equ:ude}). 

Extend the constellation $\underline{u}$ by $\Lfp_3=\lceil
(\Lfp+1)\frac{w}{\frac{\dr}{\dl}-1} \rceil+1$ sections on the right,
with values equal to $1$, and let $\underline{u}^{(0)}$ denote this
constellation.  We claim that $\underline{u}^{(0)}$ has at least
\begin{align*}
\Lfp_4 \geq (\Lfp+1)\Bigl(\frac12 - \frac{c(\dl, \dr)w}{\delta(\Lfp+1)} \Bigr)
\end{align*}
sections on the left with Battacharyya value between $0$ and $\delta$
where $c(\dl, \dr)$ is the constant of Lemma~\ref{lem:transitionlength} and only depends on the
dd.

To prove this claim, we consider our original $\absDdens{\x}^*$ (before
we extracted the BEC components) which was the FP obtained by Schauder's
theorem. We claim that $\absDdens{\x}^*$ has at least $\Lfp_4$ segments
on the left with Battacharyya constant at most $\delta$, where
\begin{align*}
\Lfp_4 \geq \underbrace{(\Lfp+1)}_{(a)}- \underbrace{\frac{\Lfp+1}{2}}_{(b)}- 
\underbrace{\frac{c(\dl, \dr) w}{\delta}}_{(c)}.
\end{align*}
Let us explain each of the terms on the right. There are $\Lfp+1$
segments to start with, which explains (a).  At most $(\Lfp+1)/2$
sections on the right can have a Battacharyya value of $\xunstab(1)$
or larger (since $\batta(\absDdens{\x}^*)=\xunstab(1)/2$). This accounts for the (b) term. Finally, all sections
$i$, with $i < -(\Lfp+1)/2+1$, must be sections where $\absDdens{\x}^*$
fulfills the actual FP equations, i.e., these cannot be sections
where the map $V(\cdot)$ ``pushes'' the constellation up to $\Delta_0$. More precisely, 
we must have $\alpha(\absDdens{\x}^*)_i = 0$ for $i < -(\Lfp+1)/2+1$. Indeed, from 
 construction, starting from the rightmost section, each section is increased all
 the way up to $\Delta_0$ before we move on to the next section on the left. 
 Since the constellation $\absDdens{\x}^*$ has Battacharyya
 parameter equal to $\xunstab(1)/2\leq 1/2$ we conclude that for $i < -(\Lfp+1)/2+1$ we must have
 $\absDdens{x}^*_i = (U(\absDdens{\x}^*))_i$, which is a true FP of DE for the channel $\Delta_0$. 
 Therefore, for these section we can apply
(the Transition Length) Lemma~\ref{lem:transitionlength}
and conclude that there are at most $c(\dl, \dr) w/\delta$
such section which have Battacharyya value between $\delta$ and
$\xunstab(1)$. This is the term (c).

The claim now follows since the BEC component $u_i$ is upper bounded
by the corresponding Battacharyya parameter, $\batta(\absDdens{x}^*_i)$.

Now consider a further constellation $\vc^{(0)}$ on $[-\Lfp, \Lfp_3]$.  We set
$v_i^{(0)}=0$ for all $i\in [-\Lfp,0]$. For $i \in [1, \Lfp_3]$ we set
$\underline{v}^{(0)}$ to the FP of forward DE according to Lemma 22 in
\cite{KRU10}, where the length of the constellation is taken to be  $\Lfp_3-1$, $\epsilon=1$, and
$\chi=\frac12(1-\dl/\dr)$. More precisely, Lemma 22 in \cite{KRU10} says that
if we run forward DE, with free boundary condition, when transmitting over the
BEC with $\epsilon=1$ and $(\dl, \dr, \Lfp_3-1, w)$ coupled ensemble, then for
large enough length, the one-sided FP of forward DE must be  proper (non-trivial and increasing) and we can
lower bound the Battacharyya parameter of the resulting FP. By our choice of
$\Lfp_3$ this FP (on $[1, \Lfp_3]$) has Battacharyya parameter at least $\frac12(1-\dl/\dr)$.
Now since $w\geq 2\dl^3\dr^2$ we have $\Lfp_3=\lceil
(\Lfp+1)\frac{w}{\frac{\dr}{\dl}-1} \rceil+1 \geq N+1$. This implies that $\frac{N_3}{N+1+N_3}\geq \frac12$.
Thus  $\batta(\underline{v}^{(0)})\geq \frac14(1-\dl/\dr).$
Clearly, $\vc^{(0)} \leq \underline{u}^{(0)}$ (component-wise).

Apply forward DE, when transmitting through BEC with $\epsilon=1$, to both constellation with a fixed boundary condition.
More precisely, we have for all $i\in [-\Lfp, \Lfp_3]$ 
$u_i^{(\ell)} = g(u_{i-w+1}^{(\ell-1)},\dots,u_{i+w-1}^{(\ell-1)})$ and  
$v_i^{(\ell)} = g(v_{i-w+1}^{(\ell-1)},\dots,v_{i+w-1}^{(\ell-1)})$.
We keep
$u_i^{(\ell)}=\Delta_0$, and $v_i^{(\ell)}=\Delta_0$ fixed, for all $i > \Lfp_3$ and
$\ell \in \naturals$ and for $i<-\Lfp$ both the constellations have 
sections fixed to $\Delta_{+\infty}$. Recall that $\underline{u}^{(0)}$ is equal
to $\underline{u}$ on $[-\Lfp,0]$ and equal to $1$ for the sections $[1, N_3]$. Because of (\ref{equ:zdensity}), 
we have $\underline{u}^{(0)} \geq \underline{u}^{(1)}$. From the monotonicity of
the DE operator we conclude that the sequence
$\underline{u}^{(\ell)}$ is decreasing and since it is bounded from below
it must converge. Call this limit $\underline{u}^{(\infty)}$. We claim that 
the sequence $\underline{v}^{(\ell)}$ is increasing in $\ell$ and since it is
 bounded from above it must converge. Call this limit
$\underline{v}^{(\infty)}$.  Let us prove the claim that $\underline{v}^{(\ell)}$ is increasing. 
 Indeed, for $i \in [-\Lfp, 0]$, $v^{(1)}_i \geq v^{(0)}_i=0$, for $i
\in[1, \Lfp_3-w+1]$, $v^{(1)}_i= v^{(0)}_i$ (since $v^{(0)}_i$ is an FP in that
region) and for $i \in[\Lfp_3-w+2, \Lfp_3]$, $v^{(1)}_i \geq v^{(0)}_i$ (since
$v^{(0)}_i$ is an FP with free boundary condition and hence replacing the
boundary with $1$ can only increase the value under DE). Again, from the monotonicity of DE we have that
$\underline{v}^{(\ell)}$ in $\ell$.  Since
 $\underline{v}^{(\ell)}$ is increasing and proper we
conclude that $\underline{v}^{(\infty)}$ exists and is proper. Further,
$\underline{v}^{(\infty)} \leq \underline{u}^{(\infty)}$, since
$\underline{v}^{(0)} \leq \underline{u}^{(0)}$ and the ordering is preserved
under iterations of DE.

Since $\batta(\underline{v}^{(\infty)}) \geq \batta(\underline{v}^{(0)}) \geq
\frac14(1-\dl/\dr)$ we claim that there must exists at least $\Lfp_5 =
\Lfp_3\Big(1 - \frac{1+\frac{\dl}{\dr}}{2(1-\delta)} - \frac{wc(\dl,
\dr)}{\Lfp_3\delta}\Big)$ sections, from the right, with Battacharyya parameter
greater than $\xunstab(1)$.  Indeed, this can be obtained by considering the
sections $[1, \Lfp_3]$ of $\underline{v}^{(0)}$ and then using
$\underline{v}^{(\infty)}\geq \underline{v}^{(0)}$.  More precisely, since the
sections $[1, \Lfp_3]$ of $\underline{v}^{(0)}$ form a proper FP, if we let
$N'_3$ denote the number of sections with Battacharyya parameter less than
$\delta$, then we get $\batta(\sum_{j=1}^{\Lfp_3}v^{(0)}_j)\leq N'_3\delta +
N_3 - N_3'$.  Since $\frac1{\Lfp_3}\batta(\sum_{j=1}^{\Lfp_3}v^{(0)}_j)\geq
\frac12(1 - \frac{\dl}{\dr})$ we get $N_3'\leq
\frac{N_3(1+\frac{\dl}{\dr})}{2(1-\delta)}$ and combining with the transition
length Lemma~\ref{lem:transitionlength}, we get the expression for $N_5$.
Further, from the previous discussion, there are at least $\Lfp_4$ values below
$\delta$ on the left. Thus, it is not hard to see that we can simultaneously choose
$\delta>0, w, \Lc\in \naturals, \Msat\in \naturals, \Lfp \in \naturals$ such that
\begin{align*}
& 2(w-1) \leq \Lc,  \\
& \Lc  \leq \Lfp_4, \\
& \Lc+w\leq \Msat  \leq \Lfp_5 \leq \Lfp_3. 
\end{align*}
We summarize, $\underline{v}^{(\infty)}$ is a proper one-sided FP of DE for $\epsilon=1$ with fixed boundary condition 
and $\batta(v^{(\infty)}_{-\Lfp+\Lc})\leq \delta$ and $\batta(v^{(\infty)}_{\Lfp_3-\Msat}) \geq \xunstab(1)$.
But we know from Theorem~\ref{thm:existenceintermediateform} that such a FP,
$\underline{v}^{(\infty)}$, must have a channel value close to $\epsilon^A(\dl,
\dr)$, the area threshold of $(\dl, \dr)$-regular ensemble when transmitting
over BEC.  More precisely, applying
Theorem~\ref{thm:existenceintermediateform} we conclude that the entropy of the
channel of $\underline{v}^{(\infty)}$ must be less than $\epsilon^A(\dl, \dr) +
c(\dl, \dr, \delta, w, \Msat, \Lc)$. Since $\epsilon^A(\dl, \dr) \leq \frac{\dl}{\dr}
< 1$\footnote{For transmission over the BEC using a $(\dl, \dr)$-regular
ensemble, from Theorem 3.120 in \cite{RiU08} we know that $\epsilon^A(\dl, \dr)=\epsilon^{\MAPsmall}(\dl, \dr)$. 
Further the MAP threshold is upper bounded by the
Shannon threshold, $\frac{\dl}{\dr}$.}, we conclude that by choosing $\delta$
small enough and $\Msat, \Lc, \Lfp$ large enough, $c(\dl, \dr, \delta, w, \Msat, \Lc)$ can be
made arbitrarily small and hence the channel of $\underline{v}^{(\infty)}$ is
strictly less than 1, leading to a contradiction since we started with $\epsilon=1$.  This contradiction tells us
that we cannot have $\batta(U(\absDdens{\x}^*)) < \xunstab(1)/2$ when we
apply the Schauder theorem. Hence the FP must be a true FP of DE.
\end{IEEEproof}

\end{appendices}
\bibliographystyle{IEEEtran}
\bibliography{lth,lthpub}
\newpage
\tableofcontents
\end{document}